\numberwithin{equation}{section}
\def\Nset{\mathbb{N}}
\def\Zset{\mathbb{Z}}
\def\Rset{\mathbb{R}}
\def\lebesgue{\mathrm{Leb}}
\def\gammaz{\gamma_Z}
\def\cz{c_Z}
\def\rme{\mathrm{e}}
\def\rmd{\mathrm{d}}
\def\pr{\mathbb{P}}
\def\esp{\mathbb{E}}
\def\cov{\mathrm{cov}}
\def\cum{\mathrm{cum}}
\def\var{\mathrm{var}}
\def\corr{\mathrm{corr}}
\def\convprob{\stackrel{P}\longrightarrow}
\def\convdistr{\stackrel{(d)}\longrightarrow}
\def\wrt{w.r.t.}
\def\constant{\mathrm{cst}}
\def\R{R_{\tilde{t}, \tilde{t}+\tilde{H}}}
\def\RV{RV_{\tilde{t}-\tilde{H}, \tilde{t}}}
\def\tR{\tilde{R}_{\tilde{t}, \tilde{t}+\tilde{H}}}
\definecolor{Gray}{gray}{0.85}
\newcommand\1[1]{\mathbbm{1}_{#1}}
\newcommand\ind[1]{\mathbbm{1}{\{#1\}}}
\newcommand\ddfrac[2]{\frac{\displaystyle #1}{\displaystyle #2}}
\newtheorem{theorem}{Theorem}[section]
\newtheorem{lemma}[theorem]{Lemma}
\newtheorem{corollary}[theorem]{Corollary}
\newtheorem{remark}[theorem]{Remark}
\begin{document}

{\renewcommand{\baselinestretch}{1.5}
\begin{titlepage}
    \title{
     \vspace{0in}
   Long-Horizon Return Predictability from Realized Volatility in Pure-Jump Point Processes
     \vspace{.05in}
    }
    \author{Meng-Chen Hsieh \footnote{Norm Brodsky College of Business Administration, Department of Information Systems, Analytics and Supply Chain Management, Rider University, Sweigart Hall 313, 2083 Lawrenceville Road, Lawrenceville, NJ 08648. E-mail: {\em mehsieh@rider.edu}} ~ \ \ Clifford Hurvich \footnote{Department of
    Technology, Operations and Statistics, Leonard N. Stern
    School of Business, New York University, Suite 8-52, 44 West 4th
    St., New York, NY 10012. E-mail: {\em churvich@stern.nyu.edu}}~ \ \ Philippe Soulier \footnote{D\'epartement de mathématiques
      et informatique, Universit\'e Paris Nanterre, 200 avenue de la R\' epublique, 92001 Nanterre
      cedex, France, E-mail: {\em philippe.soulier@parisnanterre.fr} } }
    \vspace{.05in}

  \date{\normalsize \today}
  \end{titlepage}
}
\maketitle

\begin{abstract}
We develop and justify methodology to consistently test for long-horizon return predictability based on realized variance. To accomplish this, we propose a parametric transaction-level model for the continuous-time log price process based on a pure-jump point process. The model determines the
returns and realized variance at any level of aggregation with properties shown to be consistent with the stylized facts in the empirical finance literature. Under our model, the long-memory parameter propagates unchanged from the transaction-level drift to the calendar-time returns and the realized variance, leading endogenously to a balanced predictive regression equation.
We propose an asymptotic framework using power-law aggregation in the predictive regression. Within this framework, we propose a hypothesis test for long-horizon return
predictability which is asymptotically correctly sized and consistent.
\end{abstract}

\newpage

\section{Introduction}
Research in return predictability has been an active area for decades.
Fama and French (1988), Campbell and Schiller (1987, 1988), Mishkin (1990, 1992), Boudoukh and Richardson (1993)
found return predictability based on long-run aggregated financial variables such as the dividend yield, price-dividend ratio, and functions of interest rates.
In these studies, a predictive regression with the overlapped aggregated return as the response and the similarly overlapped aggregated or non-aggregated financial variable as the predictor, was implemented. As the aggregation horizon increases, long-run return predictability was found to increase.

It has also been well documented in the literature that, even without aggregation, the strong autocorrelation of the predictors employed in these empirical studies induces  bias in the OLS estimator of the predictive coefficient.
Stambaugh (1999) considered a predictive regression model with an AR(1) predictor. He obtained an expression for the bias of
the OLS estimator of the predictive coefficient. Amihud and Hurvich (2004) also considered an AR(1) predictor and
introduced the augmented regression method to reduce the bias and studied a bias-corrected hypothesis test for return predictability.
Chen, Deo, and Yi (2013) again considered an AR(1) regressor and proposed the quasi restricted likelihood ratio test (QRLRT) for inference on the predictive coefficient.
They proposed a test that maintained the nominal size uniformly as the AR(1) coefficient approaches $1$
while delivering higher power than the competing procedures.
Boudoukh, Richardson, and Whitelaw (2008) considered a sequence of predictive regressions where overlapped aggregated returns at multiple horizons were regressed on an autocorrelated predictor without aggregation. They showed that the OLS estimators of the predictive coefficients were highly correlated across horizons under the assumption of no return predictability, inflating the size of a joint test that assumes no correlation. They conducted a joint Wald test for the OLS estimators of AR(1) predictors and found much weaker evidence of return predictability.

Furthermore, the overlapping aggregation of an originally-autocorrelated predictor
strengthens the degree of autocorrelation and increases the likelihood of spurious findings of long-run predictability of aggregated returns.
Valkanov (2003) demonstrated that the ordinary $t$-test in an OLS regression with overlapped aggregated autocorrelated regressors tends to over-reject the null hypothesis of no return predictability. 

Kostakis, Magdalinos, and Stamatogiannis (2018) considered a predictive regression on short-memory, integrated, and local-to-unity predictors. They proved that the long-horizon OLS estimator in a predictive regression is inconsistent when the predictor is integrated or is studied in a local-to-unity framework.
Using a procedure that robustified against the unknown persistence of the predictor, they found far weaker evidence of return predictability with increasing aggregation horizon compared with most previous empirical literature.

The literature described above considers regressors that either have short memory, are integrated, or are viewed in a local-to-unity framework.
This literature does not consider regressors that are fractionally integrated. Nevertheless, it has been established that predictors related to volatility, such as realized variance or VIX, are indeed fractionally integrated. For example, Andersen et al (2001) found that realized stock return volatility is well described by a long-memory process.
Bandi and Perron (2006) found a fractionally-cointegrated relationship between realized and implied volatility, suggesting that both series are long-memory processes.
Sizova (2013) considered a long-memory predictor in a predictive regression and proved that under certain assumptions as the level of aggregation increases, the population correlation between future returns and lagged realized variance converges to a constant which is a function of the long-memory parameter of the predictor. When the value of the long-memory parameter is zero, this population correlation converges to zero. Hence her theorem establishes that, under certain assumptions, a short-memory regressor does not lead to long-run return predictability as the level of aggregation increases.

The association between long-run future returns and current and past variance has been extensively studied in the financial economics literature.
Bansal and Yaron (2004), Bollerslev, Tauchen and Zhou (2009), and Drechsler and Yaron (2011) considered the impact of long-run economic uncertainty on consumption fluctuations, and thus the effect on the expected long-run returns.
They proved that the variance premium, defined as the difference between the variance of returns measured under the risk-neutral and physical probability measures, was
a good proxy for the latent economic shocks and was correlated with the expected excess returns. Hence these models imply that the variance premium predicts long-run returns.
In an empirical study, Bandi and Perron (2008) found that the long-run excess market return had a stronger correlation with past realized variance than it had with the classical dividend yield or the consumption-to-wealth ratio proposed by Lettau and Ludvigson (2001).
Hence following Bandi and Perron (2008), we will use realized variance as a predictor in our predictive regressions.

In spite of the long-memory property of the realized variance, Bandi and Perron (2008) did not model market variance as a fractionally integrated process. As they pointed out, if stock returns have short memory, then a predictive regression with a long-memory regressor would be unbalanced.
Indeed, rudimentary analysis of stock returns suggests that they have short memory although a weak long-memory component would be hard to detect. Researchers have adopted two frameworks to reconcile the apparently unbalanced nature of this  predictive regression. One is to impose assumptions on the functional relationship between the time series of returns and regressors as done by Sizova (2013),
who assumed that the predictor is in the domain of attraction of a fractional Brownian motion, and that the error term is additive and has shorter memory than the predictor,
so that both sides of the predictive regression equation have long memory with the same memory parameter.
The other is to directly model the continuous-time log price process, which then determines the returns and regressor at any level of aggregation.
Our work falls into the second framework. Under our model, the long-memory parameter propagates unchanged from the transaction level drift to the calendar-time returns and the realized variance with the same memory parameter, leading endogenously to a balanced predictive regression equation.

In the world of ultra-high-frequency data, the actual transaction‐level stock price is naturally viewed as a pure-jump process, that is, a marked point process where the points are the transaction times and the marks are the prices. The stock price observed in continuous time is a step function as opposed to a diffusion process. Engle and Russell (1998) studied the time intervals, i.e. durations, between successive transactions and proposed the ACD (Autoregressive Conditional Duration) model for the durations. Since their seminal work, several studies have been conducted to investigate the propagation of transaction-level properties of pure-jump processes under aggregation to lower-frequency time series in  discrete time.
Deo, Hsieh, and Hurvich (2010) studied the intertrade durations, counts, (i.e. the number of trade), squared returns, and realized variance of 10 NYSE stocks.  They found the presence of long memory in all of theses series. In light of this stylized fact, they proposed the LMSD (Long Memory Stochastic Duration) model for the durations. Deo et al (2009) provided sufficient conditions for the propagation of the long-memory parameter of durations to the corresponding counts and realized variance.

Cao, Hurvich, and Soulier (2017) studied the effect of drift in pure-jump transaction-level models for asset prices in continuous time, driven by a point process.
Under their model, the drift is proportional to the driving point process itself, i.e. the cumulative number of transactions.
This link reveals a mechanism by which long memory of the intertrade durations leads to long memory in the returns, with the same memory parameter.
Our proposed price model follows Cao, Hurvich, and Soulier's (2017) framework.  The calendar-time return derived by our model has a component that is proportional to the counts. Under this model, both calendar-time returns and realized variance are long-memory processes, which leads to a balanced predictive regression.
\footnote{Note that Bollerslev, Sizova, and Tauchen (2012) also considered generalizing their framework by modeling the volatility of the consumption growth rate as a long-memory process in the continuous-time framework of Comte and Renault (1996). Though their generalized model also leads to a balanced predictive regression (since the components of their variance premium regressor are fractionally-cointegrated and the response is a short memory return series), it would follow from Theorem 3 of Sizova (2013) if its assumptions held that the model of Bollerslev, Sizova, and Tauchen (2012) would not lead to long-run return predictability.}

Our paper makes the following contributions to the existing literature.
We propose a parametric transaction-level model for the log price.
Our model for the log price implies properties of the calendar-time returns and realized variance that are consistent with the stylized facts.
We propose an estimation procedure for the model parameters that is easy to implement.
For assessing return predictability, we propose to aggregate $\tilde{H} = \tilde{T}^{\kappa}$ ($\kappa \in (0,1)$) calendar-time returns in contrast to Sizova's choice of $\tilde{H}=\theta \tilde{T}$ ($\theta \in (0,1)$), where $\tilde{T}$ is the number of available calendar-time returns.
Sizova (2013) used the hypothesis test of Valkanov (2003) for long-run return predictability under the linear aggregation framework $\tilde{H}=\theta \tilde{T}$, but did not establish the consistency of this test.
Within the power-law framework $\tilde{H} = \tilde{T}^{\kappa}$, we propose a hypothesis test for long-run return predictability based on the sample correlation between the future aggregated returns and the past realized variance. We establish a central limit theorem for the test statistic under the null hypothesis of no long-run return predictability.
Our test is consistent, unlike the one used by Sizova (2013). We also provide simulations on a parametric bootstrap approach to testing for long-run return predictability under the power-law framework, and find that the test has a high power while maintaining the nominal size. We discuss the applicability of Sizova's (2013) assumptions for certain models in Section \ref{sec:comparison}. 

The paper is organized as follows. In Section \ref{sec:model}, we present our proposed price model.  In Section \ref{sec:return predictability}, we demonstrate return predictability by evaluating the correlation between the aggregated return and the lagged aggregated variance.
In Section \ref{sec:hypothesis_test_ret_predictability}, we propose a theoretically-based hypothesis test for long-run return predictability and use simulations to evaluate the size and the power of the bootstrap test. We compare our work with Sizova (2013) in Section \ref{sec:comparison} and conclude the paper in Section \ref{sec:conclusion}.
In Appendix Section \ref{sec:model_estimates}, we propose formulas for estimating the model parameters. In Section \ref{sec:sim_model}, we describe a procedure for simulating our model. The proofs of our theorems are provided in Section \ref{sec:proof}.

\section{The Model}\label{sec:model}
In this section, we propose our transaction-level model based on a pure-jump point process.
We then obtain the corresponding calendar-time return series and its properties.

\subsection{Cox Process Driven by Fractional Brownian Motion}\label{cox_process}
We start with the Cox process which is a key ingredient in our model.
The points of any point process $N$ consist of the sequence $\{T_{i},i\in\Zset\}$ (here, the transaction arrival times) such that
$T_i < T_{i+1}$ for all $i\in\Zset$ and $T_0 \leq 0 < T_1$.
For all measurable sets $A\subset \Rset$, we define the point process $N$ by
\begin{align*}
  N(A) = \sum_{i\in\Zset} \1{A}(T_i) \; .
\end{align*}
Let $\lambda(t), t \in \Rset$ be a non-negative stochastic intensity function, and
define the random measure $\Lambda$ by
\begin{equation}
  \label{intensity}
  \Lambda(A) = \int_A \lambda(t) \rmd t \; ,
\end{equation}
for all measurable sets $A$.
A Cox process, $N$, with mean measure $\Lambda$, is defined as follows: conditionally on $\Lambda$, $N$ is a Poisson process with mean measure $\Lambda$.

\noindent We now make further assumptions on the stochastic intensity function which guarantee long memory in returns and volatility when $d > 0$, where $d$ is the long-memory parameter. Let
\begin{equation}
\lambda(t)=\lambda \rme^{Z_H(t)},\; t \in \Rset, \; \lambda > 0\;,
\end{equation}
where $Z_H$ is a Gaussian stationary process with mean zero,
variance one and Hurst index $1/2 \le H <1$ (where $H=\frac{1}{2}+d$), which implies that the lag-$r$ autocovariance function
$\gammaz$ of $Z_H$ satisfies
\begin{align}
  \label{eq:Z-lrd}
  \lim_{r\to\infty} r^{2-2H} \gammaz(r) = \cz > 0 \;, \;\; H \in (1/2,1) \;.
\end{align}
Equation (\ref{eq:Z-lrd}) implies that the autocovariance of $Z_H$ is positive for
large lags.
The unconditional mean of $N(A)$ is equal to
\begin{align}
  \label{eq:esp-N}
  \esp[N(A)] =  \esp[ \esp[N(A)|\Lambda]] = \lambda \rme^{1/2} \lebesgue(A)   \; .
\end{align}
An example of such a process $Z_H$ satisfying (\ref{eq:Z-lrd}) is fractional Gaussian noise defined as
the increment process
\begin{equation}\label{def:frac_gaussian_noise}
Z_H(t) = B_H(ct) - B_H(ct-1)\;,
\end{equation}
where $B_H(t)$ is fractional Brownian motion and $c > 0$ is a time-scaling constant to ensure that our model is invariant to the choice of the time unit.
Under Equation (\ref{def:frac_gaussian_noise}), which we assume for the remainder of the paper, $Z_H$ has autocovariance function given by
\begin{equation}\label{eq:frac_Gaussian_gammaz}
 \gammaz(r) = \frac{1}{2} \left[ |cr+1|^{2H} - 2|cr|^{2H} + |cr-1|^{2H} \right]\;,
\end{equation}
$\forall r \in \mathbb{R}$, $\gammaz(r) \ge 0$ (see Lemma \ref{lem:gammaz_fun}).

\noindent For the short-memory case of $H=\frac{1}{2}$ ($d=0$), if $|r| \ge \frac{1}{c}$, $\gammaz(r)=0$; if $|r| < \frac{1}{c}$, $\gammaz(r)= 1-c|r|$. (see Lemma \ref{lem:exp_gammaz_d0}).
We show in Theorem \ref{thm:convergence_rho_shortM} that when $H=\frac{1}{2}$ there is no long-run predictability of aggregated future returns.

\subsection{Log Price Model}
Let $N(t) = N((0,t])$ denote the number of transactions in $(0,t]$.
For all $t \ge 0$, the log price process is defined as
\begin{equation}\label{eq:log price model_1}
\log P(t) = \sum_{k=1}^{N(t)} \tilde{e}_k \;,
\end{equation}
where
\begin{equation}\label{eq:efficient_shock}
 \tilde{e}_k = \mu + e_k \;,
\end{equation}
$\mu > 0$ is the drift term and $\{e_k\}$ are the efficient shocks,
which are assumed to be $i.i.d$ random variables with mean zero and variance $\sigma^2_e$, independent of $N$.
The efficient shock $e_k$ is assumed to reflect the true value of a stock and thus causes a permanent change to the $\log$ price.
Under (\ref{eq:efficient_shock}), Equation (\ref{eq:log price model_1}) can be expressed as
\begin{equation}\label{eq:log price model_2}
 \log P(t) = \sum_{k=1}^{N(t)}\left\{\mu + e_k \right\} = \mu N(t) + \sum_{k=1}^{N(t)} e_k \;.
\end{equation}
This together with Equation (\ref{eq:esp-N}) imply that $\esp[\log P(t)] = \mu\esp[N(t)] = \mu (\lambda\rme^{\frac{1}{2}}t) > 0$. Hence the expected $\log$ price is a linear function of $t$, which reflects the long-term growth of stock price as observed in the empirical finance literature.
For example, the average real annual returns of the Standard and Poor $500$ Index over the 90-year period from 1889 to 1978 is about $7\%$ (see Mehra and Prescott(1985)).

\noindent Though we will not pursue it in this paper, the model in Equation (\ref{eq:log price model_2}) could be generalized to multiple drift terms and point processes, for example
\begin{equation}\label{eq: price_model_two_shocks}
 \log P(t) = \sum_{k=1}^{N_1(t)}\left\{\mu_1 + e_{1,k}\right\} + \sum_{k=1}^{N_2(t)}\left\{\mu_1 + e_{2,k}\right\} = \mu_1 N_1(t) + \mu_2 N_2(t) + \sum_{k=1}^{N_1(t)} e_{1,k} + \sum_{k=1}^{N_2(t)} e_{2,k}\;,
\end{equation}
where $N_1(t)$ and $N_2(t)$ are mutually independent counting processes; $\{e_{1,k}\}$ and $\{e_{2,k}\}$ are i.i.d. efficient shocks independent of $N_1(t)$ and $N_2(t)$, both with means zero and the same standard deviation $\sigma^2_e$,
$\mu_1 > 0$, $\mu_2 < 0$, and $\mu_1 + \mu_2 > 0$. Here $N_1(t)$ and $N_2(t)$ can be thought of representing the numbers of buy and sell transactions.
As a result, the observed price reflects the net effect of buy and sell transactions. An active period for $N_1$ puts upward pressure on the price, while an active period for $N_2$ exerts downward pressure on the price. The assumption $\mu_1 + \mu_2 > 0$ reflects the long-term upward trend in the log prices.
The positive drift term in Equation (\ref{eq:log price model_2}) could represent the overall effect of $\mu_1$ and $\mu_2$ from (\ref{eq: price_model_two_shocks}). Thus, we consider model (\ref{eq:log price model_2}) as a simplified version of model (\ref{eq: price_model_two_shocks}). As we show in Lemma \ref{lem:cov_ret_pred} and Corollary \ref{cor:return_predicatability_extended_model}, both models exhibit properties of long-term return predictability. Henceforth, unless specified explicitly, we will focus on this simplified model (\ref{eq:log price model_2}).

\subsection{Return series and its properties}
The calendar-time return series $\{r_t\}^{\infty}_{t=1}$ measured at fixed clock-time intervals of width $\Delta t$ is
$r_t = \log P(t\Delta t) - \log P((t-1)\Delta t)$.
Note that $\Delta t$, for example, can be one minute, five minutes, 30 minutes, or one day.
Here we take $\Delta t=1$ (one day) for notational simplicity.
Thus the return series becomes $\{r_t\}^{\infty}_{t=1}$, which is given by
\begin{equation}\label{eq:return process}
  r_t = \log P(t) -\log P(t-1) = \mu \Delta N(t) + \sum_{k=N(t-1)+1}^{N(t)} e_k \;,
\end{equation}
where $\Delta N(t) = N(t) - N(t-1)$ is the number of transactions within the interval $(t-1,t]$.
We refer to the series $\{\Delta N(t)\}^{\infty}_{t=1}$ as the counts.

\noindent Next we present properties of the calendar-time returns $\{r_t\}$.

\begin{lemma}\label{lem:autocov_ret}
If $d > 0$, for all $L \in \Nset$, the lag-L autocovariance function of the return process $\{r_t\}$ is given by
\begin{equation}\label{eq:cov_ro_ret_L}
\cov(r_0, r_L) = \mu^2 \lambda^2 \rme^1 \int_{s=-1}^0 \int_{t=-1}^0 \left\{\rme^{\gammaz(L+t-s)}-1  \right\}\rmd t\rmd s \;.
\end{equation}
As $L \to \infty$,
\[
\cov(r_0, r_L) \sim \mu^2 \lambda^2 \rme \;\gammaz(L)\;.
\]
\end{lemma}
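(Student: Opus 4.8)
The plan is to decompose the return as $r_t=\mu\,\Delta N(t)+S_t$ with $S_t:=\sum_{k=N(t-1)+1}^{N(t)}e_k$, and to show that for $L\ge 1$ only the counts contribute to the covariance. Conditioning on the whole point process $N$, the block $S_t$ has conditional mean zero because the $e_k$ are i.i.d.\ with mean zero and independent of $N$; hence $\esp[\Delta N(0)\,S_L]=\esp[\Delta N(L)\,S_0]=0$ and $\esp[S_0]=\esp[S_L]=0$, so the two mixed covariances vanish. For $L\ge 1$ the index blocks $\{N(-1)+1,\dots,N(0)\}$ and $\{N(L-1)+1,\dots,N(L)\}$ are disjoint (since $N(0)\le N(L-1)$), so conditionally on $N$ the sums $S_0$ and $S_L$ run over disjoint families of centered uncorrelated variables; therefore $\cov(S_0,S_L)=0$ and $\cov(r_0,r_L)=\mu^2\,\cov(\Delta N(0),\Delta N(L))$.

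Next I would evaluate the count covariance through the Cox structure. Conditionally on $\Lambda$, the increments over the disjoint sets $(-1,0]$ and $(L-1,L]$ are independent Poisson with means $\Lambda((-1,0])$ and $\Lambda((L-1,L])$, so $\esp[\Delta N(0)\Delta N(L)\mid\Lambda]=\Lambda((-1,0])\,\Lambda((L-1,L])$ and, taking expectations and exchanging the order of integration,
\[
\esp[\Delta N(0)\Delta N(L)]=\int_{-1}^{0}\!\int_{L-1}^{L}\esp[\lambda(s)\lambda(t)]\,\rmd t\,\rmd s .
\]
Because $Z_H$ is Gaussian, stationary, mean zero and variance one, $Z_H(s)+Z_H(t)$ is Gaussian with variance $2+2\gammaz(t-s)$, whence $\esp[\lambda(s)\lambda(t)]=\lambda^2\rme^{1+\gammaz(t-s)}$; together with $\esp[\Delta N(0)]=\esp[\Delta N(L)]=\lambda\rme^{1/2}$ from (\ref{eq:esp-N}) this gives $\cov(\Delta N(0),\Delta N(L))=\lambda^2\rme\int_{-1}^{0}\!\int_{L-1}^{L}(\rme^{\gammaz(t-s)}-1)\,\rmd t\,\rmd s$. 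The substitution $t\mapsto t+L$ in the inner integral, followed by multiplication by $\mu^2$, produces exactly (\ref{eq:cov_ro_ret_L}).

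For the asymptotic equivalence I would write, for $L$ large enough that $\gammaz(L)>0$,
\[
\frac{\cov(r_0,r_L)}{\mu^2\lambda^2\rme\,\gammaz(L)}=\int_{-1}^{0}\!\int_{-1}^{0}\frac{\rme^{\gammaz(L+t-s)}-1}{\gammaz(L+t-s)}\cdot\frac{\gammaz(L+t-s)}{\gammaz(L)}\,\rmd t\,\rmd s ,
\]
and argue that the integrand tends to $1$ uniformly on $[-1,0]^2$. The first factor does so because $\gammaz(L+t-s)\to 0$ uniformly (as $L+t-s\ge L-1\to\infty$) and $(\rme^x-1)/x\to 1$ as $x\to 0$; the second does so because (\ref{eq:Z-lrd}) makes $\gammaz$ regularly varying at infinity with index $2H-2$, so the uniform convergence theorem for regularly varying functions — or directly the closed form (\ref{eq:frac_Gaussian_gammaz}) — gives $\gammaz(L+t-s)/\gammaz(L)=(1+(t-s)/L)^{2H-2}(1+o(1))\to 1$ uniformly for $|t-s|\le 1$. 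Since $[-1,0]^2$ has Lebesgue measure one, bounded convergence yields the ratio $\to 1$, i.e.\ $\cov(r_0,r_L)\sim\mu^2\lambda^2\rme\,\gammaz(L)$.

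The only genuinely delicate step is this last uniformity: one must control $\gammaz(L+t-s)/\gammaz(L)$ uniformly over the vanishing perturbation $t-s\in[-1,1]$, which is where regular variation (or the explicit form of $\gammaz$) is essential, and one must check that the linearization $\rme^{\gammaz}-1\sim\gammaz$ is legitimate, i.e.\ that $\gammaz$ stays positive for large lags — guaranteed by (\ref{eq:Z-lrd}) together with the nonnegativity of $\gammaz$ noted after (\ref{eq:frac_Gaussian_gammaz}). Everything else is bookkeeping: the conditioning arguments for the shock blocks and the Gaussian Laplace-transform identity for the intensity are routine.
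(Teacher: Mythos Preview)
Your proof is correct and follows essentially the same route as the paper: reduce $\cov(r_0,r_L)$ to $\mu^2\cov(\Delta N(0),\Delta N(L))$ by conditioning (you condition on the full point process $N$, the paper on $\Lambda$ via the law of total covariance, but the effect is identical), evaluate the count covariance through the Gaussian moment-generating identity $\esp[\rme^{Z_H(s)+Z_H(t)}]=\rme^{1+\gammaz(t-s)}$, and obtain the asymptotic by dominated convergence using that $(\rme^{\gammaz(L+u)}-1)/\gammaz(L)\to 1$ uniformly for $|u|\le 1$. The only cosmetic difference is that the paper first collapses the double integral to $\int_{-1}^{1}(1-|u|)\{\rme^{\gammaz(L+u)}-1\}/\gammaz(L)\,\rmd u$ before passing to the limit, whereas you keep the square and factor the integrand into the linearization piece and the regular-variation piece; both are equally valid.
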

\begin{corollary}\label{cor:var_rt}
The variance of $r_t$ can be represented as
\begin{equation}\label{eq:var_r0}
\var(r_t)
= \mu^2\lambda^2\rme^1\int_{s=-1}^0\int_{t=-1}^0\left\{\rme^{\gammaz(s-t)}-1\right\}\rmd t \rmd s +\lambda \rme^{1/2}(\mu^2+\sigma_e^2)\;.
\end{equation}
\end{corollary}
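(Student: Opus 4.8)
The plan is to write the return as $r_t = \mu\,\Delta N(t) + S_t$ with $S_t = \sum_{k=N(t-1)+1}^{N(t)} e_k$, and to expand $\var(r_t) = \mu^2\var(\Delta N(t)) + 2\mu\,\cov(\Delta N(t),S_t) + \var(S_t)$, treating each piece by conditioning, using the conditional-Poisson structure of the Cox process together with the independence of $\{e_k\}$ from $N$.

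First I would dispose of the efficient-shock terms. Conditionally on $N$, the sum $S_t$ consists of $\Delta N(t)$ i.i.d.\ mean-zero, variance-$\sigma_e^2$ terms, so $\esp[S_t\mid N] = 0$ and $\var(S_t\mid N) = \sigma_e^2\,\Delta N(t)$. Hence $\cov(\Delta N(t),S_t) = \esp[\Delta N(t)\,\esp[S_t\mid N]] = 0$, and since the conditional mean of $S_t$ is identically $0$ the law of total variance gives $\var(S_t) = \esp[\sigma_e^2\,\Delta N(t)] = \sigma_e^2\,\esp[\Delta N(t)] = \sigma_e^2\,\lambda\rme^{1/2}$ by \eqref{eq:esp-N}.

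Next I would compute $\var(\Delta N(t))$ by conditioning on $\Lambda$. Conditionally on $\Lambda$, $\Delta N(t)$ is Poisson with mean $\Lambda((t-1,t])$, so the law of total variance yields $\var(\Delta N(t)) = \esp[\Lambda((t-1,t])] + \var(\Lambda((t-1,t]))$; the first term equals $\lambda\rme^{1/2}$. For the second, write $\Lambda((t-1,t]) = \lambda\int_{t-1}^{t}\rme^{Z_H(s)}\rmd s$ and apply Fubini (legitimate since the exponential moments in play are finite and the square is bounded) using that $(Z_H(s),Z_H(u))$ is centered bivariate Gaussian with unit variances and covariance $\gammaz(s-u)$, so $\esp[\rme^{Z_H(s)+Z_H(u)}] = \rme^{1+\gammaz(s-u)}$ and $\esp[\rme^{Z_H(s)}] = \rme^{1/2}$. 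This gives $\var(\Lambda((t-1,t])) = \lambda^2\rme\int_{t-1}^{t}\int_{t-1}^{t}\{\rme^{\gammaz(s-u)}-1\}\rmd s\,\rmd u$, and a shift of the integration variables to $[-1,0]^2$ puts it in the form appearing in \eqref{eq:var_r0}.

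Collecting the pieces, $\var(r_t) = \mu^2\bigl\{\lambda\rme^{1/2} + \lambda^2\rme\int_{s=-1}^{0}\int_{t=-1}^{0}(\rme^{\gammaz(s-t)}-1)\rmd t\,\rmd s\bigr\} + \sigma_e^2\lambda\rme^{1/2}$, which is exactly \eqref{eq:var_r0}. Equivalently, one could read this off the proof of Lemma~\ref{lem:autocov_ret} specialized formally to $L=0$, adding back the two contributions that vanish at positive lags (where the two unit intervals are disjoint): the efficient-shock term $\sigma_e^2\lambda\rme^{1/2}$ and the diagonal Poisson term $\mu^2\lambda\rme^{1/2}$ inside $\var(\Delta N(t))$. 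There is no genuine difficulty here; the only points needing care are keeping the two levels of conditioning (on $\Lambda$ versus on $N$) distinct and justifying the interchange of expectation with the double Lebesgue integral.
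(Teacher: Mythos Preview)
Your proof is correct and uses the same ingredients as the paper's: the law of total variance, the conditional-Poisson structure of the Cox process, and the log-normal moment identity $\esp[\rme^{Z_H(s)+Z_H(u)}]=\rme^{1+\gammaz(s-u)}$. The only difference is organizational: the paper applies total variance to $r_t$ directly by conditioning on $\Lambda$ (obtaining $\var(\esp[r_t\mid\Lambda])=\mu^2\var(\Lambda((t-1,t]))$ and $\esp[\var(r_t\mid\Lambda)]=(\mu^2+\sigma_e^2)\lambda\rme^{1/2}$ in one stroke), whereas you first split $r_t=\mu\Delta N(t)+S_t$ and treat the two pieces separately, conditioning on $N$ for the shock terms and on $\Lambda$ for the count---but the content is the same.
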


\begin{theorem}\label{thm:long memory squared returns}
Let $\tilde{\gamma}(L)$ be the lag-L autocovariance function of the squared returns.  If $d > 0$ then
\[
 \tilde{\gamma}(L) =\cov(r_t^2\;,  r_{t+L}^2) \;.
\]
 Then
\[
 \lim_{L \rightarrow \infty}\frac{ \tilde{\gamma}(L)}{\gammaz(L)} = C \;,
\]
where C is a positive constant. 
Hence $\{r^2_t\}$ is a long-memory process.
\end{theorem}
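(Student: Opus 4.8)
The plan is to strip away the efficient shocks and the Poisson sampling noise by two successive conditionings, reducing $\tilde\gamma(L)$ to the autocovariance of smooth functionals of the intensity process, and then to read off the power-law tail from the Gaussian structure of $Z_H$. Write $r_t = \mu\Delta N(t) + S_t$ with $S_t = \sum_{k=N(t-1)+1}^{N(t)} e_k$ as in (\ref{eq:return process}). Since $\{e_k\}$ is \iid, centered, of variance $\sigma_e^2$, and independent of $N$, and since for $L\ge 1$ the summation ranges of $S_t$ and $S_{t+L}$ are disjoint (so $S_t,S_{t+L}$ are conditionally independent given $N$), every term with an odd power of an $e$ drops out of the conditional expectations, giving
\[
  \esp[r_t^2\mid N] = \mu^2\Delta N(t)^2 + \sigma_e^2\Delta N(t) =: A_t ,
  \qquad
  \esp[r_t^2 r_{t+L}^2\mid N] = A_t A_{t+L}\quad (L\ge1).
\]
Hence $\tilde\gamma(L) = \cov(A_t, A_{t+L})$ for $L\ge1$.

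\textbf{Second conditioning.} Conditionally on $\Lambda$, $\Delta N(t)$ is Poisson with mean $\Lambda_t := \Lambda((t-1,t]) = \lambda\int_{t-1}^{t}\rme^{Z_H(s)}\,\rmd s$, and $\Delta N(t),\Delta N(t+L)$ are independent for $L\ge1$ (disjoint intervals of a conditional Poisson process). Therefore $\cov(A_t,A_{t+L}\mid\Lambda)=0$, and using $\esp[\Delta N(t)\mid\Lambda]=\Lambda_t$, $\esp[\Delta N(t)^2\mid\Lambda]=\Lambda_t+\Lambda_t^2$ one gets $\esp[A_t\mid\Lambda] = (\mu^2+\sigma_e^2)\Lambda_t + \mu^2\Lambda_t^2 =: B_t$, so that $\tilde\gamma(L)=\cov(B_t,B_{t+L})$. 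Expanding with $a:=\mu^2+\sigma_e^2$, $b:=\mu^2$,
\[
  \tilde\gamma(L) = a^2\cov(\Lambda_t,\Lambda_{t+L}) + ab\,\bigl(\cov(\Lambda_t,\Lambda_{t+L}^2)+\cov(\Lambda_t^2,\Lambda_{t+L})\bigr) + b^2\cov(\Lambda_t^2,\Lambda_{t+L}^2).
\]

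\textbf{The core estimate.} For integers $m,n\ge1$, write $\Lambda_t^m = \lambda^m\int_{[t-1,t]^m}\rme^{Z_H(s_1)+\cdots+Z_H(s_m)}\,\rmd s$, apply Fubini, and use the Gaussian moment-generating identity $\cov\bigl(\rme^{\sum_i Z_H(s_i)},\rme^{\sum_j Z_H(u_j)}\bigr) = \rme^{(V_s+V_u)/2}\bigl(\rme^{\sum_{i,j}\gammaz(s_i-u_j)}-1\bigr)$, where $V_s=\sum_{i,i'}\gammaz(s_i-s_{i'})$ and similarly for $V_u$. For $s_i\in[t-1,t]$, $u_j\in[t+L-1,t+L]$ every difference $s_i-u_j$ lies in $[-L-1,-L+1]$, so by (\ref{eq:Z-lrd}) each $\gammaz(s_i-u_j)\to0$ with $\gammaz(s_i-u_j)/\gammaz(L)\to1$, whence $\rme^{\sum_{i,j}\gammaz(s_i-u_j)}-1 \sim \sum_{i,j}\gammaz(s_i-u_j) \sim mn\,\gammaz(L)$. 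Passing this through the $(m+n)$-fold integral by dominated convergence — the integrand being controlled by $\rme^x-1\le x\rme^{mn}$ for $0\le x\le mn$ together with a uniform bound $\gammaz(s_i-u_j)\le C\gammaz(L)$ valid for large $L$, which follows from the strict positivity (Lemma \ref{lem:gammaz_fun}) and eventual power-law monotonicity of $\gammaz$ read off the explicit form (\ref{eq:frac_Gaussian_gammaz}) — yields
\[
  \cov(\Lambda_t^m,\Lambda_{t+L}^n) \ \sim\ mn\,\gammaz(L)\,\lambda^{m+n}\!\!\int_{[0,1]^m}\!\!\rme^{V_s/2}\rmd s\!\int_{[0,1]^n}\!\!\rme^{V_u/2}\rmd u \ =\ mn\,\esp[\Lambda_t^m]\,\esp[\Lambda_t^n]\,\gammaz(L),
\]
the last equality by stationarity; this is exactly the computation behind Lemma \ref{lem:autocov_ret} for $m=n=1$ (finiteness of $\esp[\Lambda_t^m]$ follows from Jensen's inequality, $\esp[\Lambda_t^m]\le\lambda^m\rme^{m^2/2}$).

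\textbf{Conclusion and main obstacle.} Writing $m_1:=\esp[\Lambda_t]$, $m_2:=\esp[\Lambda_t^2]$, inserting the core estimate into the expansion above gives
\[
  \tilde\gamma(L)\ \sim\ \bigl(a^2 m_1^2 + 4ab\,m_1 m_2 + 4b^2 m_2^2\bigr)\,\gammaz(L)\ =\ \bigl(a\,m_1 + 2b\,m_2\bigr)^2\,\gammaz(L),
\]
so that $C = \bigl((\mu^2+\sigma_e^2)\,\esp[\Lambda_t] + 2\mu^2\,\esp[\Lambda_t^2]\bigr)^2$, which is strictly positive since $\mu>0$ and $\esp[\Lambda_t],\esp[\Lambda_t^2]>0$. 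As $\gammaz(L)\sim\cz L^{2H-2}$ with $\cz>0$ and $2H-2\in(-1,0)$ by (\ref{eq:Z-lrd}), the autocovariances of $\{r_t^2\}$ are not summable, i.e.\ $\{r_t^2\}$ has long memory. The one genuinely delicate point is the uniformity in the core estimate: one must control $\gammaz(s_i-u_j)/\gammaz(L)$ both pointwise (limit $1$) and uniformly over all integration variables and over large $L$ in order to invoke dominated convergence. Both are furnished by the regular-variation/monotonicity properties of $\gammaz$ visible in (\ref{eq:frac_Gaussian_gammaz}) (equivalently, Potter's bounds); everything else is routine bookkeeping with Gaussian moment-generating functions and Poisson conditional moments.
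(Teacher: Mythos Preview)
Your proof is correct and follows essentially the same route as the paper: condition on $\Lambda$ to reduce $\tilde\gamma(L)$ to covariances of $\Lambda_t$ and $\Lambda_t^2$, then use the Gaussian moment-generating identity together with dominated convergence (the paper's Lemma~\ref{lem:slowvary_cov_A0AL_B0BL_A0BL}) to extract the factor $\gammaz(L)$. Your presentation is slightly more streamlined in two ways: you handle all three covariance terms by a single formula $\cov(\Lambda_t^m,\Lambda_{t+L}^n)\sim mn\,\esp[\Lambda_t^m]\esp[\Lambda_t^n]\gammaz(L)$ rather than treating $\cov(A_0,A_L)$, $\cov(A_0,B_L)$, $\cov(B_0,B_L)$ separately, and you recognize the limiting constant as the perfect square $\bigl((\mu^2+\sigma_e^2)\esp[\Lambda_t]+2\mu^2\esp[\Lambda_t^2]\bigr)^2$, which the paper only observes later (in the proof of Theorem~\ref{thm:convergence_rho}, as the identity $C_{\mu,\lambda,d,\sigma_e}^2=\tilde C_{\mu,\lambda,d,\sigma_e}C_{\mu,\lambda}$).
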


\noindent Lemma \ref{lem:autocov_ret} shows that the returns $\{r_t\}$ have memory parameter $d$. In this paper, we consider returns which can be obtained directly from prices rather than excess returns, which cannot. The long memory of returns may appear to contradict empirical analysis, but the long memory may be hard to detect when embedded in noise. In Table \ref{tab:est_long_memory} and Figure \ref{fig:long_memory_signal}, we present
the estimated long memory parameter as well as the ACF and the log-log periodogram plots of returns simulated from the log price model with $d=0.35$.
The estimated long memory parameters based on different bandwidths are all very close to zero and the lags in the ACF plot appear to be statistically insignificant
\footnote{Indeed, $excess$ returns appear to have short memory in spite of the fact that in theory they would have memory parameter equal to $one$ if the risk free rate has a unit root as is found empirically.}.

\noindent Theorem \ref{thm:long memory squared returns} implies that the realized variance is a long-memory process.
Therefore, the calendar-time return and the realized variance derived from our model are long memory processes with the same memory parameter, which leads endogenously to a balanced predictive regression equation.

\begin{figure}[H]
\begin{subfigure}[b]{0.5\textwidth}
  \centering
  \includegraphics[width=1.0\linewidth]{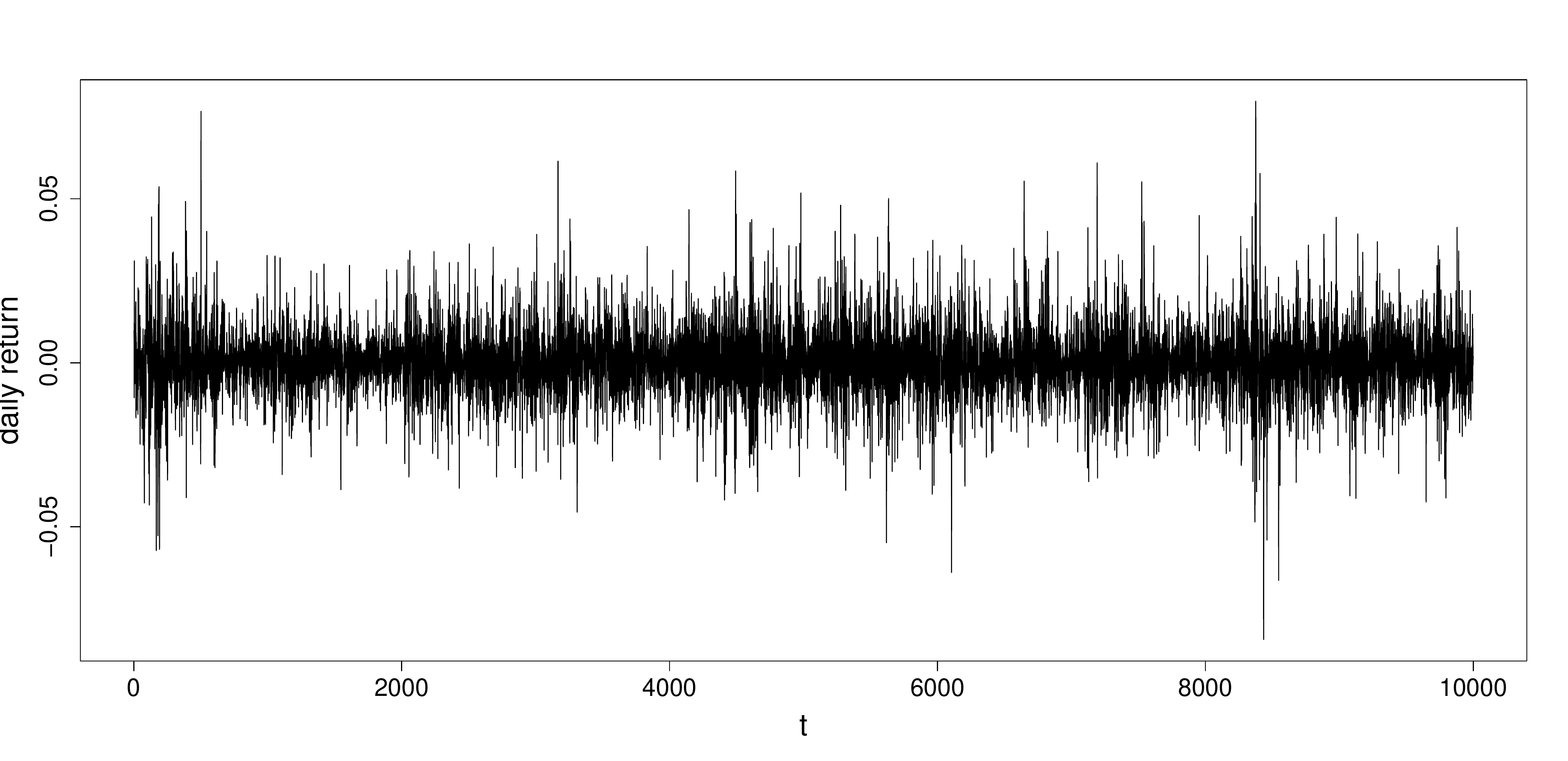}
  \caption{simulated returns}
  \end{subfigure}
\hspace{8pt}
\begin{subfigure}[b]{0.5\textwidth}
  \centering
  \includegraphics[width=1.0\linewidth]{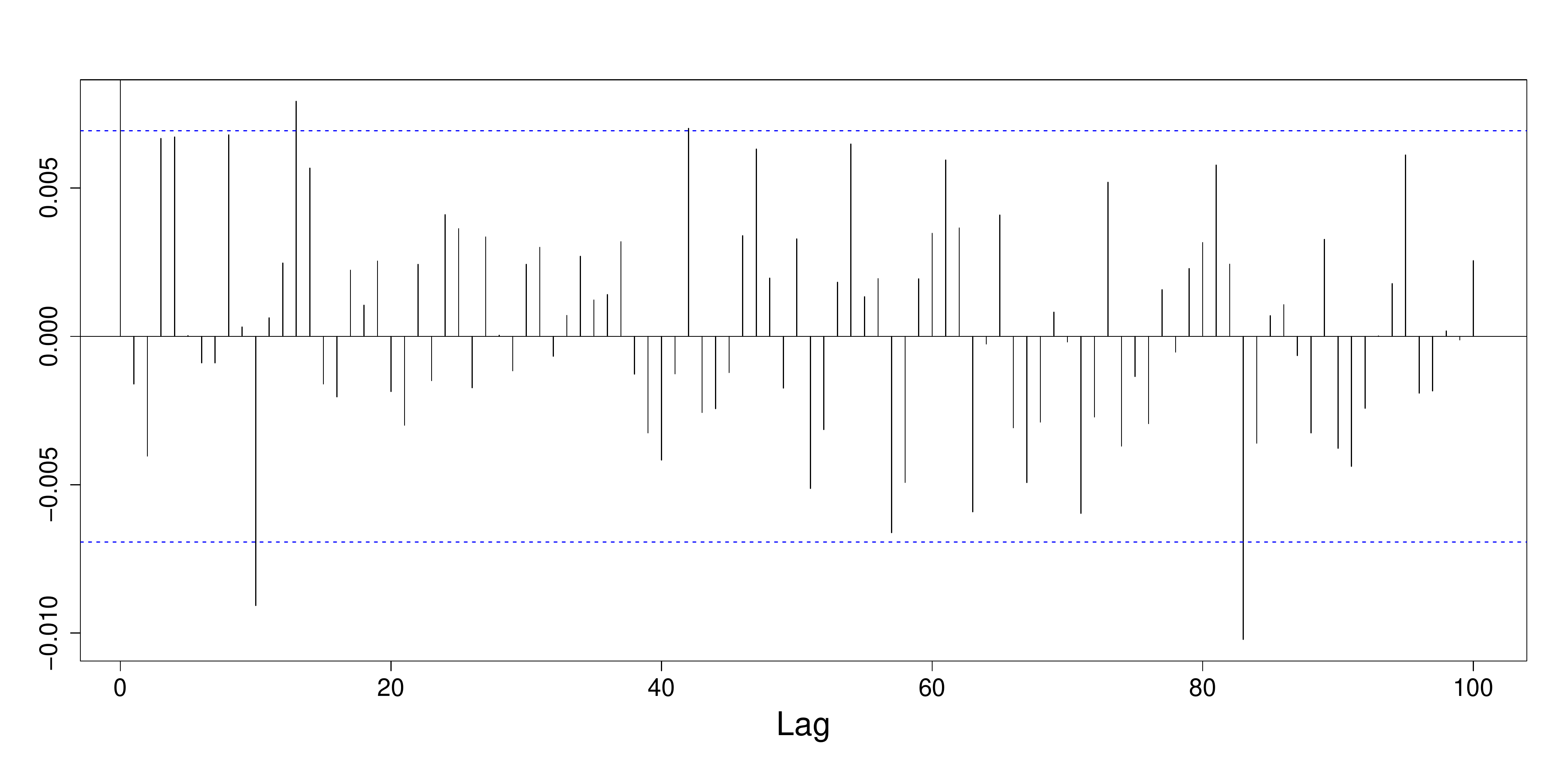}
  \caption{ACF of simulated returns}
 \end{subfigure}
\hspace{8pt}
\begin{subfigure}[b]{0.5\textwidth}
  \centering
  \includegraphics[width=1.0\linewidth]{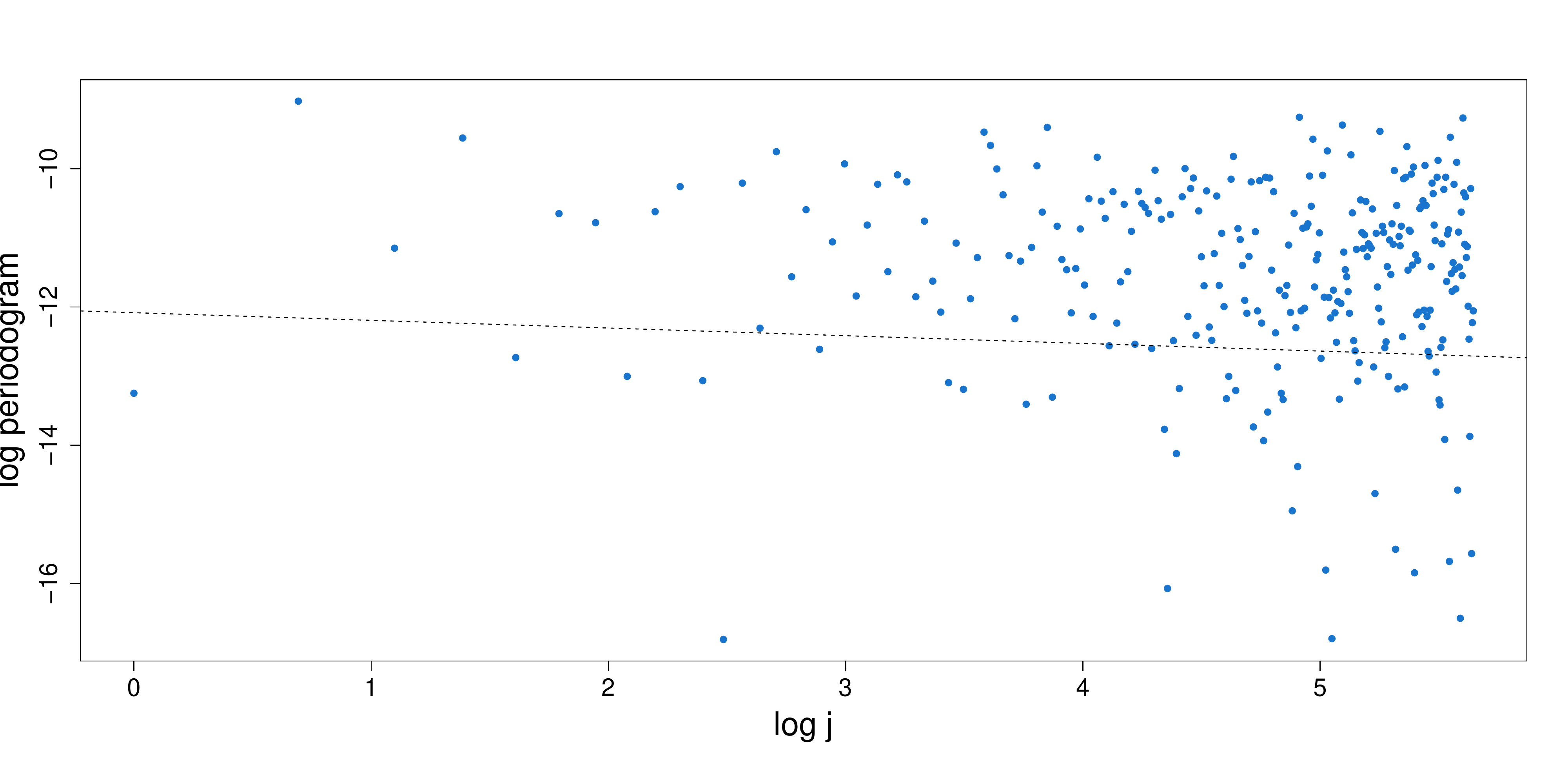}
  \caption{log-log periodogram}
  \end{subfigure}
\caption{One realization of $80,000$ simulated returns from the log-price model. Model parameters: $d=0.35$, $\mu =0.000001419188$, $\lambda = 128.2085$, $c=1$, and $\sigma_e=0.0007289$.}
\label{fig:long_memory_signal}
\end{figure}

\begin{table}[H]
\caption{Estimated long memory parameter $\hat{d}_{GPH}$ (a log-log regression of the periodogram versus frequency) based on one realization of $80000$ simulated daily returns. Model parameters: $\textbf{d=0.35}$, $\mu =0.000001419188$, $\lambda = 128.2085$, $c=1$, and $\sigma_e=0.0007289$.}
\center
\begin{tabular}{|c|c|c|c|}\hline
$d=0.35 $ & $m=n^{0.3}$  & $m=n^{0.5}$ & $m=n^{0.7}$  \\ \hline
 $\hat{d}_{GPH}$  & -0.0397 & 0.0555 & 0.0204 \\ \hline
\end{tabular}
\label{tab:est_long_memory}
\end{table}

\noindent In Appendices \ref{sec:model_estimates} and \ref{sec:sim_model}, we provide method-of-moments estimators of the model parameters and a methodology for simulating realizations of the log price model. We show that the proposed estimators are consistent in Theorem \ref{thm:consistent_estimators} and we evaluate the performance of these estimators in simulations.

\section{Return Predictability}\label{sec:return predictability}
{We use the aggregated past realized variance as the regressor to predict the aggregated future return.
For concreteness, we assume that $\tilde{t}=1, 2, \cdots, \tilde{T} $ is measured in months, there are $m$ trading days per month, and $r_t$ is the return for the $t^{th}$ day. The aggregated return over the next $\tilde{H}$ months is given by
\begin{equation}\label{eq:long term ret}
R_{\tilde{t}, \tilde{t}+\tilde{H}} = \sum_{t=\tilde{t}m+1}^{(\tilde{t}+\tilde{H})m}r_t \;,
\end{equation}
The realized variance over the past $\tilde{H}$ months is given by
\begin{equation}\label{eq:def_real_var}
RV_{\tilde{t}-\tilde{H}, \tilde{t}}  = \sum_{t=(\tilde{t}-\tilde{H})m+1}^{\tilde{t}m}r_t^2 \;.
\end{equation}
To evaluate return predictability over the next $\tilde{H}$ months, one can
compute the covariance between the future return and the past realized variance, $\cov(R_{\tilde{t}, \tilde{t}+\tilde{H}}, RV_{\tilde{t}-\tilde{H}, \tilde{t}})$ or the correlation
\begin{equation}\label{eq:corr_return_realV}
\corr\left(R_{\tilde{t},\tilde{t}+\tilde{H}}, RV_{\tilde{t}-\tilde{H}, \tilde{t}}\right) = \frac{\cov(R_{\tilde{t}, \tilde{t}+\tilde{H}}, RV_{\tilde{t}-\tilde{H}, \tilde{t}})}{\sqrt{\var(R_{\tilde{t}, \tilde{t}+\tilde{H}})\var(RV_{\tilde{t}-\tilde{H}, \tilde{t}})}}\;.
\end{equation}
Suppose first that $d > 0$. In Lemma \ref{lem:cov_ret_pred}, we show that $\cov(R_{\tilde{t}, \tilde{t}+\tilde{H}}\;, RV_{\tilde{t}-\tilde{H}, \tilde{t}})$ can be expressed as
\begin{equation}\label{eq:cov_ret_pred}
\cov\left(R_{\tilde{t}, \tilde{t}+\tilde{H}}, RV_{\tilde{t}-\tilde{H}, \tilde{t}}  \right) = \sum_{L=1}^{\tilde{H}m}\cov(r_L, r^2_0)L +\sum_{L=\tilde{H}m+1}^{2\tilde{H}m-1}\left(2\tilde{H}m-L\right)\cov(r_L, r^2_0)\;,
\end{equation}
and that $\cov(r_L, r^2_0) > 0$, for every positive integer L. Hence $\cov\left(R_{\tilde{t}, \tilde{t}+\tilde{H}}, RV_{\tilde{t}-\tilde{H}, \tilde{t}}  \right)$ is positive and there is return predictability. As the horizon $\tilde{H}$ increases, the correlation between $R_{\tilde{t}, \tilde{t}+\tilde{H}}$ and $RV_{\tilde{t}-\tilde{H}, \tilde{t}}$ increases, and it converges to a function of the long-memory parameter $d$:
\begin{theorem}\label{thm:convergence_rho}
If  $0 < d < \frac{1}{2}$,  as $\tilde{H} \to \infty$,
\[
\corr\left(R_{\tilde{t},\tilde{t}+\tilde{H}},\; RV_{\tilde{t}-\tilde{H},\tilde{t}}\right) \rightarrow  \left(2^{2d}-1\right) \;.
\]
\end{theorem}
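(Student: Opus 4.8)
The plan is to read off the exact large-$\tilde H$ asymptotics of the numerator $\cov(\R,\RV)$ from the representation in Lemma~\ref{lem:cov_ret_pred} and of the two variances in the denominator, and then to check that all model-dependent constants cancel, leaving only $2^{2d}-1$. Throughout I write $n:=\tilde Hm$ and $\Lambda_t:=\Lambda((t-1,t])=\lambda\int_{t-1}^t\rme^{Z_H(u)}\,\rmd u$. Since, conditionally on $\Lambda$, the counts over disjoint intervals are independent and the efficient shocks are independent of $N$, one has for every $L\ge1$
\begin{equation}\label{eq:plan-reduction}
\cov(r_0,r_L)=\mu^2\cov(\Lambda_0,\Lambda_L),\qquad \cov(r_L,r_0^2)=\mu\,\cov(\Lambda_L,g_0),\qquad \cov(r_0^2,r_L^2)=\cov(g_0,g_L),
\end{equation}
where $g_t:=\esp[r_t^2\mid\Lambda]=\mu^2\Lambda_t^2+(\mu^2+\sigma_e^2)\Lambda_t$. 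Each of $\Lambda_t$ and $g_t$ is a smooth functional of the Gaussian field $Z_H$, so the log-normal moment identity $\esp[\rme^{aZ+bZ'}]=\rme^{(a^2+b^2)/2}\rme^{ab\cov(Z,Z')}$ (for standard normals $Z,Z'$) turns (\ref{eq:plan-reduction}) into explicit double and quadruple integrals, exactly as in the proof of Lemma~\ref{lem:autocov_ret}. Dividing the integrands by $\gammaz(L)$ and letting $L\to\infty$ --- using that $\gammaz$ is regularly varying, so $\gammaz(L+s)/\gammaz(L)\to1$ uniformly over the bounded shifts $s$ that occur, and that the second and higher Wiener-chaos contributions are $O(\gammaz(L)^2)$ --- gives $\cov(r_0,r_L)\sim A\,\gammaz(L)$, $\cov(r_L,r_0^2)\sim D\,\gammaz(L)$ and $\cov(r_0^2,r_L^2)\sim B\,\gammaz(L)$, with $A=\mu^2\lambda^2\rme$, with $B=C$ the constant of Theorem~\ref{thm:long memory squared returns}, and with all three constants strictly positive.

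The crucial step is the identity $D=\sqrt{AB}$. It holds because the long-range dependence of all three covariances is carried entirely by the first Wiener chaos of $Z_H$: the $L^2$-projection of $\Lambda_t$ onto the first chaos is $\lambda\rme^{1/2}\int_{t-1}^t Z_H(u)\,\rmd u$, while that of $g_t$ is $\int_{t-1}^t\phi(u-t)\,Z_H(u)\,\rmd u$ for an explicit strictly positive density $\phi$ with $\int_{-1}^0\phi(u)\,\rmd u=2\mu^2\lambda^2\rme\int_{-1}^0\!\int_{-1}^0\rme^{\gammaz(v-w)}\,\rmd v\,\rmd w+(\mu^2+\sigma_e^2)\lambda\rme^{1/2}=:J$. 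For functionals whose long-memory part is a first-chaos integral with integrable density, the leading covariance constant factors through the integrals of the densities; concretely $A=\mu^2(\lambda\rme^{1/2})^2$, $B=J^2$ and $D=\mu(\lambda\rme^{1/2})J$, so $D^2=AB$, with the sign $+$ because $\mu>0$, $\lambda>0$ and $\phi>0$. (Equivalently, one may bypass the chaos language and just evaluate the explicit integrals of the previous paragraph, verifying by hand that $A=(\mu\lambda\rme^{1/2})^2$, $D=\mu\lambda\rme^{1/2}J$ and $B=J^2$.) This identity, together with the uniform-asymptotics and chaos-tail bookkeeping it rests on, is where I expect essentially all the work to lie; the rest is arithmetic.

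It then remains to assemble the pieces, using $2d+1\in(1,2)$. Plugging $\cov(r_L,r_0^2)\sim D\cz L^{2d-1}$ into Lemma~\ref{lem:cov_ret_pred} and invoking Karamata's theorem (the terms with small $L$ contribute $o(n^{2d+1})$, and $\sum_{L=n+1}^{2n-1}(2n-L)L^{2d-1}$ is a Riemann sum for $n^{2d+1}\int_1^2(2-x)x^{2d-1}\,\rmd x$) yields
\[
\cov(\R,\RV)\ \sim\ D\,\cz\, n^{2d+1}\left[\frac{1}{2d+1}+\int_1^2(2-x)x^{2d-1}\,\rmd x\right]\ =\ D\,\cz\, n^{2d+1}\,\frac{2^{2d}-1}{d(2d+1)}\,,
\]
the bracket being evaluated by elementary integration. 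For the denominator, the standard identity $\var(\sum_{t=1}^n X_t)=n\gamma(0)+2\sum_{L=1}^{n-1}(n-L)\gamma(L)$ with $\gamma(L)\sim c_0L^{2d-1}$ gives $\var(\sum_{t=1}^n X_t)\sim \frac{c_0}{d(2d+1)}\,n^{2d+1}$ (the $n\gamma(0)$ term and, for the squared returns, the white-noise contributions being $O(n)$ hence negligible); by Lemma~\ref{lem:autocov_ret} and Corollary~\ref{cor:var_rt} this gives $\var(\R)\sim\frac{A\cz}{d(2d+1)}n^{2d+1}$, and by Theorem~\ref{thm:long memory squared returns}, $\var(\RV)\sim\frac{B\cz}{d(2d+1)}n^{2d+1}$. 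Dividing, the factors $\cz$, $n^{2d+1}$ and $d(2d+1)$ all cancel, and by $D=\sqrt{AB}$ the limit is $\frac{D}{\sqrt{AB}}\,(2^{2d}-1)=2^{2d}-1$. An alternative, more conceptual route avoids the explicit constants entirely: establish a joint functional limit theorem $n^{-H}\sum_{t=1}^{\lfloor n\,\cdot\,\rfloor}(\mu\Lambda_t,\ g_t-\esp[g_t])\Rightarrow(\sigma_1,\sigma_2)\,B_H(\cdot)$ driven by a single fractional Brownian motion $B_H$ with $\sigma_1,\sigma_2>0$ (a non-central limit theorem for Gaussian-subordinated long-memory sequences of Hermite rank one), so that $\corr(\R,\RV)\to\corr(B_H(2)-B_H(1),\,B_H(1))=2^{2H-1}-1=2^{2d}-1$; this makes the value transparent but requires importing that limit theorem.
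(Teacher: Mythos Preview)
Your proposal is correct and follows the same overall strategy as the paper: obtain the asymptotics $\cov(r_0,r_L)\sim A\gammaz(L)$, $\cov(r_L,r_0^2)\sim D\gammaz(L)$, $\cov(r_0^2,r_L^2)\sim B\gammaz(L)$, feed these through the representation of Lemma~\ref{lem:cov_ret_pred} and the analogous ones for $\var(\R)$ and $\var(\RV)$ via a Karamata-type summation, and conclude from $D^2=AB$ that all model-dependent constants cancel to leave $2^{2d}-1$. Your summation (the Riemann-sum computation yielding $(2^{2d}-1)/(d(2d+1))$) is equivalent to the paper's use of Proposition~2.2.1 of Pipiras--Taqqu, and your constants $A$, $B$, $D$ coincide with the paper's $C_{\mu,\lambda}$, $\tilde C_{\mu,\lambda,d,\sigma_e}$, $C_{\mu,\lambda,d,\sigma_e}$.

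The one genuine difference is how the key identity $D^2=AB$ is established. The paper computes $C_{\mu,\lambda}$, $\tilde C_{\mu,\lambda,d,\sigma_e}$, and $C_{\mu,\lambda,d,\sigma_e}$ explicitly in terms of $C_1$, $C_2$ (equations~(\ref{eq:def_C1}), (\ref{eq:def_C2}), (\ref{eq:def_C_var_realV}), (\ref{eq:def_C_cov_r0_sq_rL})) and then simply asserts that $(C_{\mu,\lambda,d,\sigma_e})^2=\tilde C_{\mu,\lambda,d,\sigma_e}\,C_{\mu,\lambda}$, leaving the algebraic verification to the reader. Your argument is more structural: you observe that $\Lambda_t$ and $g_t=\esp[r_t^2\mid\Lambda]$ both have Hermite rank one in $Z_H$, so the leading constants factor as products of the first-chaos masses, giving $A=(\mu\lambda\rme^{1/2})^2$, $D=(\mu\lambda\rme^{1/2})J$, $B=J^2$ and hence $D^2=AB$ without any miracle. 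This explains \emph{why} the limit depends on $d$ alone and makes the cancellation transparent; the paper's brute-force check is shorter to write but opaque. Your alternative route via a joint non-central limit theorem to a common fractional Brownian motion is not in the paper; it is an elegant shortcut, but importing that theorem in the present setting (Gaussian-subordinated functionals over a continuous-time index) would itself require care, so the direct route you spell out first is the safer one to write up.
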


\noindent This theorem has the same form as Theorem 3 in Sizova (2013), though our assumptions differ from hers.
On the other hand,  if $d=0$, i.e. short memory, then there is no long-term return predictability as shown in the following theorem:
\begin{theorem}\label{thm:convergence_rho_shortM}
If $d=0$, as $\tilde{H} \to \infty$,
\[
 \corr\left(R_{\tilde{t},\tilde{t}+\tilde{H}},\; RV_{\tilde{t}-\tilde{H},\tilde{t}}\right) \rightarrow 0 \;.
\]
\end{theorem}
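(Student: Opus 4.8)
The plan is to exploit the special structure of the short-memory regime. When $d=0$ (equivalently $H=\tfrac12$) the autocovariance of the driving Gaussian process has \emph{bounded support}: by Lemma \ref{lem:exp_gammaz_d0}, $\gammaz(r)=(1-c|r|)\,\ind{|r|<1/c}$, so $\gammaz(r)=0$ whenever $|r|\ge 1/c$. Since $Z_H(t)=B_{1/2}(ct)-B_{1/2}(ct-1)$ is a fractional Gaussian noise increment of a \emph{standard} Brownian motion, and Brownian motion has independent increments, the restriction of $Z_H$ --- and hence of $\lambda(\cdot)=\lambda\rme^{Z_H(\cdot)}$, of the random mean measure $\Lambda$, and of the Cox process $N$ --- to two unit clock-time intervals separated by more than a fixed distance $q=q(c)<\infty$ are independent of each other. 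Because each $r_t$, and therefore each $r_t^2$, is a measurable function of the restriction of $N$ to $(t-1,t]$ together with the i.i.d.\ shocks attached to the points in that interval, the bivariate sequence $\{(r_t,r_t^2)\}$ is stationary and $q$-dependent; in particular $\cov(r_0,r_L)=0$ and $\cov(r_0,r_L^2)=0$ for every $L>q$.

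First I would bound the numerator of the correlation. The aggregate $\R$ sums returns indexed by $t>\tilde t m$, whereas $\RV$ sums squared returns indexed by $s\le\tilde t m$, so every pair that occurs has $t-s\ge 1$; by $q$-dependence $\cov(r_t,r_s^2)=0$ unless $t-s\le q$, which forces $t\in\{\tilde t m+1,\dots,\tilde t m+q\}$ and $s\in\{\tilde t m-q+1,\dots,\tilde t m\}$ as soon as $\tilde H m\ge q$. Hence at most $q^2$ terms survive, and Cauchy--Schwarz bounds each by $\sqrt{\var(r_0)\var(r_0^2)}<\infty$, so $\cov(\R,\RV)$ stays bounded --- indeed eventually constant --- as $\tilde H\to\infty$. (The same follows from the identity \eqref{eq:cov_ret_pred}, whose derivation uses only stationarity and hence remains valid here once the sign assertion of Lemma \ref{lem:cov_ret_pred}, specific to $d>0$, is set aside.)

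Next I would show the denominator diverges. For a stationary $q$-dependent sequence $\{X_i\}$ with autocovariance $\gamma$, a sum of $n>q$ consecutive terms has variance $n\sigma^2-b$ with $\sigma^2=\sum_{|L|\le q}\gamma(L)$ and $b=\sum_{|L|\le q}|L|\gamma(L)$. Applied to $\{r_t\}$, Lemma \ref{lem:autocov_ret} gives $\cov(r_0,r_L)=\mu^2\lambda^2\rme\int_{-1}^0\!\int_{-1}^0(\rme^{\gammaz(L+t-s)}-1)\,\rmd t\,\rmd s\ge 0$ because $\gammaz\ge 0$, while $\var(r_0)>0$ by Corollary \ref{cor:var_rt}; hence the long-run variance satisfies $\sigma^2\ge\var(r_0)>0$ and $\var(\R)\to\infty$ linearly in $\tilde H$. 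Applied to $\{r_t^2\}$: if $\sum_{|L|\le q}\cov(r_0^2,r_L^2)>0$ then $\var(\RV)\to\infty$ too; and in the borderline case where this long-run variance vanishes, $\var(\RV)$ equals the \emph{fixed} constant $-b$ for all $\tilde H m>q$, which is strictly positive because $\RV$ is non-degenerate (it attains $0$ and positive values each with positive probability). Either way $\liminf_{\tilde H}\var(\RV)>0$. Combining, $\corr(\R,\RV)^2=\cov(\R,\RV)^2\big/\big(\var(\R)\var(\RV)\big)\le q^4\,\var(r_0)\,\var(r_0^2)\big/\big((\tilde H m\,\sigma^2-b)\,\liminf_{\tilde H}\var(\RV)\big)\longrightarrow 0$, which is the assertion.

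The partial-sum variance formula and the Cauchy--Schwarz estimate are routine. The one genuine point --- the step to get exactly right --- is the structural observation that in the $d=0$ regime $\gammaz$ has compact support, so that $\{r_t\}$ and $\{r_t^2\}$ are finite-range-dependent and the overlap between $\R$ and $\RV$ collapses to an $O(1)$ block of time points around $\tilde t m$; this is precisely what keeps the covariance bounded while the two variances grow. A secondary care point is ruling out degeneracy of $\var(\RV)$ in the knife-edge case of a vanishing squared-return long-run variance, which the non-degeneracy of $\RV$ handles as above.
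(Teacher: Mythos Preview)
Your proposal is correct and follows essentially the same route as the paper: exploit that when $d=0$ the autocovariance $\gammaz$ has bounded support (Lemma~\ref{lem:exp_gammaz_d0}), so $\{r_t\}$ and $\{r_t^2\}$ are finitely dependent, which forces $\cov(\R,\RV)$ to be $O(1)$ while $\var(\R)$ grows linearly in $\tilde H$, and hence the correlation tends to zero. The paper differs only cosmetically---it uses the explicit formulas (\ref{eq:cov_ret_pred}), (\ref{eq:cov_ro_ret_L}), (\ref{eq: cov_sq_ret_L}) together with $\gammaz\ge 0$ to get nonnegativity of all the autocovariances directly (so your knife-edge case for the long-run variance of $r_t^2$ never arises), whereas you argue the $q$-dependence structurally from the independent increments of Brownian motion and bound the numerator by Cauchy--Schwarz.
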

\noindent Hence as pointed out by Sizova (2013), it is the presence of the long memory component, strengthened by the aggregation of the returns and the squared returns that causes the long-term return predictability. From now on, we use the notation $\rho$ to represent $\corr\left(R_{\tilde{t},\tilde{t}+\tilde{H}}, RV_{\tilde{t}-\tilde{H},\tilde{t}}\right)$ and $\hat{\rho}$ to represent the sample correlation between these quantities as estimated by}
\begin{eqnarray}\label{eq:def_rho_hat}
&& \hat{\rho}_{\tilde{H}, \tilde{T}}
= \ddfrac{\frac{1}{\tilde{T}-2\tilde{H}}\sum_{\tilde{t}=\tilde{H}}^{(\tilde{T}-\tilde{H})}\left(R_{\tilde{t},\tilde{t}+\tilde{H}}-\overline{R_{\tilde{t},\tilde{t}+\tilde{H}}}\right)
\left(RV_{\tilde{t}-\tilde{H},\tilde{t}}-\overline{RV_{\tilde{t}-\tilde{H},\tilde{t}}}\right)}
     {\sqrt{\frac{1}{\tilde{T}-2\tilde{H}}\sum_{\tilde{t}=\tilde{H}}^{(\tilde{T}-\tilde{H})}\left(R_{\tilde{t},\tilde{t}+\tilde{H}}-\overline{R_{\tilde{t},\tilde{t}+\tilde{H}}}\right)^2} \sqrt{\frac{1}{\tilde{T}-2\tilde{H}}\sum_{\tilde{t}=\tilde{H}}^{(\tilde{T}-\tilde{H})}\left(RV_{\tilde{t}-\tilde{H},\tilde{t}}-\overline{RV_{\tilde{t}-\tilde{H},\tilde{t}}}\right)^2
     }} \notag \\
\end{eqnarray}
where $\overline{R_{\tilde{t},\tilde{t}+\tilde{H}}}$ and $\overline{RV_{\tilde{t}-\tilde{H},\tilde{t}}}$ are the sample means of $\R$ and $\RV$.

\section{Statistical Inference on Return Predictability}\label{sec:hypothesis_test_ret_predictability}
Theorem \ref{thm:convergence_rho_shortM} establishes that when $d=0$, $\rho \to 0$ as $\tilde{H} \to \infty$ and thus there is no long-run return predictability.
Hence we wish to test the hypotheses of no long-run return predictability,
\begin{eqnarray*}
H_0: &\rho = 0 \;\; (d=0) \\
H_a: &\rho > 0 \;\; (d > 0) \;.
\end{eqnarray*}

We consider two different asymptotic frameworks for the degree of aggregation $\tilde{H}$.
The first framework is linear growth, $\tilde{H} = \theta \tilde{T}$, for $\theta \in (0,1)$, as used by Sizova (2013).
The second framework is a power law, $\tilde{H} = \tilde{T}^{\kappa}$, for $\kappa \in (0,1)$.
Under the linear aggregation framework, we will consider using $\hat{\rho}_{\tilde{H}, \tilde{T}}$ as the test statistic without further normalization.
Under the power-law framework, we will consider the test statistic $\sqrt{\tilde{T}^{1-\kappa}}\hat{\rho}_{\tilde{H}, \tilde{T}}$, or $\sqrt{\tilde{T}^{1-\kappa}}\tilde{\rho}_{\tilde{H}, \tilde{T}}$, where $\tilde{\rho}_{\tilde{H}, \tilde{T}}$ is a slightly modified version of $\hat{\rho}_{\tilde{H}, \tilde{T}}$ given by (\ref{eq:def_tilde_rho_hat}).

Under the linear growth framework, Sizova (2013) proved that $\hat{\rho}_{\tilde{H}, \tilde{T}}$ converges in distribution to a functional of fractional Brownian Motion subject to several assumptions (e.g.\;the dynamics of (14) and (15) as well as Assumptions 1 and 2 of that paper for both $d=0$ and $d> 0$).
Under these dynamics and assumptions, the resulting asymptotic distribution $F_\rho(A^d(\tau),B^d(\tau))$ for $d=0$ could be used to
obtain critical values for a hypothesis test based on $\hat{\rho}_{\tilde{H}, \tilde{T}}$. Though such a test is asymptotically correctly sized, it is inconsistent as implied by Sizova's (2013) Theorem 4 since the test statistic converges in distribution under both the null hypothesis and the alternative hypothesis. For our model, Figures \ref{fig:asym_dist_rho_hat_Sizova_d0} and \ref{fig:asym_dist_rho_hat_Sizova_d035} show the sampling distribution of $\hat{\rho}_{\tilde{H}, \tilde{T}}$ under the linear growth framework for $\tilde{H}$ based on simulated returns with $d=0$ and $d > 0$.
These results suggest that $\hat{\rho}_{\tilde{H}, \tilde{T}}$ converges in distribution when $d=0$ and $d > 0$, and thus that the power of a hypothesis test for $d=0$ based on $\hat{\rho}_{\tilde{H}, \tilde{T}}$ will not go to $1$ in this framework. Further support for this conclusion is provided by Tables \ref{tab:variance_rho_hat_sizova_d0} and \ref{tab:variance_rho_hat_sizova_d035}, where we calculate the variance of $\hat{\rho}_{\tilde{H}, \tilde{T}}$ for several values of $\tilde{T}$ under the linear growth framework for our model with $d=0$ and $d > 0$. The variance of $\hat{\rho}_{\tilde{H}, \tilde{T}}$ remains essentially constant. Apparently, the power of a correctly-sized test based on $\hat{\rho}_{\tilde{H}, \tilde{T}}$ would not approach $1$ as $\tilde{T} \to \infty$ under our model in the linear growth framework. Next we obtain the critical values of the correctly-sized test based on the simulated sampling distribution of $\hat{\rho}_{\tilde{H}, \tilde{T}}$ from our model under $d=0$ \footnote{Here we do not consider using the critical values obtained from the asymptotic null distribution in Sizova's (2013) Theorem 4 for the test. This is because our return model implies an endogenous relationship between the return and the realized variance, whereas Sizova's (2013) Theorem 4 assumes dynamic (as in her Eq.(15)) under which the realized variance is an exogenous variable.}.  Tables \ref{tab:size_H_T_sizova} and \ref{tab:test_power_sizova} show the size and the power of the hypothesis test based on $\hat{\rho}_{\tilde{H}, \tilde{T}}$ under the linear growth framework with nominal size of $5\%$. This test is approximately correctly sized but has low power even as the sample size increases, as explained above.

We next consider the power-law framework, which we show is more promising than the linear framework for testing for long-run return predictability.
The density plots of $\hat{\rho}_{\tilde{H}, \tilde{T}}$ under the power-law framework shown in Figures \ref{fig:asym_dist_rho_hat_power_law_d0} and \ref{fig:sample_dist_rho_hat_power_law_d035} and Tables \ref{tab:variance_rho_hat_d0} and \ref{tab:variance_rho_hat_d035} indicate that as $\tilde{T}$ increases, the variance of $\hat{\rho}_{\tilde{H}, \tilde{T}}$ decreases. Furthermore, the mean of $\hat{\rho}_{\tilde{H}, \tilde{T}}$ increases with $\tilde{T}$ when $d > 0$, perhaps approaching $2^{2d}-1$, though this is not clear based on the sample sizes considered. This suggests the possibility to construct a consistent test for long-run return predictability based on a rescaled version of $\hat{\rho}_{\tilde{H}, \tilde{T}}$. To determine an appropriate rescaling factor, we study the behavior of the variance of
$\hat{\rho}_{\tilde{H}, \tilde{T}}$ as $\tilde{T}$ increases under the null hypothesis, $d=0$. To study the convergence rate of the variance of $\hat{\rho}_{\tilde{H}, \tilde{T}}$, we created a scatter plot for the logged variance of $\hat{\rho}_{\tilde{H}, \tilde{T}}$ and $\log {\tilde{T}}$ presented in Figure \ref{fig:log_log_plot}.
For each value of $\kappa$, the scatter plot appears to be linear. To confirm this, we regress the logged variance of $\hat{\rho}_{\tilde{H}, \tilde{T}}$ on $\log \tilde{T}$ and present the fitted equations in Table \ref{tab:fit_reg_var_rho_hat} . The fitted slope coefficients are very close to the values of $\kappa -1$. Hence we conjecture that the variance of $\hat{\rho}_{\tilde{H}, \tilde{T}}$ is proportional to $\tilde{T}^{\kappa-1}$. This leads to the rescaled test statistic $\sqrt{\tilde{T}^{1-\kappa}}\hat{\rho}_{\tilde{H}, \tilde{T}}$.

Density plots in Figure \ref{fig:asym_dist_adj_rho_hat_power_law_d0} as well as Shapiro-Wilk normality test results in Table $\ref{tab:k-s_normality_test}$ for $d=0$
indicate that $\sqrt{\tilde{T}^{1-\kappa}}\hat{\rho}_{\tilde{H}, \tilde{T}}$ may be asymptotically normal under the null hypothesis.
Therefore, from now on we will focus on the power-law aggregation framework. We will also restrict attention to hypothesis testing under our model.
We consider two options for hypothesis testing under the power law framework: (i) bootstrap method and (ii) asymptotic method.

In the bootstrap approach, given one realization of $\log P(t)$, $t \in (0,T]$,
we compute the observed value of $\sqrt{\tilde{T}^{1-\kappa}}\hat{\rho}_{\tilde{H}, \tilde{T}}$, $\sqrt{\tilde{T}^{1-\kappa}}\hat{\rho}_{\tilde{H}, \tilde{T}}^{obs}$, and estimate the model parameters using the formulas in Section \ref{sec:model_estimates}. We then set $d=0$ and use the other estimated parameters to simulate 1000 replications of $\sqrt{\tilde{T}^{1-\kappa}}\hat{\rho}_{\tilde{H}, \tilde{T}}$. We use the $95^{th}$ percentile among the 1000 replications, $\sqrt{\tilde{T}^{1-\kappa}}{\hat{\rho}}_{95\%}$, as the critical value of the test.  Thus, we compare $\sqrt{\tilde{T}^{1-\kappa}}{\hat{\rho}_{\tilde{H}, \tilde{T}}}^{obs}$ with ${\hat{\rho}}_{95\%}$ and reject the null hypothesis if $\sqrt{\tilde{T}^{1-\kappa}}\hat{\rho}_{\tilde{H}, \tilde{T}}^{obs} > \sqrt{\tilde{T}^{1-\kappa}}\hat{\rho}_{95\%}$.

Davidson and MacKinnon (1999, 2006) showed that subject to some assumptions, if the model parameter estimator is consistent under the null hypothesis, the rejection rate of a parametric bootstrap test converges to the nominal size, where the convergence rate depends on the order of convergence of the consistent estimator. In Theorem \ref{thm:consistent_estimators}, we prove that our model estimators are consistent under $d=0$. We find empirically that the rejection rates of our bootstrap hypothesis test (reported in Table \ref{tab:bootstrap_size}) are almost never significantly different from the nominal size. Furthermore, the power (see Table \ref{tab:test_power_power_law}) increases with the sample size and the power is highest for smaller values of $\kappa$. Nevertheless, we do not have a theoretical justification for the bootstrap hypothesis test in our framework.

Next, we develop an asymptotically correctly-sized consistent test based on $\sqrt{\tilde{T}^{1-\kappa}}\tilde{\rho}_{\tilde{H}, \tilde{T}}$, where $\tilde{\rho}_{\tilde{H}, \tilde{T}}$ is a slightly modified version of $\hat{\rho}_{\tilde{H}, \tilde{T}}$ (defined in (\ref{eq:def_tilde_rho_hat})).
When $d=0$, our return model implies that the daily return $r_t$ is $h$-dependent, where the lag $h$ is determined by the time-scaling constant $c$ (See (\ref{eq:def_h}) for the definition of $h$). To take advantage of this fact (see Remark \ref{rmk:mean_cov_explode} below),
we propose a modified sample correlation coefficient $\tilde{\rho}_{\tilde{H}, \tilde{T}}$, where in the aggregation of returns, we skip the first $h$ daily returns from the first day of month $\tilde{t}$ and aggregate the remaining daily returns over the $\tilde{H}$ months to construct the modified aggregated return $\tR$ and then compute the sample correlation between $\tR$ and $\RV$.

\begin{theorem}\label{thm:asym_normal_dist_norm_rho}
Define the modified sample correlation coefficient
\begin{eqnarray}\label{eq:def_tilde_rho_hat}
&& \tilde{\rho}_{\tilde{H}, \tilde{T}}
= \ddfrac{\frac{1}{\tilde{T}-2\tilde{H}}\sum_{\tilde{t}=\tilde{H}+1}^{\tilde{T}-\tilde{H}}\left(\tR -\overline{\tilde{R}_{\tilde{H}}}\right)
\left(RV_{\tilde{t}-\tilde{H},\tilde{t}}-\overline{RV_{\tilde{H}}}\right)}
     {\sqrt{\frac{1}{\tilde{T}-2\tilde{H}}\sum_{\tilde{t}=\tilde{H}+1}^{\tilde{T}-\tilde{H}}\left(\tR-\overline{\tilde{R}_{\tilde{H}}}\right)^2} \sqrt{\frac{1}{\tilde{T}-2\tilde{H}}\sum_{\tilde{t}=\tilde{H}+1}^{\tilde{T}-\tilde{H}}\left(RV_{\tilde{t}-\tilde{H},\tilde{t}}-\overline{RV_{\tilde{H}}}\right)^2
     }}
\end{eqnarray}
where
\begin{equation}\label{eq:tilde_R}
\tR = \sum_{t=\tilde{t}m+h+1}^{(\tilde{t}+\tilde{H})m}r_t
\end{equation}
and
\begin{equation}\label{eq:sample_mean_tilde_R}
 \overline{\tilde{R}_{\tilde{H}}} = \frac{1}{\tilde{T}-2\tilde{H}}\sum_{\tilde{t}=\tilde{H}+1}^{\tilde{T}-\tilde{H}}\tR.
\end{equation}
\\
If $d=0$ and $\tilde{H}=\tilde{T}^{\kappa}$, for $\kappa \in (0, 1/3)$,
\begin{equation}\label{eq:asym_dist_tilde_rho}
\sqrt{\tilde{T}^{1-\kappa}}\; \tilde{\rho}_{\tilde{H}, \tilde{T}} \convdistr N\left(0,\; \frac{S^2}{A_1 A_2} \right)\;,
\end{equation}
where $A_1= \var(r_0) + 2 \sum_{k=1}^h \cov(r_0, r_k)$, $A_2= \var(r^2_0) + 2 \sum_{k=1}^h \cov(r^2_0, r^2_k)$,
and $S^2 = \frac{2}{3}  \sum_{|u|\leq h}\sum_{|v|\leq h} \cov(r_0,r_u)  \cov(r_0^2,r_v^2)$.
\end{theorem}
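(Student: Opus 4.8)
The plan is to reduce \eqref{eq:asym_dist_tilde_rho}, via Slutsky's lemma, to a central limit theorem for a sum of ``$\tilde H$-dependent'' sandwiched products. Two structural facts drive everything. (a) When $d=0$, Lemma~\ref{lem:exp_gammaz_d0} gives $\gammaz(r)=0$ for $|r|\ge 1/c$, so the intensity $\lambda(t)=\lambda\rme^{Z_{1/2}(t)}$, and hence the strictly stationary sequence $\{(r_t,r_t^2)\}_{t\ge1}$, is $h$-dependent with $h$ the integer fixed by $c$ in \eqref{eq:def_h}; consequently $A_1=\sum_{|u|\le h}\cov(r_0,r_u)$ and $A_2=\sum_{|v|\le h}\cov(r_0^2,r_v^2)$ are the long-run variances, so $S^2=\tfrac23A_1A_2$. (b) Since $\tR$ discards the first $h$ daily returns of month $\tilde t$, the smallest day index entering $\tR$ exceeds the largest day index entering $\RV$ by more than $h$, so $\sigma(r_t:t\le\tilde tm)\perp\sigma(r_t:t\ge\tilde tm+h+1)$; hence $\tR\perp\RV$ for every $\tilde t$, in particular $\cov(\tR,\RV)=0$. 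Put $n=\tilde T-2\tilde H$ and, for any month index $s$, let $\xi_s,\eta_s$ be the modified aggregated return and the realized variance for month $s$ centered at their (index-free) means, with $\bar\xi=n^{-1}\sum_{\tilde t=\tilde H+1}^{\tilde T-\tilde H}\xi_{\tilde t}$, $\bar\eta=n^{-1}\sum_{\tilde t}\eta_{\tilde t}$; then $\tilde\rho_{\tilde H,\tilde T}=\hat C/(\widehat\sigma_{\tilde R}\widehat\sigma_{RV})$ with $\widehat\sigma_{\tilde R}^2=n^{-1}\sum_{\tilde t}(\xi_{\tilde t}-\bar\xi)^2$, $\widehat\sigma_{RV}^2=n^{-1}\sum_{\tilde t}(\eta_{\tilde t}-\bar\eta)^2$ the two sample variances in \eqref{eq:def_tilde_rho_hat}, and $\hat C=n^{-1}\sum_{\tilde t}\xi_{\tilde t}\eta_{\tilde t}-\bar\xi\bar\eta$.

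First I would handle the denominators and the bias term. Stationarity and $h$-dependence give $\var(\xi_{\tilde t})\sim\tilde Hm\,A_1$, $\var(\eta_{\tilde t})\sim\tilde Hm\,A_2$, with $\cov(\xi_{\tilde t},\xi_{\tilde t'})$ a triangular function of $|\tilde t-\tilde t'|$ supported on $\{|\tilde t-\tilde t'|<\tilde H\}$ (similarly for $\eta$); a routine second-moment bound yields $\var(\widehat\sigma_{\tilde R}^2/(\tilde Hm))=O(\tilde H/n)\to0$ and likewise for $RV$, so $\widehat\sigma_{\tilde R}^2/(\tilde Hm)\convprob A_1$, $\widehat\sigma_{RV}^2/(\tilde Hm)\convprob A_2$. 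By $h$-dependence and window disjointness, $\cov(\xi_{\tilde t},\eta_{\tilde t'})=0$ unless $0<\tilde t'-\tilde t<2\tilde H$ and is then $O(\tilde Hm)$, so $\var(\bar\xi),\var(\bar\eta)=O(\tilde H^2m/n)$, whence $\bar\xi\bar\eta=O_P(\tilde H^2m/n)$ and $\esp[\hat C]=-\cov(\bar\xi,\bar\eta)=O(\tilde H^2m/n)$; after normalization both are $O_P(\tilde T^{(\kappa-1)/2})=o_P(1)$. By Slutsky it then suffices to prove $\sqrt{\tilde T^{1-\kappa}}\big(n^{-1}\sum_{\tilde t}\xi_{\tilde t}\eta_{\tilde t}\big)\big/(\tilde Hm\sqrt{A_1A_2})\convdistr N\!\big(0,S^2/(A_1A_2)\big)$.

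The crux is the variance of the leading term. For $\tilde t\ne\tilde t'$, expanding the centered fourth moment $\esp[\xi_{\tilde t}\eta_{\tilde t}\xi_{\tilde t'}\eta_{\tilde t'}]$ into the three pair-products of second moments plus a joint fourth cumulant: the pairing $(\xi_{\tilde t}\eta_{\tilde t})(\xi_{\tilde t'}\eta_{\tilde t'})$ is $\cov(\tR,\RV)^2=0$; the pairing $(\xi_{\tilde t}\eta_{\tilde t'})(\eta_{\tilde t}\xi_{\tilde t'})=\cov(\xi_{\tilde t},\eta_{\tilde t'})\cov(\xi_{\tilde t'},\eta_{\tilde t})$ vanishes because by the one-sided overlap geometry of the modified windows at most one of the two factors is nonzero; and the fourth-cumulant remainder, supported on quadruples of days of diameter $O(h)$, forces $\tilde t'=\tilde t$ once $m$ exceeds a fixed multiple of $h$ (otherwise it is $O(1)$ per pair, $O(n)$ overall, negligible). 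Hence, up to lower-order terms, $\cov(\xi_{\tilde t}\eta_{\tilde t},\xi_{\tilde t'}\eta_{\tilde t'})=\cov(\xi_{\tilde t},\xi_{\tilde t'})\cov(\eta_{\tilde t},\eta_{\tilde t'})$ --- exactly for $\tilde t=\tilde t'$, where $\var(\xi_{\tilde t}\eta_{\tilde t})=\var(\xi_{\tilde t})\var(\eta_{\tilde t})$ by (b). With $\cov(\xi_{\tilde t},\xi_{\tilde t'})=(\tilde H-|\tilde t-\tilde t'|)_+\,mA_1+O(1)$, $\cov(\eta_{\tilde t},\eta_{\tilde t'})=(\tilde H-|\tilde t-\tilde t'|)_+\,mA_2+O(1)$ and $\sum_{|j|<\tilde H}(\tilde H-|j|)^2\sim\tfrac23\tilde H^3$ (a Riemann sum for $\int_{-1}^1(1-|x|)^2\,dx=\tfrac23$), summing over $\tilde t,\tilde t'$ gives $\var\big(n^{-1}\sum_{\tilde t}\xi_{\tilde t}\eta_{\tilde t}\big)\sim\tfrac23 m^2A_1A_2\,\tilde H^3/n$, so that $\tilde T^{1-\kappa}\var\big(n^{-1}\sum_{\tilde t}\xi_{\tilde t}\eta_{\tilde t}\big)\big/\big[(\tilde Hm)^2A_1A_2\big]\to\tfrac23=S^2/(A_1A_2)$, which pins the asymptotic variance.

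For the CLT itself, $\{\xi_{\tilde t}\eta_{\tilde t}\}$ is a mean-zero stationary $m_n$-dependent sequence with $m_n=2\tilde H+\lceil h/m\rceil\asymp\tilde T^\kappa$, i.e.\ a triangular array with mixing range tending to infinity. I would split $\{\tilde H+1,\dots,\tilde T-\tilde H\}$ into $q_n\to\infty$ big blocks of length $p_n$, $\tilde H\ll p_n\ll n$, separated by buffers of length $>m_n$: the buffer contributions are $o_P$ of the standard deviation; the big-block sums $U_{n,1},\dots,U_{n,q_n}$ are i.i.d.\ with $\var(U_{n,1})\sim\tfrac23 p_n\tilde H^3m^2A_1A_2$; and a Rosenthal-type bound for $h$-dependent partial sums (requiring, say, $\esp|e_k|^{4+2\delta}<\infty$ for some $\delta>0$) gives $\esp|\xi_{\tilde t}|^{2+\delta},\esp|\eta_{\tilde t}|^{2+\delta}=O((\tilde Hm)^{(2+\delta)/2})$, hence $\esp|U_{n,1}|^{2+\delta}=O((p_n\tilde H^3)^{(2+\delta)/2})$ and a Lyapunov ratio of order $q_n^{-\delta/2}\to0$, so the classical CLT for row-independent triangular arrays applies to $\sum_j U_{n,j}$; combining with the steps above via Slutsky yields \eqref{eq:asym_dist_tilde_rho}. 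The main obstacle is exactly the previous step: verifying that the sample covariance's second moment is dominated by the ``product of triangular kernels'' --- that is, that the cross-covariances die identically (this is precisely what the $h$-skip modification buys, cf.\ Remark~\ref{rmk:mean_cov_explode}) and that all fourth-cumulant, edge and sample-mean remainders are $o(\tilde T^{(\kappa-1)/2})$ --- and then running a CLT for a triangular array whose dependence range $\tilde T^\kappa$ grows with $n$; on $\kappa\in(0,\tfrac13)$ the modification is moreover indispensable, since for the unmodified $\hat\rho_{\tilde H,\tilde T}$ the analogous mean would diverge like $\tilde T^{(1-3\kappa)/2}$.
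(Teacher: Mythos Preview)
Your proposal is correct and follows essentially the same architecture as the paper's proof: the same Slutsky decomposition into numerator and denominators, the same independence $\tR\perp\RV$ from the $h$-skip, the same cumulant-plus-pairings expansion of $\cov(\xi_{\tilde t}\eta_{\tilde t},\xi_{\tilde t'}\eta_{\tilde t'})$ in which the cross-pairing vanishes by the one-sided window geometry (the paper's term ``cov2''), the joint fourth cumulant is $O(1)$ (the paper's ``cum''), and the dominant ``product of triangular kernels'' (the paper's ``cov1'') yields the factor $\tfrac23$ via the same Riemann sum. The one substantive difference is the CLT step: the paper invokes Berk's (1973) CLT for $m$-dependent arrays with unbounded $m$, checking a $p$-th moment condition with $p=4$; it is precisely Berk's side condition $\tilde H^{(2p-2)/(p-2)}/\tilde T\to0$ at $p=4$ that forces $\tilde H^3/\tilde T\to0$, i.e.\ $\kappa<1/3$. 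Your Bernstein big-block/small-block construction with a Lyapunov check is a legitimate alternative and, with the strand/Rosenthal bound you sketch, could in principle relax the range of $\kappa$; however, you do not explicitly identify where $\kappa<1/3$ enters your argument, whereas in the paper this constraint is transparent. Either route requires higher-order moments of $r_t$ (the paper appeals to its Lemma on finiteness of all moments of $r_t$; your $(4{+}2\delta)$-moment assumption on $e_k$ plays the analogous role).
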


\begin{remark}
$\cov(r_0, r_k)$, $\cov(r^2_0, r^2_k)$, $\var(r_0)$,
and $\var(r^2_0)$ can be evaluated by (\ref{eq:cov_ro_ret_L}),(\ref{eq: cov_sq_ret_L}), (\ref{eq:var_rt_d0}), and (\ref{eq:var_rt_sq}) under $d=0$.
\end{remark}

\begin{remark}\label{rmk:mean_cov_explode}
The reason we do not attempt to establish a CLT for $\hat{\rho}_{\tilde{H}, \tilde{T}}$ is as follows:
For any $d$, we have shown that $\cov(\R, \RV) > 0$ (see Lemma \ref{lem:cov_ret_pred}). When $d=0$, $\cov(\R, \RV) = \sum_{k=1}^h \cov(r_{k}, r^2_0)k$ (See
(\ref{eq:rho_d0_numerator})).
This suggests that for $\kappa \in (0,1/3)$, $\sqrt{\tilde{T}^{1-\kappa}}\; \hat{\rho}_{\tilde{H}, \tilde{T}}$ diverges.
The modified version $\tR$ is independent of $\RV$ and thus $\cov(\tR, \RV)=0$.
\end{remark}

\begin{table}[H]
\caption{Variance of 1000 realizations of $\hat{\rho}_{\tilde{H}, \tilde{T}}$ using $\tilde{H}=\theta T$ when $d=0$, $\mu =0.000001419188$, $\lambda = 128.2085$, $c=1$, and $\sigma_e=0.0007289$.}
\center
\begin{tabular}{|c|c|c|c|c|c|}\hline
  $\theta=0.0125$   &  $\theta=0.025$  & $\theta=0.05$ & $\theta=0.125$ & $\theta=0.25$ & $\tilde{T}$ \\ \hline
  0.01259	&0.01792	&0.04084	&0.10074	&0.22477 & 131  \\
  0.00785	&0.01933	&0.03776	&0.09905	&0.21818 & 262  \\
  0.00959	&0.01720	&0.03505	&0.09713	&0.21603 & 524  \\
  0.00837	&0.01739	&0.03641	&0.09327	&0.21671 & 1048 \\
  0.00873	&0.01731	&0.03561	&0.10008	&0.21772 & 2097 \\
  0.00844	&0.01738	&0.03828	&0.11019	&0.23044 & 4194 \\ \hline
\end{tabular}\label{tab:variance_rho_hat_sizova_d0}
 \end{table}

\begin{table}[H]
\caption{Variance of 1000 realizations of $\hat{\rho}_{\tilde{H}, \tilde{T}}$ using $\tilde{H}=\theta T$ when $d=0.3545$, $\mu =0.000001419188$, $\lambda = 128.2085$, $c=1$, and $\sigma_e=0.0007289$.}
\center
\begin{tabular}{|c|c|c|c|c|c|}\hline
  $\theta=0.0125$   &  $\theta=0.025$  & $\theta=0.05$ & $\theta=0.125$ & $\theta=0.25$ & $\tilde{T}$ \\ \hline
 0.01488	&0.02223	&0.04795	&0.10958	&0.26234 & 131 \\
 0.01163	&0.02402	&0.04267	&0.12113	&0.27342 & 262 \\
 0.01416	&0.02460	&0.04477	&0.11967	&0.25633 & 524 \\
 0.01203	&0.02347	&0.04523	&0.12059	&0.27081 & 1048 \\
 0.01217	&0.02410	&0.04785	&0.11544	&0.26542 & 2097 \\
 0.01223	&0.02504	&0.04867	&0.12412	&0.26000 & 4194 \\ \hline
\end{tabular}\label{tab:variance_rho_hat_sizova_d035}
 \end{table}

\begin{table}[H]
\caption{Variance of 1000 realizations of $\hat{\rho}_{\tilde{H}, \tilde{T}}$ using $\tilde{H}=T^{\kappa}$ when $d=0$. Model parameters: $\mu =0.000001419188$, $\lambda = 128.2085$, $c=1$, and $\sigma_e=0.0007289$.}
\center
\begin{tabular}{|c|c|c|c|c|}\hline
  $\kappa=0.1$   &  $\kappa=0.3$  & $\kappa=0.5$ & $\kappa=0.7$ & $\tilde{T}$ \\ \hline
0.01259	&0.02289	&0.06194	&0.21163  & 131 \\
0.00591	&0.01330	&0.04637	&0.15999  & 262 \\
0.00271	&0.00959	&0.03061	&0.11927  & 524 \\
0.00146	&0.00517	&0.02218	&0.09242  & 1048 \\
0.00068	&0.00313	&0.01508	&0.07841  & 2097 \\
0.00035	&0.00188	&0.01026	&0.06406  & 4194 \\ \hline
\end{tabular}\label{tab:variance_rho_hat_d0}
 \end{table}

\begin{table}[H]
\caption{Variance of 1000 realizations of $\hat{\rho}_{\tilde{H}, \tilde{T}}$ using $\tilde{H}=T^{\kappa}$ when $d=0.3545$. Model parameters: $\mu =0.000001419188$, $\lambda = 128.2085$, $c=1$, and $\sigma_e=0.0007289$.}
\center
\begin{tabular}{|c|c|c|c|c|}\hline
$\kappa=0.1$   &  $\kappa=0.3$  & $\kappa=0.5$ & $\kappa=0.7$ & $\tilde{T}$ \\ \hline
0.01488	&0.02904	&0.07309	&0.23814 & 131 \\
0.00859	&0.01773	&0.05195	&0.19369 & 262 \\
0.00446	&0.01416	&0.04036	&0.14804 & 524 \\
0.00242	&0.00776	&0.02823	&0.11967 & 1048 \\
0.00120	&0.00491	&0.02129	&0.09154 & 2097 \\
0.00068	&0.00321	&0.01529	&0.08061 & 4194 \\ \hline
\end{tabular}\label{tab:variance_rho_hat_d035}
 \end{table}

\begin{table}[H]
\caption{Size of Sizova's (2013) return predictability hypothesis test: $H_0:\; d=0$ vs. $H_a:\; d > 0$ using $\tilde{H}=\theta \tilde{T}$. Critical values are obtained from 1000 replications of simulated $\hat{\rho}_{\tilde{H}, \tilde{T}}$ from our return model.} Model parameters: $\mu =0.000001419188$, $\lambda = 128.2085$, $c=1$, and $\sigma_e=0.0007289$. The symbol $\ast$ indicates rejection of $size=5\%$.
\center
\begin{tabular}{|c|c|c|c|c|c|}\hline
  $\theta=0.0125$   &  $\theta=0.025$  & $\theta=0.05$ & $\theta=0.125$ & $\theta=0.25$ & $\tilde{T}$ \\ \hline
  0.054 &0.058 &0.063 &0.044 &0.055 &  131  \\
  0.055 &0.045 &0.052 &0.051 &0.063 &  262  \\
  0.054 &0.053 &$0.035^{\ast}$ &0.056 &$0.034^{\ast}$ &  524  \\
  0.057 &0.056 &0.062 &0.046 &$0.036^{\ast}$ &  1048 \\
  0.065 &0.053 &0.059 &0.049 &$0.035^{\ast}$ &  2097 \\
  0.053 &$0.090^{\ast}$ &$0.067^{\ast}$ & 0.060 &0.044 &  4194 \\ \hline
\end{tabular}\label{tab:size_H_T_sizova}
\end{table}

\begin{table}[H]
\caption{Power of return predictability hypothesis test: $H_0:\; d=0$ vs. $H_a:\; d > 0$ using $\tilde{H}=\theta \tilde{T}$.  The power is calculated as the percentage of cases that reject the null hypothesis among the 1000 realizations of return process simulated under $d=0.3545$. Model parameters: $\mu =0.000001419188$, $\lambda = 128.2085$, $c=1$, and $\sigma_e=0.0007289$.}
\center
\begin{tabular}{|c|c|c|c|c|c|}\hline
$\theta=0.0125$   &  $\theta=0.025$  & $\theta=0.05$ & $\theta=0.125$ & $\theta=0.25$ & $\tilde{T}$ \\ \hline
 0.114  & 0.117   & 0.093   & 0.082  &0.072    & 131  \\
 0.165  & 0.131   & 0.085   & 0.076  &0.097    & 262  \\
 0.197  & 0.132   & 0.109   & 0.085  &0.060    & 524  \\
 0.236  & 0.189   & 0.135   & 0.083  &0.072    & 1048 \\
 0.319  & 0.257   & 0.185   & 0.070  &0.065    & 2097 \\
 0.394  & 0.372   & 0.217   & 0.088  &0.048    & 4194 \\ \hline
\end{tabular}\label{tab:test_power_sizova}
\end{table}

\begin{table}[H]
\caption{Size of bootstrap return predictability hypothesis test: $H_0:\; d=0$ vs. $H_a:\; d > 0$ using $\tilde{H}=\tilde{T}^{\kappa}$.  The size is calculated as the percentage of cases that reject the null hypothesis among the 500 realizations of return process simulated under $d=0$. Model parameters: $\mu =0.000001419188$, $\lambda = 128.2085$, $c=1$, and $\sigma_e=0.0007289$. The symbol $\ast$ indicates rejection of $size=5\%$.}
\center
\begin{tabular}{|c|c|c|c|c|}\hline
  $\kappa=0.1$   &  $\kappa=0.3$  & $\kappa=0.5$ & $\kappa=0.7$ & $\tilde{T}$ \\ \hline
 0.055  & 0.062  & 0.057  & 0.054  &  131   \\
 0.042  & 0.041  & 0.042  & 0.055  &  262   \\
 0.051  & 0.051  &  $0.066^{\ast}$  & 0.057 &  524  \\
 0.058  & 0.054  & 0.05  & 0.052  &  1048 \\
 0.041  & 0.053  & 0.046  & 0.055  &  2097 \\
 0.051  &  $0.068^{\ast}$  & 0.049  & 0.049  &  4194 \\ \hline
\end{tabular}\label{tab:size_H_T_power}
 \end{table}

\begin{table}[H]
\caption{Power of bootstrap return predictability hypothesis test: $H_0:\; d=0$ vs. $H_a:\; d > 0$ using $\tilde{H}=\tilde{T}^{\kappa}$.  The power is calculated as the percentage of cases that reject the null hypothesis among the 1000 realizations of return process simulated under $d=0.3545$. Model parameters: $\mu =0.000001419188$, $\lambda = 128.2085$, $c=1$, and $\sigma_e=0.0007289$.}
\center
\begin{tabular}{|c|c|c|c|c|}\hline
  $\kappa=0.1$   &  $\kappa=0.3$  & $\kappa=0.5$ & $\kappa=0.7$ & $\tilde{T}$ \\ \hline
 0.124  & 0.097  & 0.074   &0.074  &  131  \\
 0.175  & 0.138  & 0.113   &0.063  &  262  \\
 0.233  & 0.185  & 0.133   &0.079  &  524  \\
 0.356  & 0.267  & 0.168   &0.072  &  1048 \\
 0.518  & 0.407  & 0.235   &0.108  &  2097 \\
 0.769  & 0.630  & 0.367   &0.166  &  4194 \\ \hline
 \end{tabular}\label{tab:test_power_power_law}
 \end{table}

 \begin{table}[H]
\caption{Rejection rate from the bootstrap hypothesis test. $\ast$ indicates rejection of the nominal size $5\%$. }
\center
\begin{tabular}{|c|c|c|c|c|}\hline
\multicolumn{5}{|c|}{$\kappa$} \\ \hline
$0.1$   &  $0.3$  & $0.5$ & $0.7$ & $\tilde{T}$ \\ \hline
0.041   & 0.053          & 0.046 &  0.055 & 2000 \\
0.051   & $0.068^{\ast}$ & 0.049 &  0.049 & 4000 \\ \hline
\end{tabular}\label{tab:bootstrap_size}
\end{table}

\begin{table}[H]
\caption{Estimated intercept $\hat{\beta}_0$ and slope $\hat{\beta}_1$ in regression of  $\log \hat{\var}(\hat{\rho}_{\tilde{H}, \tilde{T}})$ on $\log \tilde{T}$ with $\tilde{H}=\tilde{T}^{\kappa}$. }
\center
\begin{tabular}{|c|c|c|c|c|}\hline
& \multicolumn{4}{c|}{$\kappa$} \\ \hline \hline
& $0.1$ &  $0.3$  & $0.5$  & $0.7$  \\ \hline
$\hat{\beta}_0$   & 0.025  & -0.683	&-0.501	&-0.110  \\
$\hat{\beta}_1$   & -0.960 & -0.667 &-0.486	& -0.322 \\ \hline
$\kappa-1$        &-0.9	   &-0.7	&-0.5	&-0.3  \\ \hline
\end{tabular}\label{tab:fit_reg_var_rho_hat}
\end{table}

\begin{table}[H]
\caption{Shapiro-Wilk normality test on the distribution of $\hat{\rho}_{\tilde{H}, \tilde{T}}$ under the linear growth and power-law asymptotic frameworks.
The numbers shown in the table are $p$-values. The symbol $\ast$ indicates rejection of null hypothesis of normality. Model parameters: $d=0$, $\mu =0.000001419188$, $\lambda = 128.2085$, $c=1$, and $\sigma_e=0.0007289$.}
\center
\begin{tabular}{|cccc|ccccc|c|}\hline
\multicolumn{4}{|c}{$\tilde{H}=\tilde{T}^{\kappa}$} & \multicolumn{5}{|c|}{$\tilde{H}=\theta\tilde{T}$} &  \\ \hline \hline
$\kappa=0.1$ & $\kappa=0.3$ & $\kappa=0.5$ & $\kappa=0.7$ &  $\theta=0.0125$   & $\theta=0.025$  &$\theta=0.05$ &$\theta=0.125$ & $\theta=0.25$ & $\tilde{T}$ \\ \hline
    0.489  &  0.708        &  0.058         &$ {< 0.05}^\ast$    &0.489    &0.941    &$0.047^\ast$   & ${< 0.05}^\ast$    &    ${< 0.05}^\ast$ &   131  \\
    0.360  &  0.667        &  $0.008^\ast$  &${< 0.05}^\ast$     &0.344    &0.772    &$0.031^\ast$   & ${< 0.05}^\ast$    &    ${< 0.05}^\ast$ &   262  \\
    0.964  &  0.457        &  0.183         &${< 0.05}^\ast$     &0.457    &0.195    &0.209          &${< 0.05}^*$        &${< 0.05}^*$        &   525  \\
    0.852  &  0.597        &  0.537         &$0.008^*$           &0.606    &0.454    &0.943          &${< 0.05}^*$        &${< 0.05}^*$        &   1048  \\
    0.083  &  $0.032^\ast$ &  0.085         &${< 0.05}^\ast$     &0.310    &0.108    &0.296          &${< 0.05}^*$        &${< 0.05}^*$        &   2097  \\
    0.839  &  0.346        &  0.902         &$0.005^*$          &0.949     &0.624    &0.184          &${< 0.05}^*$        &${< 0.05}^*$        &   4194  \\ \hline
\end{tabular}\label{tab:k-s_normality_test}
\end{table}

\begin{figure}[H]
\begin{subfigure}[b]{0.5\textwidth}
  \centering
  \includegraphics[width=1.0\linewidth]{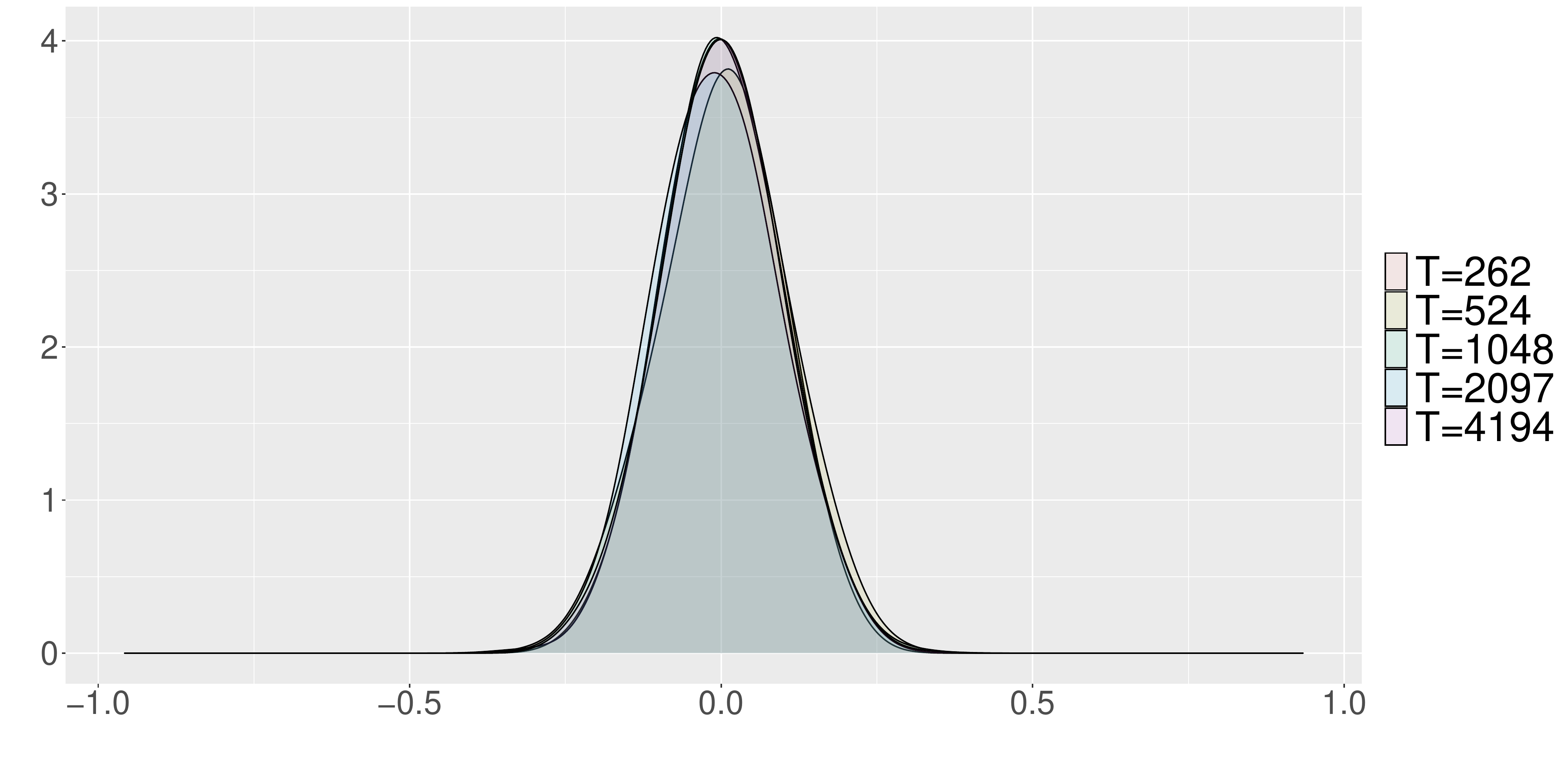}
  \caption{$\tilde{H} = 0.0125\tilde{T}$}
  \label{fig:sfig1}
\end{subfigure}
\hspace{8pt}
\begin{subfigure}[b]{0.5\textwidth}
  \centering
  \includegraphics[width=1.0\linewidth]{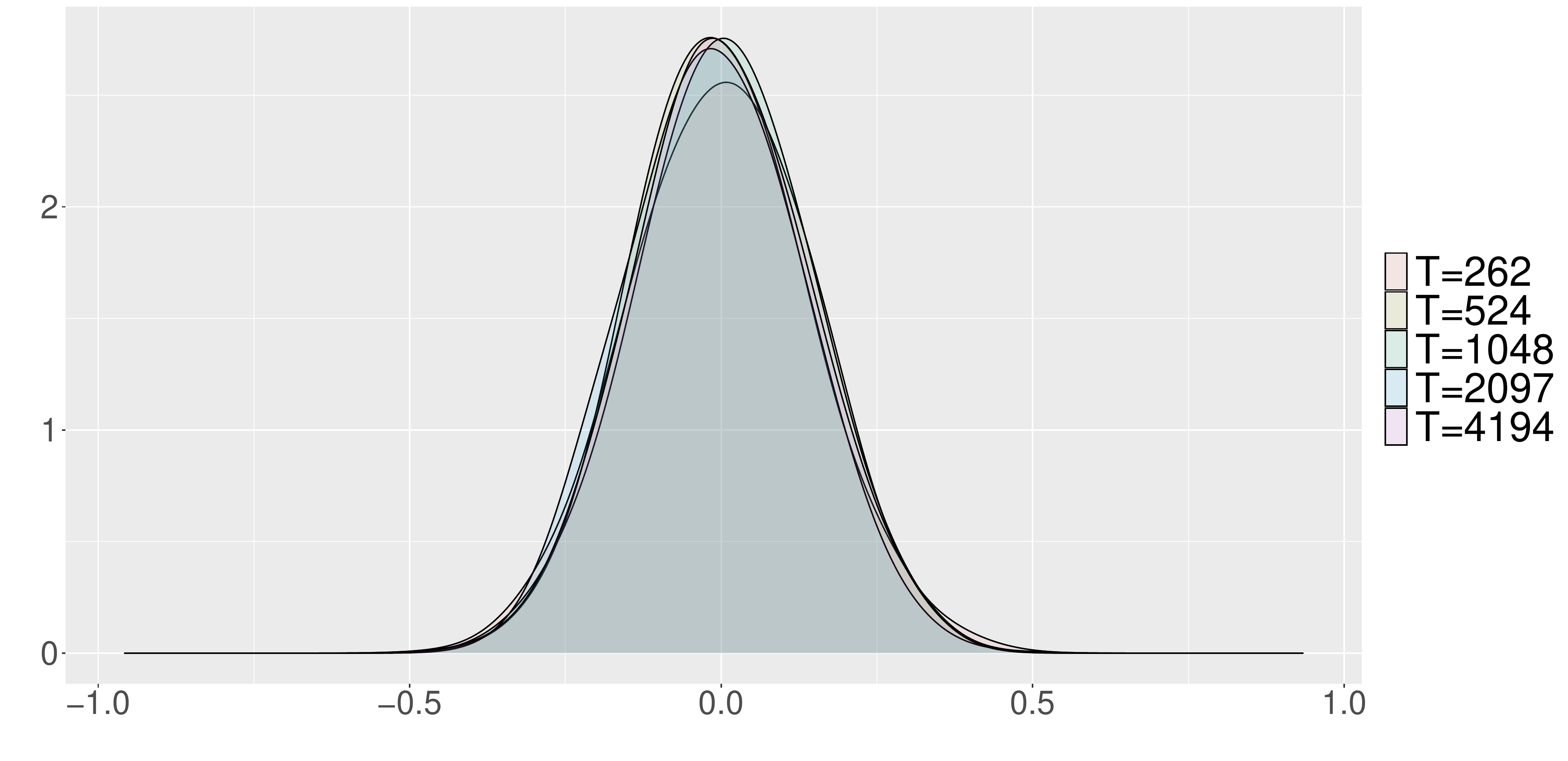}
  \caption{$\tilde{H} = 0.025\tilde{T}$}
  \label{fig:sfig1}
\end{subfigure}
\hspace{8pt}
\begin{subfigure}[b]{0.5\textwidth}
  \centering
  \includegraphics[width=1.0\linewidth]{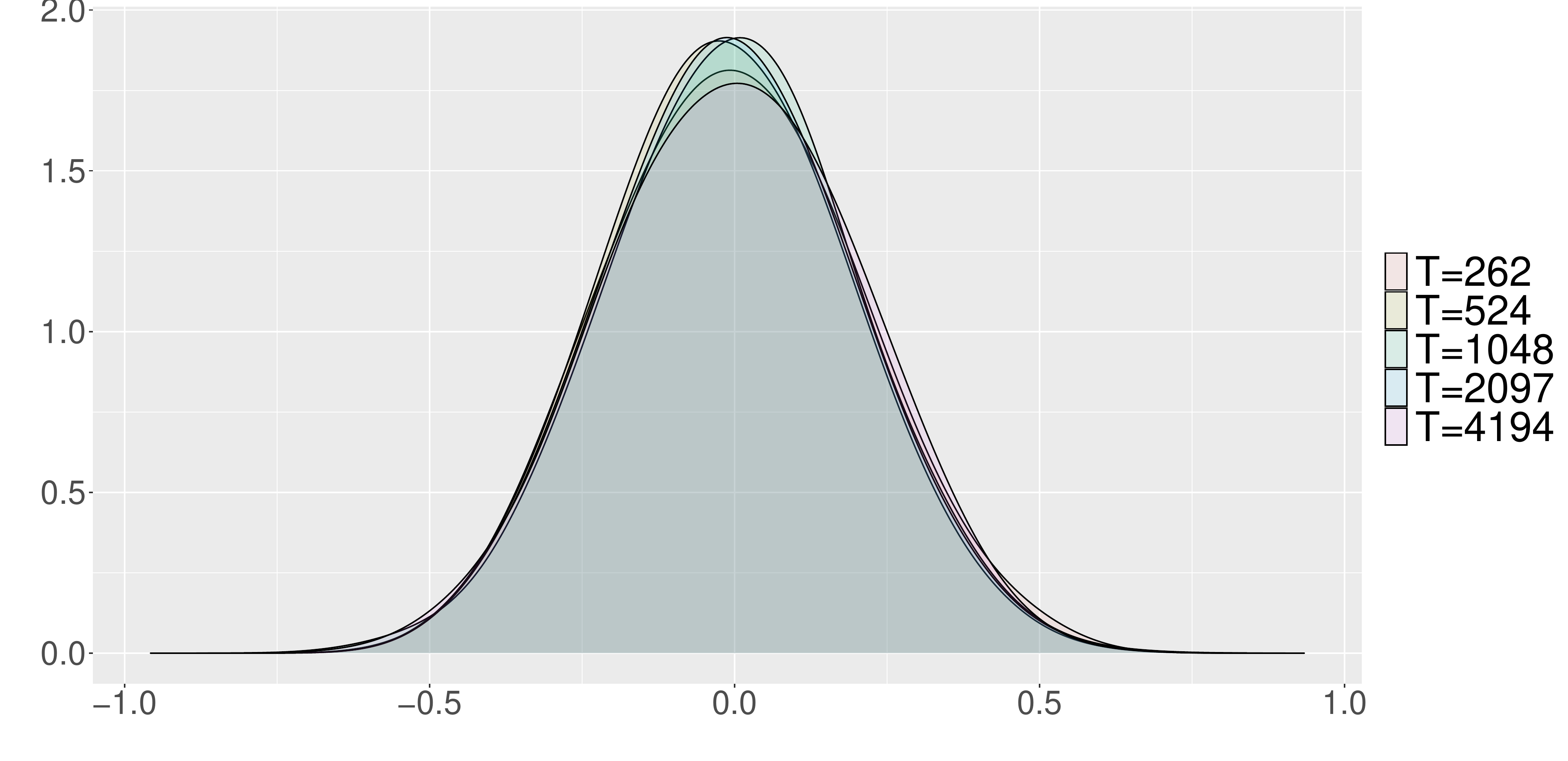}
  \caption{$\tilde{H} = 0.05\tilde{T}$}
  \label{fig:sfig2}
\end{subfigure}
\hspace{8pt}
\begin{subfigure}[b]{0.5\textwidth}
  \centering
  \includegraphics[width=1.0\linewidth]{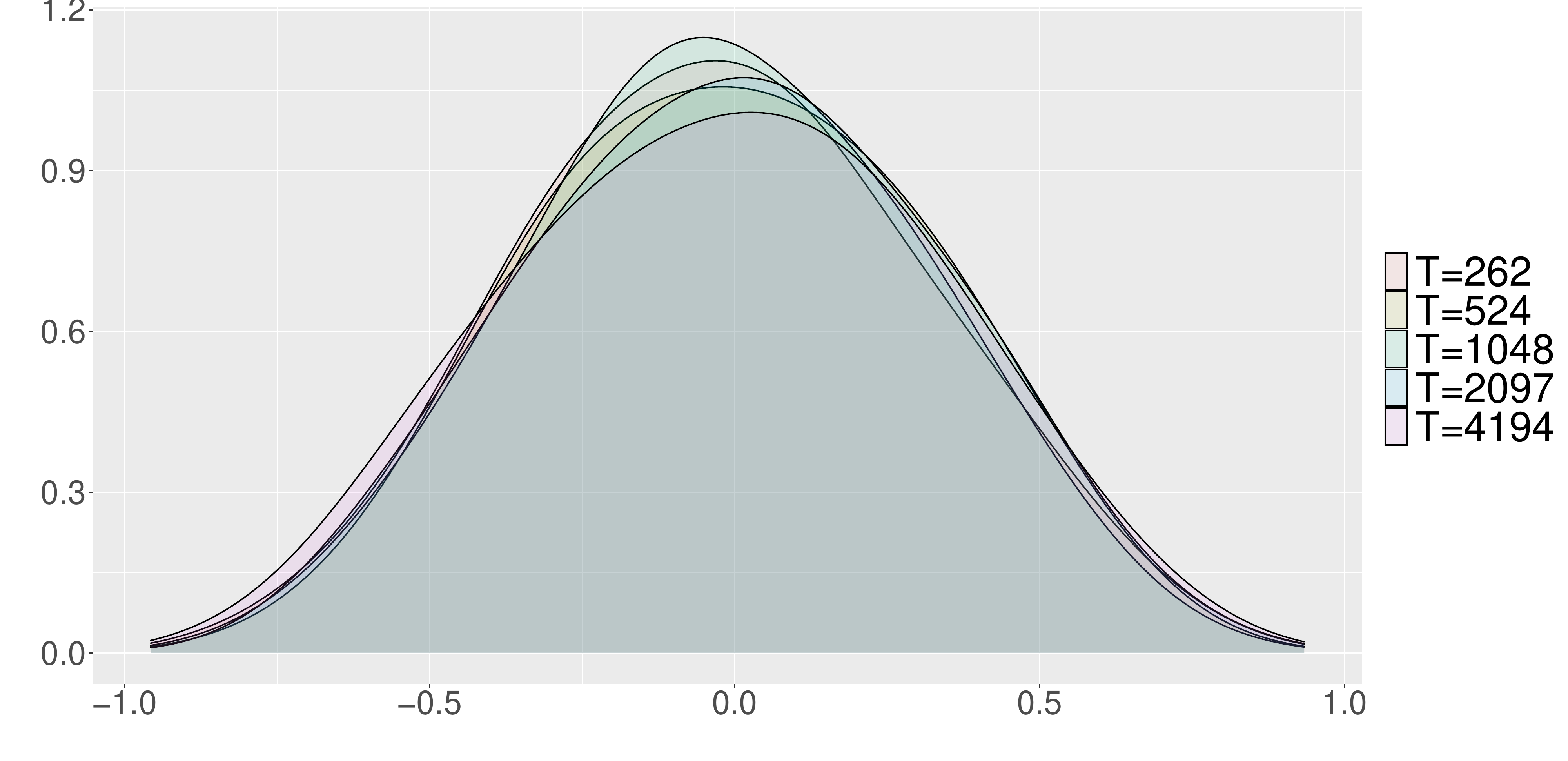}
  \caption{$\tilde{H} = 0.125\tilde{T}$}
  \label{fig:sfig2}
\end{subfigure}
\hspace{8pt}
\begin{subfigure}[b]{0.5\textwidth}
  \centering
  \includegraphics[width=1.0\linewidth]{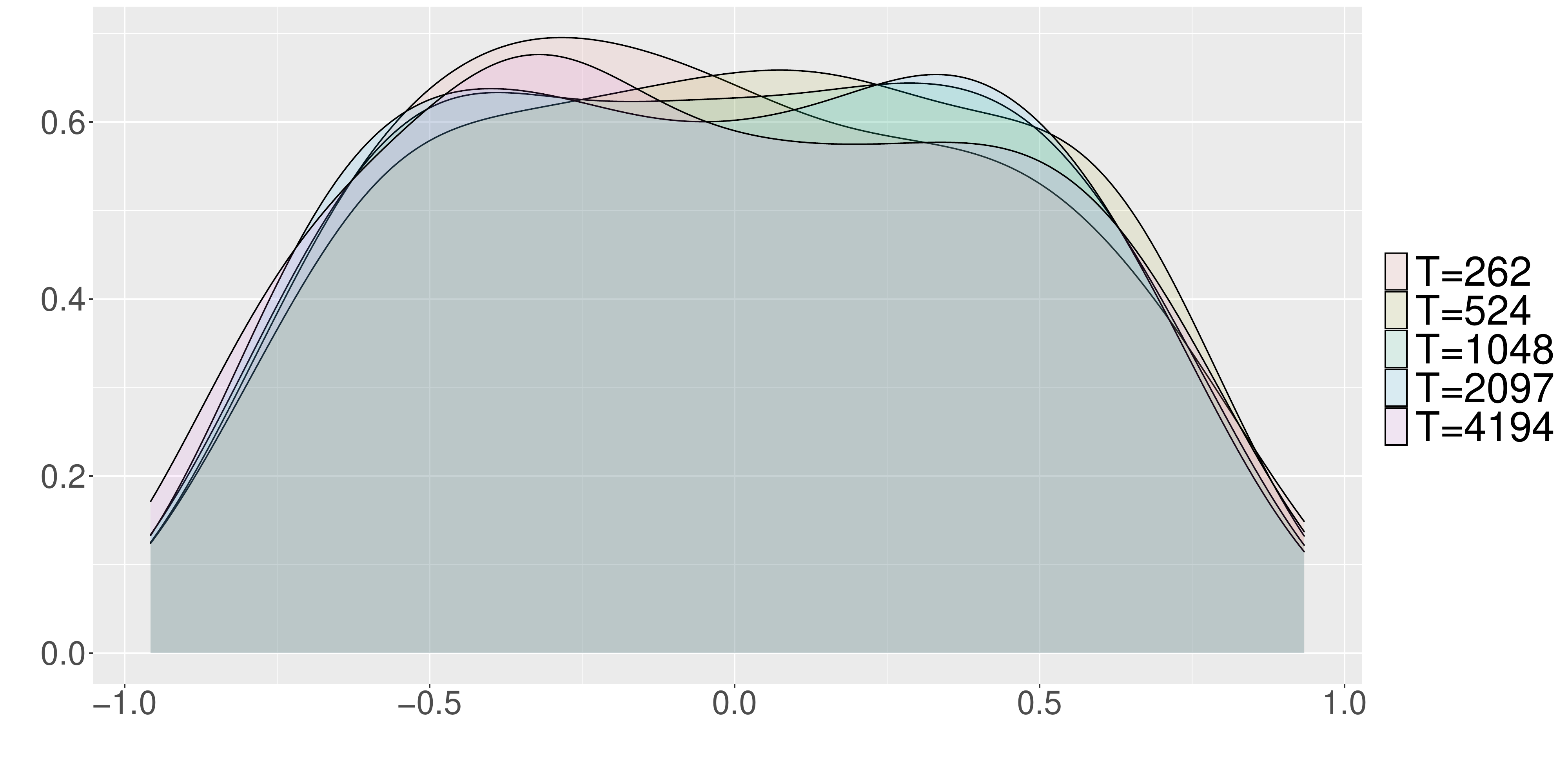}
  \caption{$\tilde{H} = 0.25\tilde{T}$}
  \label{fig:sfig2}
\end{subfigure}
\caption{Sampling distribution of 1000 realizations of $\hat{\rho}_{\tilde{H}, \tilde{T}}$ using $\tilde{H}=\theta T$ when $d=0$. Model parameters: $\mu =0.000001419188$, $\lambda = 128.2085$, $c=1$, and $\sigma_e=0.0007289$.}\label{fig:asym_dist_rho_hat_Sizova_d0}
\end{figure}

\begin{figure}[H]
\begin{subfigure}[b]{0.5\textwidth}
  \centering
  \includegraphics[width=1.0\linewidth]{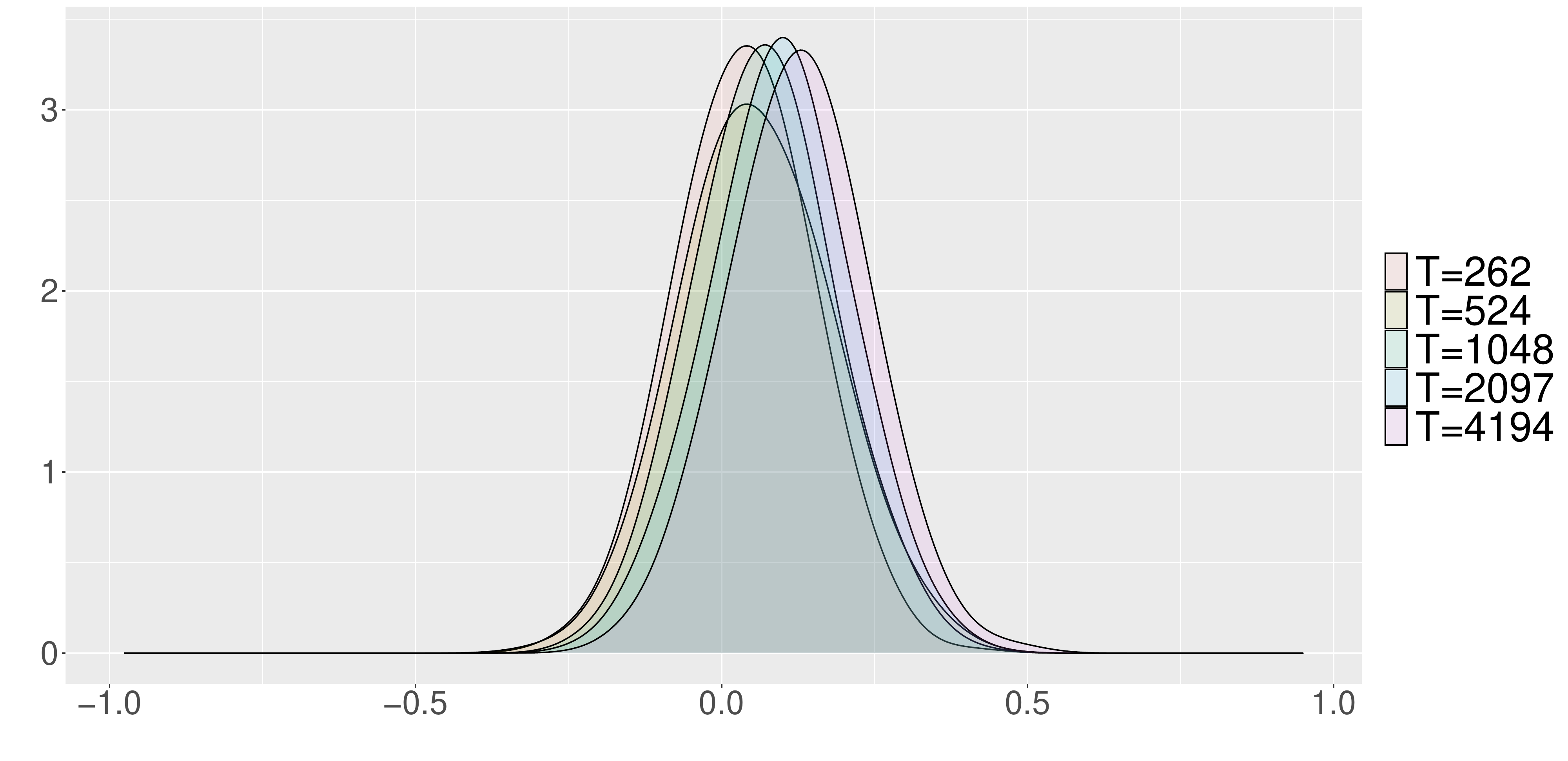}
  \caption{$\tilde{H} = 0.0125\tilde{T}$}
  \label{fig:sfig1}
\end{subfigure}
\hspace{8pt}
\begin{subfigure}[b]{0.5\textwidth}
  \centering
  \includegraphics[width=1.0\linewidth]{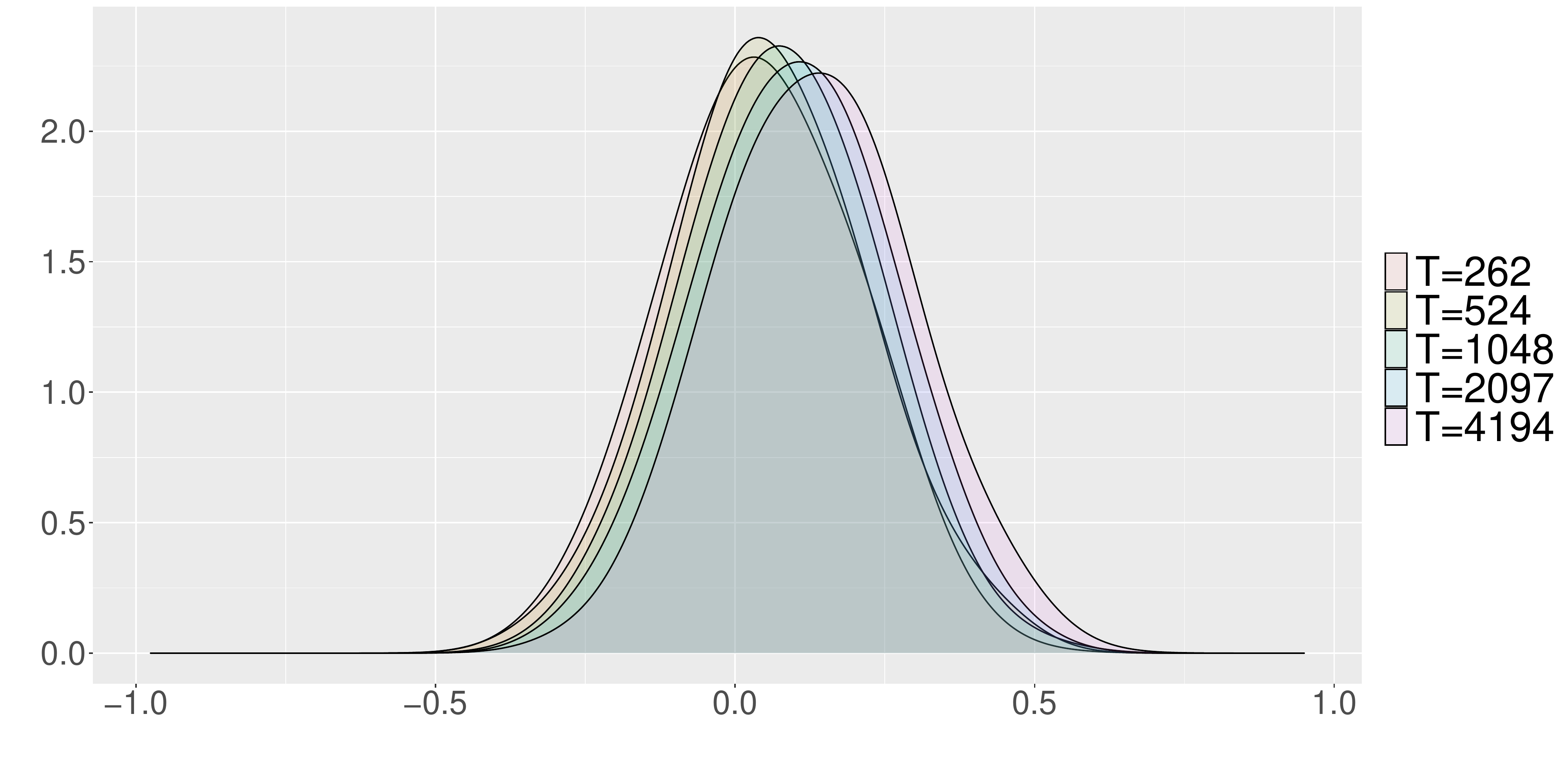}
  \caption{$\tilde{H} = 0.025\tilde{T}$}
  \label{fig:sfig1}
\end{subfigure}
\hspace{8pt}
\begin{subfigure}[b]{0.5\textwidth}
  \centering
  \includegraphics[width=1.0\linewidth]{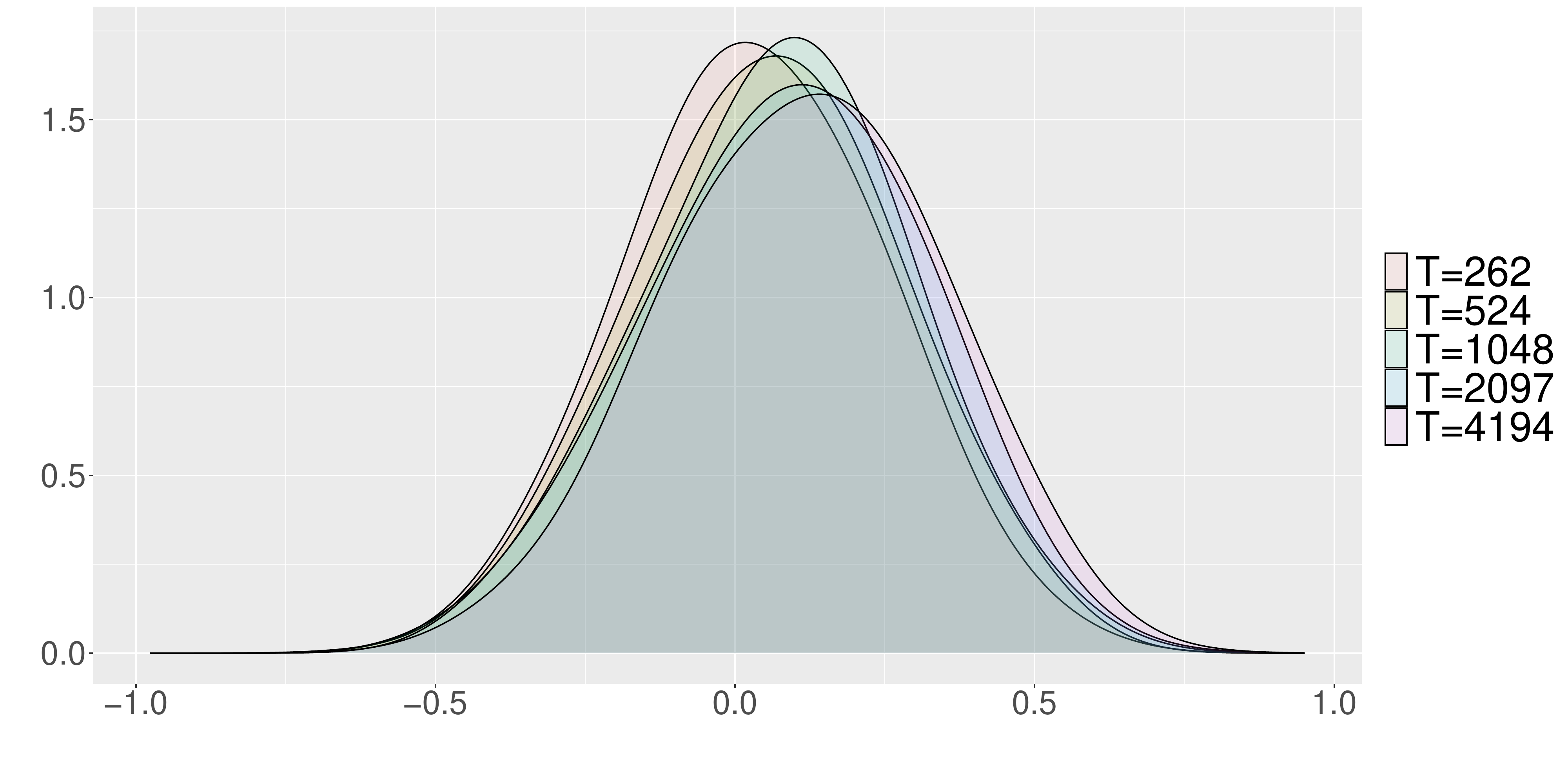}
  \caption{$\tilde{H} = 0.05\tilde{T}$}
  \label{fig:sfig2}
\end{subfigure}
\hspace{8pt}
\begin{subfigure}[b]{0.5\textwidth}
  \centering
  \includegraphics[width=1.0\linewidth]{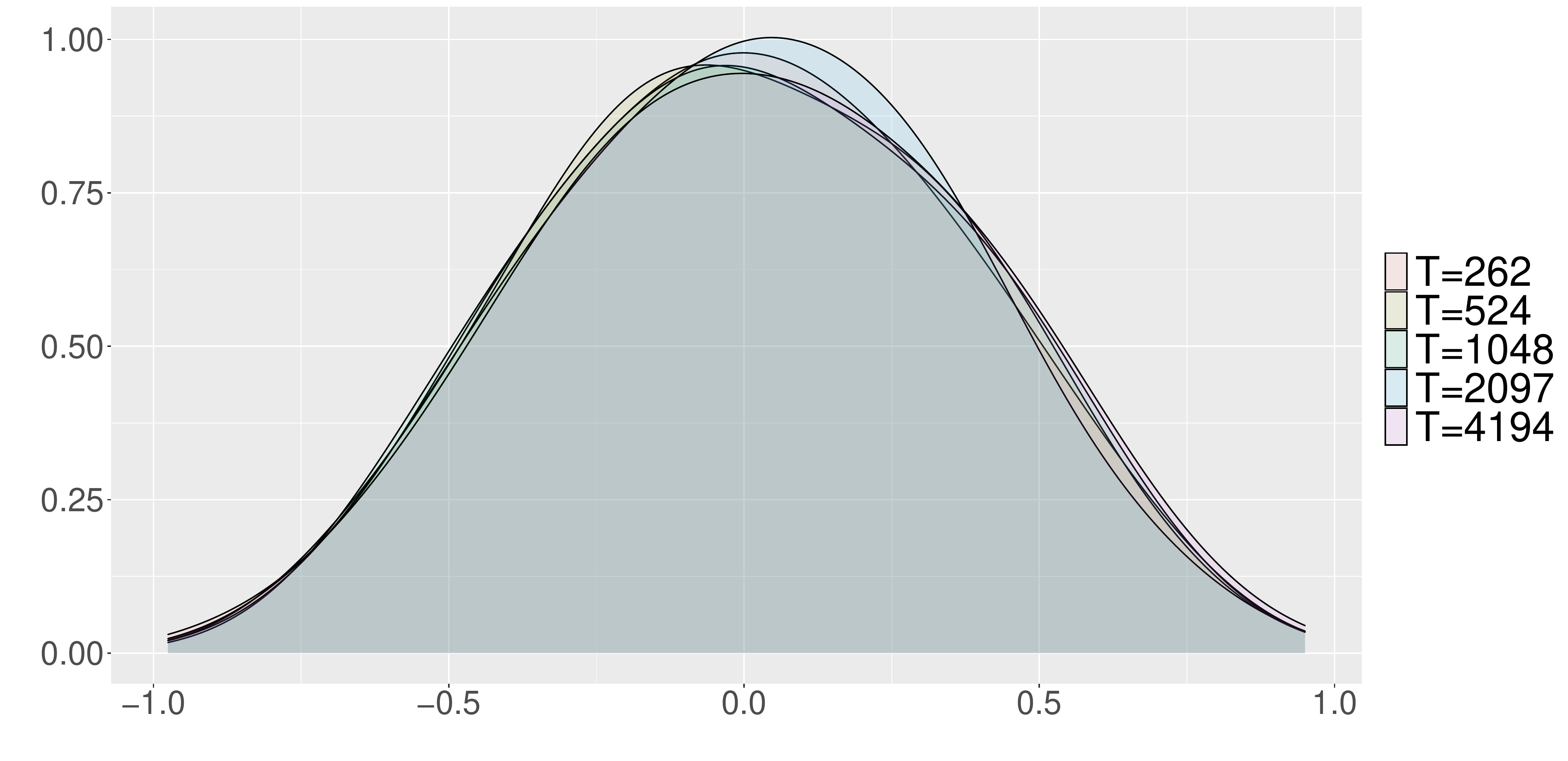}
  \caption{$\tilde{H} = 0.125\tilde{T}$}
  \label{fig:sfig2}
\end{subfigure}
\hspace{8pt}
\begin{subfigure}[b]{0.5\textwidth}
  \centering
  \includegraphics[width=1.0\linewidth]{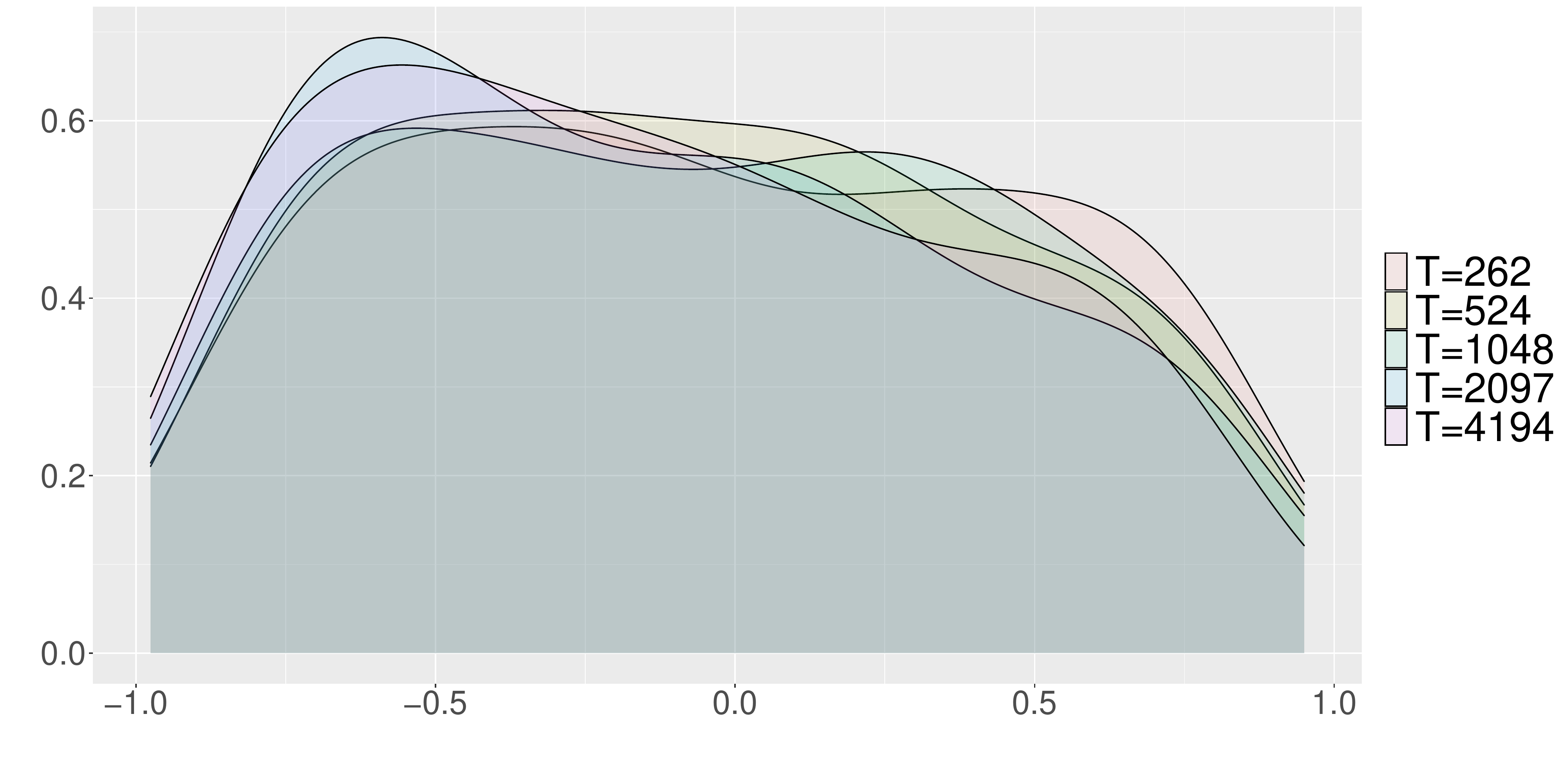}
  \caption{$\tilde{H} = 0.25\tilde{T}$}
  \label{fig:sfig2}
\end{subfigure}
\caption{Sampling distribution of 1000 realizations of $\hat{\rho}_{\tilde{H}, \tilde{T}}$ using $\tilde{H}=\theta T$ when $d=0.3545$. Model parameters: $\mu =0.000001419188$, $\lambda = 128.2085$, $c=1$, and $\sigma_e=0.0007289$.}\label{fig:asym_dist_rho_hat_Sizova_d035}
\end{figure}

\begin{figure}[H]
\begin{subfigure}[b]{0.5\textwidth}
  \centering
  \includegraphics[width=1.0\linewidth]{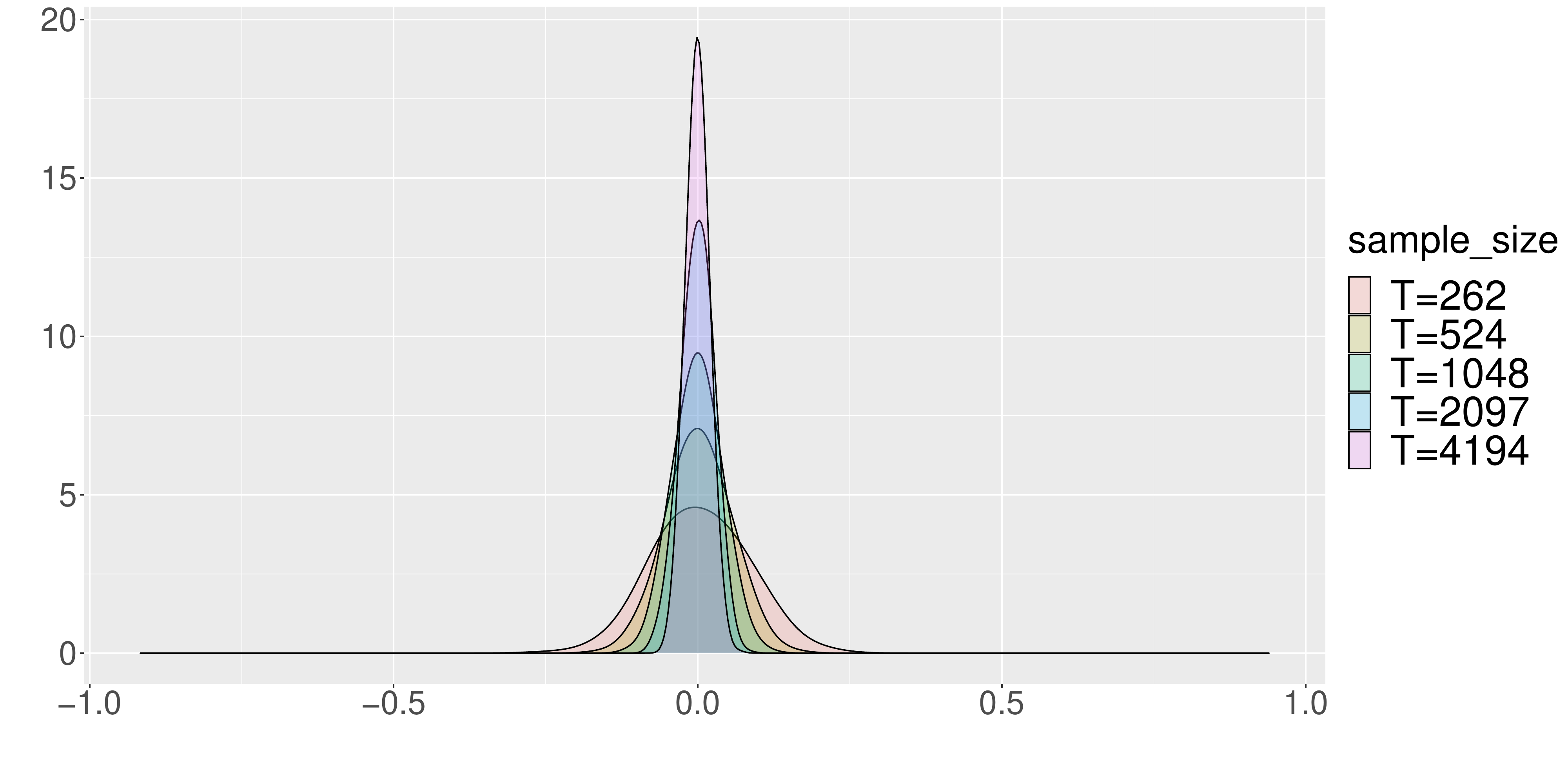}
  \caption{$\tilde{H} = \tilde{T}^{0.1}$}
  \label{fig:sfig1}
\end{subfigure}
\hspace{8pt}
\begin{subfigure}[b]{0.5\textwidth}
  \centering
  \includegraphics[width=1.0\linewidth]{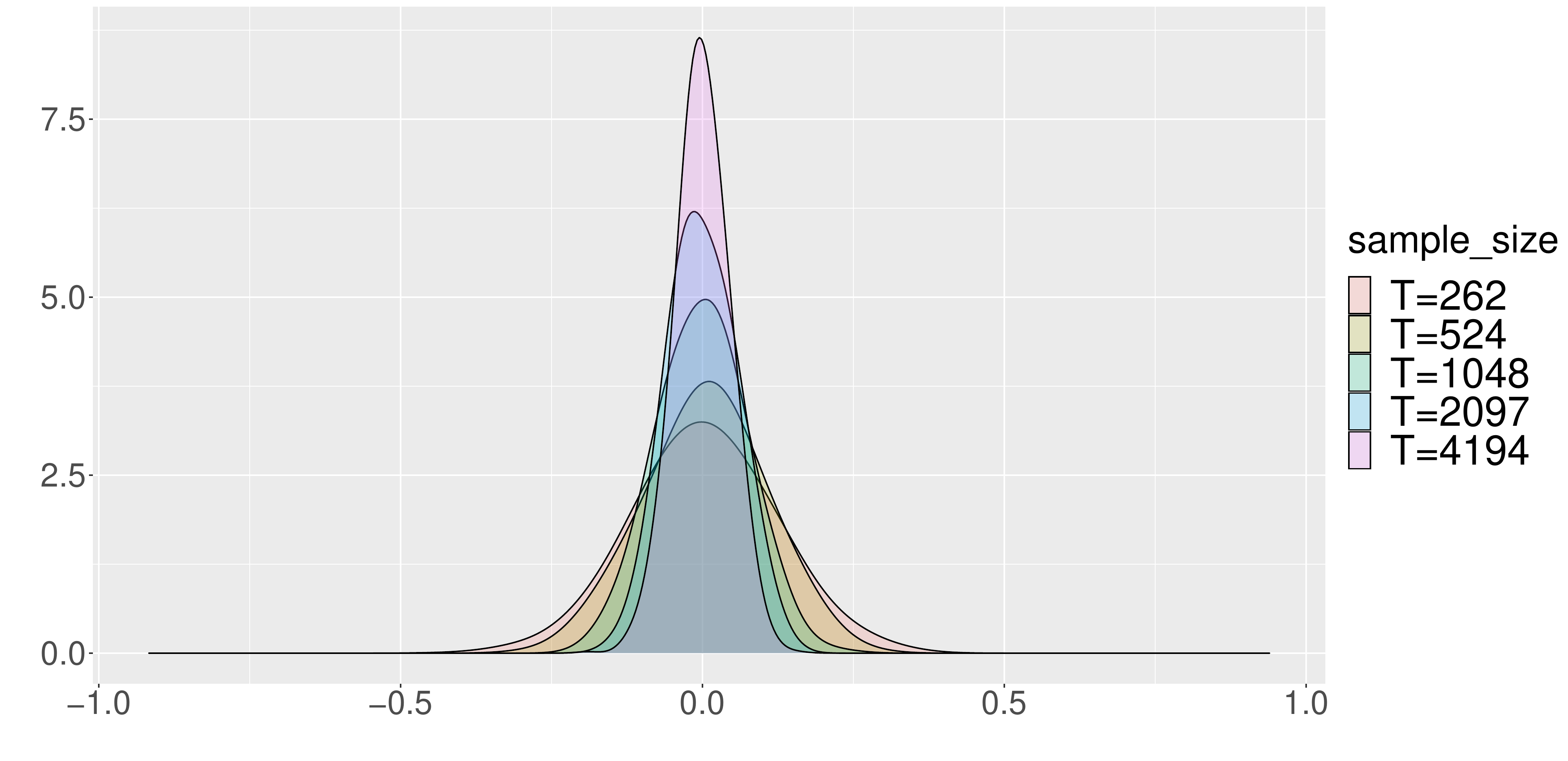}
  \caption{$\tilde{H} = \tilde{T}^{0.3}$}
  \label{fig:sfig2}
\end{subfigure}
\hspace{8pt}
\begin{subfigure}[b]{0.5\textwidth}
  \centering
  \includegraphics[width=1.0\linewidth]{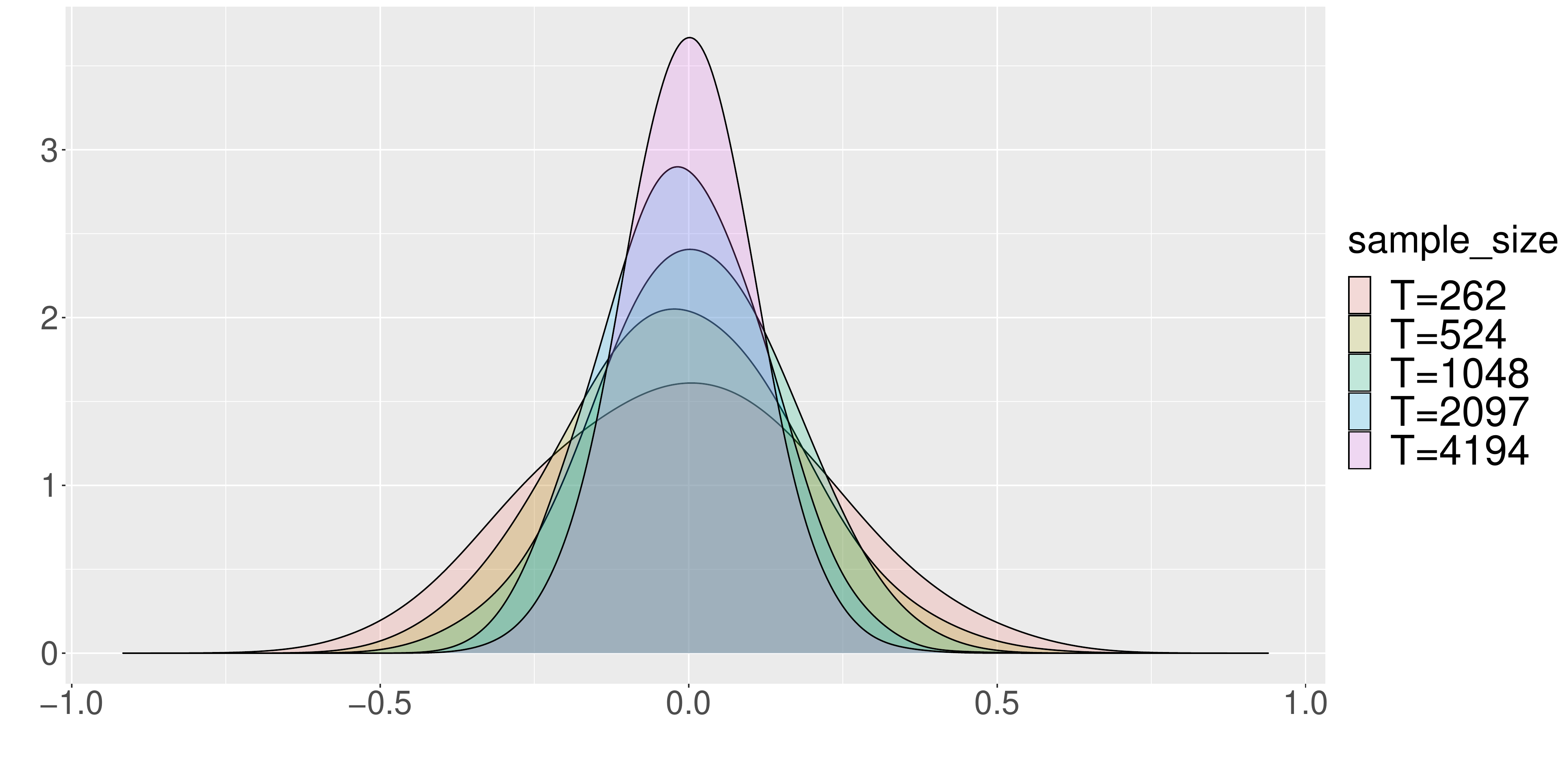}
  \caption{$\tilde{H} = \tilde{T}^{0.5}$}
  \label{fig:sfig2}
\end{subfigure}
\hspace{8pt}
\begin{subfigure}[b]{0.5\textwidth}
  \centering
  \includegraphics[width=1.0\linewidth]{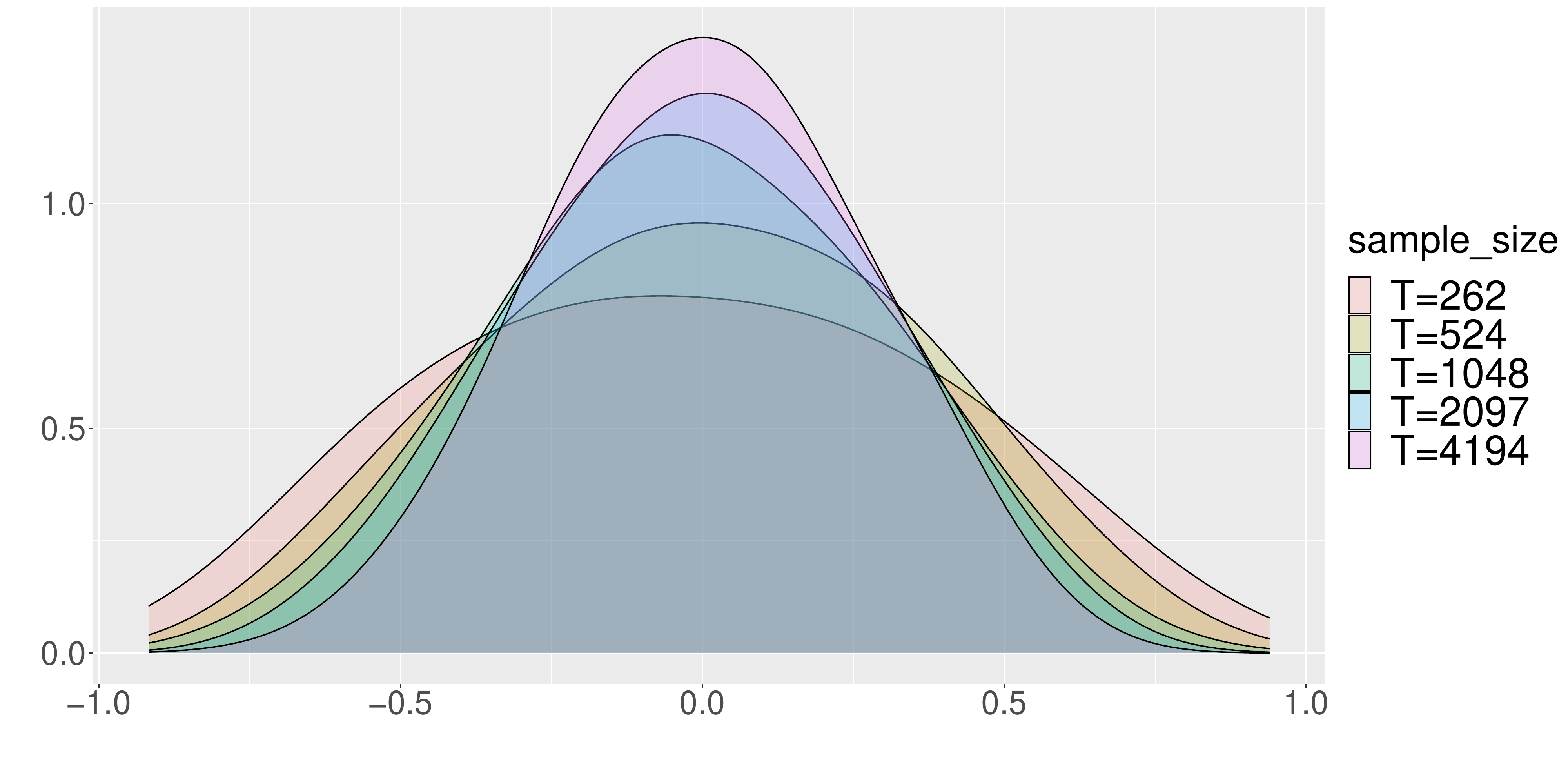}
  \caption{$\tilde{H} = \tilde{T}^{0.7}$}
  \label{fig:sfig2}
\end{subfigure}
\caption{Sampling distribution of 1000 realizations of $\hat{\rho}_{\tilde{H}, \tilde{T}}$ using $\tilde{H}=T^{\kappa}$, $\kappa=0.1, 0.3, 0.5, 0.7$ when $d=0$. Model parameters: $\mu =0.000001419188$, $\lambda = 128.2085$, $c=1$, and $\sigma_e=0.0007289$.} \label{fig:asym_dist_rho_hat_power_law_d0}
\end{figure}

\begin{figure}[H]
\begin{subfigure}[b]{0.5\textwidth}
  \centering
  \includegraphics[width=1.0\linewidth]{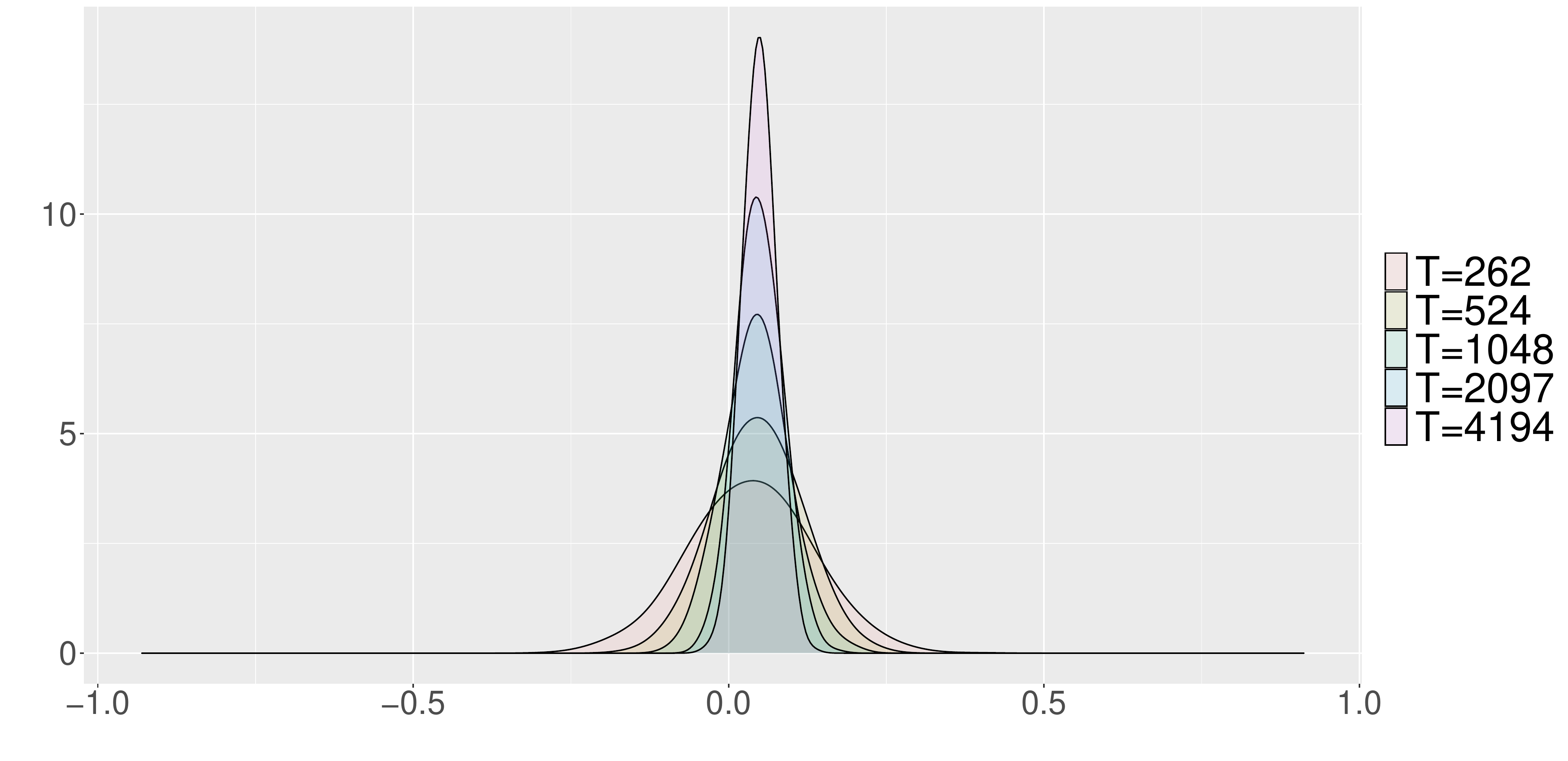}
  \caption{$\tilde{H} = \tilde{T}^{0.1}$}
  \label{fig:sfig1}
\end{subfigure}
\hspace{8pt}
\begin{subfigure}[b]{0.5\textwidth}
  \centering
  \includegraphics[width=1.0\linewidth]{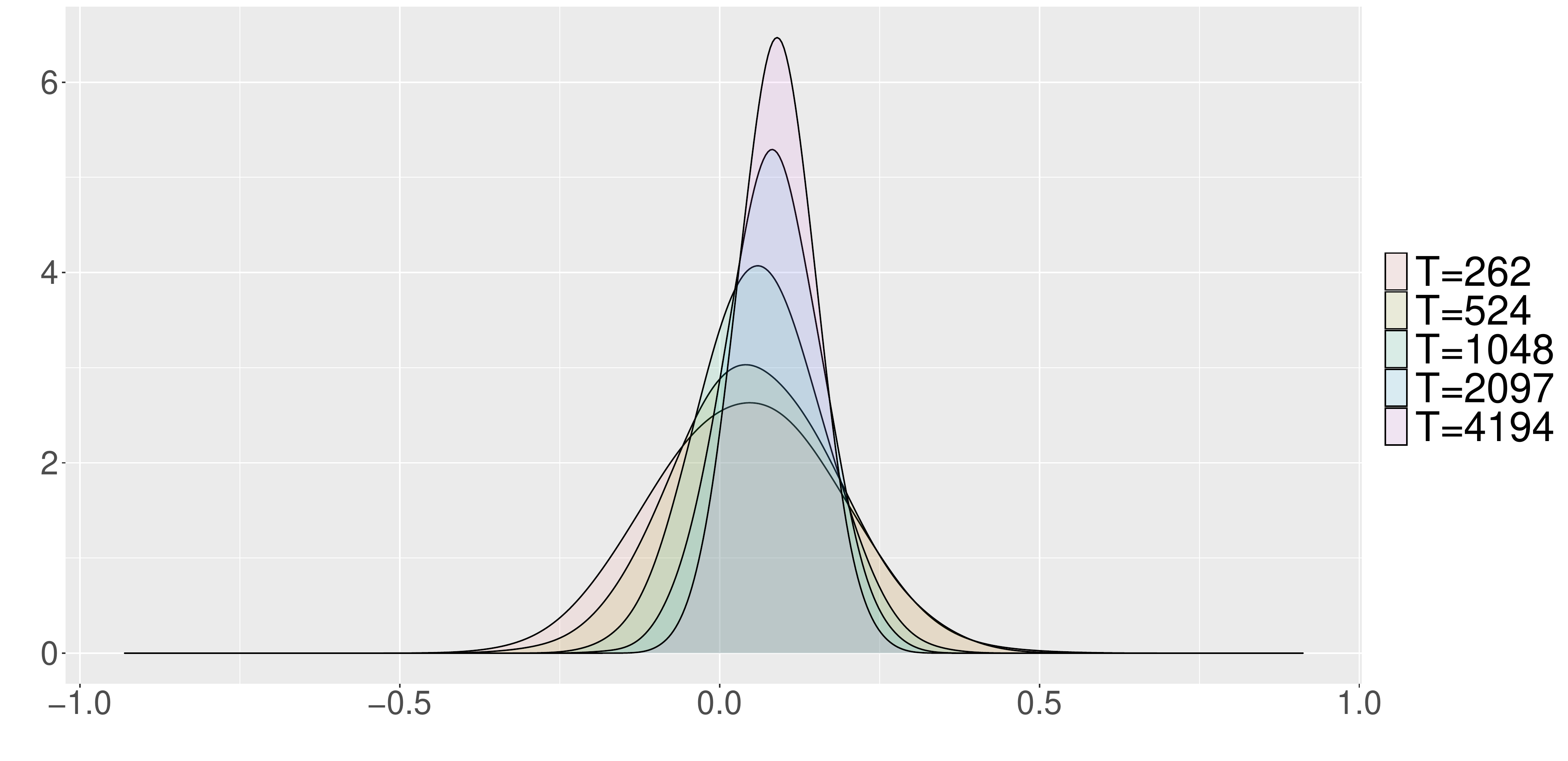}
  \caption{$\tilde{H} = \tilde{T}^{0.3}$}
  \label{fig:sfig2}
\end{subfigure}
\hspace{8pt}
\begin{subfigure}[b]{0.5\textwidth}
  \centering
  \includegraphics[width=1.0\linewidth]{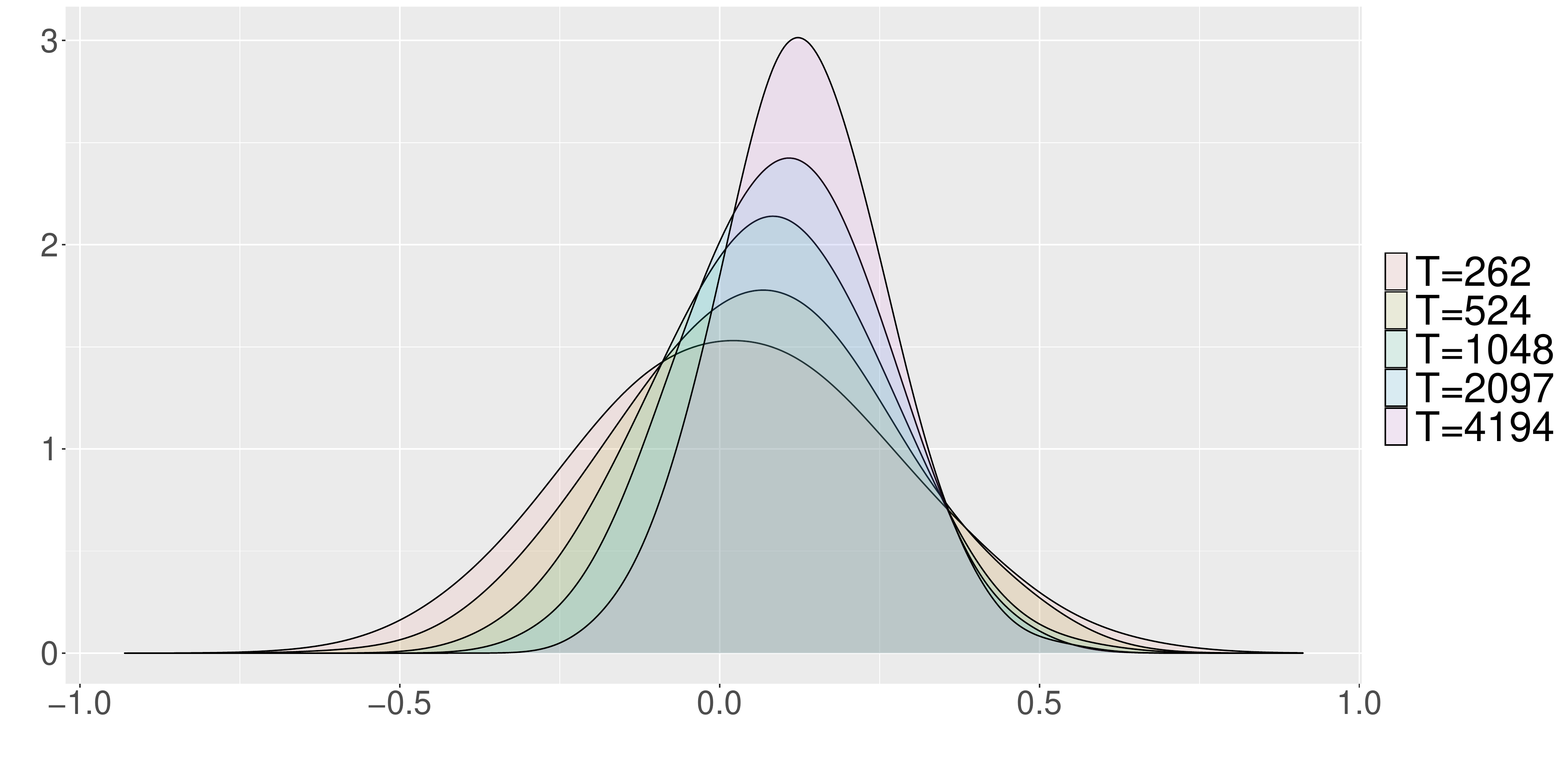}
  \caption{$\tilde{H} = \tilde{T}^{0.5}$}
  \label{fig:sfig2}
\end{subfigure}
\hspace{8pt}
\begin{subfigure}[b]{0.5\textwidth}
  \centering
  \includegraphics[width=1.0\linewidth]{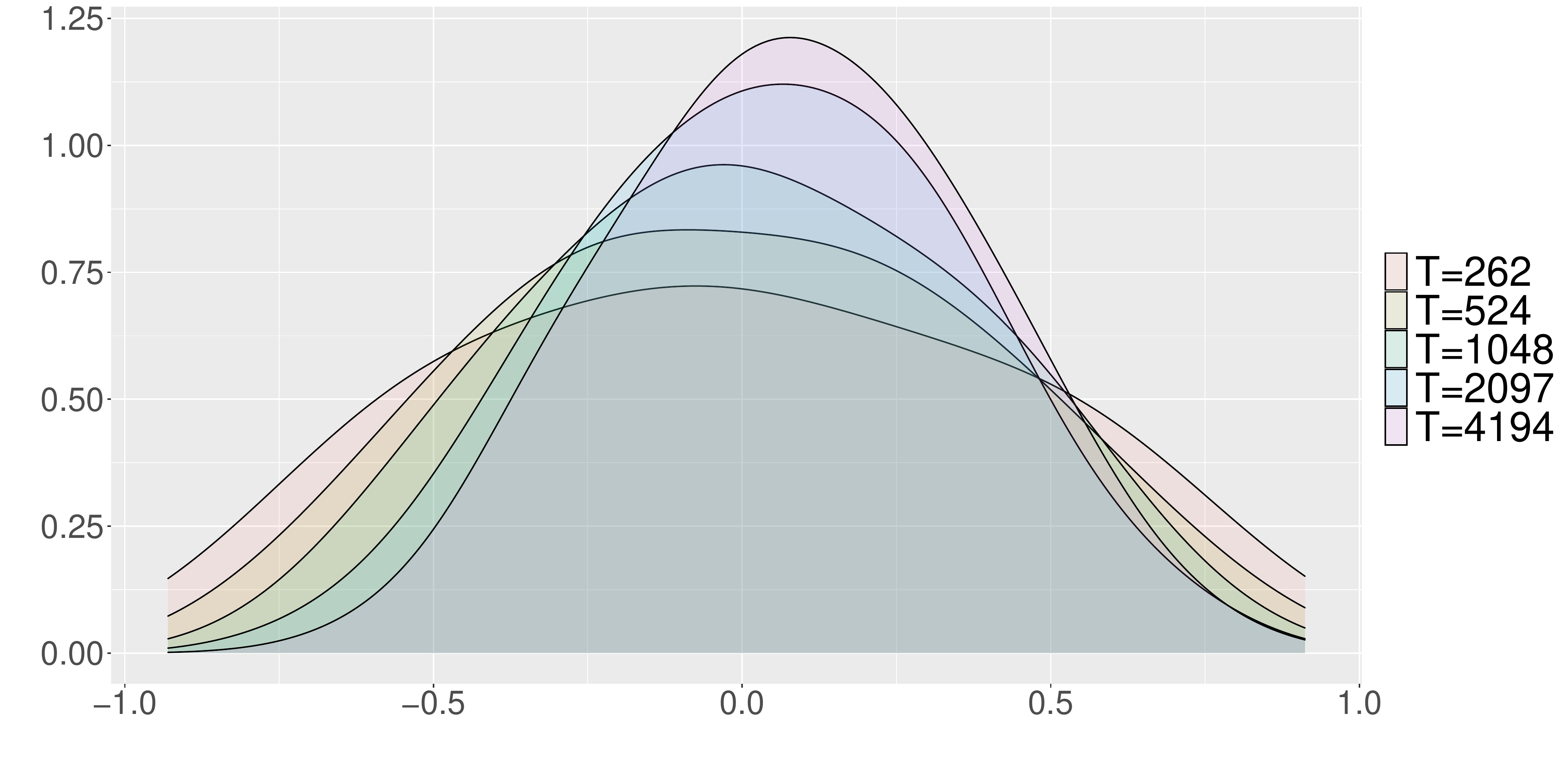}
  \caption{$\tilde{H} = \tilde{T}^{0.7}$}
  \label{fig:sfig2}
\end{subfigure}
\caption{Sampling distribution of 1000 realizations of $\hat{\rho}_{\tilde{H}, \tilde{T}}$ using $\tilde{H}=T^{\kappa}$, $\kappa=0.1, 0.3, 0.5, 0.7$ when $d=0.3545$. Model parameters: $\mu =0.000001419188$, $\lambda = 128.2085$, $c=1$, and $\sigma_e=0.0007289$.} \label{fig:sample_dist_rho_hat_power_law_d035}
\end{figure}

\begin{figure}[H]
\begin{subfigure}[b]{0.5\textwidth}
  \centering
  \includegraphics[width=1.0\linewidth]{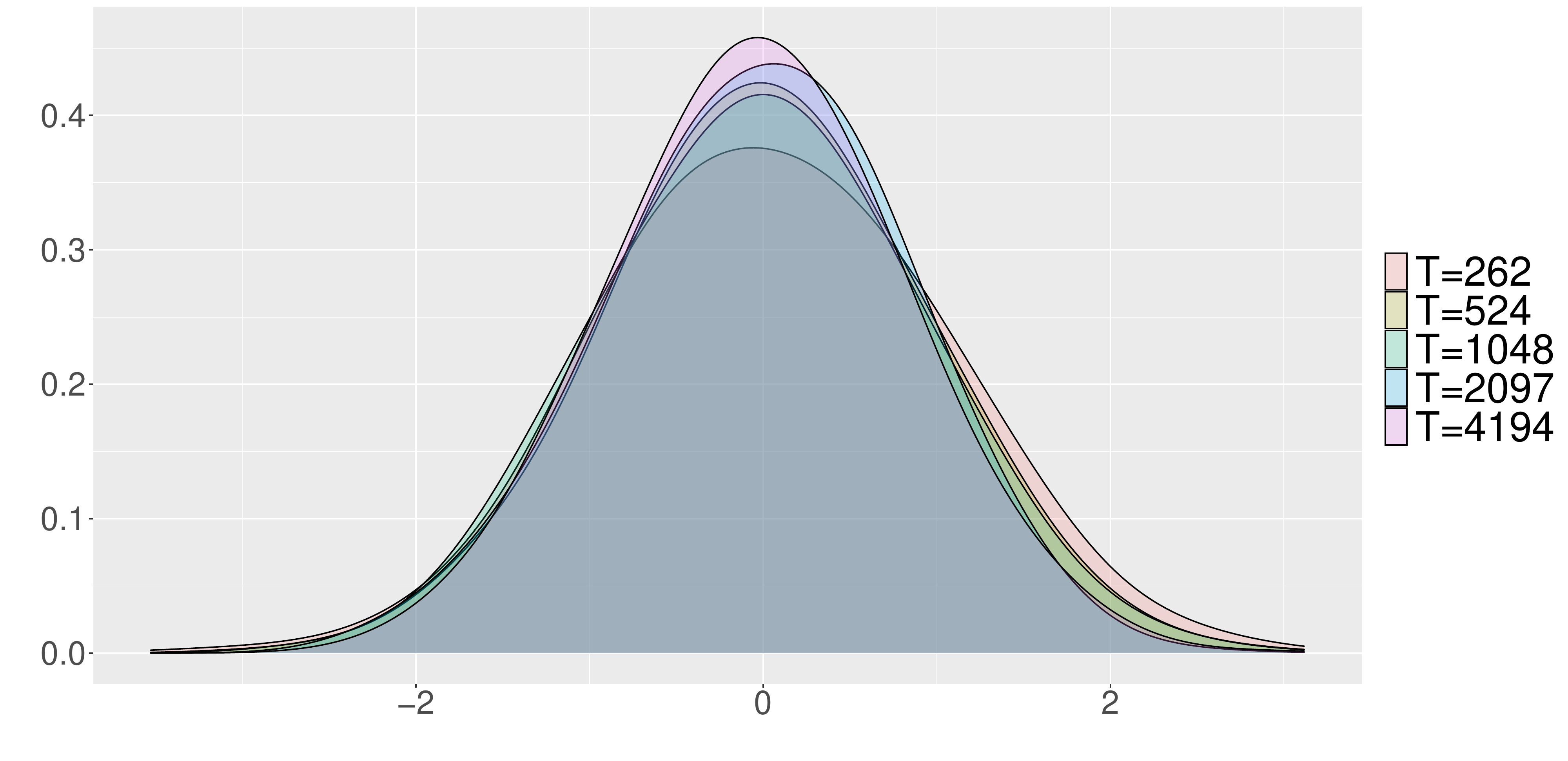}
  \caption{$\tilde{H} = \tilde{T}^{0.1}$}
  \label{fig:sfig1}
\end{subfigure}
\hspace{8pt}
\begin{subfigure}[b]{0.5\textwidth}
  \centering
  \includegraphics[width=1.0\linewidth]{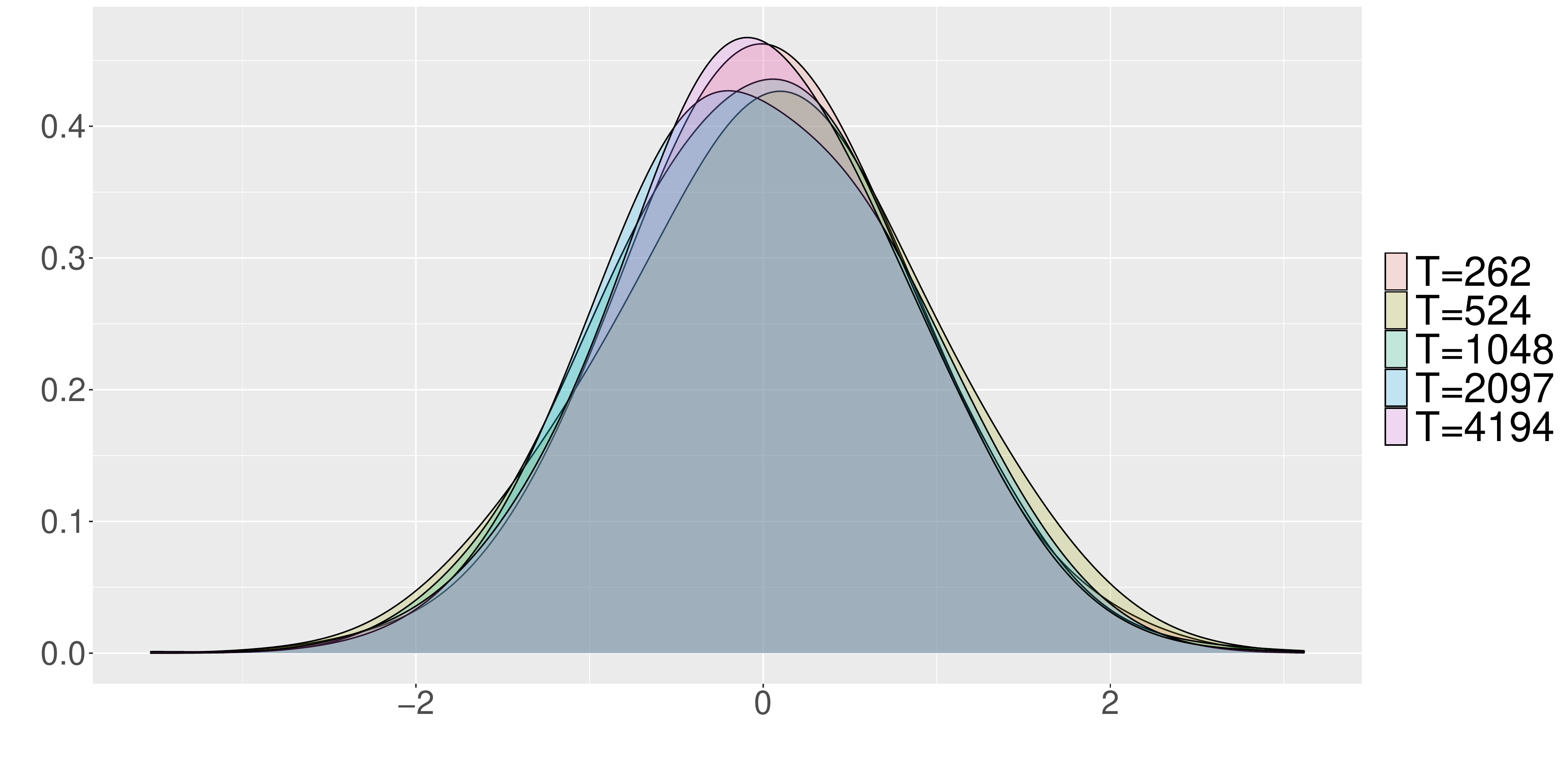}
  \caption{$\tilde{H} = \tilde{T}^{0.3}$}
  \label{fig:sfig2}
\end{subfigure}
\hspace{8pt}
\begin{subfigure}[b]{0.5\textwidth}
  \centering
  \includegraphics[width=1.0\linewidth]{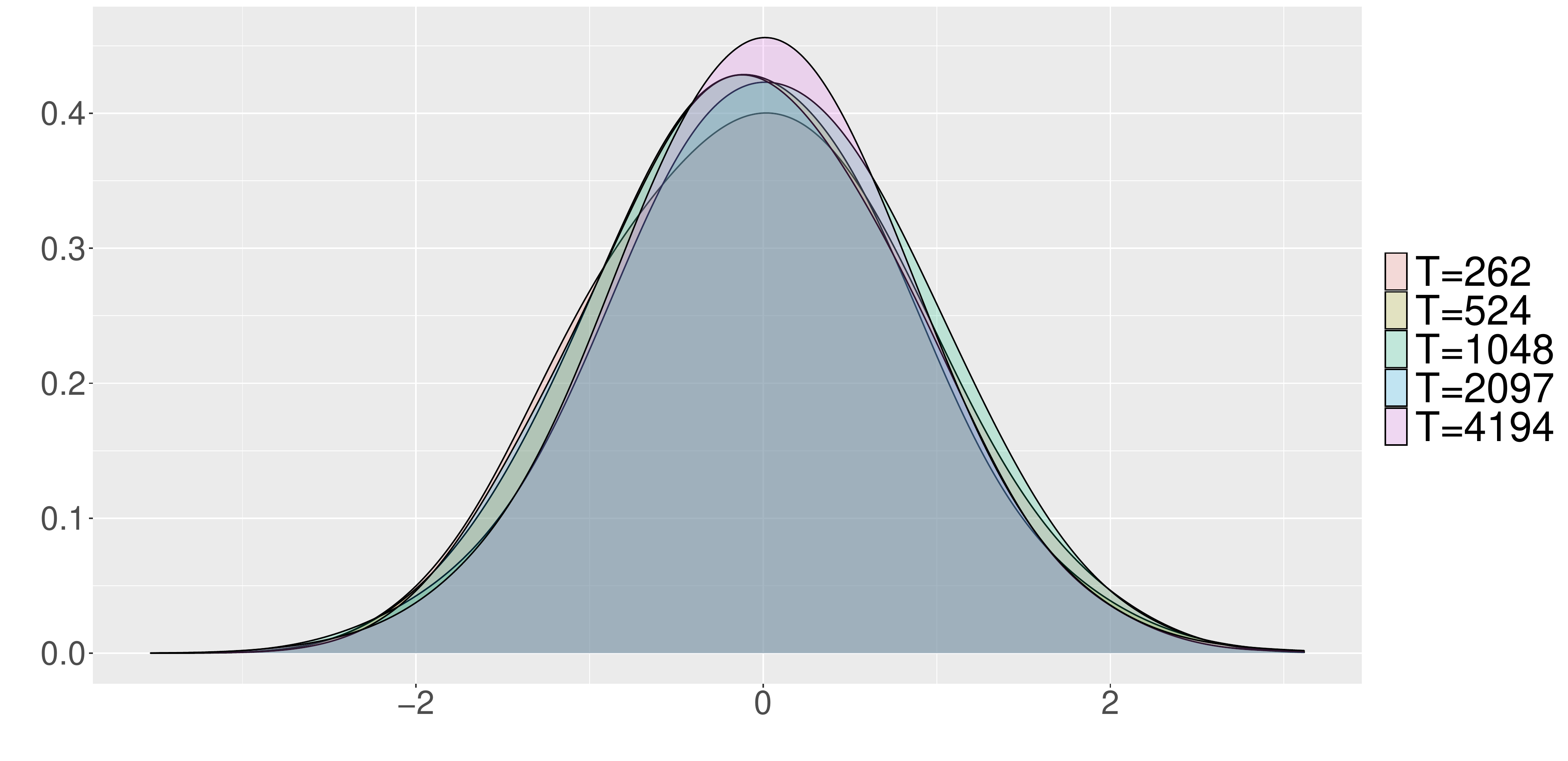}
  \caption{$\tilde{H} = \tilde{T}^{0.5}$}
  \label{fig:sfig2}
\end{subfigure}
\hspace{8pt}
\begin{subfigure}[b]{0.5\textwidth}
  \centering
  \includegraphics[width=1.0\linewidth]{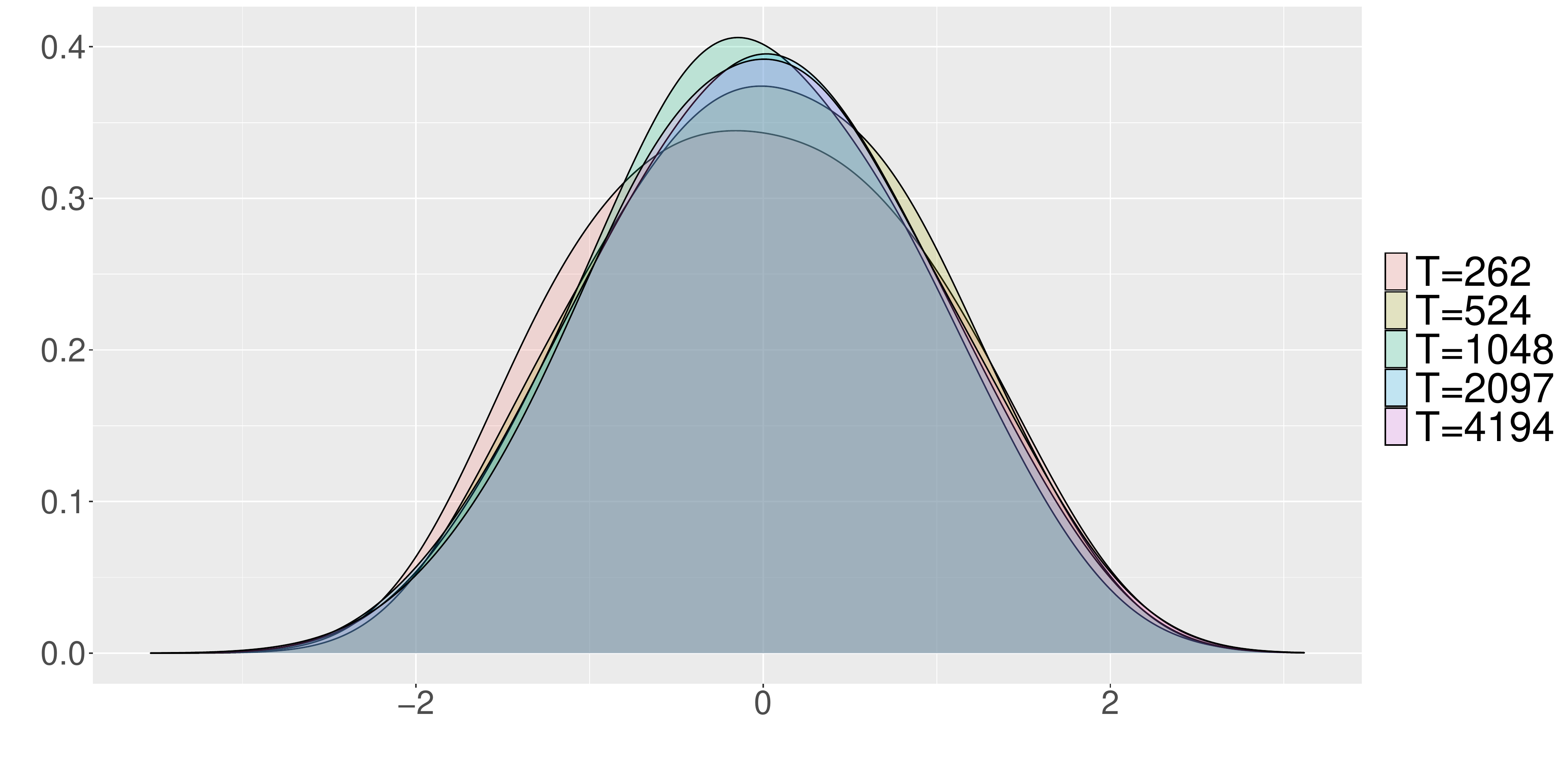}
  \caption{$\tilde{H} = \tilde{T}^{0.7}$}
  \label{fig:sfig2}
\end{subfigure}
\caption{Sampling distribution of 1000 realizations of $\sqrt{\tilde{T}^{1-\kappa}}\hat{\rho}_{\tilde{H}, \tilde{T}}$ using $\tilde{H}=T^{\kappa}$, $\kappa=0.1, 0.3, 0.5, 0.7$ when $d=0$. Model parameters: $\mu =0.000001419188$, $\lambda = 128.2085$, $c=1$, and $\sigma_e=0.0007289$.} \label{fig:asym_dist_adj_rho_hat_power_law_d0}
\end{figure}

\begin{figure}[H]
\begin{subfigure}[b]{0.5\textwidth}
  \centering
  \includegraphics[width=1.0\linewidth]{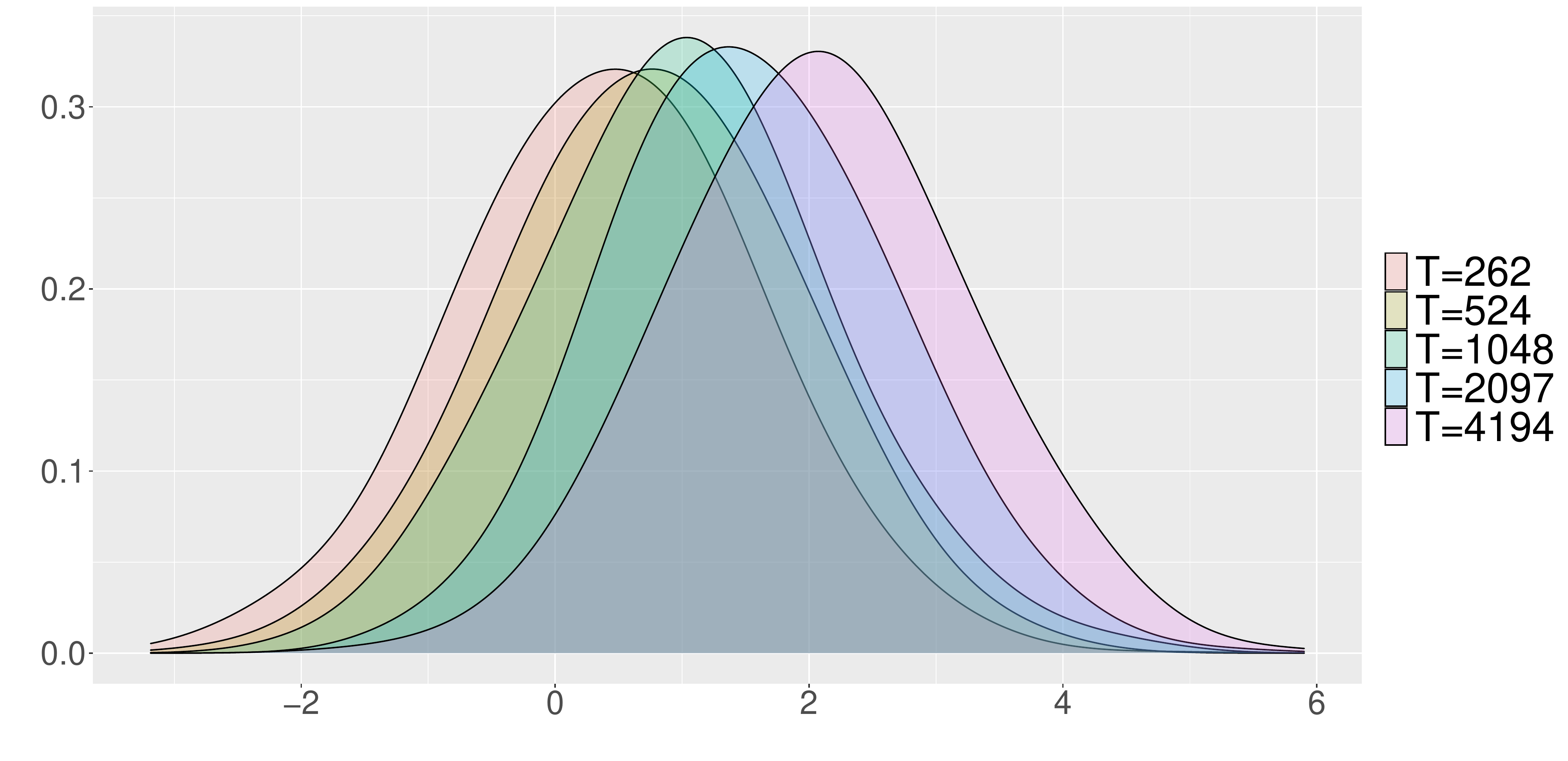}
  \caption{$\tilde{H} = \tilde{T}^{0.1}$}
  \label{fig:sfig1}
\end{subfigure}
\hspace{8pt}
\begin{subfigure}[b]{0.5\textwidth}
  \centering
  \includegraphics[width=1.0\linewidth]{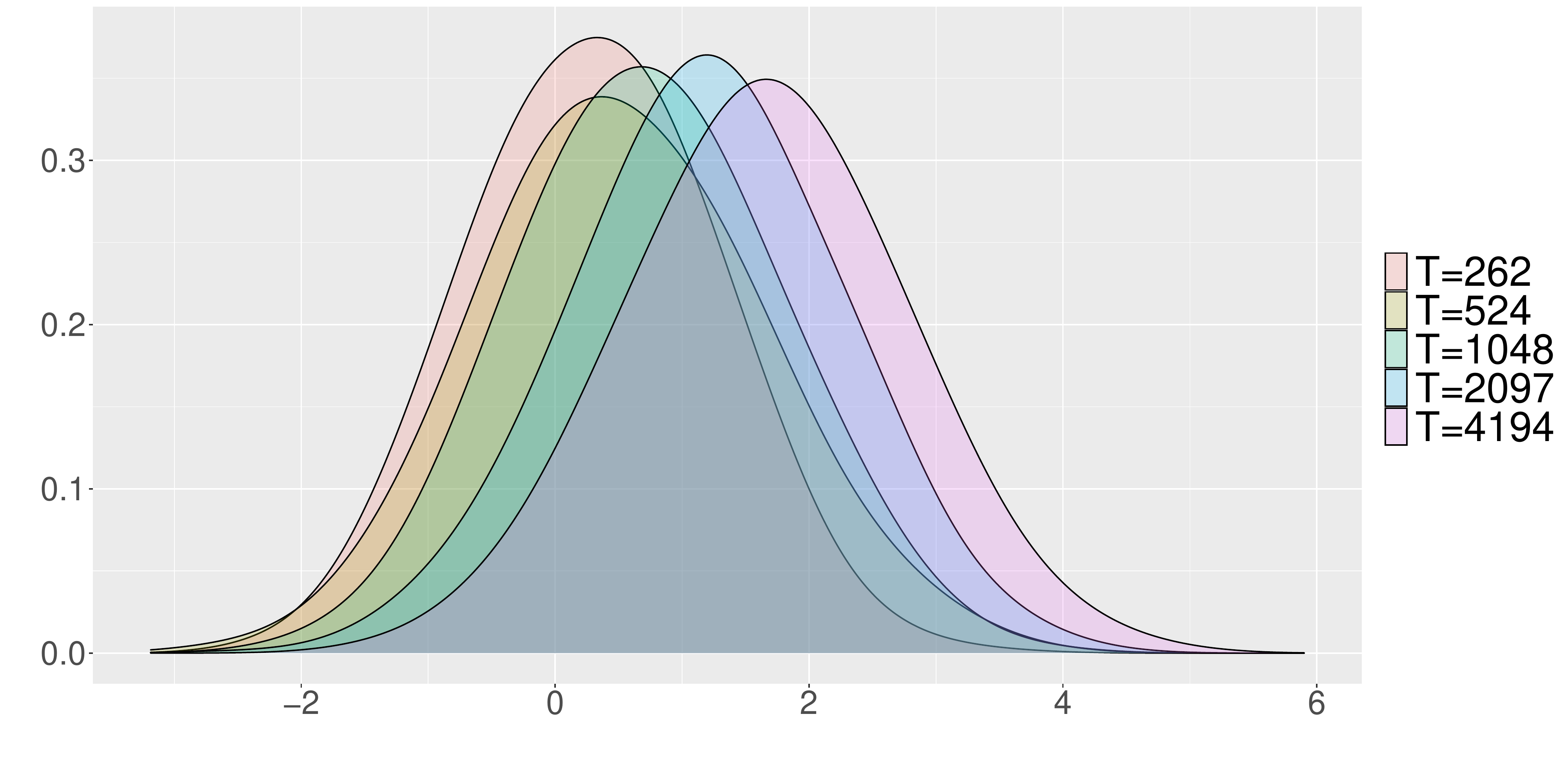}
  \caption{$\tilde{H} = \tilde{T}^{0.3}$}
  \label{fig:sfig2}
\end{subfigure}
\hspace{8pt}
\begin{subfigure}[b]{0.5\textwidth}
  \centering
  \includegraphics[width=1.0\linewidth]{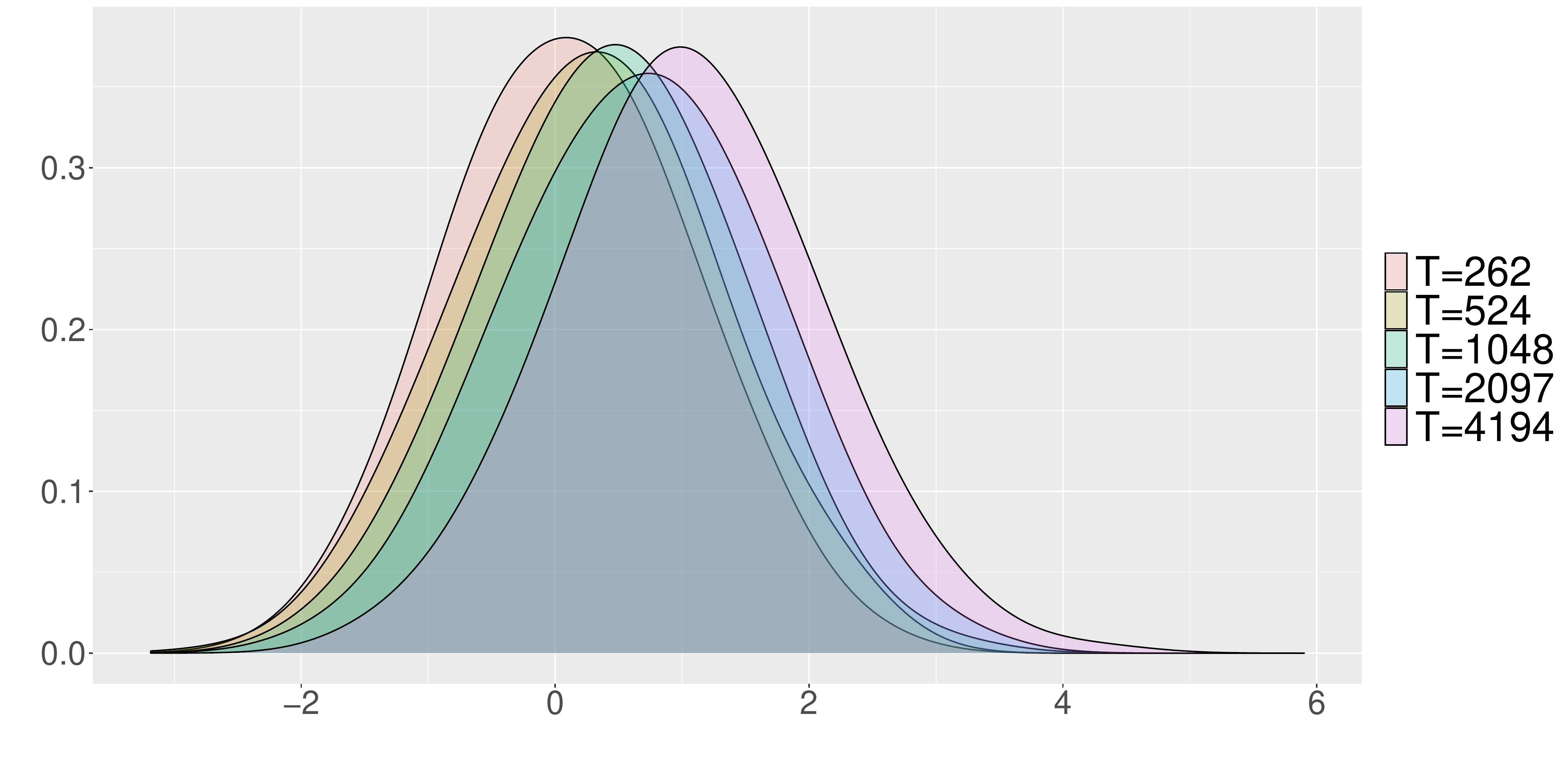}
  \caption{$\tilde{H} = \tilde{T}^{0.5}$}
  \label{fig:sfig2}
\end{subfigure}
\hspace{8pt}
\begin{subfigure}[b]{0.5\textwidth}
  \centering
  \includegraphics[width=1.0\linewidth]{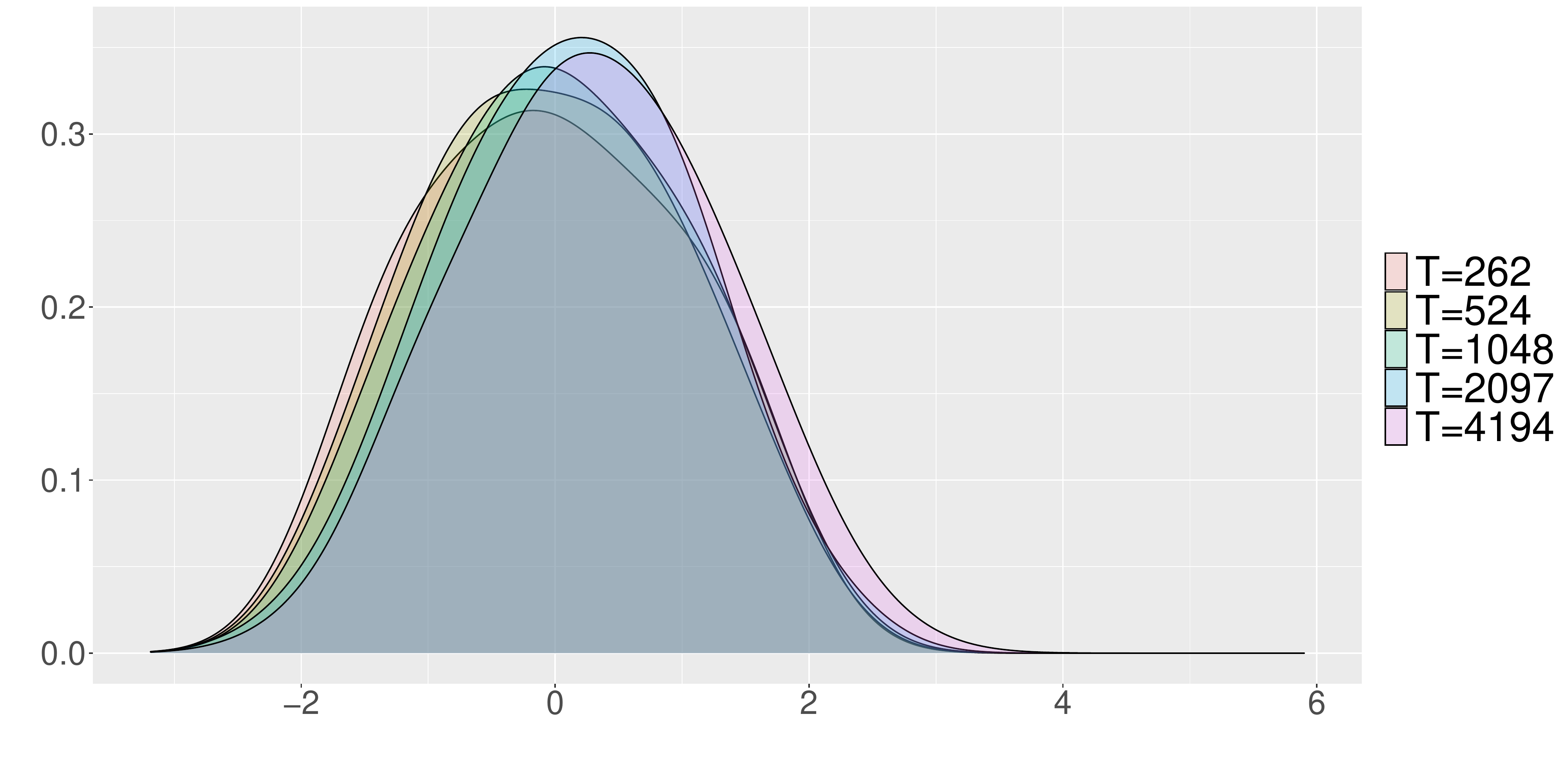}
  \caption{$\tilde{H} = \tilde{T}^{0.7}$}
  \label{fig:sfig2}
\end{subfigure}
\caption{Sampling distribution of 1000 realizations of $\sqrt{\tilde{T}^{1-\kappa}}\hat{\rho}_{\tilde{H}, \tilde{T}}$ using $\tilde{H}=T^{\kappa}$, $\kappa=0.1, 0.3, 0.5, 0.7$ when $d=0.3545$. Model parameters: $\mu =0.000001419188$, $\lambda = 128.2085$, $c=1$, and $\sigma_e=0.0007289$.} \label{fig:asym_dist_adj_rho_hat_power_law_d035}
\end{figure}

\begin{figure}[H]
\begin{subfigure}[b]{0.5\textwidth}
  \centering
  \includegraphics[width=1.0\linewidth]{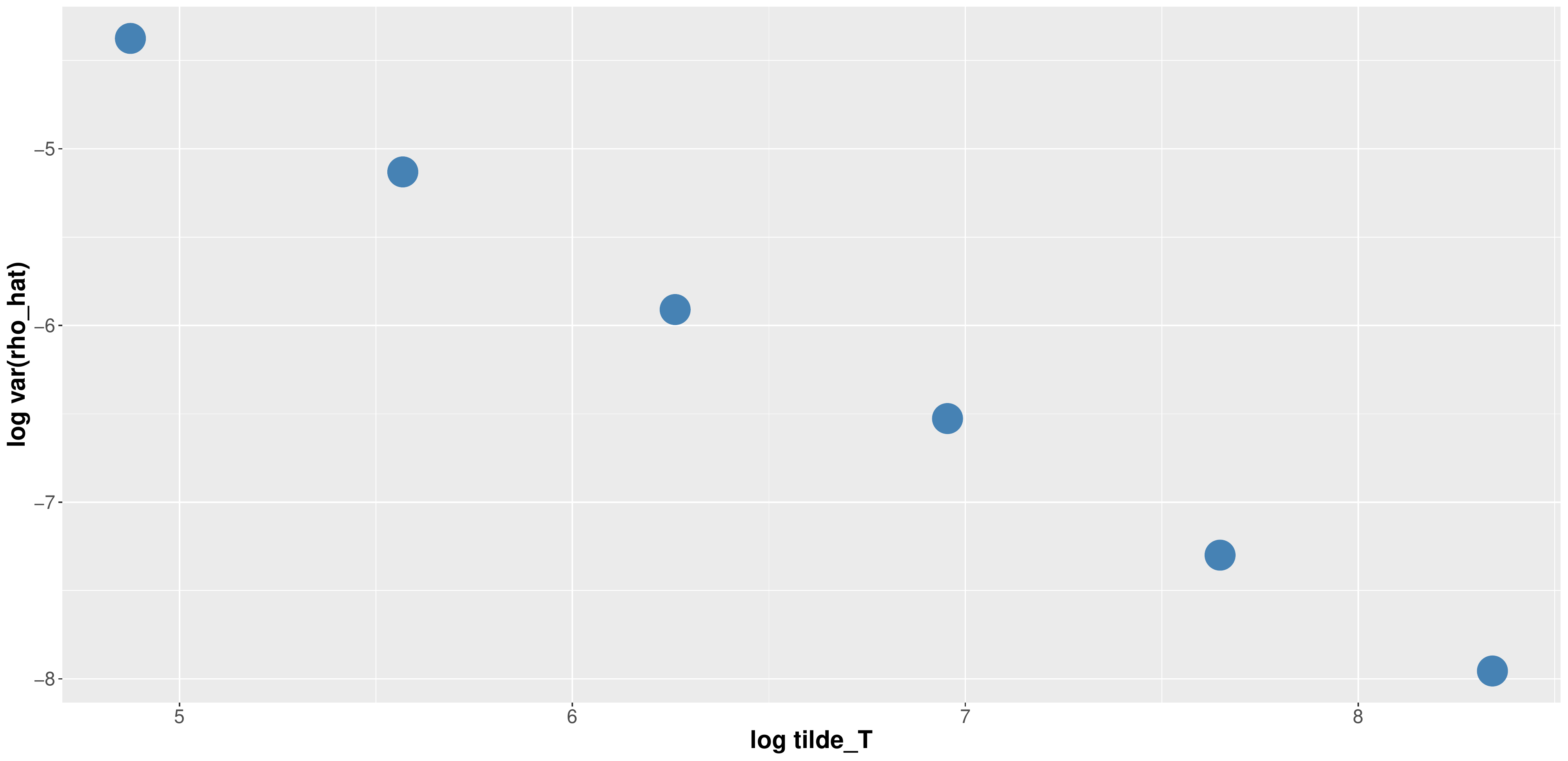}
  \caption{$\tilde{H} = \tilde{T}^{0.1}$}
  \label{fig:sfig1}
\end{subfigure}
\hspace{8pt}
\begin{subfigure}[b]{0.5\textwidth}
  \centering
  \includegraphics[width=1.0\linewidth]{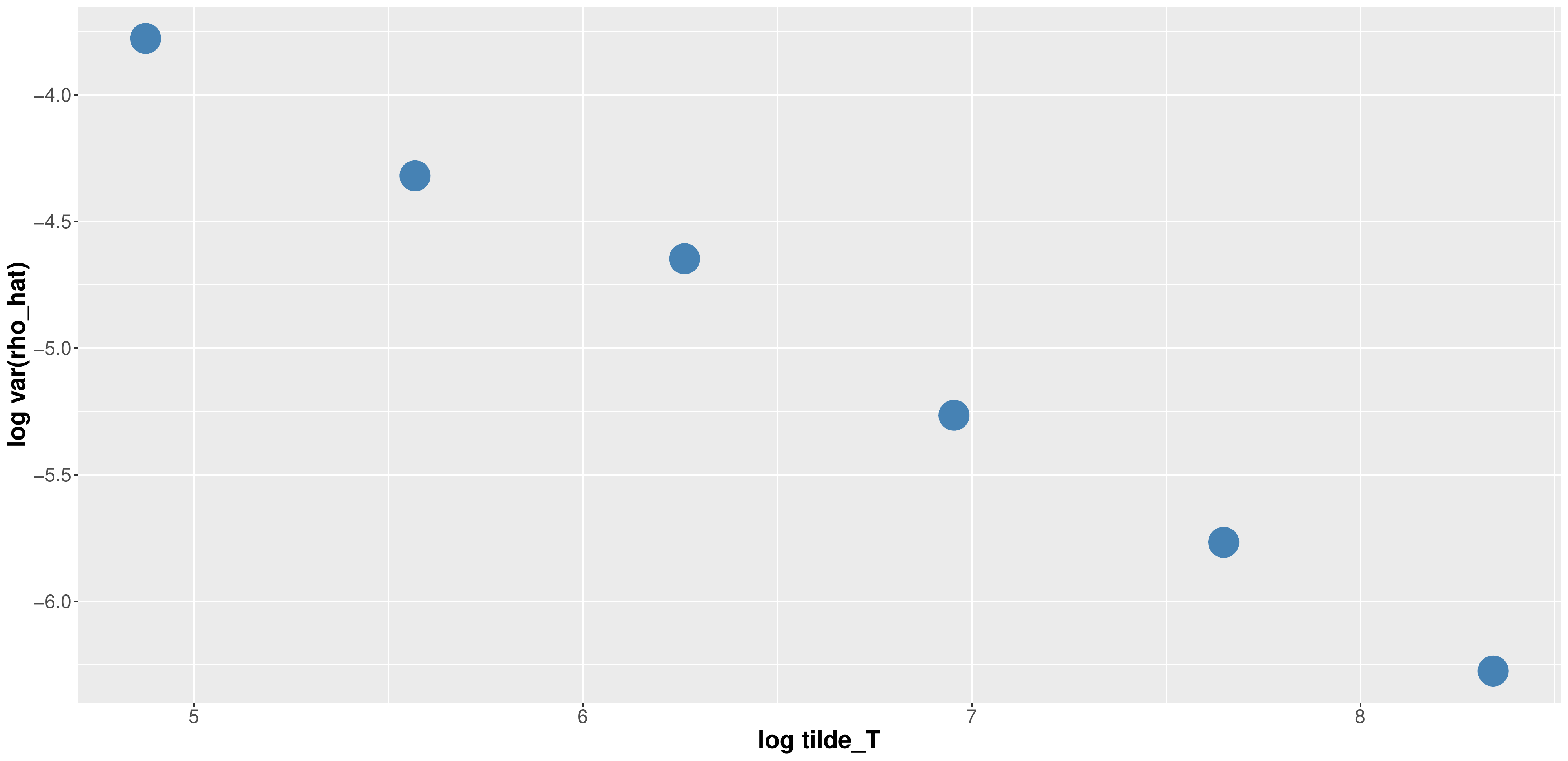}
  \caption{$\tilde{H} = \tilde{T}^{0.3}$}
  \label{fig:sfig2}
\end{subfigure}
\hspace{8pt}
\begin{subfigure}[b]{0.5\textwidth}
  \centering
  \includegraphics[width=1.0\linewidth]{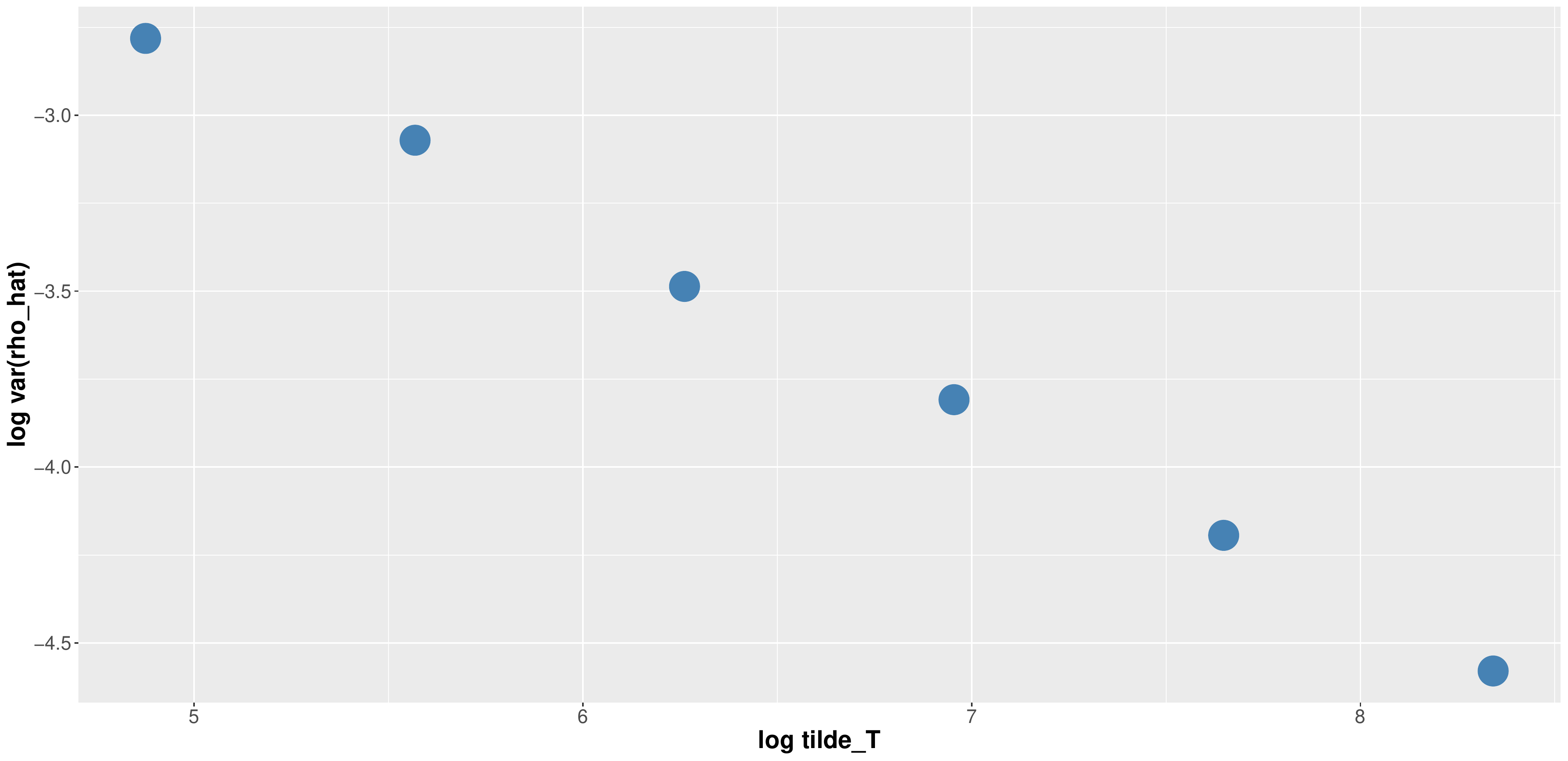}
  \caption{$\tilde{H} = \tilde{T}^{0.5}$}
  \label{fig:sfig2}
\end{subfigure}
\hspace{8pt}
\begin{subfigure}[b]{0.5\textwidth}
  \centering
  \includegraphics[width=1.0\linewidth]{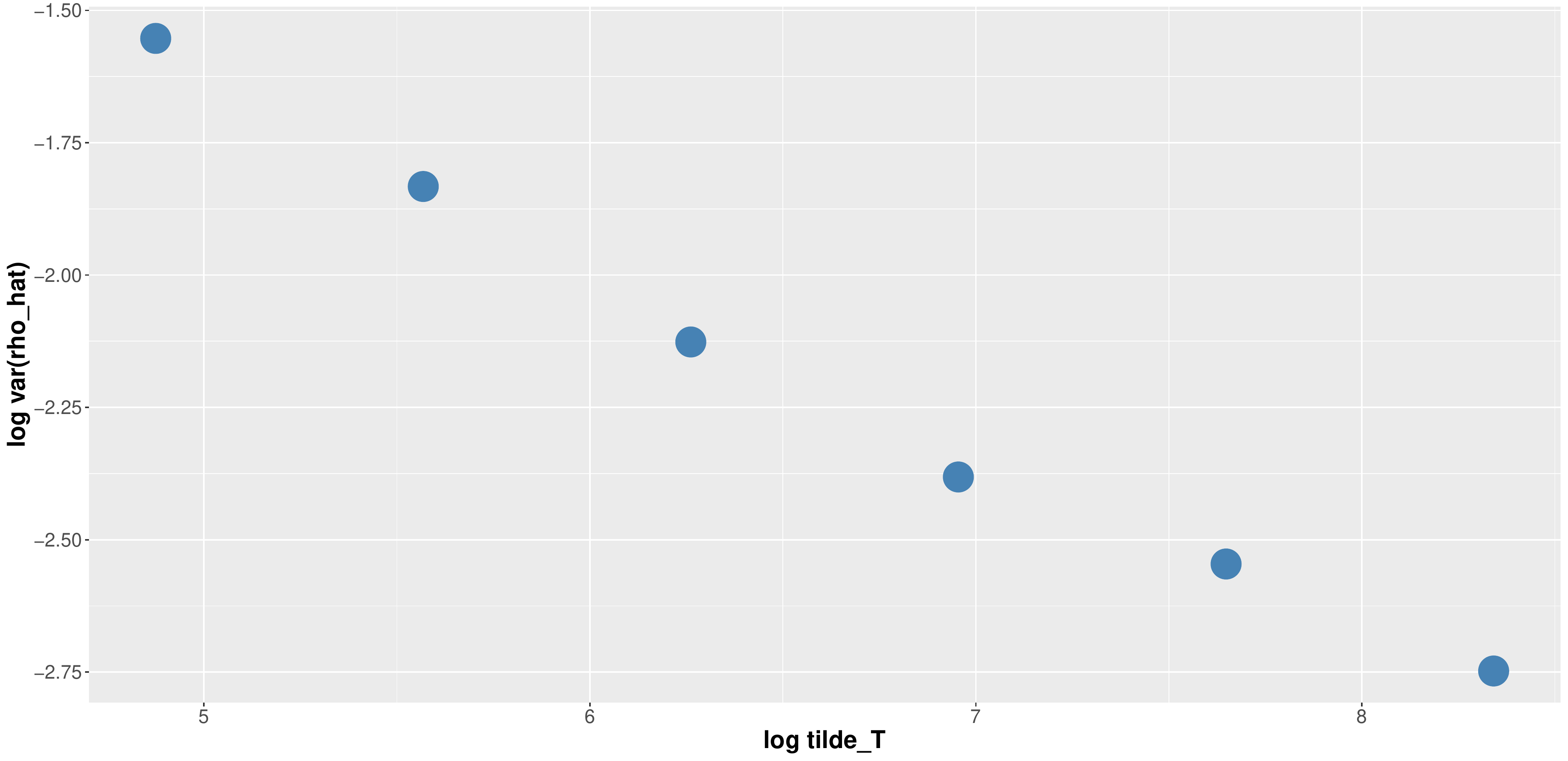}
  \caption{$\tilde{H} = \tilde{T}^{0.7}$}
  \label{fig:sfig2}
\end{subfigure}
\caption{Scatter plots of $\log (\var(\hat{\rho}_{\tilde{H}, \tilde{T}}))$ vs. $\log(\tilde{T})$, where $\tilde{T}=131, 262, 524, 1048, 2097, 4194$.} \label{fig:log_log_plot}
\end{figure}

\section{Comparison with Sizova (2013) } \label{sec:comparison}
Since our paper is not in a local-to-zero framework, we consider a test based on $\hat{\rho}_{\tilde{H}, \tilde{T}}$ with critical value obtained from Theorem 4 of Sizova (2013) under the assumption $\tilde H/ \tilde T \rightarrow \theta \in (0,1)$.
This theorem is based on a further assumption (Assumption 2) that a suitably normalized
version of the regressor converges weakly to a fractional Brownian motion.
Sizova presented several examples under which Assumptions 1 and 2 in that paper hold. The example that is closest to the context of this paper is Example 3, in which the continuous-time long-memory stochastic volatility model
of Comte and Renault (1998) is assumed for the price. In this example, the regresssor is taken to be the
integrated latent variance, $\sigma^2(t)=\int^t_{0}\sigma^2(u)du$, where $\sigma^2(u)$ is an unobserved latent variance, assumed to obey
a stationary long-memory Ornstein-Uhlenbeck model. Sizova pointed out that for this regressor, Assumption 2 would hold,
by an argument similar to the proof of the main result of Taqqu (1976). In comparison with Example 3 of Sizova (2013), we point out that
the regressor in the example could not be used in practice even if the model assumed there held since $\sigma^2(t)$ is a latent
process. Clearly, one could try using the realized variance to estimate $\sigma^2(t)$, and indeed Comte and Renault
(1998, Proposition 5.1, Page 305) provide a theoretical result on the $L^2$ convergence and $L^1$ convergence of a suitably
normalized version of realized variance to $\sigma^2(t)$. However, neither Sizova (2013) nor Comte and Renault (1998) has claimed
or proved that Assumption 2 of Sizova (2013) would hold for some function of the realized variance.
Furthermore, a practical issue
that would arise in trying to estimate $\sigma^2(t)$ in the model of Comte and Renault (1998) is that realized variance is
not a useful quantity when based on very short time increments because the squared returns are mostly zero.
This is often attributed to microstructure noise in the literature,
but in our viewpoint the problem is deeper, and is most fundamentally attributable to the step-function nature of the log
price process.

\section{Conclusion}\label{sec:conclusion}
Much of the literature on long-horizon return predictability finds that the evidence for predictability improves as the aggregation horizon increases.
Many of these studies, however, were subject to spurious findings due to the amplified correlations in the predictors and the returns caused by the overlapping aggregation.

Our work provides a theoretical framework for exploring the correlation between the long-horizon returns and the realized variance. We propose a parametric transaction-level model for the continuous-time log price based on a pure-jump process. The model determines returns and realized variance at any level of aggregation and reveals a channel through which the statistical properties of the transaction-level data are propagated to induce long-horizon return predictability. Specifically, the memory parameter of returns at the transaction level propagates unchanged to the returns and realized variance at any calendar-time frequency, which leads to a balanced predictive regression.  We show that if the long-memory parameter is zero, then there is no long-horizon return predictability. This extends a result shown previously
by Sizova (2013) based on discrete-time assumptions.
In addition, we provide consistent estimators for the model parameters.
To assess return predictability, we propose a hypothesis test based on a power-law aggregation of the returns and realized variance.  We demonstrate that
the proposed test is asymptotically correctly-sized and is consistent, whereas the related test of Sizova (2013), which used a linear aggregation framework, is inconsistent.

\appendix

\section*{Appendix}\label{sec:appendix}

\section{Model Parameter Estimation}\label{sec:model_estimates}
In this section, we derive formulas for estimating the model parameters, $\mu, \lambda, \sigma^2_e$, $c$, and $d$ based on $\log P(t)$, $t \in (0, T]$.
We use moments of the calendar-time returns $\{r_t\}^{T}_{t=1}$ and the counts $\{\Delta N(t)\}^{T}_{t=1}$ to derive formulas for estimating model parameters.
We establish the consistency of the proposed estimators in Theorem \ref{thm:consistent_estimators}.

\subsection*{Moments of returns and counts}
We first obtain certain moments of returns and counts.
The expectations of $r_t$ and $\Delta N(t)$ can be expressed as
\begin{equation}\label{eq: mean_return}
\esp[r_t] = \mu \esp[\Delta N(t)]
\end{equation}
where
\begin{equation}\label{eq:mean_count}
\esp[\Delta N(t)] = \lambda \int_{t-1}^t \esp[\rme^{Z_H(s)}]\rmd s = \lambda \rme^{\frac{1}{2}}\;.
\end{equation}
Hence
\begin{equation}\label{eq: mean_return_v2}
\esp[r_t]= \mu \lambda e^{\frac{1}{2}}\;.
\end{equation}

\noindent From Corollary \ref{cor:var_rt}, when $d =0$, Equation (\ref{eq:var_r0}) becomes
\begin{equation}\label{eq:var_rt_d0}
\var(r_t) = \mu^2 \lambda^2 \rme + \lambda e^{\frac{1}{2}}\left( \mu^2 + \sigma^2_e \right) \;.
\end{equation}

\begin{lemma}\label{lem:count_property}
The autocovariance function and the variance of the counts $\Delta N(t)$ are equal to
\begin{eqnarray}\label{eq:autocov_count}
\cov\left(\Delta N(L), \Delta N(0)\right)
&=& \lambda^2 \int^{0}_{-1}\int^{0}_{-1}\rme\left[e^{\gammaz(L+s-t)}-1 \right]\rmd s \rmd t \;, \;\; (L=1,2, \cdots, )
\end{eqnarray}
and
\begin{equation}\label{eq:var_count}
\var\left(\Delta N(t)\right) = \lambda^2 \int^{0}_{-1}\int^{0}_{-1}\rme\left[\rme^{\gammaz(s-t)}-1 \right]\rmd s \rmd t + \lambda \rme^{\frac{1}{2}}\;,
\end{equation}
where
\begin{equation}\label{eq:gammaz_L}
\gammaz(L+s-t) = \frac{1}{2}\left[|c(L+s-t)+1|^{2H}-2|c(L+s-t)|^{2H} + |c(L+s-t)-1|^{2H} \right] \;.
\end{equation}
Moreover, when $d=0$,
\begin{equation}\label{eq:var_count_d0}
\var\left(\Delta N(t)\right) = \lambda^2 \rme + \lambda \rme^{\frac{1}{2}}\;.
\end{equation}
\end{lemma}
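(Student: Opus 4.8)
The plan is to condition on the random intensity measure $\Lambda$ from (\ref{intensity}) and to use the defining property of a Cox process: conditionally on $\Lambda$, the process $N$ is Poisson, so it assigns independent counts to disjoint sets and $\esp[N(A)\mid\Lambda]=\var(N(A)\mid\Lambda)=\Lambda(A)$. Write $M_t:=\Lambda((t-1,t])=\lambda\int_{t-1}^{t}\rme^{Z_H(s)}\,\rmd s$. For $L\ge 1$ the intervals $(L-1,L]$ and $(-1,0]$ are disjoint, so the conditional covariance vanishes and the law of total covariance gives
\[
\cov(\Delta N(L),\Delta N(0)) = \esp\bigl[\cov(\Delta N(L),\Delta N(0)\mid\Lambda)\bigr] + \cov(M_L,M_0) = \cov(M_L,M_0),
\]
while the law of total variance gives $\var(\Delta N(t)) = \esp[\var(\Delta N(t)\mid\Lambda)] + \var(\esp[\Delta N(t)\mid\Lambda]) = \esp[M_t] + \var(M_t)$, the first term being the genuinely ``Poisson'' contribution that the covariance does not see.

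Next I would compute the moments of $M_t$. By Tonelli (all integrands are nonnegative) and the Gaussian moment generating function — since $(Z_H(s),Z_H(t))$ is centered Gaussian with unit variances and covariance $\gammaz(s-t)$, one has $\esp[\rme^{Z_H(s)+Z_H(t)}]=\rme^{1+\gammaz(s-t)}=\rme\,\rme^{\gammaz(s-t)}$ — it follows that $\esp[M_L M_0]=\lambda^2\int_{L-1}^{L}\int_{-1}^{0}\rme\,\rme^{\gammaz(s-t)}\,\rmd s\,\rmd t$. The change of variable $s\mapsto s+L$ combined with stationarity of $Z_H$ brings the inner range back to $(-1,0]$ and replaces $\gammaz(s-t)$ by $\gammaz(L+s-t)$; subtracting $\esp[M_L]\,\esp[M_0]=(\lambda\rme^{1/2})^2=\lambda^2\rme=\lambda^2\int_{-1}^{0}\int_{-1}^{0}\rme\,\rmd s\,\rmd t$ (using (\ref{eq:mean_count})) yields (\ref{eq:autocov_count}). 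The same computation with $L=0$, together with stationarity, gives $\esp[M_t^2]=\lambda^2\int_{-1}^{0}\int_{-1}^{0}\rme\,\rme^{\gammaz(s-t)}\,\rmd s\,\rmd t$ and hence $\var(M_t)=\lambda^2\int_{-1}^{0}\int_{-1}^{0}\rme[\rme^{\gammaz(s-t)}-1]\,\rmd s\,\rmd t$; adding $\esp[M_t]=\lambda\rme^{1/2}$ gives (\ref{eq:var_count}). Equation (\ref{eq:gammaz_L}) is just (\ref{eq:frac_Gaussian_gammaz}) evaluated at $r=L+s-t$.

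For the short-memory case $d=0$ ($H=1/2$) I would plug the explicit triangular autocovariance ($\gammaz(r)=1-c|r|$ for $|r|<1/c$ and $\gammaz(r)=0$ otherwise, from Lemma \ref{lem:exp_gammaz_d0}) into (\ref{eq:var_count}). Changing variables to $u=s-t\in[-1,1]$, whose induced weight on $[-1,0]^2$ is $1-|u|$, reduces the double integral $\int_{-1}^{0}\int_{-1}^{0}\rme^{\gammaz(s-t)}\,\rmd s\,\rmd t$ to $\int_{-1}^{1}\rme^{\gammaz(u)}(1-|u|)\,\rmd u$; with $c=1$ this elementary integral equals $2$ (after the substitution $v=1-u$ and one integration by parts), so $\var(M_t)=\lambda^2\rme$ and $\var(\Delta N(t))=\lambda^2\rme+\lambda\rme^{1/2}$, which is (\ref{eq:var_count_d0}).

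The derivation is essentially routine; the only points needing care are (i) the measure-theoretic legitimacy of conditioning on $\Lambda$ and of the Tonelli interchanges, which is standard for Cox processes whose intensity is locally integrable, and (ii) the bookkeeping in the change of variables so that the mean correction produces exactly the ``$\,-1\,$'' inside the integrands of (\ref{eq:autocov_count})--(\ref{eq:var_count}). The least automatic step is the closed-form evaluation of the elementary double integral in the $d=0$ case.
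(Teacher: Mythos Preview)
Your proof is correct and follows exactly the paper's approach: condition on $\Lambda$, apply the law of total (co)variance together with the conditional Poisson structure, and compute the intensity moments via the bivariate lognormal identity (\ref{formula:cov_log_normal_1}). Your treatment is actually more detailed than the paper's, which omits the explicit evaluation of the $d=0$ integral; your remark that the closed form (\ref{eq:var_count_d0}) is obtained under $c=1$ is a point the paper's statement leaves implicit.
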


\subsection*{Estimation of $\mu$}
Let $\bar{r}_t = \frac{1}{T}(r_1+r_2+ \cdots + r_T)$ and $\overline{\Delta N(t)} = \frac{1}{T}(\Delta N(1)+\Delta N(2) + \cdots + \Delta N(T))$.
From Equation (\ref{eq: mean_return}), the drift term can be estimated by
\begin{equation}\label{eq:est_mu}
\hat{\mu} = \frac{\bar{r}_t}{\overline{\Delta N(t)}} \;.
\end{equation}

\subsection*{Estimation of $\lambda$}
Using Equation (\ref{eq:mean_count}), we can estimate $\lambda$ by
\begin{equation}\label{eq:est_lam}
\hat{\lambda} = \frac{\overline{\Delta N(t)}}{\rme^{\frac{1}{2}}} \;.
\end{equation}

\subsection*{Estimation of $c$ and $d$ }
Let $\hat{\gamma}_{\Delta N}(1)$ and $\hat{\gamma}_{\Delta N}(2)$ be the sample lag-one and lag-two autocovariances of the counts.
Substituting $\hat{\lambda}$ into Equation (\ref{eq:autocov_count}), we solve a system of nonlinear equations to estimate $c$ and $d$:
\begin{eqnarray}
\hat{\lambda}^2 \int^{0}_{-1}\int^{0}_{-1}\rme\left[e^{\gammaz(1+s-t)}-1 \right]\rmd s \rmd t - \hat{\gamma}_{\Delta N}(1)  &=& 0 \\
\hat{\lambda}^2 \int^{0}_{-1}\int^{0}_{-1}\rme\left[e^{\gammaz(2+s-t)}-1 \right]\rmd s \rmd t - \hat{\gamma}_{\Delta N}(2)  &=& 0 \;.
\end{eqnarray}

\subsection*{Estimation of $\sigma_e^2$}
From Equation (\ref{eq:var_count}), we can express
\begin{equation}\label{eq:cov_frac_gau}
\lambda^2 \rme \int_{-1}^0 \int_{-1}^0 \left\{\rme^{\gammaz(s-t)} -1  \right\}\rmd s\rmd t  = \var(\Delta N(t)) - \lambda \rme^{\frac{1}{2}}
\end{equation}
Substituting Equation \ref{eq:cov_frac_gau} into Equation (\ref{eq:var_r0}), Equation (\ref{eq:var_r0}) becomes
\begin{equation}\label{eq:var_rt_rearrange}
\var(r_t) = \mu^2 \var(\Delta N(t)) + \lambda \rme^{\frac{1}{2}}\sigma^2_e
\end{equation}
Let $\hat{\sigma}^2_r$ be the sample variance of returns and $\hat{\sigma}^2_{\Delta N(t)}$ the sample variance of the count. Plugging in $\hat{\mu}$ and $\hat{\lambda}$ from Equations (\ref{eq:est_mu}) and (\ref{eq:est_lam}) into Equation (\ref{eq:var_rt_rearrange}), we can estimate $\sigma^2_e$ by
\begin{equation}\label{eq:est_sigma_e}
  \hat{\sigma}_e^2 = \frac{\hat{\sigma}^2_r-\hat{\mu}^2 \hat{\sigma}^2_{\Delta N(t)}}{\hat{\lambda}\rme^{\frac{1}{2}}}\;.
\end{equation}

\noindent In the following theorem, we prove that our proposed model estimators for $d=0$ are consistent.

\begin{theorem}\label{thm:consistent_estimators}
Under the null hypothesis $d=0$, $\hat{\mu}$, $\hat{\lambda}$, $\hat{\sigma}_e$, and $\hat{c}$ are consistent estimators.
\end{theorem}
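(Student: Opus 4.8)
The plan is to reduce the whole statement to a law of large numbers for strictly stationary finitely dependent sequences together with the continuous mapping theorem. The key structural fact is that under $d=0$ the autocovariance $\gammaz(r)$ vanishes for $|r|\ge 1/c$ (Lemma \ref{lem:exp_gammaz_d0}), so the fractional Gaussian noise $Z_H$ is a finite moving average of Brownian increments: its restrictions to two unit intervals separated by more than $1+1/c$ are independent. Combining this with the conditional-Poisson property of the Cox process (given $\Lambda$, counts of $N$ over disjoint intervals are independent) and with the independence of the i.i.d.\ shocks $\{e_k\}$, one obtains that the bivariate sequence $\{(r_t,\Delta N(t))\}$ is strictly stationary and $h$-dependent, with $h$ as in (\ref{eq:def_h}); the same is then true of the derived sequences $\{r_t^2\}$, $\{\Delta N(t)^2\}$, and $\{\Delta N(t)\Delta N(t+1)\}$, since each is built from finitely many consecutive terms. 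An $h$-dependent stationary sequence is ergodic, and the relevant first moments are finite: $\var(r_0)<\infty$ by Corollary \ref{cor:var_rt}, while $\Delta N(0)$ has moments of every order (it is Poisson conditionally on a log-normal mean), so $\var(\Delta N(0))<\infty$ by Lemma \ref{lem:count_property} and $\esp[\Delta N(0)\Delta N(1)]<\infty$. The ergodic theorem therefore gives, as $T\to\infty$,
\begin{align*}
& \bar r_t \convprob \esp[r_0],\quad \overline{\Delta N(t)} \convprob \lambda\rme^{1/2},\quad \hat\sigma^2_r\convprob\var(r_0),\\
& \hat\sigma^2_{\Delta N(t)}\convprob\var(\Delta N(0)),\quad \hat\gamma_{\Delta N}(1)\convprob\cov(\Delta N(0),\Delta N(1)).
\end{align*}

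Consistency of $\hat\lambda$ and $\hat\mu$ is then immediate: $\hat\lambda=\overline{\Delta N(t)}/\rme^{1/2}\convprob\lambda$ by (\ref{eq:mean_count}), and since the limit $\lambda\rme^{1/2}$ of the denominator is strictly positive, Slutsky's lemma and (\ref{eq: mean_return_v2}) give $\hat\mu=\bar r_t/\overline{\Delta N(t)}\convprob \esp[r_0]/(\lambda\rme^{1/2})=\mu$. For $\hat\sigma_e$, the rearranged variance identity (\ref{eq:var_rt_rearrange}), namely $\var(r_0)=\mu^2\var(\Delta N(0))+\lambda\rme^{1/2}\sigma_e^2$, shows that $\sigma_e^2$ is a continuous function of $(\var(r_0),\var(\Delta N(0)),\mu,\lambda)$ on the region where $\lambda>0$; plugging the convergent sample quantities into (\ref{eq:est_sigma_e}) and applying the continuous mapping theorem yields $\hat\sigma^2_e\convprob\sigma_e^2$, hence $\hat\sigma_e\convprob\sigma_e$ by continuity of $x\mapsto\sqrt{x^{+}}$.

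It remains to deal with $\hat c$, which is defined only implicitly. With $d$ set to its null value $0$, only $c$ needs to be estimated, via the lag-one equation $\hat\lambda^2\,G(\hat c)=\hat\gamma_{\Delta N}(1)$, where
\begin{equation*}
G(c)=\rme\int_{-1}^0\int_{-1}^0\left(\rme^{\gammaz(1+s-t)}-1\right)\rmd s\,\rmd t=\rme\int_{-1}^0\int_{-1}^0\left(\rme^{(1-c(1+s-t))_{+}}-1\right)\rmd s\,\rmd t,
\end{equation*}
the second equality using $1+s-t\ge 0$ on the domain together with Lemma \ref{lem:exp_gammaz_d0}. The integrand is bounded by $\rme-1$, so $G$ is continuous on $(0,\infty)$ by dominated convergence; it is strictly decreasing because $c\mapsto(1-c(1+s-t))_{+}$ is nonincreasing for every $(s,t)$ in the domain and strictly decreasing on the positive-measure set where $c(1+s-t)<1$; and $G(0^{+})=\rme(\rme-1)$, $G(+\infty)=0$. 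Hence $G$ is a homeomorphism from $(0,\infty)$ onto $(0,\rme(\rme-1))$. By Lemma \ref{lem:count_property} the true $c$ satisfies $\lambda^2 G(c)=\cov(\Delta N(0),\Delta N(1))>0$, so $\cov(\Delta N(0),\Delta N(1))/\lambda^2$ lies in the interior of the range of $G$ and $c=G^{-1}\!\big(\cov(\Delta N(0),\Delta N(1))/\lambda^2\big)$. On the event, of probability tending to $1$, that $\hat\gamma_{\Delta N}(1)/\hat\lambda^2$ belongs to the range of $G$, the estimator equals $G^{-1}(\hat\gamma_{\Delta N}(1)/\hat\lambda^2)$; since $\hat\gamma_{\Delta N}(1)/\hat\lambda^2\convprob\cov(\Delta N(0),\Delta N(1))/\lambda^2$ and $G^{-1}$ is continuous there, the continuous mapping theorem gives $\hat c\convprob c$.

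The steps that demand the most care are (i) the propagation argument --- turning the finite dependence of $Z_H$ at $d=0$, the conditional independence in the Cox construction, and the i.i.d.\ shocks into strict stationarity and $h$-dependence of the joint return/count sequence and of the products entering the sample autocovariance --- and (ii) the well-posedness of the implicit estimator, i.e.\ verifying that $G$ is a homeomorphism so that $\hat c$ is, with probability approaching one, a genuine continuous function of the sample moments. Everything else is a routine chain of Slutsky and continuous-mapping arguments.
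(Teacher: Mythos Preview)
Your argument is correct and follows the same overall architecture as the paper's proof: establish a law of large numbers for the relevant sample moments using the $h$-dependence that holds under $d=0$, then push through to the estimators via the continuous mapping theorem. Two differences are worth noting. First, you invoke the ergodic theorem (strict stationarity plus finite dependence implies ergodicity) to obtain the LLN, whereas the paper carries out the LLN directly via Chebyshev, showing in Lemmas \ref{lem:sample_mean_ct_covergence d0}--\ref{lem:sample_autocov_ct_convergence_d0} that the variance of each sample mean tends to zero because the autocovariances vanish beyond lag $h$; the ergodic route is cleaner but the content is the same. Second, your treatment of $\hat c$ is more careful than the paper's: the paper simply asserts that $\hat c$ can be written as a function $g_c(\hat\lambda,\hat\gamma_{\Delta N}(1))$ and applies continuous mapping without checking that the defining equation has a unique solution or that the solution map is continuous, whereas you explicitly verify that $G$ is a strictly decreasing homeomorphism onto $(0,\rme(\rme-1))$, which is what actually legitimizes the implicit-function step.
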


\section{Simulation Methodology and Performance of Parameter Estimates}\label{sec:sim_model}
In this section, we describe a method of simulating realizations of the $\log P(t)$ process. We start with simulation of the counting process.

We approximate the intensity $\Lambda(k) := \int^k_{0}\lambda \rme^{Z_H(s)}\rmd s$ with the Riemann sum
\begin{equation}
  \label{approx_intensity}
  \tilde{\Lambda}(k) = \frac{1}{M}\sum_{i=1}^{kM}\lambda e^{Z_{H}\left(\frac{i}{M}\right)} \; ,
\end{equation}
where $i$ denotes the partition index, $i=1,2,\ldots, kM$, $k=1,2,\ldots, T$, and $M$ is the number of
partitions. Define a stationary Gaussian time series $\{X_j\}_{j=-\infty}^{\infty}$ with
mean zero and lag-$r$ autocovariance function
\begin{equation}
  \label{cont_time_frac_gauss}
  \gamma_x(r) = \frac{1}{2}\left[\left|\frac{cr}{M}+1\right|^{2H}-2\left|\frac{cr}{M}\right|^{2H}+\left|\frac{cr}{M}-1\right|^{2H}   \right] \; .
\end{equation}
Note that
\begin{equation}
  \label{intensity_equiv}
  \left\{\tilde{\Lambda}(k)\right\}_{k=1}^T  \,{\buildrel d \over =}\, \frac{1}{M}\sum_{j=1}^{kM}\lambda e^{X_j} \;, \; k=1,2,\ldots, T \; .
\end{equation}
We apply Davis and Harte's (1987) method to simulate realizations of $\{X_j\}$ and then generate
$\{\tilde{\Lambda}(k)\}_{k=1}^T$ by (\ref{intensity_equiv}).

\noindent Given $\tilde{\Lambda}$ (when $d> 0$ or when $d=0$), we simulate a sequence of independent exponential random variables $\{\tau_j\}$ with mean $1/\lambda$.
For each positive integer $k$, the simulated value of the counting process $N(k)$ is the greatest integer $j$ such that
\begin{equation}
  \label{def_counting}
  \sum_{j=1}^{N(k)}\tau_j  \le  \tilde{\Lambda}(k) \;.
\end{equation}\footnote{Conditional on $\tilde{\Lambda}(k)$, $N(k)$ is a Poisson process with mean measure $\tilde{\Lambda}(k)$, which represents the expected number of transactions within the interval $\tilde{\Lambda}(k)$. To simulate the Poisson process, we first simulate iid exponential durations $\tau_j$ with mean $1/\lambda$. The cumulative sum of $\tau_j$ represents the arrival time of transactions. Hence $N(k)$ is obtained by finding the $k^{th}$ transaction whose arrival time is closest to $\tilde{\Lambda}(k)$.}
The counts can then be generated by
\begin{equation}
  \label{eq:calendar_time count}
  \Delta N(k) = N(k) - N(k-1), \;\; k \ge 1.
\end{equation}

\noindent We consider the special case of $c=1$ for the simulations. First we simulate $2^p$ observations of $\{X_j\}$, where $p$ is a positive integer, and $40,000,000$ exponential durations with $\lambda=128.2085$ \footnote{The model parameters are calibrated based on the mean and standard deviation of the
$5$-minute Boeing returns used in Cao et al. (2017), $3.846\times 10^{-6}$ and $0.0012$. Assume the average number of transactions per 5 minutes is $2.71$.} to obtain
the counts $\{N(k)\}$.
We then simulate a sequence of $iid$ efficient shocks $\{e_i\}$ from a normal distribution with
mean zero and standard deviation $\sigma_e=0.0007289$.
We assume $\mu=1.419188\times10^{-6}$ and obtain the log calendar-time price process by the following equation
\begin{equation} \label{sim_logPrice}
  \log P(k) = \mu N(k) + \sum_{i=1}^k \rme_i \;,
\end{equation}
which is used to generate the daily returns $r_t = \log P(t) - \log P(t-1)$, for $t=1,2,\ldots, T$.
We aggregate $\{r_t\}$ to obtain the monthly returns $R_{\tilde{t}} = \sum_{k=(\tilde{t}-1)m+1}^{\tilde{t}m}r_k$, for $\tilde{t}=1,2,\ldots, \tilde{T}$. Here we assume $m=20$ trading days per month. We then aggregate the monthly returns $\{R_{\tilde{t}}\}$ over $\tilde{H}$ months to obtain the forward-aggregated returns $R_{\tilde{t}, \tilde{t}+\tilde{H}} = \sum^{\tilde{t}+\tilde{H}}_{j=\tilde{t}}R_j$.
Similarly, we aggregate the squared daily returns to obtain the monthly realized variance $RV_{\tilde{t}} = \sum_{k=(\tilde{t}-1)m+1}^{\tilde{t}m} r^2_t$ and the backward-aggregated realized variance
$RV_{\tilde{t}-\tilde{H}, \tilde{t}} = \sum_{j=\tilde{t}-\tilde{H}}^{\tilde{t}}RV_j$.
\footnote{Davis and Harte (1987) use the fast Fourier transform (FFT) algorithm for simulating $\{X_j\}$. To speed up the FFT computation, we simulate $2^{p}$ observations of $\{X_j\}$ and use $M=50$ steps to approximate $\tilde{\Lambda}(k)$. This will yield approximately $T= 2^p/M$ daily returns. By assuming $20$ trading days per month, we expect to have $\tilde{T}= 2^p/(20M)$ monthly returns. As a result, $\tilde{T}$ may not be an integer. For example, when $p=22$, we obtain $4194$ monthly returns.
In addition, since we simulate exponential durations to obtain the counting process, the number of simulated daily returns is random, but the variation is very small. For example, we simulated $2^{19}$, $2^{20}$, and $2^{21}$ observations of $\{X_j\}$; in each case, the number of resulting daily returns was identical.}

\noindent Figure \ref{fig:ret_pred} shows the estimated $\hat{\rho}_{\tilde{H}, \tilde{T}}$ for $\tilde{H}=1,2, \cdots, 120$ from two realizations of $\{r_t\}$ of our model.
The upward trend in $\hat{\rho}_{\tilde{H}, \tilde{T}}$ as $\tilde{H}$ increases is similar to the one shown in Figure 1 of Sizova (2013), where she measured return predictability by estimating (1) the coefficient of determination $R^2$ from regressing the forward-aggregated returns on the backward-aggregated realized variance and (2)
the sample correlations between the forward-aggregated returns and the backward-aggregated  dividend yields using the empirical 1952-2011 AMEX/NYSE data.
The similar pattern shown in our simulations and Sizova's (2013) Figure 1 indicates that  the return process generated from our model matches the return-predictability pattern observed in empirical financial data.

\noindent We estimate the parameters of our model using the simulated data and evaluate their biases with respect to the model parameters.
We present our estimation results in Table \ref{tab:est_model_parm} and Figure \ref{fig:boxplot_est_model}. These results indicate a larger estimation bias for a larger memory parameter $d$.

\begin{figure}[H]
\begin{subfigure}[b]{0.5\textwidth}
  \centering
  \includegraphics[width=1.0\linewidth]{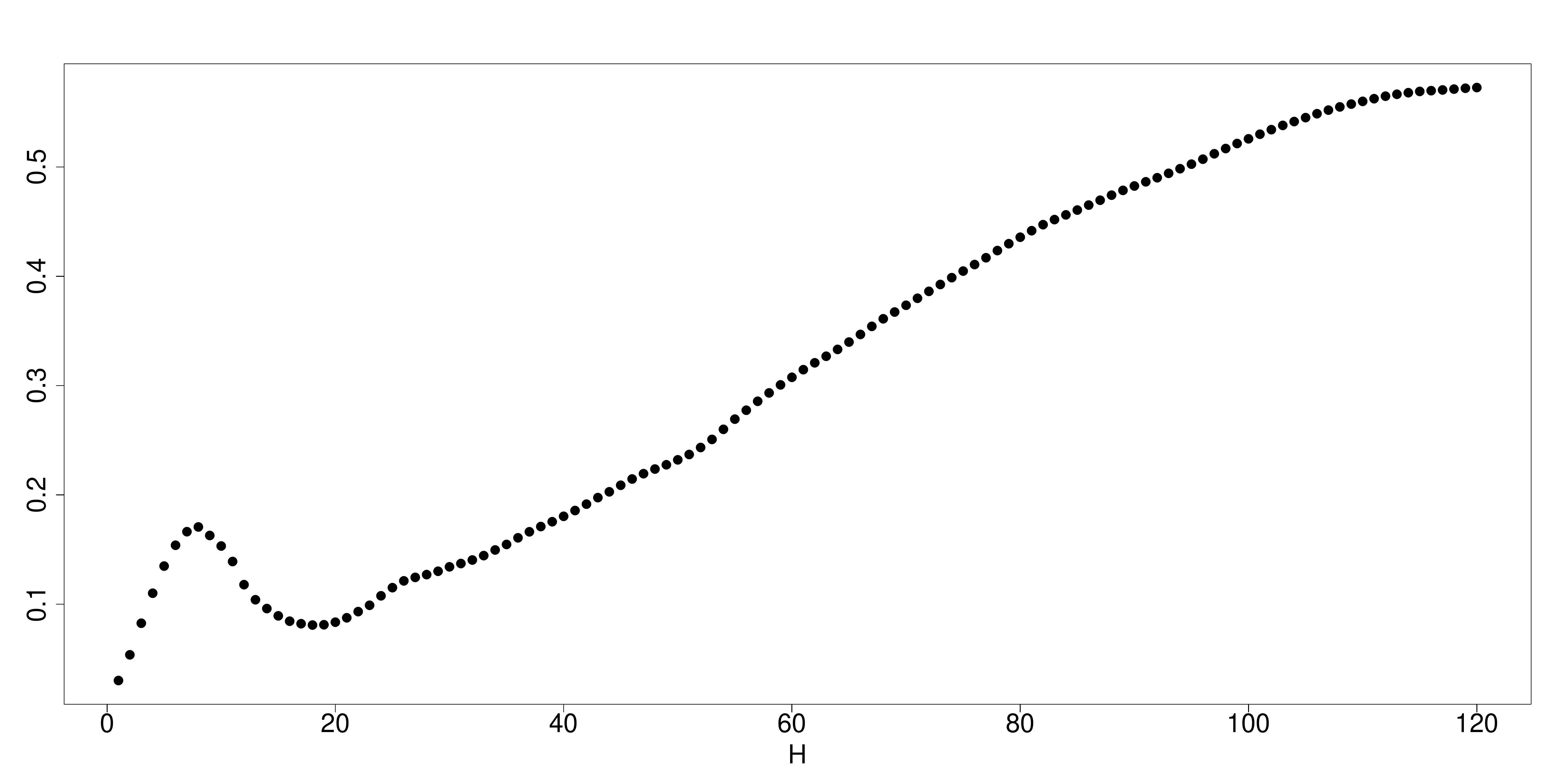}
  \caption{realization 1}
  \end{subfigure}
\qquad
\begin{subfigure}[b]{0.5\textwidth}
  \centering
  \includegraphics[width=1.0\linewidth]{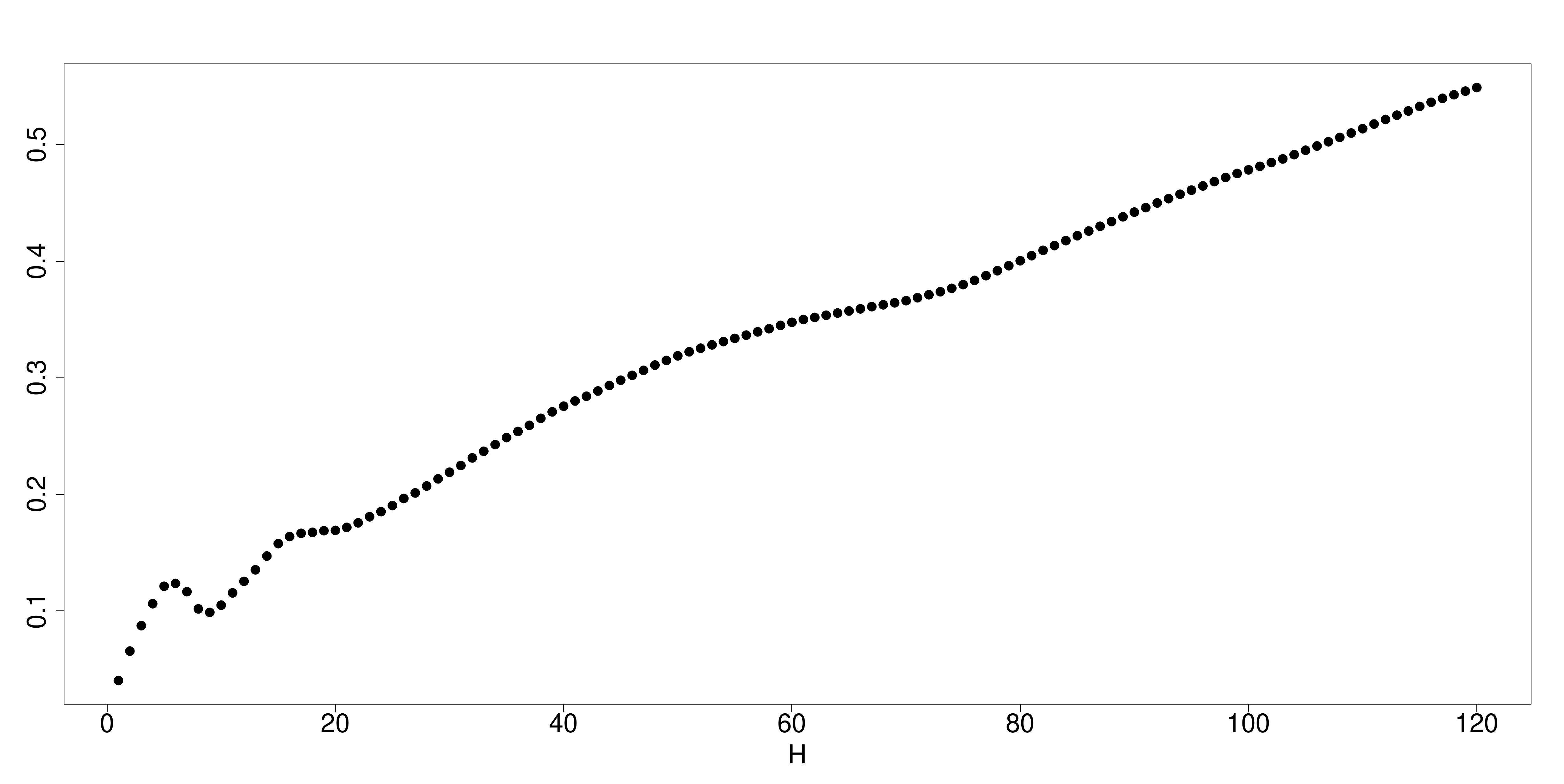}
  \caption{realization 2}
  \end{subfigure}
\caption{Return predictability $\hat{\rho}_{\tilde{H}, \tilde{T}}$ from two realizations of $\{r_t\}$. }\label{fig:ret_pred}
\end{figure}

\begin{table}[H]
\caption{Estimated model parameters for $T=83,886$, with $d=0,0.15, 0.25, 0.35$, $\mu =0.000001419188$, $\lambda = 128.2085$, $c=1$, and $\sigma_e=0.0007289$. The reported estimates are the average of the estimated model parameters over 500 realizations. The symbol $*$ indicates a statistically significant difference between the average and the true model parameter at the significance level $0.05$.}
\center
\begin{tabular}{|c|c|c|c|c|c|}\hline
 \emph{\textbf{d}}    & $\hat{\mu}$    & $\hat{\lambda}$ & $\hat{\sigma_e}$ & $\hat{d}$  & $\hat{c}$ \\ \hline
0.0   & 1.406521e-06   &128.1886        & 0.0007289484     & 0.00013  & 1.0011 \\
0.15  & 1.406393e-06   &128.0719        & 0.0007287938     & 0.1497   & 1.0027* \\
0.25  & 1.406578e-06   &127.7369        & 0.0007288298     & 0.2479*  & 1.0039* \\
0.35  & 1.408259e-06   &126.5984        & 0.0007288756     & 0.3411*  & 1.0741* \\ \hline
\end{tabular}\label{tab:est_model_parm}
\end{table}

\begin{figure}[H]
  \centering
  \includegraphics[scale=0.36]{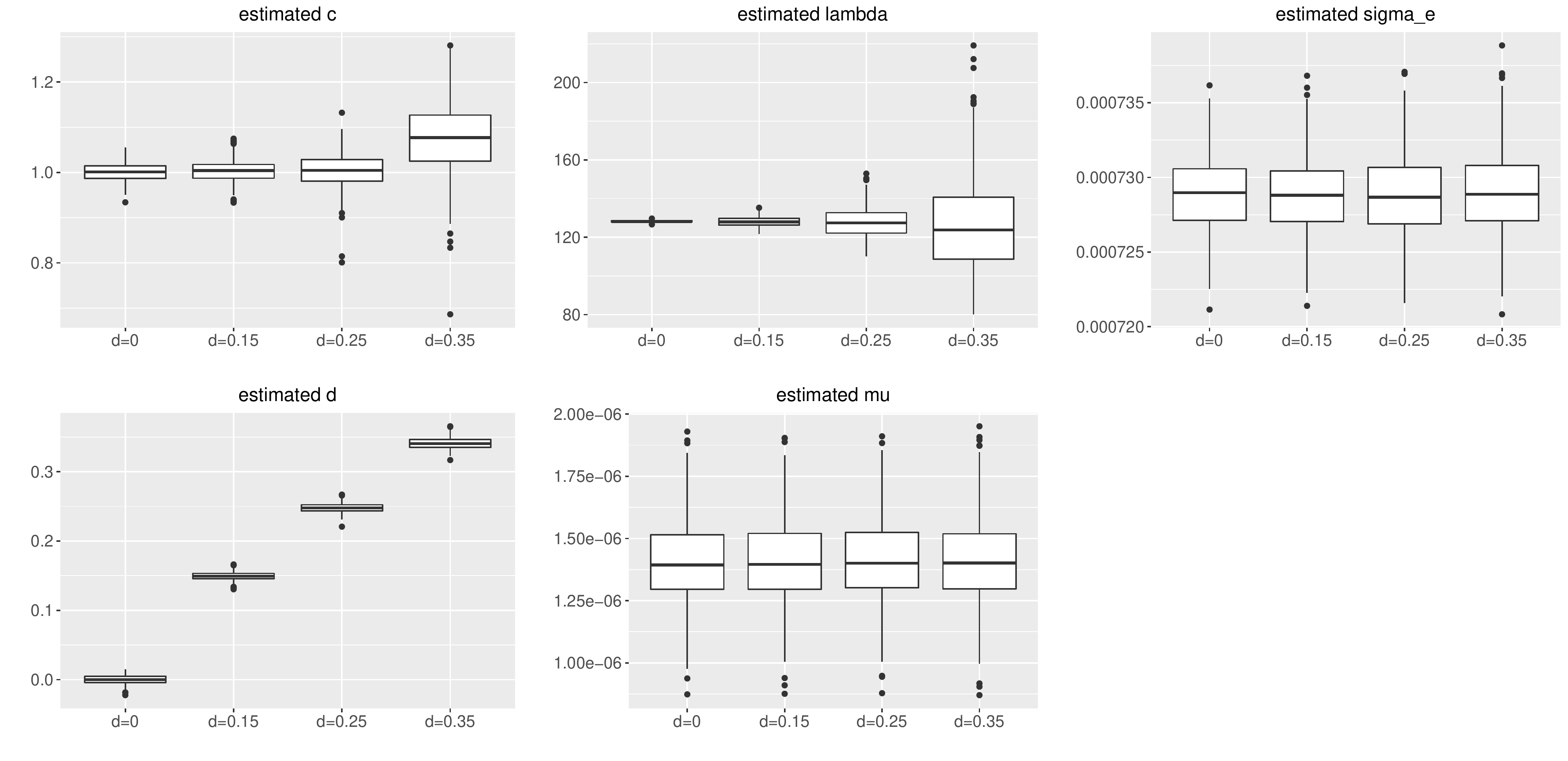}
  \caption{Boxplots of $\hat{\mu}$, $\hat{\lambda}$, $\hat{\sigma}_e$, $\hat{c}$, and $\hat{d}$ estimated from 500 realizations of the simulated counts and return processes with size $T=83,886$ when the population long memory parameter $d=0, 0.15, 0.25, 0.35$.}\label{fig:boxplot_est_model}
\end{figure}

\section{Proofs}\label{sec:proof}

\subsubsection*{Proof of Lemma \ref{lem:autocov_ret}}
\begin{proof}
The lag-L autocovariance of $r_j$ is given by
\begin{eqnarray}
\cov(r_0, r_{L}) &=& \cov\left(\esp[r_0|\Lambda],\; \esp[r_L|\Lambda]\right) = \mu^2 \cov \left(\mathbb{E}[\Delta N(0)|\Lambda]\;, \mathbb{E}[(\Delta N(L))|\Lambda] \right) \notag \\
                  &=& \mu^2 \lambda^2 \rme \int_{s=-1}^0 \int_{t=-1}^0 \left\{\rme^{\gammaz(L+t-s)}-1  \right\}\rmd t\rmd s \;.
\end{eqnarray}
From (\ref{def:cov_A0AL}) and Lemma \ref{lem:slowvary_cov_A0AL_B0BL_A0BL}, as $L \to \infty$,
\[
\rme \int_{s=-1}^0 \int_{t=-1}^0 \left\{\rme^{\gammaz(L+t-s)}-1  \right\}\rmd t\rmd s \sim \rme \; \gammaz(L)\;.
\]
It follows that $\cov(r_0, r_{L}) \sim \mu^2 \lambda^2 \rme \; \gammaz(L)$.
\end{proof}

\subsubsection*{Proof of Corollary \ref{cor:var_rt}}
\begin{proof}
The variance of $r_j$ can be represented as
\begin{eqnarray}
\var(r_j) &=& \var\left(\esp[r_j|\Lambda]\right) + \esp\left[\var(r_j|\Lambda)\right] \notag \\
          &=& \var\left(\mu \lambda \int_{t-1}^t e^{Z_H(s)}\rmd s  \right) + \esp\left[\var(r_j|\Lambda)\right] \notag \\
          &=& \mu^2\lambda^2\rme^1\int_{s=-1}^0\int_{t=-1}^0\left\{\rme^{\gammaz(s-t)}-1\right\}\rmd t \rmd s +\lambda \rme^{1/2}(\mu^2+\sigma_e^2)\;.
\end{eqnarray}
\end{proof}

\begin{lemma}\label{lem:cov_ret_pred}
$\cov\left(R_{\tilde{t}, \tilde{t}+\tilde{H}}, RV_{\tilde{t}-\tilde{H}, \tilde{t}}\right)$ is given by Equation (\ref{eq:cov_ret_pred}) and is positive.
\end{lemma}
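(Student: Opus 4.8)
The plan is to establish the two assertions of the lemma — the closed form \,(\ref{eq:cov_ret_pred}) and the strict positivity — in turn. For the closed form, I would start from the bilinear expansion
\[
\cov(\R,\RV)=\sum_{t=\tilde{t}m+1}^{(\tilde{t}+\tilde{H})m}\ \sum_{s=(\tilde{t}-\tilde{H})m+1}^{\tilde{t}m}\cov(r_t,r_s^2)\;.
\]
The bivariate sequence $\{(r_t,\Delta N(t))\}_{t\in\Zset}$ is strictly stationary — the intensity $\lambda\rme^{Z_H(\cdot)}$ is shift-invariant, so $\{\Delta N(t)\}$ is stationary, and conditionally on the point process the efficient shocks are i.i.d. — hence $\cov(r_t,r_s^2)=\cov(r_{t-s},r_0^2)$ depends only on the lag $L=t-s\ge 1$. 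Reindexing by $i=t-\tilde{t}m\in\{1,\dots,\tilde{H}m\}$ and $j=\tilde{t}m-s\in\{0,\dots,\tilde{H}m-1\}$ turns the double sum into $\sum_{L=1}^{2\tilde{H}m-1}w(L)\,\cov(r_L,r_0^2)$ with $w(L)=\#\{(i,j):i+j=L\}$; counting the lattice points of the line $i+j=L$ inside this rectangle gives $w(L)=L$ for $1\le L\le\tilde{H}m$ and $w(L)=2\tilde{H}m-L$ for $\tilde{H}m+1\le L\le 2\tilde{H}m-1$, which is exactly \,(\ref{eq:cov_ret_pred}).

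For the sign I would use a conditional-independence argument. For every lag $L\ge 1$ the increment of $N$ over $(L-1,L]$ and the increment over $(-1,0]$ involve disjoint subintervals, and the efficient shocks entering $r_L$ and $r_0$ carry disjoint index ranges; since, given the random measure $\Lambda$, $N$ is Poisson (so its restrictions to disjoint sets are independent) and the shocks are i.i.d.\ and independent of $N$, it follows that $r_L$ and $r_0^2$ are conditionally independent given $\Lambda$. The law of total covariance then reduces to $\cov(r_L,r_0^2)=\cov\!\big(\esp[r_L\mid\Lambda],\,\esp[r_0^2\mid\Lambda]\big)$. Using that the shocks are centered and independent of $N$ together with the Poisson first and second conditional moments, I get $\esp[r_L\mid\Lambda]=\mu a$ and $\esp[r_0^2\mid\Lambda]=\mu^2 b^2+(\mu^2+\sigma_e^2)\,b$, where $a=\int_{L-1}^{L}\lambda\rme^{Z_H(u)}\,\rmd u$ and $b=\int_{-1}^{0}\lambda\rme^{Z_H(v)}\,\rmd v$, so that
\[
\cov(r_L,r_0^2)=\mu^3\,\cov(a,b^2)+\mu(\mu^2+\sigma_e^2)\,\cov(a,b)\;.
\]

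It then remains to show $\cov(a,b)\ge 0$ and $\cov(a,b^2)\ge 0$. Writing these as integrals of $\cov(\rme^{Z_H(u)},\rme^{Z_H(v)})$ and $\cov(\rme^{Z_H(u)},\rme^{Z_H(v_1)+Z_H(v_2)})$ and evaluating the relevant Gaussian moment generating functions (using that $Z_H$ is centered Gaussian with unit variance and autocovariance $\gammaz$), the integrands equal, up to strictly positive constants, $\rme^{\gammaz(u-v)}-1$ and $\rme^{\gammaz(v_1-v_2)}\big(\rme^{\gammaz(u-v_1)+\gammaz(u-v_2)}-1\big)$ respectively; since $\gammaz\ge 0$ everywhere (Lemma \ref{lem:gammaz_fun}), both are nonnegative, hence both covariances are $\ge 0$ and $\cov(r_L,r_0^2)\ge 0$ for every $L\ge 1$. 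For $L=1$ the argument $u-v_1$ (resp.\ $u-v$) lies in $(-1/c,1/c)$ on a set of positive Lebesgue measure, where $\gammaz>0$, so $\cov(r_1,r_0^2)>0$; since $w(1)=1>0$ and every term of \,(\ref{eq:cov_ret_pred}) is nonnegative, $\cov(\R,\RV)>0$. The main obstacle is $\cov(a,b^2)$: unlike $\cov(a,b)$ it is not a two-point Gaussian computation, and one must use the three-point identity $\esp[\rme^{Z_H(u)+Z_H(v_1)+Z_H(v_2)}]=\rme^{3/2+\gammaz(u-v_1)+\gammaz(u-v_2)+\gammaz(v_1-v_2)}$ and verify that the factor remaining after subtracting $\esp[\rme^{Z_H(u)}]\,\esp[\rme^{Z_H(v_1)+Z_H(v_2)}]$ keeps a constant sign.
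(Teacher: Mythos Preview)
Your proposal is correct and follows essentially the same route as the paper: the same bilinear expansion and lattice-point count for \,(\ref{eq:cov_ret_pred}), the same reduction via conditional independence given $\Lambda$ to $\cov\!\big(\esp[r_L\mid\Lambda],\esp[r_0^2\mid\Lambda]\big)$, and the same evaluation of the resulting two- and three-point log-Gaussian covariances using $\gammaz\ge 0$ (the paper introduces the notation $A_L,B_L$ for your $a/\lambda$ and $b^2/\lambda^2$ and records the three-point identity as Equation~(\ref{formula:cov_log_normal_2})). Your treatment is arguably a touch cleaner in that you separate the nonnegativity of every term from the strict positivity at $L=1$, which covers the $d=0$ case without appealing to the strict inequality $\gammaz>0$ of Lemma~\ref{lem:gammaz_fun}.
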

\begin{proof}
By (\ref{eq:long term ret}) and (\ref{eq:def_real_var}),
\begin{equation}\label{eq:H_1_ret}
   R_{\tilde{t}, \tilde{t}+\tilde{H}} = \sum_{j=\tilde{t}m+1}^{(\tilde{t}+\tilde{H})m} r_j  = \sum_{j=\tilde{t}m+1}^{(\tilde{t}+\tilde{H})m}\left[\mu \Delta N(j)+\sum_{k=N(j-1)+1}^{N(j)}e_k \right]\;,
\end{equation}
\begin{equation}\label{eq:H_1_RV}
   RV_{\tilde{t}-\tilde{H}, \tilde{t}} = \sum_{j=(\tilde{t}-\tilde{H})m+1}^{\tilde{t}m}r_j^2 = \sum_{j=(\tilde{t}-\tilde{H})m+1}^{\tilde{t}m} \left[\mu \Delta N(j)+\sum_{k=N(j-1)+1}^{N(j)}e_k \right]^2 \;.
\end{equation}
Hence
\begin{eqnarray}
\cov\left(R_{\tilde{t}, \tilde{t}+\tilde{H}}, RV_{\tilde{t}-\tilde{H}, \tilde{t}}  \right) &=& \cov \left(\sum_{j=\tilde{t}m+1}^{(\tilde{t}+\tilde{H})m} r_j\; , \sum_{j=(\tilde{t}-\tilde{H})m+1}^{\tilde{t}m}r_j^2  \right) \notag \\
                                             &=& \mathbb{\esp}\left[\sum_{j=\tilde{t}m+1}^{(\tilde{t}+\tilde{H})m }\sum_{i=\tilde{t}m+1}^{(\tilde{t}+\tilde{H})m} r_j\; r^2_{i-\tilde{H}m} \right]
         -\mathbb{\esp}\left[\sum_{j=\tilde{t}m+1}^{(\tilde{t}+\tilde{H})m}r_j\right]\mathbb{\esp}\left[\sum_{j=\tilde{t}m+1}^{(\tilde{t}+\tilde{H})m}r^2_{j-\tilde{H}m}\right] \notag \\
                                             &=& \sum_{j=\tilde{H}m+1}^{2\tilde{H}m}\sum_{i=\tilde{H}m+1}^{2\tilde{H}m}\cov(r_j\;, r^2_{i-\tilde{H}m}) \notag \\
                                             &=& \sum_{L=1}^{\tilde{H}m}\cov(r_L, r^2_0)L +\sum_{L=\tilde{H}m+1}^{2\tilde{H}m-1}\left(2\tilde{H}m-L\right)\cov(r_L, r^2_0)
\end{eqnarray}
where $L = j-(i-\tilde{H}m)$.
\\
Conditional on $\Lambda$, $N(t)$ is a poisson process, so its increments are independent of each other.  Since $L > 0$,  $\cov(\Delta N(L)\;, (\Delta N(0))^2 |\Lambda )=0$.
The efficient shock $e_k$ is independent of the counting process. These properties imply
\[\esp\left[\cov(\Delta N(L)\;,\left(\sum_{k=N(-1)+1}^{N(0)}e_k\right)^2  |\Lambda )\right] =0 \] and
\[\esp[\cov(r_L, r_0^2|\Lambda]=0 \;. \]
Hence
\begin{eqnarray}
\cov\left(r_L\;, r^2_{0}  \right) &=& \cov\left(\esp[r_L|\Lambda],\; \esp[r^2_0|\Lambda]\right) \notag\\
                                  &=& \mu^3 \cov \left(\mathbb{E}[\Delta N(L)|\Lambda]\;, \mathbb{E}[(\Delta N(0))^2|\Lambda] \right) \notag \\
                                  &+&  \mu \cov \left(\esp[\Delta N(L)|\Lambda],\;  \esp\left[\left(\sum_{k=N(-1)+1}^{N(0)}e_k\right)^2|\Lambda\right]  \right)
\end{eqnarray}
First, we consider
\begin{eqnarray}
&&\cov \left(\mathbb{E}[\Delta N(L)|\Lambda]\;, \mathbb{E}[(\Delta N(0))^2|\Lambda] \right) \notag \\
                                                    &=& \cov \left(\lambda \int_{t=L-1}^{L} \rme^{Z_H(t)}\rmd t,\;
                                                    \lambda  \int_{t=-1}^{0}\rme^{Z_H(t)}\rmd t
                                                    +\lambda^2 \int_{t=-1}^{0}\int_{s=-1}^{0}\rme^{Z_H(s)}\rme^{Z_H(t)}\rmd s\rmd t    \right) \notag \\
                                                    &=& \lambda^2 \cov \left(\int_{s=L-1}^L \rme^{Z_H(s)}\rmd s \;,  \int_{t=-1}^{0} \rme^{Z_H(t)}\rmd t \right)\notag \\
                                                    &+& \lambda^3 \cov \left(\int_{s=L-1}^{L}\rme^{Z_H(s)}\rmd s \;, \int_{t=-1}^{0}\int_{s=-1}^{0}\rme^{Z_H(s)}\rme^{Z_H(t)}\rmd s\rmd t \right)\label{eq:cov_count}
\end{eqnarray}
For all non-negative integer $L$, define
\begin{eqnarray}
A_L &=& \int_{s=-1}^0 \rme^{Z_H(s+L)}\rmd s = \int_{s=L-1}^L \rme^{Z_H(s)\rmd s} \label{var:A_L}\\
B_L &=& \int_{s=-1}^0\int_{t=-1}^0\rme^{Z_H(s+L)}\rme^{Z_H(t+L)}\rmd s \rmd t = \int_{s=L-1}^L\int_{t=L-1}^L\rme^{Z_H(s)}\rme^{Z_H(t)}\rmd s \rmd t \label{var:B_L}
\end{eqnarray}
Hence Equation (\ref{eq:cov_count}) can also be represented as
\begin{equation}\label{eq:1st_cov}
\cov \left(\mathbb{E}[\Delta N(L)|\Lambda]\;, \mathbb{E}[(\Delta N(0))^2|\Lambda] \right) = \lambda^2 \cov(A_L, A_0) + \lambda^3 \cov(A_L, B_0) \;.
\end{equation}
We will use the following property to evaluate (\ref{eq:1st_cov}). Let $(X,Y)$ be a Gaussian vector such that
$\esp[X]=\esp[Y]=0$, $\var(X)=\sigma^2_X$, $\var(Y)=\sigma^2_Y$ and $\cov(X,Y)=\rho$. Note that
\begin{align}\label{formula:cov_log_normal_1}
  \cov(\rme^{X},\rme^Y)  = \rme^{\sigma^2_X/2+\sigma_Y^2/2} (\rme^{\rho}-1) \; .
\end{align}
This yields
\begin{eqnarray}\label{formula:cov_log_normal_2}
\cov\left(\rme^{Z_H(s)},\; \rme^{Z_H(t)}\rme^{Z_H(u)}\right) &=&\rme^{1/2+(1/2+1/2+\gammaz(t-u))}\left(\rme^{\gammaz(s-t)+\gammaz(s-u)}-1\right) \notag\\
&=& \rme^{\frac{3}{2}}\rme^{\gammaz(t-u)}\left[\rme^{\gammaz(s-t)+\gammaz(s-u)}-1\right]
\end{eqnarray}
By Fubini's Theorem as well as (\ref{formula:cov_log_normal_1})and (\ref{formula:cov_log_normal_2}), Equation (\ref{eq:cov_count}) becomes
\begin{eqnarray}
\cov \left(\mathbb{E}[\Delta N(L)|\Lambda]\;, \mathbb{E}[(\Delta N(0))^2|\Lambda] \right)
&=& \lambda^2 \int_{s=L-1}^L \int_{t=-1}^{0} \cov(\rme^{Z_H(s)},\; \rme^{Z_H(t)})\rmd t \rmd s \notag \\
&+& \lambda^3 \rme^{\frac{3}{2}}\int_{s=L-1}^{L}\int_{t=-1}^{0}\int_{u=-1}^{0}\rme^{\gammaz(t-u)}\left[\rme^{\gammaz(s-t)+\gammaz(s-u)}-1\right]\rmd u\rmd t\rmd s \notag \\
&=& \lambda^2 \rme \int_{s=-1}^0 \int_{t=-1}^0 \left\{\rme^{\gammaz(L+s-t)}-1  \right\} \rmd t \rmd s \notag \\
&+& \lambda^3 \rme^{\frac{3}{2}}\int_{s=-1}^{0}\int_{t=-1}^{0}\int_{u=-1}^{0}\rme^{\gammaz(t-u)}\left[\rme^{\gammaz(s-t+L)+\gammaz(s-u+L)}-1\right]\rmd u\rmd t\rmd s\;. \notag
\end{eqnarray}
It follows that
\begin{eqnarray}
\cov(A_L, A_0) 
               &=& \rme \int_{s=-1}^0 \int_{t=-1}^0 \left\{\rme^{\gammaz(L+s-t)}-1  \right\} \rmd t \rmd s \label{def:cov_A0AL} \\
\cov(A_L, B_0) 
               &=& \rme^{\frac{3}{2}}\int_{s=-1}^{0}\int_{t=-1}^{0}\int_{u=-1}^{0}\rme^{\gammaz(t-u)}\left[\rme^{\gammaz(s-t+L)+\gammaz(s-u+L)}-1\right]\rmd u\rmd t\rmd s \notag\\
               &=& \cov(A_0, B_L) \;. \label{def:cov_A0BL}
\end{eqnarray}
Next, we consider
\begin{eqnarray}
 \cov \left(\mathbb{E}[\Delta N(L)|\Lambda]\;, \mathbb{E}\left[ \left(\sum_{k=N(-1)+1}^{N(0)}e_k\right)^2 |\Lambda\right] \right)
 &=& \lambda^2 \cov\left(\int_{s=L-1}^{L}\rme^{Z_H(s)}\rmd s \;, \sigma^2_e \int_{s=-1}^{0}\rme^{Z_H(s)}\rmd s \right) \notag \\
 &=& \lambda^2 \sigma^2_e \int_{s=L-1}^{L}\int_{t=-1}^{0} \cov\left(\rme^{Z_H(s)}, \rme^{Z_H(t)}  \right) \rmd s \rmd t \notag \\
 &=& \sigma^2_e  \lambda^2  \rme \int_{s=-1}^0 \int_{t=-1}^0 \left\{\rme^{\gammaz(L+s-t)}-1  \right\} \rmd t \rmd s \notag \\
 &=& \sigma^2_e  \lambda^2 \cov(A_L, A_0) \label{eq:2nd_cov}
\end{eqnarray}
Combining Equations (\ref{eq:1st_cov}) and (\ref{eq:2nd_cov}), $\forall L > 0$, we obtain
\begin{eqnarray}\label{eq:cov_r0_sq_ret}
\cov\left(r_L\;, r^2_{0}  \right) &=& \mu^3 \left(\lambda^3\cov(A_0,B_L) +\lambda^2 \cov(A_L, A_0)\right) + \mu \sigma_e^2 \lambda^2\cov(A_L, A_0)\;.
\end{eqnarray}
Since $\gammaz(r) > 0$ $\forall r\in \mathbb{R}$ (see Lemma \ref{lem:gammaz_fun}), it follows that $\cov(A_0,B_L) > 0$ and $\cov(A_L, A_0) > 0$
and thus $\cov\left(r_L\;, r^2_{0}\right) > 0$.
From Equation (\ref{eq:cov_ret_pred}), it follows that $\cov\left(R_{t, t+\tilde{H}}, RV_{t-\tilde{H}, t}\right)> 0$.
\end{proof}

\begin{corollary}\label{cor:return_predicatability_extended_model}
For model in Equation (\ref{eq: price_model_two_shocks}),
$\cov\left(R_{\tilde{t}, \tilde{t}+\tilde{H}}, RV_{\tilde{t}-\tilde{H}, \tilde{t}}  \right) > 0. $
\end{corollary}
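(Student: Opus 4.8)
The plan is to rerun the computation behind Lemma~\ref{lem:cov_ret_pred} in parallel over the two independent blocks making up (\ref{eq: price_model_two_shocks}), and then to reduce everything to a sign check. Taking each $N_j$ to be a Cox process as in Section~\ref{cox_process} with intensity parameter $\lambda_j$, write $r_t=r_{1,t}+r_{2,t}$ with $r_{j,t}=\mu_j\Delta N_j(t)+\sum_{k=N_j(t-1)+1}^{N_j(t)}e_{j,k}$; since $N_1\perp N_2$ and the two shock streams are independent of each other and of $(N_1,N_2)$, the stationary processes $\{r_{1,t}\}$ and $\{r_{2,t}\}$ are independent. The derivation of (\ref{eq:cov_ret_pred}) used only that $\R$ is a sum of $r_j$'s and $\RV$ a sum of $r_j^2$'s over the disjoint windows in (\ref{eq:H_1_ret})--(\ref{eq:H_1_RV}), together with stationarity of $\{r_t\}$; both still hold, so (\ref{eq:cov_ret_pred}) remains valid and reduces the Corollary to showing $\cov(r_L,r_0^2)\ge0$ for every $L\ge1$, with strict inequality at $L=1$.

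Expanding $r_0^2=r_{1,0}^2+2r_{1,0}r_{2,0}+r_{2,0}^2$ and using independence of the two blocks to annihilate the cross-block covariances,
\[
\cov(r_L,r_0^2)=\cov(r_{1,L},r_{1,0}^2)+\cov(r_{2,L},r_{2,0}^2)+2\esp[r_{2,0}]\cov(r_{1,L},r_{1,0})+2\esp[r_{1,0}]\cov(r_{2,L},r_{2,0}),
\]
where $\esp[r_{j,0}]=\mu_j\lambda_j\rme^{1/2}$. Each term is evaluated exactly as in the proof of Lemma~\ref{lem:cov_ret_pred}, by conditioning on the pair of driving intensities and using (\ref{formula:cov_log_normal_1})--(\ref{formula:cov_log_normal_2}): letting $\alpha_{j,L}$ and $\beta_{j,L}$ be the process-$j$ analogues of $\cov(A_L,A_0)$ and $\cov(A_0,B_L)$ from (\ref{def:cov_A0AL})--(\ref{def:cov_A0BL}), formula (\ref{eq:cov_r0_sq_ret}) gives $\cov(r_{j,L},r_{j,0}^2)=\mu_j\big[\mu_j^2\lambda_j^3\beta_{j,L}+(\mu_j^2+\sigma_e^2)\lambda_j^2\alpha_{j,L}\big]$ and Lemma~\ref{lem:autocov_ret} gives $\cov(r_{j,L},r_{j,0})=\mu_j^2\lambda_j^2\alpha_{j,L}$; all $\alpha_{j,L},\beta_{j,L}\ge0$, and $\alpha_{j,1}>0$.

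The one new ingredient needed is the inequality $\beta_{j,L}\ge2\rme^{1/2}\alpha_{j,L}$: in the triple integral defining $\cov(A_0,B_L)$ one uses $\gammaz\ge0$ (hence $\rme^{\gammaz(\cdot)}\ge1$) and the elementary bound $\rme^{a+b}-1\ge(\rme^a-1)+(\rme^b-1)$ for $a,b\ge0$, after which one of the three integrations is trivial. Substituting the closed forms and, for a clean statement, taking the two point processes to be equidistributed (so $\lambda_1=\lambda_2=:\lambda$, $\alpha_{1,L}=\alpha_{2,L}=:\alpha_L$, $\beta_{1,L}=\beta_{2,L}=:\beta_L$), the four terms collapse, using $\mu_1^3+\mu_2^3=(\mu_1+\mu_2)(\mu_1^2-\mu_1\mu_2+\mu_2^2)$, to
\[
\cov(r_L,r_0^2)=(\mu_1+\mu_2)\Big\{(\mu_1^2-\mu_1\mu_2+\mu_2^2)\lambda^3\beta_L+(\mu_1^2-\mu_1\mu_2+\mu_2^2+\sigma_e^2)\lambda^2\alpha_L+2\mu_1\mu_2\lambda^3\rme^{1/2}\alpha_L\Big\}.
\]
Since $\mu_1^2-\mu_1\mu_2+\mu_2^2>0$, the bound $\beta_L\ge2\rme^{1/2}\alpha_L$ makes the brace at least $(\mu_1^2-\mu_1\mu_2+\mu_2^2+\sigma_e^2)\lambda^2\alpha_L+2(\mu_1^2+\mu_2^2)\lambda^3\rme^{1/2}\alpha_L\ge0$, strictly positive when $\alpha_L>0$, in particular at $L=1$; as $\mu_1+\mu_2>0$, this gives $\cov(r_L,r_0^2)>0$ at $L=1$ and $\ge0$ for all $L$, and the Corollary follows from (\ref{eq:cov_ret_pred}).

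The crux, and the main obstacle, is the sign bookkeeping in the last step: because $\mu_2<0$, both $\cov(r_{2,L},r_{2,0}^2)$ and the mixed term $2\esp[r_{2,0}]\cov(r_{1,L},r_{1,0})$ are negative, so positivity only emerges after pulling out the factor $\mu_1+\mu_2$ and absorbing the negative $\mu_1\mu_2$-contribution into the $\beta$-term via $\beta_L\ge2\rme^{1/2}\alpha_L$. Without equidistribution (unequal $\lambda_j$, or different $Z_H$'s for the two processes), the negative piece $\mu_2^3\lambda_2^3\beta_{2,L}$ need not be dominated by $\mu_1^3\lambda_1^3\beta_{1,L}$, so one must either assume a balance condition such as $\mu_1\lambda_1+\mu_2\lambda_2>0$, under which the $\beta$-terms still combine with the correct sign, or restrict to the equidistributed case.
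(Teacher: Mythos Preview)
Your argument is sound, but it rests on a different reading of model~(\ref{eq: price_model_two_shocks}) than the paper's own proof. You take the phrase ``$N_1$ and $N_2$ are mutually independent'' literally and work with $\{r_{1,t}\}\perp\{r_{2,t}\}$; the paper instead treats $N_1,N_2$ as \emph{conditionally} independent Poisson processes driven by the \emph{same} latent intensity $\Lambda$. In that setting one has $\cov\!\big(\Delta N_1(0),\,\Delta N_1(L)\Delta N_2(L)\big)=\lambda^3\cov(A_0,B_L)$ rather than your $\lambda\rme^{1/2}\cdot\lambda^2\cov(A_0,A_L)$, so the cross terms in $\cov(r_0,r_L^2)$ contribute $2\mu_1\mu_2(\mu_1+\mu_2)\lambda^3\cov(A_0,B_L)$. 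The coefficient on $\cov(A_0,B_L)$ then becomes $(\mu_1+\mu_2)(\mu_1^2+\mu_1\mu_2+\mu_2^2)$, positive because $\mu_1^2+\mu_1\mu_2+\mu_2^2=(\mu_1+\mu_2)^2-\mu_1\mu_2>0$; the coefficient on $\cov(A_0,A_L)$ is $(\mu_1^3+\mu_2^3)+2(\mu_1+\mu_2)\sigma_e^2>0$. Positivity follows term by term, with no need to compare $\cov(A_0,A_L)$ and $\cov(A_0,B_L)$.

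Under your fully-independent reading the cross contribution is instead $2\mu_1\mu_2(\mu_1+\mu_2)\lambda^3\rme^{1/2}\alpha_L<0$, so you genuinely need the extra inequality $\beta_L\ge2\rme^{1/2}\alpha_L$ to absorb it. That inequality (via $\rme^{\gammaz(\cdot)}\ge1$ and $\rme^{a+b}-1\ge(\rme^a-1)+(\rme^b-1)$ in (\ref{def:cov_A0BL})) is correct and is a nice observation; it buys you the result for the fully-independent, equidistributed two-block model, and your caveat about the non-equidistributed case is well placed. The paper's route is shorter precisely because the shared $\Lambda$ pushes all the sign-bearing cross pieces onto $\cov(A_0,B_L)$, after which elementary factorizations of $\mu_1^3+\mu_2^3$ and $(\mu_1+\mu_2)^3$ suffice.
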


\begin{proof}
The return process is then given by
\begin{equation}\label{eq:ret_two_shocks}
 r_t = \mu_1 \Delta N_1(t) + \mu_2 \Delta N_2(t) + \sum_{k=N_1(t-1)+1}^{N_1(t)}e_{1,k} + \sum_{k=N_2(t-1)+1}^{N_2(t)}e_{2,k}
\end{equation}
Under the new model, Equations (\ref{eq:H_1_ret}) and (\ref{eq:H_1_RV}) become
\begin{equation}\label{eq:H_1_ret_two_shocks}
   R_{\tilde{t}, \tilde{t}+1} = \sum_{j=\tilde{t}m+1}^{(\tilde{t}+1)m}\left[\mu_1 \Delta N_1(t) + \mu_2 \Delta N_2(t) + \sum_{k=N_1(t-1)+1}^{N_1(t)}e_{1,k} + \sum_{k=N_2(t-1)+1}^{N_2(t)}e_{2,k} \right]
\end{equation}
\begin{equation}\label{eq:H_1_RV_two_shocks}
   RV_{\tilde{t}-1, \tilde{t}} = \sum_{j=(\tilde{t}-1)m+1}^{\tilde{t}m} \left[\mu_1 \Delta N_1(t) + \mu_2 \Delta N_2(t) + \sum_{k=N_1(t-1)+1}^{N_1(t)}e_{1,k} + \sum_{k=N_2(t-1)+1}^{N_2(t)}e_{2,k}\right]^2
\end{equation}
Using the expression for the $r_t$ in Equation (\ref{eq:ret_two_shocks}) and the assumptions we make for the efficient shocks and the counting processes , we obtain
\begin{eqnarray}\label{eq:cov_ret_pred_two_shocks}
 \cov\left(r_0\;, r^2_{L}  \right)     &=& \mu_1^3 \cov\left(\Delta N_1(0)\;, (\Delta N_1(L))^2  \right) \notag \\
                                       &+& \mu_1 \cov\left(\Delta N_1(0)\;, \left(\sum_{k=N_1(L-1)+1}^{N_1(L)}e_{1,k}\right)^2  \right) \notag \\
                                       &+& \mu_2^3 \cov\left(\Delta N_2(0)\;, (\Delta N_2(L))^2  \right) \notag \\
                                       &+& \mu_2 \cov\left(\Delta N_2(0)\;, \left(\sum_{k=N_2(L-1)+1}^{N_2(L)}e_{2,k}\right)^2  \right) \notag \\
                                       &+& 2\mu_1\mu_2^2 \cov \left(\Delta N_1(0)\;, \Delta N_1(L)\Delta N_2(L)  \right) \notag \\
                                       &+& 2\mu_2\mu_1^2 \cov \left(\Delta N_2(0)\;, \Delta N_1(L)\Delta N_2(L)  \right) \notag \\
                                       &+& 2\mu_1 \cov\left(\Delta N_1(0)\;, \left(\sum_{k=N_1(L-1)+1}^{N_1(L)}e_{1,k}\right)\times \left(\sum_{k=N_2(L-1)+1}^{N_2(L)}e_{2,k}\right) \notag  \right) \notag \\
                                       &+& 2\mu_2 \cov\left(\Delta N_2(0)\;, \left(\sum_{k=N_1(L-1)+1}^{N_1(L)}e_{1,k}\right)\times \left(\sum_{k=N_2(L-1)+1}^{N_2(L)}e_{2,k}\right)  \right) \; .
\end{eqnarray}
Note
\begin{eqnarray*}
&&\cov\left(\Delta N_1(0)\;, \left(\sum_{k=N_1(L-1)+1}^{N_1(L)}e_{1,k}\right)\times \left(\sum_{k=N_2(L-1)+1}^{N_2(L)}e_{2,k}\right)\right) \\
&=& \cov \left(\mathbb{E}[ \Delta N_1(0)|\Lambda ], \; \mathbb{E}\left[\sum_{k=N_1(L-1)+1}^{N_1(L)}\sum_{l=N_2(L-1)+1}^{N_2(L)}e_{1,k}e_{2,l}|\Lambda\right] \right) \\
&+& \mathbb{E}\left[\cov\left(\Delta N_1(0) |\Lambda\right),\; \cov\left(\sum_{k=N_1(L-1)+1}^{N_1(L)}\sum_{l=N_2(L-1)+1}^{N_2(L)}e_{1,k}e_{2,l}|\Lambda\right)   \right] \\
&=& 0 \; .
\end{eqnarray*}
The same conclusion for
\[
\cov\left(\Delta N_2(0)\;, \left(\sum_{k=N_1(L-1)+1}^{N_1(L)}e_{1,k}\right)\times \left(\sum_{k=N_2(L-1)+1}^{N_2(L)}e_{2,k}\right)\right)=0 \; .
\]
\begin{eqnarray*}
\cov \left(\Delta N_1(0)\;, \Delta N_1(L)\Delta N_2(L)  \right) &=& \cov \left(\mathbb{E}[\Delta N_1(0)|\Lambda]\;, \mathbb{E}[\Delta N_1(L)\Delta N_2(L)|\Lambda] \right) \notag \\
                                                    &+& \mathbb{E}\left[\cov(\Delta N_1(0)\;, \Delta N_1(L)\Delta N_2(L))|\Lambda )  \right] \notag \\
                                                    &=& \lambda^3\cov \left(\int_{s=0-1}^{0}\rme^{Z_H(s)}\rmd s \;, \int_{t=L-1}^{L}\int_{s=L-1}^{L}\rme^{Z_H(s)}\rme^{Z_H(t)}\rmd s\rmd t \right) \notag \\
                                                    &=& \lambda^3 \int_{s=-1}^{0}\int_{t=L-1}^{L}\int_{u=L-1}^{L}\cov\left(\rme^{Z_H(s)},\; \rme^{Z_H(t)}\rme^{Z_H(u)}\right)\rmd u\rmd t\rmd s \\
                                                    &=& \cov(A_0,B_L)
\end{eqnarray*}
Same for $\cov \left(\Delta N_2(0)\;, \Delta N_1(L)\Delta N_2(L) \right) = \cov(A_0, B_L)$.
\\
\\
Therefore, Equation (\ref{eq:cov_ret_pred_two_shocks}) becomes
\begin{eqnarray}\label{eq:cov_ret_pred_two_shocks_final}
\cov\left(r_0\;, r^2_{L}  \right) &=& (\mu^3_1+\mu^3_2)\left(\lambda^3\cov(A_0,B_L)+\lambda^2\cov(A_0,A_L)\right)+ 2 \mu_1\mu_2(\mu_1+\mu_2)\lambda^3\cov(A_0,B_L) \notag \\
                                    &+& 2(\mu_1+\mu_2)\lambda^2\sigma_e^2 \cov(A_L, A_0) \notag \\
                                    &=& (\mu_1+\mu_2)(\mu_1^2+\mu_2^2+\mu_1\mu_2)\lambda^3 \cov(A_0,B_L) + (\mu^3_1+\mu^3_2)\lambda^2\cov(A_0,A_L) \notag \\
                                    &+&  2(\mu_1+\mu_2)\lambda^2 \sigma_e^2 \cov(A_L, A_0)   \;. \notag
\end{eqnarray}
Note
$(\mu_1+\mu_2)(\mu_1^2+\mu_2^2+\mu_1\mu_2)=(\mu_1+\mu_2)^3 - \mu_1\mu_2(\mu_1+\mu_2) > 0$,\;\;
$\mu^3_1+\mu^3_2 =(\mu_1+\mu_2)(\mu_1^2+\mu_2^2-\mu_1\mu_2) > 0$. This proves $\cov\left(r_0\;, r^2_{L}  \right) > 0$.
\end{proof}

\begin{lemma}\label{lem:slowvary_cov_A0AL_B0BL_A0BL}
Let $A_L$  and $B_L$ be defined in (\ref{var:A_L}) and (\ref{var:B_L}) respectively.
As $L \rightarrow \infty$,
$\cov(A_L, A_0) \sim \rme\gammaz(L)$, $\cov(A_0, B_L) \sim 2\rme^{\frac{3}{2}}\gammaz(L) \int_{s=-1}^0 \int_{u=-1}^0 \rme^{\gammaz(s-u)}\rmd u \rmd s $, and
\begin{equation}
\cov(B_0, B_L) \sim 4\rme^2\gammaz(L)\int_{s=-1}^0 \int_{t=-1}^0 \int_{u=-1}^0 \int_{v=-1}^0 \rme^{\gammaz(s-t)}\rme^{\gammaz(u-v)}\rmd v\rmd u \rmd t \rmd s.
\end{equation}
\end{lemma}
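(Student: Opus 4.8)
Throughout I work under $d>0$, so that by (\ref{eq:Z-lrd}) we have $\gammaz(r)=\cz r^{2d-1}(1+o(1))$ as $r\to\infty$ with $2d-1\in(-1,0)$, and $\gammaz$ is even. The one analytic input I would isolate at the outset is the \emph{uniform ratio estimate}: for any fixed $K>0$,
\[
\sup_{|u|\le K}\left|\frac{\gammaz(L+u)}{\gammaz(L)}-1\right|\longrightarrow 0,\qquad \sup_{|u|\le K}\frac{\gammaz(L+u)}{\gammaz(L)}\le 2\ \text{ for }L\text{ large},
\]
which follows immediately from $\gammaz(L+u)/\gammaz(L)=(1+u/L)^{2d-1}(1+o(1))$. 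All shifts arising below have modulus of the form $L-\theta$ with $\theta$ ranging over a bounded set (at most $[-2,2]$), so this estimate applies after using evenness of $\gammaz$. The rest is: (i) write each covariance as a multiple Lebesgue integral of lognormal covariances via (\ref{formula:cov_log_normal_1})--(\ref{formula:cov_log_normal_2}); (ii) on the integrand, replace $\rme^{\rho_L}-1$ by $\rho_L$ and each $\gammaz(L\pm\theta)$ by $\gammaz(L)$ up to a factor $1+o(1)$ that is uniform in the integration variables; (iii) divide by $\gammaz(L)$ and pass to the limit by dominated convergence, the dominating function being a constant (in the integration variables) times $\gammaz(L)$; (iv) evaluate the resulting constant integral.

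For $\cov(A_L,A_0)$ this is essentially already done: (\ref{def:cov_A0AL}) gives $\cov(A_L,A_0)=\rme\int_{-1}^0\int_{-1}^0\{\rme^{\gammaz(L+s-t)}-1\}\,\rmd t\,\rmd s$, and since $\gammaz(L+s-t)\to0$ uniformly on $[-1,0]^2$ we get $\rme^{\gammaz(L+s-t)}-1=\gammaz(L+s-t)(1+o(1))=\gammaz(L)(1+o(1))$ uniformly, hence $\cov(A_L,A_0)/\gammaz(L)\to\rme\int_{-1}^0\int_{-1}^0 1\,\rmd t\,\rmd s=\rme$. For $\cov(A_0,B_L)$ I would write $B_L=\int_{-1}^0\int_{-1}^0 \rme^{Z_H(u+L)}\rme^{Z_H(v+L)}\rmd u\,\rmd v$, apply (\ref{formula:cov_log_normal_2}) inside the integral, and shift $u\mapsto u+L$, $v\mapsto v+L$ to obtain $\cov(A_0,B_L)=\rme^{3/2}\int_{-1}^0\int_{-1}^0\int_{-1}^0 \rme^{\gammaz(u-v)}\bigl[\rme^{\gammaz(s-u-L)+\gammaz(s-v-L)}-1\bigr]\rmd u\,\rmd v\,\rmd s$; both exponents in the bracket equal $\gammaz(L)(1+o(1))$ uniformly, so the bracket is $2\gammaz(L)(1+o(1))$ and dominated convergence gives $\cov(A_0,B_L)/\gammaz(L)\to 2\rme^{3/2}\int_{-1}^0\int_{-1}^0 \rme^{\gammaz(s-u)}\rmd u\,\rmd s$, as claimed.

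For $\cov(B_0,B_L)$ I would first record the four-point lognormal identity, obtained by applying (\ref{formula:cov_log_normal_1}) to $X=Z_H(s)+Z_H(t)$ and $Y=Z_H(u+L)+Z_H(v+L)$ (so $\var(X)=2+2\gammaz(s-t)$, $\var(Y)=2+2\gammaz(u-v)$, and $\cov(X,Y)=\rho_L:=\gammaz(s-u-L)+\gammaz(s-v-L)+\gammaz(t-u-L)+\gammaz(t-v-L)$): namely $\cov\bigl(\rme^{Z_H(s)}\rme^{Z_H(t)},\rme^{Z_H(u+L)}\rme^{Z_H(v+L)}\bigr)=\rme^{2}\rme^{\gammaz(s-t)+\gammaz(u-v)}(\rme^{\rho_L}-1)$. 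Since each of the four terms of $\rho_L$ is $\gammaz(L)(1+o(1))$ uniformly, $\rho_L=4\gammaz(L)(1+o(1))$ and $\rme^{\rho_L}-1=4\gammaz(L)(1+o(1))$ uniformly on $[-1,0]^4$; integrating and dividing by $\gammaz(L)$, dominated convergence yields $\cov(B_0,B_L)/\gammaz(L)\to 4\rme^2\int_{-1}^0\int_{-1}^0\int_{-1}^0\int_{-1}^0 \rme^{\gammaz(s-t)}\rme^{\gammaz(u-v)}\rmd v\,\rmd u\,\rmd t\,\rmd s$. The only non-routine ingredient is the uniformity of the ratio estimate above, which must be invoked to legitimize both the replacement $\rme^{\rho_L}-1\sim\rho_L\sim(\text{const})\gammaz(L)$ and the interchange of limit and integral; everything else is Fubini plus the lognormal covariance formula already in the paper, and I would write out only the $\cov(A_L,A_0)$ case in full detail, the other two being identical in structure.
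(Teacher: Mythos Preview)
Your proposal is correct and follows essentially the same approach as the paper's own proof: express each covariance as a multiple integral of lognormal covariances via (\ref{formula:cov_log_normal_1})--(\ref{formula:cov_log_normal_2}), divide by $\gammaz(L)$, and pass to the limit by dominated convergence using the uniform boundedness and convergence to $1,\,2,\,4$ of the relevant ratios. The only cosmetic difference is that for $\cov(A_L,A_0)$ the paper first changes variables to a single integral $\int_{-1}^{1}(1-|u|)\{\,\cdot\,\}\,\rmd u$ before taking the limit, whereas you keep the double integral over $[-1,0]^2$ and integrate the constant limit directly; both routes give the same constant $\rme$.
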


\begin{proof}
It can be shown that
\begin{eqnarray}\label{eq:cov_counts}
\cov(A_0, A_L) &=& \rme \int_{s=-1}^0 \int_{t=-1}^0 \cov\left(\rme^{Z_H(s+L)},\; \rme^{Z_H(t)}\right)\rmd t  \notag \\
               &=& \rme \int_{s=-1}^0 \int_{t=-1}^0 \left\{\rme^{\gammaz(L+t-s)}-1  \right\}\rmd t\rmd s \notag \\
               &=& \rme \gammaz(L) \int_{u=-1}^1 (1-|u|) \left\{\frac{\rme^{\gammaz(L+u)}-1}{\gammaz(L)}\right\}\rmd u
\end{eqnarray}
The ratio $\frac{\rme^{\gammaz(L+u)}-1}{\gammaz(L)}$ is uniformly bounded for all $u \in (-1,1)$ and by condition (\ref{eq:Z-lrd}), the ratio converges to one as $L$ goes to $\infty$. By the dominated convergence theorem,
the integral converges to~$\int_{-1}^1(1-|t|)\rmd t=1$.  Therefore, as $L \rightarrow \infty$,
\begin{equation}\label{eq:comp1}
\cov(A_0, A_L) \sim \rme \gammaz(L) \;.
\end{equation}
Next,
\begin{eqnarray*}
\cov(A_L, B_0) &=& \rme^{\frac{3}{2}}\int_{t=-1}^0\int_{s=-1}^0\int_{u=-1}^0 \cov\left( \rme^{Z_H(t+L)} \;,\rme^{Z_H(s)}\rme^{Z_H(u)} \right) \rmd u \rmd s \rmd t \\
               &=& \rme^{\frac{3}{2}}\int_{t=-1}^0\int_{s=-1}^0\int_{u=-1}^0 \rme^{\gammaz(s-u)}\left[\rme^{\gammaz(L+t-s)}\rme^{\gammaz(L+t-u)}-1 \right] \rmd u \rmd s \rmd t \\
               &=& \rme^{\frac{3}{2}}\gammaz(L)\int_{t=-1}^0\int_{s=-1}^0\int_{u=-1}^0 \rme^{\gammaz(s-u)}\left\{\frac{\rme^{\gammaz(L+t-s)}\rme^{\gammaz(L+t-u)}-1}{\gammaz(L)}\right\} \rmd u \rmd s \rmd t
\end{eqnarray*}
Again condition (\ref{eq:Z-lrd}) implies that the ratio $\frac{\rme^{\gammaz(L+t-s)}\rme^{\gammaz(L+t-u)}-1}{\gammaz(L)}$ is uniformly bounded for $-1 \le u \le 0$, $-1 \le v \le 0$, and $-1 \le t \le 0$ and converges to $2$ as $L$ goes to $\infty$.
The function $\rme^{\gammaz(s-u)}\left\{\frac{\rme^{\gammaz(L+t-s)}\rme^{\gammaz(L+t-u)}-1}{\gammaz(L)}\right\}$ is positive and is integrable with respect to the Lebesque product measure. By the Fubini's theorem and the dominated convergence theorem, as $L \rightarrow \infty$,
\begin{align}\label{eq:comp2}
\cov(A_L, B_0) & \sim 2 \rme^{\frac{3}{2}}\gammaz(L) \int_{s=-1}^0 \int_{u=-1}^0 \rme^{\gammaz(s-u)}\rmd u \rmd s
\end{align}
Finally, we can express
\begin{eqnarray}
&& \cov(B_L, B_0) \\
&=&  \int_{s=-1}^0 \int_{t=-1}^0 \int_{u=-1}^0 \int_{v=-1}^0 \cov\left(\rme^{Z_H(s+L)}\rme^{Z_H(t+L)},\; \rme^{Z_H(v)}\rme^{Z_H(u)} \right) \rmd v\rmd u \rmd t \rmd s \notag \\
&=& \rme^2 \int_{s=-1}^0 \int_{t=-1}^0 \int_{u=-1}^0 \int_{v=-1}^0 \rme^{\gammaz(s-t)}\rme^{\gammaz(u-v)}\left[\rme^{\gammaz(L+s-v)}\rme^{\gammaz(L+s-u)}\rme^{\gammaz(L+t-v)}\rme^{\gammaz(L+t-u)} -1 \right]\rmd v\rmd u \rmd t \rmd s \notag\\
\label{def:cov_B0BL}
\end{eqnarray}
Applying a similar argument as for $\cov(A_L, B_0)$ and the dominated convergence theorem, we can show that as $L \rightarrow \infty$,
\begin{equation}\label{eq:comp4}
\cov(B_L, B_0) \sim 4 \rme^2 \gammaz(L)\int_{s=-1}^0 \int_{t=-1}^0 \int_{u=-1}^0 \int_{v=-1}^0 \rme^{\gammaz(s-t)}\rme^{\gammaz(u-v)}\rmd v\rmd u \rmd t \rmd s \;.
\end{equation}
\end{proof}

\noindent We will use Lemma \ref{lem:cov_ret_pred} and Lemma \ref{lem:slowvary_cov_A0AL_B0BL_A0BL} for the proof of Theorem \ref{thm:long memory squared returns}.

\subsection*{Proof of Theorem \ref{thm:long memory squared returns}}
\begin{proof}
For $L > 0$, conditional on $\Lambda$, $\Delta N(0)$ and $\Delta N(L)$ are independent. Therefore,
\begin{eqnarray}
\cov\left(r_0^2\;, r_L^2\right) &=& \cov\left(\esp[r_0^2|\Lambda]\;, \esp[r_L^2|\Lambda] \right) 
\notag \\
 &=& \mu^4 \cov\left(\esp[(\Delta N(0))^2|\Lambda]\;, \esp[(\Delta N(L))^2)|\Lambda]\right) + \mu^2\sigma_e^2 \cov\left(\esp[\Delta N(0)|\Lambda]\;, \esp[(\Delta N(L))^2|\Lambda]\right) \notag \\
 &+& \mu^2\sigma_e^2 \cov\left(\esp[\Delta N(L)|\Lambda]\;, \esp[(\Delta N(0))^2|\Lambda]\right) + \sigma_e^4 \cov\left(\esp[\Delta N(0)|\Lambda]\;, \esp[\Delta N(L)|\Lambda] \right), \label{eq:cond_cov_1}
\end{eqnarray}
where
\begin{eqnarray}\label{eq:cond_cov_sq_ct}
&&\cov\left(\esp[(\Delta N(0))^2|\Lambda],\; \esp[(\Delta N(L))^2|\Lambda] \right)\notag \\
&=& \lambda^2 \cov(A_0, A_L) + \lambda^3 \cov(A_L, B_0) + \lambda^3 \cov(A_0, B_L) + \lambda^4 \cov(B_L, B_0) \notag \\
&=&  \lambda^2 \cov(A_0, A_L) + 2\lambda^3 \cov(A_L, B_0) + \lambda^4 \cov(B_L, B_0)
\end{eqnarray}
\begin{eqnarray}\label{eq:cond_cov_ct_sq_ct}
 &&\cov\left(\esp[\Delta N(0)|\Lambda]\;, \esp[\Delta N(L)^2|\Lambda] \right) \notag \\
  &=& \cov\left(\esp[\Delta N(L)|\Lambda]\;, \esp[\Delta N(0)^2|\Lambda] \right)  \;,
\end{eqnarray}
and
\begin{eqnarray}\label{eq:cond_cov_ct}
 \cov\left(\esp[\Delta N(0)|\Lambda]\;, \esp[\Delta N(L)|\Lambda] \right) = \lambda^2 \cov(A_0, A_L) \;.
\end{eqnarray}
Using (\ref{eq:1st_cov}), (\ref{eq:cond_cov_sq_ct}), (\ref{eq:cond_cov_ct_sq_ct}), and (\ref{eq:cond_cov_ct}), $\cov(r_0^2, r_L^2)$ can be expressed as
\begin{eqnarray}
\cov(r_0^2, r_L^2)
                  &=& (\mu^2+\sigma^2_e)^2\lambda^2\cov(A_0, A_L) + 2\mu^2\lambda^3(\mu^2+\sigma^2_e)\cov(A_0, B_L) + \mu^4\lambda^4\cov(B_0, B_L)\;. \notag \\
                  \label{eq: cov_sq_ret_L}
\end{eqnarray}
Applying Lemma \ref{lem:slowvary_cov_A0AL_B0BL_A0BL}  to (\ref{eq: cov_sq_ret_L}), it follows that as $L \rightarrow \infty$,
\[
\lim_{L \rightarrow \infty}\frac{ \tilde{\gamma}(L)}{\gammaz(L)} = C \;.
\]
where
\begin{eqnarray}
 C &=& \mu^4 C_1 + 2\mu^2\sigma_e^2 C_2 + \sigma^4_e \lambda^2\rme \label{eq:def_C} \;, \\
 C_1 & = & \lambda^2\rme\left(1+4\lambda\rme^{\frac{1}{2}} \int_{s=-1}^0 \int_{u=-1}^0 \rme^{\gammaz(s-u)}\rmd u \rmd s    \right. \notag \\
    & + & \left. 4\lambda^2\rme \int_{s=-1}^0 \int_{t=-1}^0 \int_{u=-1}^0 \int_{v=-1}^0 \rme^{\gammaz(s-t)}\rme^{\gammaz(u-v)}\rmd v\rmd u \rmd t \rmd s \;, \right)\label{eq:def_C1}\;, \\
C_2 & = & \lambda^2\rme\left(2\lambda\rme^{\frac{1}{2}}  \int_{s=-1}^0 \int_{u=-1}^0 \rme^{\gammaz(s-u)}\rmd u \rmd s    +1 \right)\label{eq:def_C2} \;.
\end{eqnarray}
\end{proof}

\begin{lemma}\label{lem:convergence_real_var}
\[
\var\left(RV_{\tilde{t}-\tilde{H}, \tilde{t}}\right) \sim \tilde{C}_{\mu, \lambda,d,\sigma_e} \frac{ \ell(\tilde{H}m)(\tilde{H}m)^{2d+1}}{d(2d+1)}
\]
where $\tilde{C}_{\mu, \lambda,d,\sigma_e}> 0 $ is a function of $\mu$, $\lambda$, $d$, and $\sigma_e$ defined in (\ref{eq:def_C_var_realV}).
\end{lemma}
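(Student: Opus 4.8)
The plan is to reduce the claim to a Karamata-type asymptotic evaluation of a Cesàro-weighted sum of the squared-return autocovariances. By stationarity of $\{r_t^2\}$, writing $n=\tilde{H}m$ for the number of terms in the sum defining $\RV$,
\[
\var\left(RV_{\tilde{t}-\tilde{H},\tilde{t}}\right) = \var\left(\sum_{t=1}^{n} r_t^2\right) = n\,\tilde{\gamma}(0) + 2\sum_{L=1}^{n-1}(n-L)\,\tilde{\gamma}(L)\;,
\]
so everything comes down to the asymptotics of $\sum_{L=1}^{n-1}(n-L)\tilde{\gamma}(L)$ as $n\to\infty$, where $\tilde{\gamma}$ is the autocovariance of the squared returns studied in Theorem \ref{thm:long memory squared returns}.

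First I would record the regular variation of $\tilde{\gamma}$. By Theorem \ref{thm:long memory squared returns}, $\tilde{\gamma}(L)/\gammaz(L)\to C>0$, and by (\ref{eq:Z-lrd}), $r^{2-2H}\gammaz(r)\to \cz$; hence, since $0<d<1/2$ gives $2H-2 = 2d-1\in(-1,0)$, the sequence $\tilde{\gamma}(L)$ is regularly varying at infinity with index $2d-1$, say $\tilde{\gamma}(L)=\ell^{\ast}(L)\,L^{2d-1}$ with $\ell^{\ast}(L)\to C\cz$. (Positivity of $\tilde{\gamma}$, which I use to control signs in the truncation below, follows from (\ref{eq: cov_sq_ret_L}) together with the positivity of $\cov(A_0,A_L)$, $\cov(A_0,B_L)$, and $\cov(B_0,B_L)$ established in Lemma \ref{lem:slowvary_cov_A0AL_B0BL_A0BL} and its proof.)

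Next comes the summation step, which is the heart of the argument. I would split $\sum_{L=1}^{n-1}(n-L)\tilde{\gamma}(L)$ into an initial segment $\sum_{L=1}^{L_0}$ and a tail $\sum_{L=L_0+1}^{n-1}$. The initial segment is $O(n)$, since $\tilde{\gamma}$ is bounded on a fixed finite range and the weights $n-L$ are $O(n)$. For the tail I would rewrite $\sum_{L=1}^{n-1}(n-L)\tilde{\gamma}(L)=\sum_{j=1}^{n-1}\sum_{L=1}^{j}\tilde{\gamma}(L)$ and apply Karamata's theorem twice: first $\sum_{L=1}^{j}\tilde{\gamma}(L)\sim \ell^{\ast}(j)\,j^{2d}/(2d)$, then summing over $j$ gives $\sim \ell^{\ast}(n)\,n^{2d+1}/\big((2d)(2d+1)\big)$ (choosing $L_0$ large so that $\tilde{\gamma}(L)$ is within a factor $1\pm\varepsilon$ of $C\cz L^{2d-1}$ for $L>L_0$, and then letting $\varepsilon\to0$). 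This yields
\[
2\sum_{L=1}^{n-1}(n-L)\tilde{\gamma}(L) \sim \frac{2\,C\cz\,n^{2d+1}}{2d(2d+1)} = \frac{C\cz\,n^{2d+1}}{d(2d+1)}\;.
\]

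Finally I would assemble the pieces. Since $2d+1>1$, the term $n\,\tilde{\gamma}(0)$ is negligible relative to $n^{2d+1}$, so $\var\left(RV_{\tilde{t}-\tilde{H},\tilde{t}}\right)\sim C\cz\,(\tilde{H}m)^{2d+1}/(d(2d+1))$; identifying the slowly varying factor $\ell$ of $\gammaz$ (so that $\ell(r)\to\cz$) and setting $\tilde{C}_{\mu,\lambda,d,\sigma_e}=C$, the constant of Theorem \ref{thm:long memory squared returns} given explicitly in (\ref{eq:def_C}), produces the stated form and the definition in (\ref{eq:def_C_var_realV}). The main obstacle is the summation step: one cannot pass term-by-term from the pointwise equivalence $\tilde{\gamma}(L)\sim C\gammaz(L)$ (which says nothing about small lags) to the exact constant in the Cesàro sum, so the initial-segment truncation plus a careful double application of Karamata's theorem is essential; one must also track how the $1/(2d)$ from the first Karamata step, the $1/(2d+1)$ from the second, and the factor $2$ from the two-sided sum combine to give the $1/(d(2d+1))$ normalization in the statement.
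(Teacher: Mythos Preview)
Your proposal is correct and takes essentially the same route as the paper: both reduce to the Ces\`aro-weighted sum $\sum_{L=1}^{n-1}(n-L)\tilde{\gamma}(L)$ with $n=\tilde{H}m$, invoke Theorem~\ref{thm:long memory squared returns} to get the regular variation of $\tilde{\gamma}$, and then apply a Karamata-type argument (the paper cites Pipiras--Taqqu's Proposition~2.2.1 after splitting into $n\sum\tilde{\gamma}(k)-\sum k\tilde{\gamma}(k)$, whereas you use the equivalent nested-sum identity $\sum_L(n-L)\tilde{\gamma}(L)=\sum_j\sum_{L\le j}\tilde{\gamma}(L)$ and iterate Karamata). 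Your identification $\tilde{C}_{\mu,\lambda,d,\sigma_e}=C$ from (\ref{eq:def_C}) matches the paper's (\ref{eq:def_C_var_realV}).
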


\begin{proof}
Let $\tilde{\gamma}(k)$ be the lag-k autocovariance function of squared return.
In Theorem \ref{thm:long memory squared returns}, we prove that as $k \to \infty$
\begin{align*}
\cov(r_0^2, r_k^2) & \sim  \left( \mu^4 C_1 + 2\mu^2\sigma_e^2 C_2 + \sigma^4_e \lambda^2\rme \right)\gammaz(k)\;,
\end{align*}
Therefore, $\tilde{\gamma}(k)$ can be represented as
\[
\tilde{\gamma}(k) = \tilde{C}_{\mu, \lambda,d,\sigma_e}\ell(k)k^{2d-1}\;,
\]
where $\ell(k)$ is a slowly varying function of fractional Gaussian which satisfies $\lim_{k \to \infty}\ell(k) = \frac{2d+1}{4d}$
and
\begin{align}\label{eq:def_C_var_realV}
 \tilde{C}_{\mu, \lambda,d,\sigma_e} &= \mu^4 C_1 + 2\mu^2\sigma_e^2 C_2 + \sigma^4_e \lambda^2\rme \;.
\end{align}
\noindent
Note that
\begin{eqnarray}\label{eq:var_agg_RV_expression}
\var\left(RV_{\tilde{t}-\tilde{H}, \tilde{t}}\right) &=& \cov \left(\sum_{j=(\tilde{t}-\tilde{H})m+1}^{\tilde{t}m}r_j^2 \;, \sum_{j=(\tilde{t}-\tilde{H})m+1}^{\tilde{t}m}r_j^2  \right)\notag \\
                            &=& \sum_{j=(\tilde{t}-\tilde{H})m+1}^{\tilde{t}m}\sum_{i=(\tilde{t}-\tilde{H})m+1}^{\tilde{t}m} \cov \left(r_j^2, r_i^2 \right) \notag\\
                            &=& \sum_{k=-(\tilde{H}m-1)}^{\tilde{H}m-1}\left(\tilde{H}m-|k|\right)\tilde{\gamma}(k) \notag \\
                            &=& \sum_{k=-(\tilde{H}m-1)}^{\tilde{H}m-1} (\tilde{H}m)\tilde{\gamma}(k) - \sum_{k=-(\tilde{H}m-1)}^{\tilde{H}m-1}|k|\tilde{\gamma}(k)\notag \\
                            &=& (\tilde{H}m)\var(r_0^2) + 2(\tilde{H}m)\sum_{k=1}^{\tilde{H}m-1}\tilde{\gamma}(k) - 2\sum_{k=1}^{\tilde{H}m-1}k \tilde{\gamma}(k)
\end{eqnarray}
Since $0 < d < \frac{1}{2}$, $(\tilde{H}m)\var(r_0^2) = o(\tilde{H}^{2d+1}) $.
Applying Proposition 2.2.1 again from Pipiras and Taqqu (2017) for $\sum_{k=1}^n \tilde{\gamma}(k)$ as $n \to \infty$,
$\sum_{k=1}^n \tilde{\gamma}(k) \sim \frac{\ell(n)n^{2d}}{2d} $.
We obtain
\begin{align}
\var\left(RV_{\tilde{t}-\tilde{H}, \tilde{t}}\right)& \sim  \tilde{C}_{\mu, \lambda,d,\sigma_e}\left\{\frac{2 \ell(\tilde{H}m)(\tilde{H}m)^{2d+1}}{2d} - \frac {2 \ell(\tilde{H}m)(\tilde{H}m)^{2d+1}}{2d+1} \right\} \notag \\
                            & =  \tilde{C}_{\mu, \lambda,d,\sigma_e} \frac{ \ell(\tilde{H}m)(\tilde{H}m)^{2d+1}}{d(2d+1)} \;.  \label{eq:denm_realV}
\end{align}

\end{proof}

\begin{lemma}\label{lem:convergence_agg_ret}
\[
\var \left(R_{\tilde{t}, \tilde{t}+\tilde{H}}\right) \sim  C_{\mu, \lambda} \frac{ \ell(\tilde{H}m)(\tilde{H}m)^{2d+1}}{d(2d+1)} \;,
\]
where $C_{\mu,\lambda} = \mu^2\lambda^2 \rme $.
\end{lemma}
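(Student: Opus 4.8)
The plan is to mirror the proof of Lemma \ref{lem:convergence_real_var}, replacing the squared-return autocovariance $\tilde{\gamma}$ by the return autocovariance $\gamma_r(k):=\cov(r_0,r_k)$ and the constant $\tilde{C}_{\mu,\lambda,d,\sigma_e}$ by $C_{\mu,\lambda}=\mu^2\lambda^2\rme$. The only fact about $\gamma_r$ that is needed is its tail, which is already supplied by Lemma \ref{lem:autocov_ret}: as $k\to\infty$,
\[
\gamma_r(k)\sim \mu^2\lambda^2\rme\,\gammaz(k)=C_{\mu,\lambda}\,\gammaz(k),
\]
and by \eqref{eq:Z-lrd} the fractional-Gaussian autocovariance $\gammaz$ is regularly varying at infinity with index $2d-1$, so $\gammaz(k)=\ell(k)k^{2d-1}$ with $\ell$ slowly varying (the same slowly varying factor as in Lemma \ref{lem:convergence_real_var}). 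Hence $\gamma_r(k)\sim C_{\mu,\lambda}\,\ell(k)\,k^{2d-1}$, and $\gamma_r$ is nonnegative since the integrand $\rme^{\gammaz(\cdot)}-1$ in \eqref{eq:cov_ro_ret_L} is nonnegative by Lemma \ref{lem:gammaz_fun}.

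First I would write $R_{\tilde{t},\tilde{t}+\tilde{H}}=\sum_{j=\tilde{t}m+1}^{(\tilde{t}+\tilde{H})m}r_j$ as a sum of $\tilde{H}m$ consecutive terms of the stationary series $\{r_j\}$, so that, exactly as in \eqref{eq:var_agg_RV_expression},
\begin{align*}
\var\left(R_{\tilde{t},\tilde{t}+\tilde{H}}\right)
&= \sum_{k=-(\tilde{H}m-1)}^{\tilde{H}m-1}\bigl(\tilde{H}m-|k|\bigr)\gamma_r(k)\\
&= (\tilde{H}m)\var(r_0) + 2(\tilde{H}m)\sum_{k=1}^{\tilde{H}m-1}\gamma_r(k) - 2\sum_{k=1}^{\tilde{H}m-1} k\,\gamma_r(k).
\end{align*}
Then I would apply Proposition 2.2.1 of Pipiras and Taqqu (2017) (a Karamata-type summation result) twice: since $\gamma_r(k)\sim C_{\mu,\lambda}\ell(k)k^{2d-1}$ with $2d-1\in(-1,0)$,
\[
\sum_{k=1}^{n}\gamma_r(k)\sim C_{\mu,\lambda}\frac{\ell(n)n^{2d}}{2d},\qquad
\sum_{k=1}^{n}k\,\gamma_r(k)\sim C_{\mu,\lambda}\frac{\ell(n)n^{2d+1}}{2d+1},
\]
as $n\to\infty$. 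Because $0<d<\tfrac12$ we have $2d+1>1$, so the leading term $(\tilde{H}m)\var(r_0)=O(\tilde{H}m)=o\bigl((\tilde{H}m)^{2d+1}\bigr)$ is negligible. Substituting the two asymptotics with $n=\tilde{H}m-1\sim\tilde{H}m$ gives, just as in \eqref{eq:denm_realV},
\[
\var\left(R_{\tilde{t},\tilde{t}+\tilde{H}}\right)\sim C_{\mu,\lambda}\left\{\frac{2\ell(\tilde{H}m)(\tilde{H}m)^{2d+1}}{2d}-\frac{2\ell(\tilde{H}m)(\tilde{H}m)^{2d+1}}{2d+1}\right\}=C_{\mu,\lambda}\,\frac{\ell(\tilde{H}m)(\tilde{H}m)^{2d+1}}{d(2d+1)},
\]
which is the claim.

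There is no genuine obstacle here: the argument is lighter than in Lemma \ref{lem:convergence_real_var} because $\gamma_r$ involves only the single-integral term $\cov(A_L,A_0)$ rather than the additional $\cov(A_0,B_L)$ and $\cov(B_0,B_L)$ terms, and Lemma \ref{lem:autocov_ret} has already pinned down the tail constant $C_{\mu,\lambda}$. The one point that should be stated cleanly is that $\gamma_r$ inherits the same regular-variation index $2d-1$ and the same slowly varying factor $\ell$ as $\gammaz$, so that the $\ell$ in the conclusion is the one from Lemma \ref{lem:convergence_real_var}; this is immediate from the asymptotic equivalence $\gamma_r(k)\sim C_{\mu,\lambda}\gammaz(k)$, and the positivity of $\gamma_r$ lets one justify the Karamata step by dominated convergence exactly as in the proof of Lemma \ref{lem:slowvary_cov_A0AL_B0BL_A0BL}.
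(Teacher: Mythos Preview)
Your proposal is correct and follows essentially the same route as the paper: both identify $\gamma_r(k)\sim C_{\mu,\lambda}\gammaz(k)=C_{\mu,\lambda}\ell(k)k^{2d-1}$ via Lemma~\ref{lem:autocov_ret}, expand $\var(R_{\tilde t,\tilde t+\tilde H})$ into the triangular sum $(\tilde H m)\var(r_0)+2(\tilde H m)\sum_{k=1}^{\tilde H m-1}\gamma_r(k)-2\sum_{k=1}^{\tilde H m-1}k\gamma_r(k)$, and then apply Proposition~2.2.1 of Pipiras and Taqqu (2017) to each partial sum. Your additional remarks on the nonnegativity of $\gamma_r$ and the identification of $\ell$ with the slowly varying factor of $\gammaz$ are helpful clarifications but do not change the argument.
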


\begin{proof}

Let $\gamma_r(k)=\cov(r_0, r_k)$ be the lag-k autocovariance function of returns.
By Equation(\ref{eq:cov_ro_ret_L}) and the convergence result in (\ref{eq:cov_counts}), as $k \to \infty$
\[
 \cov(r_0, r_k) \sim \mu^2\lambda^2 \rme \gammaz(k) \;.
\]
Therefore, $\gamma_r(k)$ has a representation of a slowly varying function as
\[
\gamma_r(k) = C_{\mu, \lambda}\ell(k)k^{2d-1} \;,
\]
where $C_{\mu,\lambda} = \mu^2\lambda^2 \rme $ and $\ell(k)$ is a slowly varying function of fractional Gaussian which satisfies $\lim_{k \to \infty}\ell(k) = \frac{2d+1}{4d}$
\\
\\
\noindent
Note
\begin{align}\label{eq:var_agg_Rt_expression}
\var\left(R_{\tilde{t}, \tilde{t}+\tilde{H}} \right)& = \cov\left(R_{\tilde{t}, \tilde{t}+\tilde{H}}\;, R_{\tilde{t}, \tilde{t}+\tilde{H}} \right) \notag \\
                                &= \cov \left(\sum_{j=\tilde{\tilde{t}}m+1}^{(\tilde{t}+\tilde{H})m} r_j\;,  \sum_{i=\tilde{\tilde{t}}m+1}^{(\tilde{t}+\tilde{H})m}r_i  \right) \notag \\
                                &= (\tilde{\tilde{H}}m)\var(r_0) + 2\sum_{k=1}^{\tilde{\tilde{H}}m-1}(\tilde{H}m-k)\gamma_r(k) \notag \\
                                & = (\tilde{H}m)\var(r_0)+ 2\sum_{k=1}^{\tilde{H}m-1}(\tilde{H}m)\gamma_r(k) - 2\sum_{k=1}^{\tilde{H}m-1}k\; \gamma_r(k)
\end{align}
Since $0 < d < \frac{1}{2}$, $(\tilde{H}m)\var(r_0) = o(\tilde{H}^{2d+1}) $.
Applying Proposition 2.2.1 again from Pipiras and Taqqu (2017) for $\sum_{k=1}^n \gamma_r(k)$ when $n \to \infty$,
we obtain
\begin{align}
\var \left(R_{\tilde{t}, \tilde{t}+\tilde{H}} \right) & \sim C_{\mu, \lambda}\left\{\frac{2 \ell(\tilde{H}m)(\tilde{H}m)^{2d+1}}{2d} - \frac {2 \ell(\tilde{H}m)(\tilde{H}m)^{2d+1}}{2d+1} \right\} \notag \\
                                & =  C_{\mu, \lambda} \frac{ \ell(\tilde{H}m)(\tilde{H}m)^{2d+1}}{d(2d+1)} \;.  \label{eq:denm_ret}
\end{align}
\end{proof}

\noindent We will use Lemmas \ref{lem:convergence_real_var} and \ref{lem:convergence_agg_ret} for the proof of Theorem \ref{thm:convergence_rho}.

\subsection*{Proof of Theorem \ref{thm:convergence_rho}}

\begin{proof}

Equation(\ref{eq:cov_r0_sq_ret}) and the convergence results of (\ref{eq:comp1}) and (\ref{eq:comp2}) imply that as $k \to \infty$,
\[
\cov(r_k, r^2_0) \sim C_{\mu, \lambda, d, \sigma_e} \gammaz(k)
\]
where
\begin{align}\label{eq:def_C_cov_r0_sq_rL}
C_{\mu,\lambda, d, \sigma_e} & = \mu^3 C_2 + \mu\sigma^2_e \lambda^2 \rme
\end{align}
Define $\gamma_c(k)= \cov(r_k, r^2_0)$. It can be represented as
\[
 \gamma_c(k)= C_{\mu, \lambda, d, \sigma_e}\ell(k)k^{2d-1} \;,
\]
where $\ell(k)$ is a slowly varying function of fractional Gaussian noise which satisfies $\lim_{k \to \infty}\ell(k) = \frac{2d+1}{4d}$.
\\
\\
From Equation(\ref{eq:cov_ret_pred}), $\cov\left(R_{t, t+\tilde{H}}, RV_{t-\tilde{H}, t}  \right)$ can be expressed as
\begin{eqnarray}
\cov\left(R_{\tilde{t}, \tilde{t}+\tilde{H}}, RV_{\tilde{t}-\tilde{H}, \tilde{t}}  \right) &=& \sum_{k=1}^{\tilde{H}m}k \gamma_c(k) +\sum_{k=\tilde{H}m+1}^{2\tilde{H}m-1}\left(2\tilde{H}m-k\right)\gamma_c(k) \notag \\
                                             &=& \sum_{k=1}^{\tilde{H}m}k \gamma_c(k) + \sum_{k=1}^{2\tilde{H}m-1}(2\tilde{H}m-k)\gamma_c(k) - \sum_{k=1}^{\tilde{H}m}(2\tilde{H}m-k)\gamma_c(k) \notag \\
                                             &=& \sum_{k=1}^{\tilde{H}m}k \gamma_c(k) + 2(\tilde{H}m)\sum_{k=1}^{2\tilde{H}m-1}\gamma_c(k) - 2(\tilde{H}m)\sum_{k=1}^{\tilde{H}m}\gamma_c(k) \notag \\
                                             &-& \sum_{k=1}^{2\tilde{H}m-1}k\gamma_c(k)+\sum_{k=1}^{\tilde{H}m}k\gamma_c(k) \label{eq:convergence_cov_ret_real_var}
\end{eqnarray}
Applying Proposition 2.2.1 from Pipiras and Taqqu (2017) in which they showed that as $n \to \infty$,
\[
\sum_{k=1}^n \gamma_c(k) \sim C_{\mu, \lambda, d, \sigma_e}\frac{\ell(n)n^{2d}}{2d} \;,
\]
we obtain
\begin{align}
  \cov\left(R_{\tilde{t}, \tilde{t}+\tilde{H}}, RV_{\tilde{t}-\tilde{H}, \tilde{t}}  \right) & \sim  \left\{\frac{\ell(\tilde{H}m)(\tilde{H}m)^{2d+1}}{2d+1} + \frac{(2\tilde{H}m)\ell(2\tilde{H}m-1)(2\tilde{H}m-1)^{2d}}{2d} \right. \notag \\
                                               & -  \left.\frac{(2\tilde{H}m)\ell(\tilde{H}m)(\tilde{H}m)^{2d}}{2d} \right. \notag \\
                                               &- \left. \frac{\ell(2\tilde{H}m-1)(2\tilde{H}m-1)^{2d+1}}{2d+1} + \frac{\ell(\tilde{H}m)(\tilde{H}m)^{2d+1}}{2d+1}\right\} C_{\mu, \lambda, d, \sigma_e} \notag \\
                                               & \sim  C_{\mu, \lambda, d, \sigma_e} \frac{\ell(\tilde{H}m)(\tilde{H}m)^{2d+1}}{d(2d+1)}\left[2^{2d}-1 \right] \;. \label{eq:num_cov}
\end{align}

\noindent
Combining Lemmas \ref{lem:convergence_real_var}, \ref{lem:convergence_agg_ret}, and Equation (\ref{eq:num_cov}), we show that as $\tilde{H} \to \infty$
\begin{align}
\corr\left(R_{\tilde{t},\tilde{t}+\tilde{H}}, \; RV_{\tilde{t}-\tilde{H}, \tilde{t}}\right) & \to \frac{C_{\mu,\lambda, d, \sigma_e}}{\sqrt{\tilde{C}_{\mu, \lambda,d,\sigma_e}C_{\mu,\lambda}}} \left(2^{2d}-1\right) \;, \label{eq:rho}
\end{align}
Following the definitions of $C_1$, $C_2$, $\tilde{C}_{\mu, \lambda,d,\sigma_e}$, and $C_{\mu,\lambda, d, \sigma_e}$  in (\ref{eq:def_C1}), (\ref{eq:def_C2}), (\ref{eq:def_C_var_realV}), (\ref{eq:def_C_cov_r0_sq_rL}) and Lemma \ref{lem:convergence_agg_ret}, it can be shown that
\begin{align*}
\left( C_{\mu,\lambda, d, \sigma_e}\right)^2 = \tilde{C}_{\mu, \lambda,d,\sigma_e}C_{\mu, \lambda} \;,
\end{align*}
Therefore,
\[
\corr\left(R_{\tilde{t},\tilde{t}+\tilde{H}},\; RV_{\tilde{t}-\tilde{H}, \tilde{t}}\right)  \to  \left(2^{2d}-1\right) \;.
\]

\end{proof}

\subsection*{Proof of Lemma \ref{lem:count_property}}
\begin{proof}
For $L > 0$, conditional on $\Lambda$, $\Delta N(0)$ and $\Delta N(L)$ are independent. Therefore, the lag-L autocovariance of $\Delta N(t)$ is equal to

\begin{eqnarray*}
\cov \left(\mathbb{E}[\Delta N(L)|\Lambda]\;, \mathbb{E}[(\Delta N(0))|\Lambda] \right) &=&
 \lambda^2\cov \left(\int_{s=L-1}^{L}\rme^{Z_H(s)}\rmd s \;, \int_{t=-1}^{0}\rme^{Z_H(s)}\rmd s \right) \notag \\
 &=& \lambda^2 \int_{s=L-1}^{L}\int_{t=-1}^{0}\cov(\rme^{Z_H(s)}, \rme^{Z_H(t)})\rmd t\rmd s \notag \\
 &=& \lambda^2  \rme \int_{s=-1}^0 \int_{t=-1}^0 \left\{\rme^{\gammaz(L+s-t)}-1  \right\} \rmd t \rmd s \label{lagL_autocov_count}\\
\end{eqnarray*}

We can express
\begin{eqnarray*}
\var\left(\Delta N(t) \right) &=& \var \left( \mathbb{E} [\Delta N(t)|\Lambda] \right) + \mathbb{E}[\var(\Delta N(t)|\Lambda])] \\
     &=& \lambda \rme^{\frac{1}{2}} + \lambda^2  \rme \int_{s=-1}^0 \int_{t=-1}^0 \left\{\rme^{\gammaz(s-t)}-1  \right\} \rmd t \rmd s
\end{eqnarray*}
\end{proof}

\begin{lemma}\label{lem:gammaz_fun}
Let $Z_H(t) = B_H(ct)-B_H(ct-1)$, where $B_H(t)$ is the fractional Brownian motion with Hurst index $\frac{1}{2} < H < 1$, and
$c > 0$. The lag-$r$ autocovariance function $\gammaz(r)$ of $Z_H$ is equal to
  \begin{equation}
   \gammaz(r)= \frac{1}{2}\left[|cr+1|^{2H}-2|cr|^{2H} + |cr-1|^{2H} \right]\;.
  \end{equation}
Moreover, for all $ r \in \mathbb{R}$ , $\gammaz(r) > 0$.
\end{lemma}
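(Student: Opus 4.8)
The plan is to treat the statement in two stages: first the explicit formula for $\gammaz$, then the positivity claim. For the formula I would start from the covariance kernel of fractional Brownian motion, $\esp[B_H(s)B_H(t)]=\tfrac12\big(|s|^{2H}+|t|^{2H}-|s-t|^{2H}\big)$, and expand
\[
\gammaz(r)=\cov\big(B_H(c(t+r))-B_H(c(t+r)-1),\;B_H(ct)-B_H(ct-1)\big)
\]
as a sum of four such kernel evaluations. After the cancellations, all terms that depend on $ct$ drop out and one is left with $\tfrac12\big(|cr+1|^{2H}-2|cr|^{2H}+|cr-1|^{2H}\big)$; this simultaneously shows that the expression depends only on $r$ (so $Z_H$ is stationary) and, at $r=0$, recovers unit variance. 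Equivalently one may simply observe that $Z_H$ is fractional Gaussian noise sampled at spacing $c$ and quote its classical autocovariance.

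For positivity, introduce $\psi(u)=|u|^{2H}$ on $\Rset$, so that
\[
\gammaz(r)=\tfrac12\big(\psi(cr+1)-2\psi(cr)+\psi(cr-1)\big)
\]
is one half the second difference of $\psi$ at $cr$ with unit step. Since $cr-1,\;cr,\;cr+1$ are three distinct collinear points with $cr$ their midpoint, it suffices to show that $\psi$ is \emph{strictly} convex on $\Rset$ when $\tfrac12<H<1$, for then $\psi(cr)<\tfrac12\big(\psi(cr+1)+\psi(cr-1)\big)$, i.e.\ $\gammaz(r)>0$ for every $r\in\Rset$ (including $r=0$, where $\gammaz(0)=1$). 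To establish strict convexity I would note that $\psi$ is $C^1$ on $\Rset$ with $\psi'(u)=2H\,|u|^{2H-1}\operatorname{sgn}(u)$, which is continuous at $0$ because $2H-1>0$; that $\psi''(u)=2H(2H-1)|u|^{2H-2}>0$ for $u\neq0$ since $1<2H<2$; and hence that $\psi'$ is strictly increasing on each of $(-\infty,0)$ and $(0,\infty)$ and, being negative for $u<0$, zero at $u=0$, and positive for $u>0$, strictly increasing across $0$ as well. A $C^1$ function with strictly increasing derivative is strictly convex.

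The only delicate point is the behaviour at the origin, where $\psi$ is not twice differentiable, so one cannot invoke $\psi''>0$ there; this is why I would route the strict-convexity argument through the monotonicity of $\psi'$ rather than through $\psi''$. Everything else is routine bookkeeping: the covariance computation is a direct substitution into the fBm kernel, and the reduction of positivity to a second-difference statement is immediate once $\psi$ is identified. I expect the write-up to be short.
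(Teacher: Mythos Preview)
Your proposal is correct and follows essentially the same route as the paper: both obtain the formula by expanding the four cross-covariances of $B_H$ and both deduce positivity from the convexity of $u\mapsto|u|^{2H}$ for $H\in(\tfrac12,1)$, reading $\gammaz(r)$ as a centered second difference. The only notable difference is cosmetic: the paper pulls out a factor $(cr)^{2H-2}$ and applies Jensen to $f(x)=x^{2H}$ at $1\pm 1/(cr)$ (then handles $r<0$ by symmetry and $r=0$ separately), whereas you work directly with $\psi(u)=|u|^{2H}$ on all of $\Rset$ and argue strict convexity via monotonicity of $\psi'$, which neatly sidesteps the non-existence of $\psi''$ at the origin and avoids the case split.
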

\begin{proof}
For any real lag $r \ge 0$, the autocovariance of fraction Brownian motion can be expressed as
\begin{equation}\label{eq:fBM_autocov}
\cov\left(B_H(t+r)\;, B_H(t) \right)=
\frac{1}{2}\left[|t+r|^{2H}-|r|^{2H} + |t|^{2H}\right]\;.
\end{equation}
Using Equation (\ref{eq:fBM_autocov}), the lag-$r$ autocovariance function of $Z_H(t)$ is equal to
\begin{eqnarray}
\gammaz(r) &=& \cov(Z_H(t+r), Z_H(t)) \notag\\
                   &=& \cov\left(B_H(c(t+r))- B_H(c(t+r)-1)\;, B_H(ct)-B_H(ct-1)\right) \notag\\
                   &=& \cov\left(B_H(c(t+r)), B_Hct)\right)+\cov\left(B_H(c(t+r)-1)), B_H(ct-1)\right) \notag\\
                   &-& \cov\left(B_H(c(t+r)-1), B_H(ct)\right) - \cov\left(B_H(c(t+r)), B_H(ct-1)\right) \notag\\
                   &=& \frac{1}{2}\left[|cr+1|^{2H}-2|cr|^{2H} + |cr-1|^{2H} \right] \label{eq:cont_time_frac_gauss_v2} \;.
\end{eqnarray}
Define the function $f(x)=x^{2H}$. We can express
\begin{equation}\label{eq:autoCV_ZH_rep}
|cr+1|^{2H}-2|cr|^{2H} + |cr-1|^{2H} = (cr)^{2H-2} \frac{f(1+\frac{1}{cr})-2f(1)+f(1-\frac{1}{cr})}{\frac{1}{(cr)^2}}
\end{equation}
Consider $x > 0$ and observe that if $\frac{1}{2} < H < 1$, $f'(x)=2H x^{2H-1} > 0 $ and $f''(x)=2H (2H-1)x^{2H-2} > 0$.
Hence $f(x)$ is convex. By Jensen's inequality, $\frac{1}{2}f(1+\frac{1}{cr}) + \frac{1}{2}f(1-\frac{1}{cr}) > f(1)$.
Therefore, $f(1+\frac{1}{cr})-2f(1)+f(1-\frac{1}{cr}) > 0$ and $\gammaz(r) > 0$.
If $r < 0$, it can be shown that $\gammaz(r) = \gammaz(-r) > 0$.
Finally, if $r=0$, then by Equation (\ref{eq:autoCV_ZH_rep}), $\gammaz(r) = 1$.
\end{proof}

\noindent We will use Lemma \ref{lem:gammaz_fun} for the proof of Lemma \ref{lem:exp_gammaz_d0}.

\begin{lemma}\label{lem:exp_gammaz_d0}
If $d=0$,
%
$ \forall L \in \Nset$, $-1 < s < 0$, and $-1 < t < 0$,
\\
\\
(i) if $0 < c < \frac{1}{2}$
\\
\[
\rme^{\gammaz(L+s-t)}=
\left\{
\begin{array}
{r@{\quad:\quad}l}
 1                     & \mbox{if}\;\;  L  > \lfloor \frac{1}{c}+1 \rfloor  \;, \\
 \rme^{1-c|L+ s-t|}    & \mbox{if}\;\;  1 \le L \le \lfloor \frac{1}{c} +1\rfloor \;.
\end{array} \right.
\]
\\
\\
(ii) if $\frac{1}{2} \le c < 1$
\\
\[
\rme^{\gammaz(L+s-t)}=
\left\{
\begin{array}
{r@{\quad:\quad}l}
 1                     & \mbox{if}\;\;  L > 2 \;, \\
 \rme^{1-c|L+ s-t|}  & \mbox{if}\;\;  1 \le L \le 2 \;.
\end{array} \right.
\]
\\
\\
(iii) if $c \ge 1$
\\
\[
\rme^{\gammaz(L+s-t)}=
\left\{
\begin{array}
{r@{\quad:\quad}l}
 1                     & \mbox{if}\;\;  L \ge 2 \;, \\
 \rme^{1-c|L+ s-t|}  & \mbox{if}\;\;  L =1 \;.
\end{array} \right.
\]
\\

\end{lemma}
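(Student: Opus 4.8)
The plan is to reduce the statement to the elementary shape of the fractional Gaussian noise autocovariance when $d=0$ (equivalently $H=\tfrac12$), and then to a bookkeeping argument about which integer lags $L$ force the argument $r=L+s-t$ to lie in a single linear piece. First, specialising Lemma \ref{lem:gammaz_fun} to $H=\tfrac12$ gives $\gammaz(r)=\tfrac12\bigl(|cr+1|-2|cr|+|cr-1|\bigr)$. Put $\phi(u)=|u+1|-2|u|+|u-1|$; $\phi$ is even, and a direct sign split shows $\phi(u)=(u+1)-2u+(1-u)=2-2u$ for $0\le u\le1$ and $\phi(u)=(u+1)-2u+(u-1)=0$ for $u\ge1$. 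Hence $\phi(u)=2(1-|u|)^{+}$ for all $u$ (the two branches agreeing at $|u|=1$), so
\[
\gammaz(r)=\bigl(1-c|r|\bigr)^{+}=\begin{cases}1-c|r| & |r|\le 1/c,\\ 0 & |r|\ge 1/c,\end{cases}
\]
which recovers the short-memory form of $\gammaz$. This step is a one-line computation.

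Next I would substitute $r=L+s-t$. For $L\in\Nset$ (so $L\ge1$) and $s,t\in(-1,0)$ one has $s-t\in(-1,1)$, so $r$ lies in the open interval $(L-1,L+1)\subset(0,\infty)$; in particular $|r|=r=L+s-t$, and the formula above gives $\rme^{\gammaz(L+s-t)}=\rme^{\,1-c(L+s-t)}$ when $L+s-t\le 1/c$ and $\rme^{\gammaz(L+s-t)}=1$ when $L+s-t\ge 1/c$. Since $r$ sweeps all of $(L-1,L+1)$ as $(s,t)$ ranges over $(-1,0)^2$, the linear branch is valid for every admissible $(s,t)$ precisely when $L+1\le 1/c$, and the flat branch precisely when $L-1\ge 1/c$. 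It then remains to read off these thresholds, splitting into three regimes according to where $1/c$ falls relative to the integers $1$ and $2$: if $0<c<\tfrac12$ then $1/c>2$ and the transition in $L$ sits near $1/c$, giving the cut at $L=\lfloor 1/c+1\rfloor$; if $\tfrac12\le c<1$ then $1<1/c\le2$, so only $L=1,2$ can lie below the cut; and if $c\ge1$ then $1/c\le1$, so only $L=1$ can. In each regime the finitely many small values of $L$ are checked directly against the two inequalities above.

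The routine part is the first step. The one point that needs care is the last translation: pinning down exactly the boundary lag at which $\rme^{\gammaz(L+s-t)}$ switches from the linear form to $1$, that is, checking that the floor $\lfloor 1/c+1\rfloor$ in case (i) and the cut-offs $c=\tfrac12$ and $c=1$ separating cases (i)--(iii) are precisely the ones making the chosen formula hold \emph{uniformly} over all $s,t\in(-1,0)$. Here one must be careful about whether the open endpoints $L\pm1$ of the range of $r$ actually reach $1/c$, about how any transitional lag near $1/c$ (for which $L+s-t$ can fall on either side of $1/c$) is assigned to a branch, and about the case when $1/c$ is not an integer; all of this is elementary once the truncated-tent identity of the first step is in hand.
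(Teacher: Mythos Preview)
Your approach is essentially the same as the paper's: both reduce $\gammaz$ at $H=\tfrac12$ to the truncated tent $(1-c|r|)^+$, observe that $r=L+s-t$ ranges over $(L-1,L+1)$, and then split into the three $c$-regimes to locate the threshold lag. The paper's argument is slightly terser (it simply writes down the piecewise form and the three cases without your explicit $\phi$ computation), and you are in fact more careful than the paper in flagging the transitional-lag issue near $1/c$, which the paper's own proof does not address.
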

\begin{proof}
If $d=0$, i.e. $H=\frac{1}{2}$, $\gammaz(r)$ can be expressed as
\begin{equation}\label{eq:fGN_autocov_d0}
  \gammaz(r) = \frac{1}{2}c\left[\left|r+\frac{1}{c}\right|-2|r| + \left|r-\frac{1}{c}\right|\right] \;.
\end{equation}
If $|r| \ge \frac{1}{c}$, $\gammaz(r) = 0$.  If $|r| < \frac{1}{c}$, $\gammaz(r) = 1-c|r|$. Therefore,
\[
\rme^{\gammaz(r)}=
\left\{
\begin{array}
{r@{\quad:\quad}l}
 1                  & \mbox{if}\;\;  |r| \ge \frac{1}{c} \;, \\

 \rme^{1-c|r|}      & \mbox{if}\;\;  |r| < \frac{1}{c} \;.
\end{array} \right.
\]
\\
Since $-1 \le s \le 0$, $-1 \le t \le 0$, $|s-t| \le 1$. $ \forall L \in \Nset$, $0 \le L+s-t \le L+1$.
Moreover, if $|L+s-t| \ge \frac{1}{c}$, $\rme^{\gammaz(L+s-t)} = 1$. If $|L+s-t| < \frac{1}{c}$, $\rme^{\gammaz(L+s-t)} = \rme^{1-c(L+s-t)}$.
Therefore, depending on the range of $c$, $\rme^{\gammaz(L+s-t)}$ has different expressions:
\\
\\
(i) if $0 < c < \frac{1}{2}$
\\
\[
\rme^{\gammaz(L+s-t)}=
\left\{
\begin{array}
{r@{\quad:\quad}l}
 1                     & \mbox{if}\;\;  L  > \lfloor \frac{1}{c}+1 \rfloor  \;, \\
 \rme^{1-c|L+ s-t|}    & \mbox{if}\;\;  1 \le L \le \lfloor \frac{1}{c} +1\rfloor \;.
\end{array} \right.
\]
\\
\\
(ii) if $\frac{1}{2} \le c < 1$
\\
\[
\rme^{\gammaz(L+s-t)}=
\left\{
\begin{array}
{r@{\quad:\quad}l}
 1                     & \mbox{if}\;\;  L > 2 \;, \\
 \rme^{1-c|L+ s-t|}  & \mbox{if}\;\;  1 \le L \le 2 \;.
\end{array} \right.
\]
\\
\\
(iii) if $c \ge 1$
\\
\[
\rme^{\gammaz(L+s-t)}=
\left\{
\begin{array}
{r@{\quad:\quad}l}
 1                     & \mbox{if}\;\;  L \ge 2 \;, \\
 \rme^{1-c|L+ s-t|}  & \mbox{if}\;\;  L =1 \;.
\end{array} \right.
\]


\end{proof}

\noindent We will use Lemmas \ref{lem:gammaz_fun}, \ref{lem:exp_gammaz_d0}, \ref{lem:autocov_ret}, and Corollary \ref{cor:var_rt} for the proof of Theorem \ref{thm:convergence_rho_shortM}.

\subsection*{Proof of Theorem \ref{thm:convergence_rho_shortM}}
\begin{proof}
Assume $d=0$. Let
\begin{equation}\label{eq:def_h}
h =
\left\{
\begin{array}
{l@{\quad:\quad}l}
 \lfloor \frac{1}{c} \rfloor +1                                   & \mbox{if}\;\;  0 < c < 1 \;, \\
 \max(\lfloor \frac{1}{c}-1 \rfloor + 1\;,\; 1)                   & \mbox{if}\;\;   c  \ge 1 \;,
\end{array} \right.
\end{equation}
\noindent Then $\forall L \in \Nset$ and $L > h$, $\gammaz(L)=0$ and thus $\cov\left(A_0, B_L \right)=\cov\left(A_L, A_0 \right)=\cov\left(B_0, B_L \right)=0$.
From (\ref{eq:cov_r0_sq_ret}), if $L > h$, $\cov(r_L, r_0^2)=0$, .
Therefore, for all $\tilde{H}m > h$, Equation (\ref{eq:cov_ret_pred}) becomes
\begin{equation}\label{eq:rho_d0_numerator}
 \cov \left(R_{t, t+\tilde{H}},\; RV_{t-\tilde{H},t} \right) = \sum_{L=1}^h\cov(r_L, r^2_0)L \;,
\end{equation}
and (\ref{eq:var_agg_RV_expression}) and (\ref{eq:var_agg_Rt_expression}) can be expressed as
\begin{equation}\label{eq:rho_d0_denm_ret}
\var\left(R_{t, t+\tilde{H}}\right)=(\tilde{H}m)\var(r_0) + 2\sum_{L=1}^h(\tilde{H}m-L)\cov(r_0, r_L)\;,
\end{equation}
and
\begin{equation}\label{eq:rho_d0_denm_real_var}
\var(RV_{t-\tilde{H},t})=(\tilde{H}m)\var(r_0^2) + 2\sum_{L=1}^h(\tilde{H}m-L)\cov(r^2_0, r^2_L) \;.
\end{equation}
By Equations (\ref{eq:rho_d0_numerator}), (\ref{eq:rho_d0_denm_ret}), and (\ref{eq:rho_d0_denm_real_var}),
the correlation between $R_{t, t+\tilde{H}}$ and $RV_{t-\tilde{H},t}$ can be expressed as
\begin{eqnarray}
&&\corr\left(R_{t, t+\tilde{H}}, RV_{t-\tilde{H},t}\right) \notag \\
 &=& \frac{\displaystyle\sum_{L=1}^h\cov(r_L, r^2_0)L}{(\tilde{H}m)\sqrt{\left[\var(r_0)+2\displaystyle \sum_{L=1}^h\left(1-\frac{L}{\tilde{H}m}\right)\cov(r_0,r_L)\right]
     \left[\var(r^2_0)+2\displaystyle \sum_{L=1}^h\left(1-\frac{L}{\tilde{H}m}\right)\cov(r^2_0, r^2_L)\right]}} \notag\\
     \label{eq:rho_d0_ratio}
\end{eqnarray}
\\
Lemma $\ref{lem:gammaz_fun}$ shows that when $d=0$, $\forall r \in \Rset$, $\gammaz(r) > 0$. Hence by Equations (\ref{eq:cov_ro_ret_L}) and (\ref{eq: cov_sq_ret_L}), for $L \in \Nset$, $\cov(r_0, r_L) > 0$ and $\cov(r^2_0, r^2_L) > 0$.
\\
Therefore, for sufficiently large $\tilde{H}$, if $d=0$,
\[
\var(r_0)+2\displaystyle \sum_{L=1}^h\left(1-\frac{L}{\tilde{H}m}\right)\cov(r_0,r_L) > \var(r_0)\;,
\]
and
\[
\var(r^2_0)+2\displaystyle \sum_{L=1}^h\left(1-\frac{L}{\tilde{H}m}\right)\cov(r^2_0, r^2_L) > \var(r^2_0) \;.
\]
Hence,
\[\sqrt{\left[\var(r_0)+2\displaystyle \sum_{L=1}^h\left(1-\frac{L}{\tilde{H}m}\right)\cov(r_0,r_L)\right]
     \left[\var(r^2_0)+2\displaystyle \sum_{L=1}^h\left(1-\frac{L}{\tilde{H}m}\right)\cov(r^2_0, r^2_L)\right]}
\] is positive and bounded.
\\
\\
From Equation (\ref{eq:rho_d0_ratio}), as $\tilde{H} \to \infty$, $\corr\left(R_{t, t+\tilde{H}}, RV_{t-\tilde{H},t}\right) \to 0 $.

\end{proof}

\begin{lemma}\label{lem:sample_mean_ct_covergence d0}
If $d=0$, the sample mean $\overline{\Delta N_n}=n^{-1}\sum_{t=1}^n \Delta N(t)$  of the counts process is a consistent estimator of $\mu_{\Delta N}=\esp[\Delta N(t)]$.
\end{lemma}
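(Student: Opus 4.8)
The plan is to prove consistency by a routine second-moment (Chebyshev) argument, using the fact that when $d=0$ the counts process is $h$-dependent, so that $\var(\overline{\Delta N_n})=O(1/n)$.

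First I would record the elementary ingredients. By construction $\{\Delta N(t)\}_{t\ge 1}$ is a stationary sequence with $\esp[\Delta N(t)] = \lambda\rme^{1/2} = \mu_{\Delta N}$ by (\ref{eq:mean_count}), and, when $d=0$, with finite variance $\var(\Delta N(t)) = \lambda^2\rme + \lambda\rme^{1/2}$ by (\ref{eq:var_count_d0}) in Lemma \ref{lem:count_property}. In particular $\esp[\overline{\Delta N_n}] = \mu_{\Delta N}$ for every $n$, so it suffices to show that $\var(\overline{\Delta N_n}) \to 0$ as $n\to\infty$.

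Next I would invoke $h$-dependence. When $d=0$, the $H=1/2$ case of Lemma \ref{lem:gammaz_fun} (equivalently Lemma \ref{lem:exp_gammaz_d0}) gives $\gammaz(r)=0$ whenever $|r|\ge 1/c$; hence, exactly as in the proof of Theorem \ref{thm:convergence_rho_shortM}, the representation (\ref{eq:autocov_count}) forces $\cov(\Delta N(0),\Delta N(L)) = 0$ for every integer $L > h$, with $h$ the finite constant defined in (\ref{eq:def_h}). Consequently, for every $n > h$,
\begin{equation*}
\var(\overline{\Delta N_n}) \;=\; \frac{1}{n}\left[\var(\Delta N(0)) + 2\sum_{L=1}^{h}\Big(1-\tfrac{L}{n}\Big)\cov\big(\Delta N(0),\Delta N(L)\big)\right] \;\le\; \frac{C}{n},
\end{equation*}
where $C := \var(\Delta N(0)) + 2\sum_{L=1}^{h}\big|\cov(\Delta N(0),\Delta N(L))\big|$ is a finite constant independent of $n$, being a sum of finitely many finite terms. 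Chebyshev's inequality then yields, for every $\varepsilon>0$, $\pr\big(|\overline{\Delta N_n}-\mu_{\Delta N}|>\varepsilon\big) \le C/(n\varepsilon^2) \to 0$, i.e. $\overline{\Delta N_n} \convprob \mu_{\Delta N}$, which is the asserted consistency.

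There is essentially no obstacle here: the only point needing care is the vanishing of $\cov(\Delta N(0),\Delta N(L))$ for large $L$, and this is already available from the $d=0$ analysis underlying Theorem \ref{thm:convergence_rho_shortM}. If one preferred not to reference $h$ at all, it would be enough to note that $\sum_{L\ge 1}\big|\cov(\Delta N(0),\Delta N(L))\big| < \infty$, which is all the variance bound $\var(\overline{\Delta N_n})=o(1)$ actually requires.
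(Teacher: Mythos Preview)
Your proof is correct and follows essentially the same route as the paper's: use the $h$-dependence of $\{\Delta N(t)\}$ when $d=0$ (via (\ref{eq:autocov_count}) and Lemma \ref{lem:exp_gammaz_d0}) to conclude $\var(\overline{\Delta N_n})\to 0$, then apply Chebyshev. Your version is slightly more explicit in writing out the $O(1/n)$ bound, but the argument is the same.
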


\begin{proof}
By Equation (\ref{eq:autocov_count}) and Lemma \ref{lem:exp_gammaz_d0}, there exists $h \in \Nset$ such that when $d=0$, $\forall L > h$, $\cov(\Delta N(L), \Delta N(0)) = 0$. Hence as $n \to \infty$, $\var(\overline{\Delta N_n}) \to 0 $. By Chebyshev's inequality, $\forall \epsilon > 0$,
$\pr(|\overline{\Delta N_n} - \mu_{\Delta N}| \ge \epsilon) \le \frac{\var(\overline{\Delta N_n})}{\epsilon^2} \to 0$, as $n \to \infty$.
Therefore, $\overline{\Delta N_n} \stackrel{p}{\rightarrow} \mu_{\Delta N} $.
\end{proof}

\begin{lemma}\label{lem:sample_mean_ret_convergence_d0}
If $d=0$, the sample mean $\bar{r}_n$ of the return process is a consistent estimator of $\mu_r = \esp[r_t]$.
\end{lemma}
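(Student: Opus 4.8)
The plan is to split the sample mean of returns into a drift piece controlled by the counts and a mean‑zero piece driven by the efficient shocks, and to treat each separately. Since $N(0)=0$, telescoping (\ref{eq:return process}) over $t=1,\dots,n$ gives $\sum_{t=1}^n r_t = \mu N(n) + \sum_{k=1}^{N(n)} e_k$, so that
\begin{equation*}
\bar r_n = \mu\,\overline{\Delta N_n} + W_n, \qquad W_n := \frac{1}{n}\sum_{k=1}^{N(n)} e_k ,
\end{equation*}
where $\overline{\Delta N_n}=n^{-1}\sum_{t=1}^n\Delta N(t)=n^{-1}N(n)$ is the sample mean of the counts.

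For the first piece I would invoke Lemma \ref{lem:sample_mean_ct_covergence d0}: since $d=0$, $\overline{\Delta N_n}\convprob\mu_{\Delta N}$, hence $\mu\,\overline{\Delta N_n}\convprob\mu\,\mu_{\Delta N}$, and by (\ref{eq: mean_return}) this limit is exactly $\esp[r_t]=\mu_r$. This is the only step where the short‑memory hypothesis $d=0$ is used.

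For the second piece I would show $W_n\convprob 0$ by a second‑moment bound. Conditioning on $N$ (equivalently on $\Lambda$) and using that $\{e_k\}$ is i.i.d.\ with mean zero and variance $\sigma_e^2$ and is independent of $N$, we get $\esp[W_n\mid N]=0$ and $\var(W_n\mid N)=n^{-2}N(n)\sigma_e^2$. Hence $\esp[W_n]=0$ and, by the law of total variance together with (\ref{eq:esp-N}),
\begin{equation*}
\var(W_n)=\frac{\sigma_e^2}{n^2}\,\esp[N(n)]=\frac{\sigma_e^2\lambda\rme^{1/2}}{n}\longrightarrow 0 .
\end{equation*}
Chebyshev's inequality then gives $W_n\convprob 0$, and combining the two pieces by Slutsky's theorem yields $\bar r_n\convprob\mu_r$.

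The argument is essentially routine; the only point needing care is the conditioning that turns the random‑index sum $\sum_{k=1}^{N(n)}e_k$ into a quantity whose variance can be evaluated via the tower property, and the observation that this shock term requires no memory assumption, so that $d=0$ enters only through the appeal to Lemma \ref{lem:sample_mean_ct_covergence d0}.
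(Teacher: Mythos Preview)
Your proof is correct but takes a different route from the paper. The paper argues directly on the return sequence: by (\ref{eq:cov_ro_ret_L}) together with Lemma \ref{lem:exp_gammaz_d0}, when $d=0$ there is an integer $h$ with $\cov(r_0,r_L)=0$ for all $L>h$, so $\{r_t\}$ is $h$-dependent, $\var(\bar r_n)\to 0$, and Chebyshev gives $\bar r_n\convprob\mu_r$. You instead split $\bar r_n=\mu\,\overline{\Delta N_n}+W_n$, push the short-memory assumption entirely onto the counts via Lemma \ref{lem:sample_mean_ct_covergence d0}, and handle the random-index shock sum by a conditioning/variance bound that is valid for \emph{any} $d$. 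Your decomposition makes transparent exactly where $d=0$ is used and shows the efficient-shock contribution is negligible unconditionally; the paper's argument is shorter and parallels the pattern used for the neighboring lemmas on counts and sample variances. Both are valid and essentially equally elementary.
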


\begin{proof}
By Equation (\ref{eq:cov_ro_ret_L}) and Lemma \ref{lem:exp_gammaz_d0}, there exists $h \in \Nset$ such that when $d=0$, $\forall L > h$, $\cov(r_0, r_L) = 0$.
Therefore, as $n \to \infty$, $\var(\bar{r}_n ) \to 0$. By Chebyshev's inequality, $\forall \epsilon > 0$, $\pr(|\bar{r}_n  - \mu_r|\ge \epsilon) \le \frac{\var(\bar{r}_n ) }{\epsilon^2} \to 0$, as $n \to \infty$.  Therefore, $\bar{r}_n \stackrel{p}{\rightarrow} \mu_r$.
\end{proof}

\begin{lemma}\label{lem:sample_mean_rsq_convergence_d0}
If d=0, the sample variance $\hat{\sigma}_r^2$ of the return process is a consistent estimator of ${\sigma}_r^2$.
\end{lemma}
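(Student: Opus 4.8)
The plan is to mirror the two preceding lemmas and exploit the fact that, under $d=0$, the squared returns form a finite-dependence stationary sequence. First I would recall from Lemma \ref{lem:exp_gammaz_d0} that when $d=0$ one has $\gammaz(r)=0$ for $|r|\ge 1/c$; since $Z_H$ is Gaussian, this makes $Z_H$ on two intervals more than $1/c$ apart independent, and then conditioning on $\Lambda$ and using the independence of disjoint Poisson increments and of the (iid, $N$-independent) efficient shocks shows that $\{r_t\}$, hence $\{r_t^2\}$, is strictly stationary and $h$-dependent with $h$ as in (\ref{eq:def_h}). In particular $\cov(r_0^2,r_L^2)=0$ for every $L>h$; this also follows directly from (\ref{eq: cov_sq_ret_L}) together with the vanishing of $\cov(A_0,A_L)$, $\cov(A_0,B_L)$ and $\cov(B_0,B_L)$ for $L>h$ that was already used in the proof of Theorem \ref{thm:convergence_rho_shortM}.

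Next I would write $\hat{\sigma}_r^2 = \frac1n\sum_{t=1}^n r_t^2 - \bar r_n^2$ (a divide-by-$n$ or divide-by-$(n-1)$ normalization makes no asymptotic difference). For the first term, a direct computation gives $\var\!\big(\frac1n\sum_{t=1}^n r_t^2\big) = \frac1n\big(\var(r_0^2) + 2\sum_{L=1}^{h}(1-L/n)\cov(r_0^2,r_L^2)\big) = O(1/n)$, since only $h$ autocovariances are nonzero and each is finite; Chebyshev's inequality then yields $\frac1n\sum_{t=1}^n r_t^2 \convprob \esp[r_0^2]$. For the second term, Lemma \ref{lem:sample_mean_ret_convergence_d0} gives $\bar r_n \convprob \mu_r=\esp[r_0]$, so $\bar r_n^2\convprob \mu_r^2$ by the continuous mapping theorem. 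Combining the two via Slutsky's theorem gives $\hat\sigma_r^2 \convprob \esp[r_0^2]-\mu_r^2 = \var(r_0)=\sigma_r^2$.

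The only point needing a word of care is the moment requirement: finiteness of $\var(r_0^2)$ — and of the lag-$L$ autocovariances of $\{r_t^2\}$ — needs $\esp[r_0^4]<\infty$, which holds provided the efficient shocks satisfy $\esp[e_1^4]<\infty$ (the mixed-Poisson structure already supplies all moments of $\Delta N(t)$). This condition is implicit in the appearance of $\var(r_0^2)$ in Theorem \ref{thm:long memory squared returns}, and I would simply state it explicitly; everything else is the routine chain ``$h$-dependence $\Rightarrow$ Chebyshev LLN $\Rightarrow$ continuous mapping $\Rightarrow$ Slutsky'' already used above. I do not expect any genuine obstacle.
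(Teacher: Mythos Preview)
Your proposal is correct and follows essentially the same approach as the paper: both arguments establish $h$-dependence of $\{r_t\}$ (hence of its square) under $d=0$, apply a Chebyshev-based weak law to the sum of squares, invoke Lemma~\ref{lem:sample_mean_ret_convergence_d0} for $\bar r_n$, and combine via Slutsky/continuous mapping. The only cosmetic difference is that the paper centers first and works with $Y_t=(r_t-\mu_r)^2$ in the decomposition $\hat\sigma_r^2=\tfrac{n}{n-1}\big[\tfrac1n\sum Y_t-(\bar r_n-\mu_r)^2\big]$, whereas you decompose as $\tfrac1n\sum r_t^2-\bar r_n^2$; your explicit remark about needing $\esp[e_1^4]<\infty$ for $\var(r_0^2)<\infty$ is a welcome clarification that the paper leaves implicit.
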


\begin{proof}
The sample variance $\hat{\sigma}_r^2$ can be expressed as
\[
\hat{\sigma}_r^2 = \frac{n}{n-1}\left[\frac{1}{n}\sum_{t=1}^n (r_t - \mu_r)^2 - (\bar{r}_n - \mu_r)^2  \right] \;.
\]
From Lemma \ref{lem:sample_mean_ret_convergence_d0}, if $d=0$, $\bar{r}_n -\mu_r \stackrel{p}{\rightarrow} 0$. It follows $(\bar{r}_n - \mu_r)^2 \stackrel{p}{\rightarrow} 0$. Moreover, $Z_H(t)=B_H(ct)- B_H(ct-1)$ is the increment of fractional Brownian Motion. When $d=0$, there exists $h \in \Nset$ such that $\forall L > h$, $Z_H(t)$ and $Z_H(t+L)$ are independent.
Since the covariance function of $r_t$ is a function of $\gammaz$, $\forall L > h$ $r_t$ and $r_{t+L}$ are independent. Hence $r^2_t$ and $r^2_{t+L}$ are independent and $\cov(r^2_t, r^2_{t+L}) = 0 \; \forall L \ge h$.
Let $Y_t = (r_t - \mu_r)^2$, $\bar{Y}_n = \frac{1}{n} \sum_{t=1}^n Y_t$, and $\esp[Y_t]=\sigma^2_r$.
When $d=0$, as $n \to \infty$, $\var(\bar{Y}_n) \to 0$. By Chebyshev's inequality, $\forall \epsilon > 0$,
$\pr(|\bar{Y}_n - \sigma^2_r|\ge \epsilon)\le \frac{\var(\bar{Y}_n)}{\epsilon^2}  \to 0$.  Hence $\bar{Y}_n \stackrel{p}{\rightarrow} \sigma^2_r $.
Combining the convergence results of $(\bar{r}_n - \mu_r)^2$ and $\frac{1}{n}\sum_{t=1}^n (r_t-\mu_r)^2$, we obtain
\[
\frac{n}{n-1}\left[\frac{1}{n}\sum_{t=1}^n (r_t-\mu_r)^2 - (\bar{r}_n - \mu_r)^2  \right] \stackrel{p}{\rightarrow} \sigma^2_r \;.
\]
\end{proof}

\begin{lemma}\label{lem:sample_mean_ct_sq_convergence_d0}
If d=0, the sample variance $\hat{\sigma}_{\Delta N}^2$ of the counts process is a consistent estimator of ${\sigma}_{\Delta N}^2$.
\end{lemma}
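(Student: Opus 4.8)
The plan is to follow the template of the proof of Lemma~\ref{lem:sample_mean_rsq_convergence_d0} essentially verbatim, with the return process replaced by the counts process. First I would record the structural fact that makes this work: when $d=0$, Lemma~\ref{lem:exp_gammaz_d0} shows $\gammaz(r)=0$ whenever $|r|\ge 1/c$, and by (\ref{eq:autocov_count}) the lag-$L$ autocovariance of $\{\Delta N(t)\}$ is the double integral of $\rme^{\gammaz(L+s-t)}-1$ over $s,t\in(-1,0)$, which is therefore identically $0$ once $L>h$, with $h$ defined in (\ref{eq:def_h}). In fact, for $L>h$ the Gaussian variables $\{Z_H(u):u\in(t-1,t]\}$ and $\{Z_H(u):u\in(t+L-1,t+L]\}$ are independent, so $\int_{t-1}^{t}\rme^{Z_H(s)}\rmd s$ and $\int_{t+L-1}^{t+L}\rme^{Z_H(s)}\rmd s$ are independent; since, conditionally on $\Lambda$, the Poisson increments over disjoint intervals are independent, it follows that $\Delta N(t)$ and $\Delta N(t+L)$ are independent for $L>h$. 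Hence $\{\Delta N(t)\}$ is $h$-dependent and so is $\{(\Delta N(t)-\mu_{\Delta N})^2\}$, where $\mu_{\Delta N}=\esp[\Delta N(t)]=\lambda\rme^{1/2}$.

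Next I would decompose the sample variance exactly as in the return case,
\[
\hat{\sigma}_{\Delta N}^2=\frac{n}{n-1}\left[\frac{1}{n}\sum_{t=1}^{n}\left(\Delta N(t)-\mu_{\Delta N}\right)^2-\left(\overline{\Delta N_n}-\mu_{\Delta N}\right)^2\right],
\]
and note that Lemma~\ref{lem:sample_mean_ct_covergence d0} gives $\overline{\Delta N_n}-\mu_{\Delta N}\convprob 0$, hence $(\overline{\Delta N_n}-\mu_{\Delta N})^2\convprob 0$. Setting $W_t=(\Delta N(t)-\mu_{\Delta N})^2$ with $\esp[W_t]=\sigma_{\Delta N}^2=\var(\Delta N(t))$ (given by (\ref{eq:var_count_d0}) when $d=0$), I would then check that $\var(W_t)<\infty$: conditionally on $\Lambda$, $\Delta N(t)$ is Poisson with mean $\lambda\int_{t-1}^{t}\rme^{Z_H(s)}\rmd s$, and by Jensen's inequality and Fubini $\esp[(\int_{t-1}^{t}\rme^{Z_H(s)}\rmd s)^k]\le\int_{t-1}^{t}\esp[\rme^{kZ_H(s)}]\rmd s=\rme^{k^2/2}<\infty$ for every $k$; since the moments of a Poisson variable are polynomials in its mean, the tower property gives $\esp[(\Delta N(t))^4]<\infty$, so $\var(W_t)<\infty$. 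Because $\{W_t\}$ is $h$-dependent with finite variance, $\var(\overline{W_n})=n^{-2}\sum_{|k|\le h}(n-|k|)\cov(W_0,W_k)=O(1/n)\to 0$, and Chebyshev's inequality yields $\overline{W_n}\convprob\sigma_{\Delta N}^2$. Combining this with $(\overline{\Delta N_n}-\mu_{\Delta N})^2\convprob 0$ and $n/(n-1)\to 1$ then gives $\hat{\sigma}_{\Delta N}^2\convprob\sigma_{\Delta N}^2$.

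The only point that is not a direct transcription of the return-case proof is the verification that $W_t$ has finite variance, i.e.\ that the counts possess a finite fourth moment; this is where the log-normal structure of the stochastic intensity is used, but the computation sketched above makes it routine. Everything else — the $h$-dependence, the bias-correction decomposition, and the Chebyshev argument — carries over unchanged from Lemma~\ref{lem:sample_mean_rsq_convergence_d0}.
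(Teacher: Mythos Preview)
Your proposal is correct and follows essentially the same approach as the paper's own proof: the same decomposition of $\hat{\sigma}_{\Delta N}^2$, the same appeal to Lemma~\ref{lem:sample_mean_ct_covergence d0} for the squared-mean term, and the same Chebyshev argument applied to the $h$-dependent sequence $W_t=(\Delta N(t)-\mu_{\Delta N})^2$. Your version is in fact slightly more careful than the paper's, since you justify independence (not just zero covariance) of $\Delta N(t)$ and $\Delta N(t+L)$ via the underlying Gaussian and Poisson structure, and you explicitly verify the finite fourth moment needed for $\var(W_t)<\infty$; the paper leaves both points implicit.
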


\begin{proof}
The sample variance of $\Delta N(t)$ can be expressed as
\[
\hat{\sigma}_{\Delta N}^2 = \frac{n}{n-1}\left[\frac{1}{n}\sum_{t=1}^n (\Delta N(t) - \mu_{\Delta N})^2 - (\overline{\Delta N_n} - \mu_{\Delta N})^2  \right] \;,
\]
From Lemma \ref{lem:sample_mean_ct_covergence d0}, if $d=0$, $\overline{\Delta N_n} - \mu_{\Delta N} \stackrel{p}{\rightarrow} 0 $. It follows $(\overline{\Delta N_n}- \mu_{\Delta N})^2 \stackrel{p}{\rightarrow} 0$. Since the covariance function of $\Delta N$ is a function of $\gammaz$, if $d=0$,
there exists $L \ge h$ such that $\Delta N(t)$ and $\Delta N(t+L)$ are independent. Hence $(\Delta N(t))^2$ and $(\Delta N(t+L))^2$ are independent.
Let $\omega_t = (\Delta N(t)-\mu_{\Delta N})^2$, $\bar{\omega}_n = \frac{1}{n}\sum_{t=1}^n (\Delta N(t)-\mu_{\Delta N})^2$, and $\mu_{\omega}=\esp[\omega_t] = \sigma^2_{\Delta N}$. When $d=0$, as $n \to \infty$, $\var(\bar{\omega}_n) \to 0$.
By Chebyshev's inequality, $\forall \epsilon > 0$,
$\pr(|\bar{\omega}_n - \sigma^2_{\Delta N}|\ge \epsilon) \le \frac{\var(\bar{\omega}_n)}{\epsilon^2} \to 0$.  Hence $\bar{\omega}_n \stackrel{p}{\rightarrow} \sigma^2_{\Delta N}$.
Combining the convergence results of $(\overline{\Delta N_n} - \mu_{\Delta N})^2$ and $\frac{1}{n}\sum_{t=1}^n (\Delta N(t)-\mu_{\Delta N})^2$, we obtain
\[
\frac{n}{n-1}\left[\frac{1}{n}\sum_{t=1}^n (\Delta N(t)-\mu_{\Delta N})^2 - (\overline{\Delta N_n} - \mu_{\Delta N})^2  \right] \stackrel{p}{\rightarrow} \sigma^2_{\Delta N} \;.
\]
\end{proof}

\begin{lemma}\label{lem:sample_autocov_ct_convergence_d0}
If d=0, the sample lag-L autocovariance $\hat{\gamma}_{\Delta N}(L)$ of the counts is a consistent estimator of $\gamma_{\Delta N}(L)$.
\end{lemma}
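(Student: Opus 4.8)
The plan is to follow the same template used for $\hat{\sigma}^2_{\Delta N}$ in Lemma \ref{lem:sample_mean_ct_sq_convergence_d0}, exploiting the fact that when $d=0$ the counts form a finite-memory ($h$-dependent) sequence. Recall from Equation (\ref{eq:autocov_count}) together with Lemma \ref{lem:exp_gammaz_d0} that there is a finite $h\in\Nset$, given by (\ref{eq:def_h}), such that $\gamma_{\Delta N}(k)=0$ for all $k>h$; in fact, since the joint law of $\Delta N(t_1),\dots,\Delta N(t_p)$ given $\Lambda$ is determined by the Gaussian vector $(Z_H(t_1),\dots)$ and $\gammaz(r)=0$ for $|r|>1/c$ when $H=1/2$ (Lemma \ref{lem:gammaz_fun} and Lemma \ref{lem:exp_gammaz_d0}), the variables $\Delta N(t)$ and $\Delta N(t+k)$ are actually independent whenever $k>h$, not merely uncorrelated.

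First I would write the sample lag-$L$ autocovariance and center it at the true mean $\mu_{\Delta N}=\esp[\Delta N(t)]=\lambda\rme^{1/2}$:
\begin{align*}
\hat{\gamma}_{\Delta N}(L)
&= \frac{1}{n}\sum_{t=1}^{n-L}\bigl(\Delta N(t)-\overline{\Delta N_n}\bigr)\bigl(\Delta N(t+L)-\overline{\Delta N_n}\bigr) \\
&= \frac{1}{n}\sum_{t=1}^{n-L} W_t \;-\; R_{1,n} \;-\; R_{2,n} \;+\; \frac{n-L}{n}\bigl(\overline{\Delta N_n}-\mu_{\Delta N}\bigr)^2 \;,
\end{align*}
where $W_t=(\Delta N(t)-\mu_{\Delta N})(\Delta N(t+L)-\mu_{\Delta N})$ and $R_{1,n},R_{2,n}$ are the cross terms, each of the form $(\overline{\Delta N_n}-\mu_{\Delta N})$ times an average of centered counts. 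The dominant term is $n^{-1}\sum_{t=1}^{n-L}W_t$: since $W_t$ is a function of $\Delta N(t)$ and $\Delta N(t+L)$ only, the sequence $\{W_t\}$ is $(h+L)$-dependent and stationary, with $\esp[n^{-1}\sum_{t=1}^{n-L}W_t]=\tfrac{n-L}{n}\gamma_{\Delta N}(L)\to\gamma_{\Delta N}(L)$. Provided $\esp[(\Delta N(t))^4]<\infty$ — which holds because, conditionally on $\Lambda$, $\Delta N(t)$ is Poisson with mean $\lambda\int_{t-1}^t\rme^{Z_H(s)}\,\rmd s$, and all moments of $\rme^{Z_H(s)}$ are finite as $Z_H$ is Gaussian — we get $\var(W_t)<\infty$, hence $\var\bigl(n^{-1}\sum_{t=1}^{n-L}W_t\bigr)=O(1/n)\to0$ by $(h+L)$-dependence, and Chebyshev's inequality yields $n^{-1}\sum_{t=1}^{n-L}W_t\convprob\gamma_{\Delta N}(L)$. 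By Lemma \ref{lem:sample_mean_ct_covergence d0}, $\overline{\Delta N_n}-\mu_{\Delta N}\convprob 0$, so the last term and $R_{1,n},R_{2,n}$ all converge to $0$ in probability (the remaining averages of centered counts in $R_{1,n},R_{2,n}$ also vanish in probability by the same Chebyshev argument, or are simply $O_P(1)$, which already suffices). Slutsky's theorem then gives $\hat{\gamma}_{\Delta N}(L)\convprob\gamma_{\Delta N}(L)$.

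The only steps that are not completely routine are the two ingredients behind the convergence of $n^{-1}\sum W_t$: first, the reduction to finite memory, i.e. the $h$-dependence of the counts, which comes from $\gammaz$ having compact support at $H=1/2$ (Lemmas \ref{lem:gammaz_fun} and \ref{lem:exp_gammaz_d0}); and second, the finiteness of $\esp[(\Delta N(t))^4]$, needed so that $\var(W_t)<\infty$ and the $O(1/n)$ variance bound applies. Both are short to verify — the first is essentially already contained in the cited lemmas, and the second follows from the conditional Poisson structure plus log-normal moment bounds — so I expect no genuine obstacle here; the proof is a direct adaptation of Lemmas \ref{lem:sample_mean_ct_covergence d0}--\ref{lem:sample_mean_ct_sq_convergence_d0}.
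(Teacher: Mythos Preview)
Your proposal is correct and follows essentially the same route as the paper: center at the true mean, define $W_t=(\Delta N(t)-\mu_{\Delta N})(\Delta N(t+L)-\mu_{\Delta N})$, use $(h+L)$-dependence of $\{W_t\}$ under $d=0$ together with Chebyshev to get the main term, and kill the cross terms via Lemma~\ref{lem:sample_mean_ct_covergence d0}. The only cosmetic difference is that the paper writes the sample autocovariance with two partial sample means $\overline{\Delta N_{1,n-L}}$ and $\overline{\Delta N_{L+1,n}}$ rather than a single $\overline{\Delta N_n}$, and it leaves the finiteness of $\var(W_t)$ implicit, whereas you spell out the fourth-moment argument.
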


\begin{proof}
By definition the sample lag-L autocovariance of $\Delta N(t)$ is
\begin{eqnarray*}
\hat{\gamma}_{\Delta N}(L) = \frac{1}{n}\sum_{t=1}^{n-L} \left[\Delta N(t) - \overline{\Delta N(t)})(\Delta N(t+L) - \overline{\Delta N(t+L)}\right]
\end{eqnarray*}
where $\overline{\Delta N_{1, n-L}} = \frac{1}{n-L}\sum_{t=1}^{n-L}\Delta N(t)$ and $\overline{\Delta N_{L+1, n}} = \frac{1}{n-L}\sum_{t=L+1}^{n}\Delta N(t)$.
\\
Hence $\hat{\gamma}_{\Delta N}(L)$ can be expressed as
\begin{eqnarray}
\hat{\gamma}_{\Delta N}(L) 
         &=& \frac{n-L}{n}\left\{\frac{1}{n-L}\sum_{t=1}^{n-L} \left[(\Delta N(t) - \mu_{\Delta N})(\Delta N(t+L) - \mu_{\Delta N})\right] \right. \notag \\
                        &-& \frac{1}{n-L}(\overline{\Delta N_{1, n-L}} - \mu_{\Delta N})\sum_{t=1}^{n-L}(\Delta N(t+L)-\mu_{\Delta N}) \notag\\
                        &-& \frac{1}{n-L}(\overline{\Delta N_{L+1, n}} - \mu_{\Delta N})\sum_{t=1}^{n-L}(\Delta N(t)-\mu_{\Delta N}) \notag \\
                        &+& \left.(\overline{\Delta N_{1, n-L}} - \mu_{\Delta N})(\overline{\Delta N_{L+1, n}} - \mu_{\Delta N})  \right\} \notag\\
         &=&  \frac{n-L}{n}\left\{\frac{1}{n-L}\sum_{t=1}^{n-L} \left[(\Delta N(t) - \mu_{\Delta N})(\Delta N(t+L) - \mu_{\Delta N})\right] \right. \notag\\
         &-&   \left.(\overline{\Delta N_{1, n-L}} - \mu_{\Delta N})(\overline{\Delta N_{L+1, n}} - \mu_{\Delta N})  \right\} \;.
\end{eqnarray}
By Lemma \ref{lem:sample_mean_ct_covergence d0},
$\overline{\Delta N(t)} \stackrel{p}{\rightarrow} \mu_{\Delta N}$ and $\overline{\Delta N(t+L)} \stackrel{p}{\rightarrow} \mu_{\Delta N}$.
Hence $(\overline{\Delta N(t)}-\mu_{\Delta N})(\overline{\Delta N(t+L)}-\mu_{\Delta N}) \stackrel{p} {\rightarrow} 0$.
\\
Let $\tilde{W}_t = (\Delta N(t)-\mu_{\Delta N})(\Delta N(t+L)-\mu_{\Delta N})$ and $\mu_{\tilde{W}} = \esp[\tilde{W}_t]=\gamma_{\Delta N}(L)$.
Define $\tilde{W}_n = \frac{1}{n-L}\sum_{t=1}^{n-L}\tilde{W}_t$.
When $d=0$, there exists $ L > h$ such that $\Delta N(t)\Delta N(t+L)$ and $\Delta N(t+i)\Delta N(t+i+L)$ are independent $\forall$ $i > L+h$.
Hence $\cov(\tilde{W}_t, \tilde{W}_{t+i})=0$ $\forall$ $i > L+h$. When $d=0$, as $n \to \infty$, $\var(\tilde{W}_n) \to 0$.
By Chebyshev's inequality, $\pr(|\tilde{W}_n - \mu_{\tilde{W}}| \ge \epsilon )\le \frac{\var(\tilde{W}_n)}{\epsilon^2} \to 0$.
Combining the convergence results of $(\overline{\Delta N(t)}-\mu_{\Delta N})(\overline{\Delta N(t+L)}-\mu_{\Delta N}) \stackrel{p} {\rightarrow} 0$ and
$\pr(|\tilde{W}_n - \mu_{\tilde{W}}| \ge \epsilon ) \to 0$, we obtain
$\hat{\gamma}_{\Delta N}(L) \stackrel{p} \rightarrow \gamma_{\Delta N}(L)$.
\end{proof}
\noindent We will use Lemmas \ref{lem:sample_mean_ct_covergence d0}, \ref{lem:sample_mean_ret_convergence_d0}, \ref{lem:sample_mean_rsq_convergence_d0},
\ref{lem:sample_mean_ct_sq_convergence_d0}, and \ref{lem:sample_autocov_ct_convergence_d0} for the proof of Theorem \ref{thm:consistent_estimators}.

\subsection*{Proof of Theorem \ref{thm:consistent_estimators}}
\begin{proof}
By Lemmas \ref{lem:sample_mean_ret_convergence_d0} and \ref{lem:sample_mean_ct_covergence d0}, $\bar{r}_n \stackrel{p}{\rightarrow} \mu_r$ and $\overline{\Delta N_n} \stackrel{p}{\rightarrow} \mu_{\Delta N}$.
Hence the joint vector
$\left(\bar{r}_n\;, \overline{\Delta N_n} \right) \stackrel{p}{\rightarrow} \left( \mu_r\;, \mu_{\Delta N} \right)$.
\\
Define $g(\bar{r}_n\;, \overline{\Delta N_n})=\frac{\bar{r}_n}{\overline{\Delta N_n}}$. It follows that
$g(\bar{r}_n\;, \overline{\Delta N_n})\stackrel{p}{\rightarrow} g(\mu_r\;, \mu_{\Delta N})$.  Therefore, $\hat{\mu}\stackrel{p}{\rightarrow} \mu$.
\\
\\
Since $\hat{\lambda} = \frac{\overline{\Delta N_n}}{e^{\frac{1}{2}}}$, by Lemma \ref{lem:sample_mean_ct_covergence d0}, $\hat{\lambda}\stackrel{p}{\rightarrow}
\frac{\mu_{\Delta N}}{\rme^{\frac{1}{2}}}=\lambda$.
\\
\\
By Lemmas \ref{lem:sample_mean_rsq_convergence_d0} and \ref{lem:sample_mean_ct_sq_convergence_d0}, $\hat{\sigma}^2_r \stackrel{p}{\rightarrow}\sigma^2_r $
and $\hat{\sigma}^2_{\Delta N} \stackrel{p}{\rightarrow}\sigma^2_{\Delta N}$.
From (\ref{eq:est_sigma_e}), $\hat{\sigma}^2_e$ is a function of $\hat{\sigma}^2_r$, $\hat{\sigma}^2_{\Delta N}$, $\hat{\mu}$, and $\hat{\lambda}$.
It follows that $\hat{\sigma}^2_e \stackrel{p}{\rightarrow} \sigma^2_e$.
\\
\\
We substitute $\hat{\lambda}$, $d=0$ and $L=1$ into Equation (\ref{eq:autocov_count}) to evaluate $\gamma_{\Delta N}(1)$ under $d=0$ and estimate $c$ by solving the equation
\begin{equation}
\gamma_{\Delta N}(1)-\hat{\gamma}_{\Delta N}(1) = 0 \;,
\end{equation}
where $\hat{\gamma}_{\Delta N}(1)$ is the sample lag-one autocovariance of the count. Therefore, we can express $\hat{c}$ as a function of $\hat{\lambda}$ and $\hat{\gamma}_{\Delta N}(1)$. Let $\hat{c} = g_c(\hat{\lambda},\hat{\gamma}_{\Delta N}(1))$ and $c = g_c(\lambda,\gamma_{\Delta N}(1))$.  From Lemma \ref{lem:sample_autocov_ct_convergence_d0},
$\hat{\gamma}_{\Delta N}(1) \stackrel{p}{\rightarrow} \gamma_{\Delta N}(1)$. Using the fact $\hat{\lambda} \stackrel{p}\rightarrow \lambda$, we obtain $g_c(\hat{\lambda},\hat{\gamma}_{\Delta N}(1)) \stackrel{p}{\rightarrow}g_c(\lambda, \gamma_{\Delta N}(1))$.

\end{proof}

\begin{lemma}\label{lem:cov_R_H}
Let $\gamma_R(L) = \cov(R_{\tilde{t}, \tilde{t}+\tilde{H}}\;, R_{\tilde{t}+L,\; \tilde{t}+L+\tilde{H}})$ and
$\gamma_r(k) = \cov(r_t, r_{t+k})$. If $d=0$,
$\forall\; 0 < L \le \tilde{H}$, \\
i)if $\tilde{H} - L $ is odd,
\begin{eqnarray}\label{eq:cov_R_H_odd}
\gamma_R(L) &=& \left[(\tilde{H}-L)m\right]\left[\var(r_t) + 2\sum_{k=1}^{m}\gamma_r(k)\right] + 2\sum_{k=m+1}^{(\tilde{H}-L)m}[(\tilde{H}-L+1)m-k]\gamma_r(k) \notag \\
 &+& \sum_{k=(\tilde{H}-L)m+1}^{(\tilde{H}+L)m-1}[(\tilde{H}+L)m-k]\gamma_r(k)\;.
\end{eqnarray}
ii) if $\tilde{H}-L$ is even,
\begin{eqnarray}\label{eq:cov_R_H_even}
\gamma_R(L) &=& \left[(\tilde{H}-L)m\right]\left[\var(r_t) + 2\sum_{k=1}^{(\tilde{H}-L)m}\gamma_r(k)\right] + \sum_{k=(\tilde{H}-L)m+1}^{(\tilde{H}+L)m-1}[(\tilde{H}+L)m-k]\gamma_r(k)\;. \notag \\
\end{eqnarray}
It follows $\gamma_R(L) = O(\tilde{H})$.
\end{lemma}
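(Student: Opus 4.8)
The plan is to collapse $\gamma_R(L)$ into a single covariance between two overlapping blocks of consecutive daily returns, count how often each daily lag is produced, and then specialise to $d=0$. First I would use that $\{r_t\}$ is covariance stationary (Lemma~\ref{lem:autocov_ret}, Corollary~\ref{cor:var_rt}) to set $\tilde{t}=0$ without loss of generality. By (\ref{eq:long term ret}),
\[
R_{\tilde{t},\tilde{t}+\tilde{H}}=\sum_{a=1}^{\tilde{H}m}r_a,\qquad R_{\tilde{t}+L,\tilde{t}+L+\tilde{H}}=\sum_{b=Lm+1}^{(L+\tilde{H})m}r_b ,
\]
so $\gamma_R(L)=\sum_{a}\sum_{b}\gamma_r(b-a)=\sum_k N(k)\,\gamma_r(k)$, where $N(k)$ is the number of pairs $(a,b)$ in the two index ranges with $b-a=k$. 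An elementary count gives the Bartlett-type triangular weight centred at $k=Lm$:
\[
N(k)=(\tilde{H}-L)m+k\ \text{ for }(L-\tilde{H})m+1\le k\le Lm,\qquad N(k)=(L+\tilde{H})m-k\ \text{ for }Lm+1\le k\le (L+\tilde{H})m-1,
\]
and $N(k)=0$ otherwise.

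Next I would split this sum at $k=0$ and fold the negative lags onto the positive ones using $\gamma_r(-k)=\gamma_r(k)$. For $0\le k\le Lm$ one has $N(k)=(\tilde{H}-L)m+k$, while $N(-k)=(\tilde{H}-L)m-k$ as long as $-k$ still lies in the first linear piece (i.e.\ $k\le(\tilde{H}-L)m-1$), so the combined contribution of $\pm k$ is $2(\tilde{H}-L)m\,\gamma_r(k)$ on that range; once $-k$ exits that piece the combined contribution drops to $N(k)\gamma_r(k)$; and the remaining lags $Lm<k\le(L+\tilde{H})m-1$ contribute $[(L+\tilde{H})m-k]\gamma_r(k)$. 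Recombining these three regions and, in particular, pinning down the index at which the ``flat'' coefficient $2(\tilde{H}-L)m$ turns over — this is where the position of the overlap midpoint $\tilde{H}m$ relative to the monthly grid enters and forces the case split on the parity of $\tilde{H}-L$ — yields the closed forms (\ref{eq:cov_R_H_odd}) and (\ref{eq:cov_R_H_even}). I expect this last piece of bookkeeping (getting the coefficients and the summation limits exactly right in the two parity cases) to be the only delicate step; everything else is a routine rearrangement of finite sums.

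Finally, for $\gamma_R(L)=O(\tilde{H})$ I would invoke $d=0$: by Lemma~\ref{lem:exp_gammaz_d0} and the definition (\ref{eq:def_h}) of $h$, $\gammaz(r)=0$ for $|r|>h$, hence by (\ref{eq:cov_ro_ret_L}) $\gamma_r(k)=0$ for all $|k|>h$, and $h<m$. Thus in (\ref{eq:cov_R_H_odd})--(\ref{eq:cov_R_H_even}) every sum whose summation index begins above $m$ (respectively above $(\tilde{H}-L)m$) vanishes identically, leaving
\[
\gamma_R(L)=(\tilde{H}-L)m\Big[\var(r_t)+2\sum_{k=1}^{h}\gamma_r(k)\Big]\ \le\ \tilde{H}m\Big(\var(r_0)+2\sum_{k=1}^{h}\cov(r_0,r_k)\Big),
\]
which is linear in $\tilde{H}$ with $m$ fixed, i.e.\ $\gamma_R(L)=O(\tilde{H})$. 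Equivalently one can bypass the explicit formulas: for $|k|\le h$ (and $L<\tilde{H}$) one has $N(k)+N(-k)=2(\tilde{H}-L)m$, so $\gamma_R(L)=\sum_{|k|\le h}N(k)\gamma_r(k)=(\tilde{H}-L)m\sum_{|k|\le h}\gamma_r(k)$ because $\sum_{|k|\le h}k\,\gamma_r(k)=0$ by oddness, giving the same $O(\tilde{H})$ bound (and for $L=\tilde{H}$ the blocks are disjoint and adjacent, so $\gamma_R(\tilde{H})=\sum_{k=1}^{h}k\,\gamma_r(k)=O(1)$).
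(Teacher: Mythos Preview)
Your approach is the same as the paper's: expand the double covariance sum, reindex by the lag, and count pairs. The paper's proof is in fact just three lines---it writes $\gamma_R(L)=\sum_{j=1}^{\tilde{H}m}\sum_{i=1}^{\tilde{H}m}\cov(r_j,r_{i+Lm})$, defines $k=|i+Lm-j|$, and asserts that (\ref{eq:cov_R_H_odd}) and (\ref{eq:cov_R_H_even}) follow. Your triangular weight formula $N(k)$ is correct and more explicit than anything in the paper, and your direct $O(\tilde{H})$ argument (bounding $\sum_{|k|\le h}N(k)\gamma_r(k)$ by $\tilde{H}m\sum_{|k|\le h}|\gamma_r(k)|$) is likewise more careful than the paper's, which simply states ``It follows $\gamma_R(L)=O(\tilde{H})$''.

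Two cautions. First, your heuristic for why the parity of $\tilde{H}-L$ enters (``the position of the overlap midpoint $\tilde{H}m$ relative to the monthly grid'') does not correspond to any feature of the triangular window: its apex is at $Lm$ and the turnover of the flat region after folding occurs at $k=(\tilde{H}-L)m$, with no dependence on parity. If you try to push the bookkeeping through you will find that the exact closed forms as displayed are sensitive to whether $(\tilde{H}-L)m\lessgtr Lm$ rather than to parity, and you may not recover (\ref{eq:cov_R_H_odd})--(\ref{eq:cov_R_H_even}) in all cases. This does not matter for the paper: the lemma is only invoked (in Lemma~\ref{lem:convg_sample_mean}) through the bound $\gamma_R(L)=O(\tilde{H})$, which your direct argument establishes cleanly. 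Second, you assert $h<m$; this holds in the paper's simulations ($c=1$, $m=20$) but is not guaranteed by the model when $c$ is small---your ``equivalently one can bypass the explicit formulas'' argument does not need it and is the safer route.
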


\begin{proof}
\begin{eqnarray}
\cov(R_{\tilde{t}, \tilde{t}+\tilde{H}}\;, R_{\tilde{t}+L,\; \tilde{t}+L+\tilde{H}}) &=& \sum_{j=\tilde{t}m+1}^{(\tilde{t}+\tilde{H})m} \sum_{i=\tilde{t}m+1}^{(\tilde{t}+\tilde{H})m} \cov(r_j, r_{i+Lm}) \notag \\
&=& \sum_{j=1}^{\tilde{H}m}\sum_{i=1}^{\tilde{H}m}\cov(r_j, r_{i+Lm}) \label{eq:cov_R_H}
\end{eqnarray}
Define $k=|i+Lm-j|$. If $\tilde{H}-L$ is odd, then (\ref{eq:cov_R_H}) has the expression of (\ref{eq:cov_R_H_odd}).
If $\tilde{H}-L$ is even, then (\ref{eq:cov_R_H}) has the expression of (\ref{eq:cov_R_H_even}).
\end{proof}

\begin{corollary}\label{coro:d0_cov_R_formula}
If $L=\tilde{H}$, then $\gamma_R(\tilde{H}) = \sum_{k=1}^{2\tilde{H}m-1} [2\tilde{H}m-k]\gamma_r(k)$.
When $d=0$, $r_t$ is $h$-dependent, and thus $\gamma_R(\tilde{H}) = \sum_{k=1}^{h} [2\tilde{H}m-k]\gamma_r(k)$.
Hence $R_{\tilde{t}, \tilde{t}+\tilde{H}}$ is $\tilde{H}+h/m$ dependent.
\end{corollary}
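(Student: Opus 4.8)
The plan is to establish the three assertions of the corollary in order: the exact form of $\gamma_R(\tilde{H})$, its truncation when $d=0$, and the resulting dependence order of the aggregated return sequence. All three follow by specializing and combining results already available in the excerpt, so the work is bookkeeping rather than new estimation.

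\textbf{The identity for $\gamma_R(\tilde{H})$.} I would obtain this by specializing Lemma \ref{lem:cov_R_H} at $L=\tilde{H}$. Since $\tilde{H}-L=0$ is even, the relevant expression is (\ref{eq:cov_R_H_even}). The prefactor $(\tilde{H}-L)m=0$ annihilates the entire first bracketed term, and the remaining sum runs from $k=(\tilde{H}-L)m+1=1$ up to $(\tilde{H}+L)m-1=2\tilde{H}m-1$ with weight $(\tilde{H}+L)m-k=2\tilde{H}m-k$. This is exactly $\gamma_R(\tilde{H})=\sum_{k=1}^{2\tilde{H}m-1}[2\tilde{H}m-k]\gamma_r(k)$. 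The underlying mechanism is the double sum (\ref{eq:cov_R_H}): at $L=\tilde{H}$ the two aggregation windows occupy the adjacent, disjoint calendar-day blocks $[\tilde{t}m+1,(\tilde{t}+\tilde{H})m]$ and $[(\tilde{t}+\tilde{H})m+1,(\tilde{t}+2\tilde{H})m]$, so every cross term $\gamma_r(i+\tilde{H}m-j)$ carries a strictly positive lag $k$ ranging over $1,\dots,2\tilde{H}m-1$, and grouping the pairs by $k$ collapses the double sum to a single weighted sum of the stated form.

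\textbf{Truncation under $d=0$.} Here I would invoke the short-memory structure already used in the proofs of Theorem \ref{thm:convergence_rho_shortM} and Lemma \ref{lem:sample_mean_ret_convergence_d0}: by Lemma \ref{lem:exp_gammaz_d0} together with the definition (\ref{eq:def_h}) of $h$, when $d=0$ one has $\gammaz(r)=0$ for $|r|>1/c$, and hence, through the autocovariance formula (\ref{eq:cov_ro_ret_L}), $\gamma_r(k)=\cov(r_0,r_k)=0$ for every integer $k>h$. Substituting this into the sum from the previous step removes all terms with $k>h$ and leaves $\gamma_R(\tilde{H})=\sum_{k=1}^{h}[2\tilde{H}m-k]\gamma_r(k)$, which is the second assertion. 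The same vanishing is precisely the statement that $\{r_t\}$ is $h$-dependent, as recorded in the argument of Lemma \ref{lem:sample_mean_rsq_convergence_d0}.

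\textbf{Dependence order of $\{R_{\tilde{t},\tilde{t}+\tilde{H}}\}$.} For a general positive month-lag $L$ I would return to the double sum $\gamma_R(L)=\sum_{j=1}^{\tilde{H}m}\sum_{i=1}^{\tilde{H}m}\gamma_r(i+Lm-j)$ from (\ref{eq:cov_R_H}), using it only to force vanishing (no weight bookkeeping is needed). Once $L>\tilde{H}$ the two windows are disjoint and the minimal daily lag occurring is $i+Lm-j\ge 1+Lm-\tilde{H}m=(L-\tilde{H})m+1$. As soon as $(L-\tilde{H})m+1>h$, i.e. $L>\tilde{H}+(h-1)/m$, every lag in the double sum exceeds $h$, so by $h$-dependence of $\{r_t\}$ each summand is zero and $\gamma_R(L)=0$; in particular this holds for all $L>\tilde{H}+h/m$. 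Hence $R_{\tilde{t},\tilde{t}+\tilde{H}}$ and $R_{\tilde{t}+L,\tilde{t}+L+\tilde{H}}$ are uncorrelated whenever $L>\tilde{H}+h/m$, which is the $(\tilde{H}+h/m)$-dependence claimed.

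\textbf{Main obstacle.} I expect the only genuine care to lie in the bookkeeping that converts the \emph{daily} memory length $h$ into a \emph{monthly} dependence order, and in the boundary case $L=\tilde{H}$. One must track that each month contributes $m$ daily returns, so the separation between two disjoint windows is $(L-\tilde{H})m$ rather than $(L-\tilde{H})$, and that $h/m$ need not be an integer (so the dependence order, stated in whole months, is effectively $\lceil \tilde{H}+h/m\rceil$). The lag-multiplicity counting in the double sum is where an off-by-one slip would most easily occur, and the case $L=\tilde{H}$ is delicate because the two windows become exactly adjacent, making the smallest surviving lag equal to $1$; verifying the weights of the few surviving low-lag terms against the window geometry is the step I would check most carefully.
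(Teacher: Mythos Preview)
Your proposal is correct and matches the paper's intended derivation: the corollary is stated without proof immediately after Lemma~\ref{lem:cov_R_H}, and your specialization of~(\ref{eq:cov_R_H_even}) at $L=\tilde{H}$ together with the $h$-dependence of $\{r_t\}$ under $d=0$ is exactly the argument the paper has in mind. One small sharpening: in your third step you conclude ``uncorrelated'' and equate this with the claimed dependence; since the paper's proof of Lemma~\ref{lem:sample_mean_rsq_convergence_d0} establishes that $r_t$ and $r_{t+L}$ are genuinely \emph{independent} for $L>h$ (via the Gaussian structure of $Z_H$ at $d=0$), the same window-separation reasoning you give upgrades immediately from zero covariance to actual independence of the aggregated returns, which is what $(\tilde{H}+h/m)$-dependence properly means.
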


\begin{lemma}\label{lem:var_R_square}
If $d=0$,
\begin{align*}
\var \left(R^2_{\tilde{t}, \tilde{t}+\tilde{H}}\right) & \sim \tilde{H}^3 C^{\ast} \;, \\
\var \left(\tilde{R}^2_{\tilde{t}, \tilde{t}+\tilde{H}}\right) & \sim  \tilde{H}^3 C^{\ast\ast}
\end{align*}
where $C^{\ast}$ and $C^{\ast\ast}$ are positive constants.
\end{lemma}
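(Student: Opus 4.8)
Here I would argue as follows.

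\medskip
\noindent\textbf{Plan.}
The idea is to reduce the fourth--moment quantity $\var\big(R^2_{\tilde{t},\tilde{t}+\tilde{H}}\big)$ to the first two moments of $R_{\tilde{t},\tilde{t}+\tilde{H}}$, using that under $d=0$ the daily returns $\{r_t\}$ form an $h$--dependent sequence with finite fourth moments. (The finite fourth moment requires $\esp[e_k^4]<\infty$, which is in any case needed for $\var(R^2_{\tilde{t},\tilde{t}+\tilde{H}})$ to be finite; note $\Delta N(t)$, being Poisson with a lognormal--type mean, has moments of every order.) Set $N=\tilde{H}m$, $m_R=\esp\big[R_{\tilde{t},\tilde{t}+\tilde{H}}\big]$ and $U=R_{\tilde{t},\tilde{t}+\tilde{H}}-m_R$. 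By (\ref{eq: mean_return_v2}) one has $\esp[r_t]=\mu\lambda\rme^{1/2}>0$, so $m_R=\mu\lambda\rme^{1/2}N$ is \emph{exactly} linear in $\tilde{H}$; and by (\ref{eq:var_agg_Rt_expression}) together with the $h$--dependence (Corollary \ref{coro:d0_cov_R_formula}), $\var(U)=\var\big(R_{\tilde{t},\tilde{t}+\tilde{H}}\big)=A_1N-2\sum_{k=1}^h k\,\gamma_r(k)\sim A_1N$, where $A_1=\var(r_0)+2\sum_{k=1}^h\cov(r_0,r_k)>0$ (positive because $\var(r_0)>0$ by Corollary \ref{cor:var_rt} and $\cov(r_0,r_k)\ge 0$ by (\ref{eq:cov_ro_ret_L}) and $\gammaz\ge 0$, Lemma \ref{lem:gammaz_fun}). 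Expanding $R^2_{\tilde{t},\tilde{t}+\tilde{H}}=m_R^2+2m_RU+U^2$ gives
\begin{equation}\label{eq:plan_varRsq}
\var\big(R^2_{\tilde{t},\tilde{t}+\tilde{H}}\big)=4m_R^2\var(U)+4m_R\,\esp[U^3]+\var(U^2).
\end{equation}

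\medskip
\noindent\textbf{Controlling the lower--order terms.}
The second step is to show that the last two terms of (\ref{eq:plan_varRsq}) are of order strictly smaller than $\tilde{H}^3$. I would use the standard cumulant bound for $h$--dependent arrays: if $\{Y_t\}_{t=1}^N$ is $h$--dependent with uniformly bounded $p$--th moments, then $\big|\cum_p(\sum_t Y_t)\big|=\big|\sum_{t_1,\dots,t_p}\cum(Y_{t_1},\dots,Y_{t_p})\big|=O(N)$, since a joint cumulant vanishes unless its index set cannot be split into two pieces more than $h$ apart, leaving only $O(N)$ nonvanishing index tuples, each with a uniformly bounded cumulant. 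Applied to $Y_t=r_t-\esp[r_t]$ this yields $\esp[U^3]=\cum_3(U)=O(\tilde{H})$ and $\cum_4(U)=O(\tilde{H})$, whence $\var(U^2)=\esp[U^4]-(\esp[U^2])^2=\cum_4(U)+2(\var U)^2=O(\tilde{H}^2)$; and since $m_R=O(\tilde{H})$, also $4m_R\esp[U^3]=O(\tilde{H}^2)$. Thus (\ref{eq:plan_varRsq}) reduces to $\var\big(R^2_{\tilde{t},\tilde{t}+\tilde{H}}\big)=4m_R^2\var(U)\,(1+o(1))$, and substituting the asymptotics of $m_R$ and $\var(U)$ from the first step,
\begin{equation}\label{eq:plan_varRsq_final}
\var\big(R^2_{\tilde{t},\tilde{t}+\tilde{H}}\big)\sim 4\big(\mu\lambda\rme^{1/2}m\big)^2(A_1m)\,\tilde{H}^3=4\mu^2\lambda^2\rme\,m^3A_1\,\tilde{H}^3,
\end{equation}
so the first assertion holds with $C^{\ast}=4\mu^2\lambda^2\rme\,m^3A_1>0$.

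\medskip
\noindent\textbf{The modified aggregate, and the main obstacle.}
For $\tilde{R}_{\tilde{t},\tilde{t}+\tilde{H}}=\sum_{t=\tilde{t}m+h+1}^{(\tilde{t}+\tilde{H})m}r_t$ the argument is verbatim the same: this is a sum of $\tilde{H}m-h$ consecutive daily returns, so its mean is $\mu\lambda\rme^{1/2}(\tilde{H}m-h)\sim\mu\lambda\rme^{1/2}m\tilde{H}$ and its variance is $\sim A_1m\tilde{H}$, while the cumulant bounds are unchanged; hence $\var\big(\tilde{R}^2_{\tilde{t},\tilde{t}+\tilde{H}}\big)\sim 4\mu^2\lambda^2\rme\,m^3A_1\,\tilde{H}^3$, giving $C^{\ast\ast}=C^{\ast}$. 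The main obstacle is the middle step: verifying that the third and fourth cumulants of $U$ (and of its analogue for $\tilde{R}_{\tilde{t},\tilde{t}+\tilde{H}}$) are only $O(\tilde{H})$. This is precisely where the $h$--dependence of $\{r_t\}$ under $d=0$ and the finiteness of the fourth moments of the efficient shocks are essential — without the latter the quantity $\var\big(R^2_{\tilde{t},\tilde{t}+\tilde{H}}\big)$ would not even be finite.
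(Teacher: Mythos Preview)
Your argument is correct. The decomposition $\var(R^2)=4m_R^2\var(U)+4m_R\esp[U^3]+\var(U^2)$ is exact, and the blanket cumulant bound $\cum_p\!\big(\sum_{t=1}^N Y_t\big)=O(N)$ for a stationary $h$--dependent sequence with bounded $p$--th moment disposes of $\esp[U^3]=\cum_3(U)=O(\tilde H)$ and $\var(U^2)=\cum_4(U)+2(\var U)^2=O(\tilde H^2)$ in one stroke. The leading term $4m_R^2\var(U)\sim 4\mu^2\lambda^2\rme\,m^3A_1\,\tilde H^3$ then gives an explicit positive constant, and the same computation for $\tilde R$ shows $C^{\ast\ast}=C^{\ast}$.

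The paper's proof takes a somewhat different route. It uses the decomposition $\var(R^2)=\var(R^{\ast 2})+4\esp[R]\,\cov(R,R^2)+4(\esp[R])^2\var(R)$ and then expands $\var(R^{\ast2})$ and $\cov(R,R^2)$ as explicit quadruple and triple sums over daily indices, splitting each into a cumulant part and a product--of--covariances part (via Peccati--Taqqu), and counting nonvanishing summands by hand. This yields $\var(R^{\ast2})\sim C_1\tilde H^2$ and $\cov(R,R^2)\sim C_2\tilde H^2$, so that two of the three pieces contribute at order $\tilde H^3$. Your approach replaces this bookkeeping by the single abstract cumulant bound for $h$--dependent arrays, which is shorter, isolates one leading term, and delivers the constant $C^\ast$ explicitly; the paper's more hands--on computation, on the other hand, is closer in spirit to the combinatorics used later in Lemmas~\ref{lem:asym_cov_RRV} and~\ref{lem:asm_var_Ct}. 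Your remark that the argument needs $\esp[e_k^4]<\infty$ is well taken; the paper invokes Lemma~\ref{lem:rt_finite_moments}, whose proof silently requires the same assumption.
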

\begin{proof}
Let $\R^{\ast} = \R - \esp[\R] = \sum_{\tilde{t} m+1}^{(\tilde{t}+\tilde{H})m}(r_t - \esp[r_t])$.
Let $r^{\ast}_t = r_t - \esp[r_t]$. Then ${\R^{\ast}}^2 = \sum_{i=\tilde{t} m+1}^{(\tilde{t}+\tilde{H})m}\sum_{j=\tilde{t} m+1}^{(\tilde{t}+\tilde{H})m}r^{\ast}_i r^{\ast}_j$.
Hence
\begin{equation}
\var(\R^2) = \var(\R^{\ast2} ) + 4\esp[\R]\cov(\R, \R^2) + 4(\esp[\R])^2\var(\R)\;,
\end{equation}
Applying Proposition 3.2.1 by Peccati and Taqqu(2011), we obtain
\begin{align*}
\var(\R^{\ast2} ) &= \cov\left(\R^{\ast2}, \R^{\ast2}\right) \\
  & = \sum_{i=\tilde{t} m+1}^{(\tilde{t}+\tilde{H})m}\sum_{j=\tilde{t} m+1}^{(\tilde{t}+\tilde{H})m}\sum_{u=\tilde{t} m+1}^{(\tilde{t}+\tilde{H})m}
\sum_{v=\tilde{t}m+1}^{(\tilde{t}+\tilde{H})m}
\cov(r^{\ast}_ir^{\ast}_j,\; r^{\ast}_u r^{\ast}_v) \\
  & = \sum_{i=\tilde{t} m+1}^{(\tilde{t}+\tilde{H})m}\sum_{j=\tilde{t} m+1}^{(\tilde{t}+\tilde{H})m}\sum_{u=\tilde{t} m+1}^{(\tilde{t}+\tilde{H})m}
      \sum_{v=\tilde{t}m+1}^{(\tilde{t}+\tilde{H})m} \cum(r^{\ast}_i,r^{\ast}_j, r^{\ast}_u, r^{\ast}_v)  \\
  & + 2 \sum_{i=\tilde{t} m+1}^{(\tilde{t}+\tilde{H})m}\sum_{j=\tilde{t} m+1}^{(\tilde{t}+\tilde{H})m}\sum_{u=\tilde{t} m+1}^{(\tilde{t}+\tilde{H})m}
      \sum_{v=\tilde{t}m+1}^{(\tilde{t}+\tilde{H})m}\cov(r^{\ast}_i,r^{\ast}_v )\cov(r^{\ast}_u,r^{\ast}_j) \\
  & = \sum_{i=\tilde{t} m+1}^{(\tilde{t}+\tilde{H})m}\sum_{j=\tilde{t} m+1}^{(\tilde{t}+\tilde{H})m}\sum_{u=\tilde{t} m+1}^{(\tilde{t}+\tilde{H})m}
      \sum_{v=\tilde{t}m+1}^{(\tilde{t}+\tilde{H})m}\cum(r_i,r_j, r_u, r_v) \tag{cum1} \\
  & + 2 \sum_{i=\tilde{t} m+1}^{(\tilde{t}+\tilde{H})m}\sum_{j=\tilde{t} m+1}^{(\tilde{t}+\tilde{H})m}\sum_{u=\tilde{t} m+1}^{(\tilde{t}+\tilde{H})m}
         \sum_{v=\tilde{t}m+1}^{(\tilde{t}+\tilde{H})m}\cov(r_i,r_v )\cov(r_u,r_j)\tag{cov1}
\end{align*}
When $d=0$, $\{r_t\}$ is h-dependent. For a fixed $i$, if $|i-v| > h$, then $r_i$ is independent of $r_v$. Similarly, for a fixed $u$, if
$|u-j| > h$, then $r_u$ is independent of $r_j$.
\\
Hence
\begin{align*}
\mathrm{cov1} & = \sum_{i=\tilde{t}m+1}^{(\tilde{t}+\tilde{H})m}\sum_{v=\tilde{t}m+1}^{(\tilde{t}+\tilde{H})m}\cov(r_i, r_v)
\sum_{u=\tilde{t}m+1}^{(\tilde{t}+\tilde{H})m}\sum_{j=\tilde{t}m+1}^{(\tilde{t}+\tilde{H})m}\cov(r_u, r_j) \\
 &= \sum_{i=\tilde{t}m+1}^{(\tilde{t}+\tilde{H})m}\sum_{|i-v| \le h}\cov(r_i, r_v)
\sum_{u=\tilde{t}m+1}^{(\tilde{t}+\tilde{H})m}\sum_{|u-j| \le h}\cov(r_u, r_j) \\
 & \sim \tilde{H}^2 m \sum_{|i-v| \le h}\sum_{|u-j| \le h}\cov(r_i, r_v)\cov(r_u, r_j)\;.
\end{align*}
If any group of ${r_i, r_j, r_u, r_v}$ is independent of the remaining elements, $\cum(r_i,r_j,r_u,r_v)=0$.
Hence for fixed $i$, if $|i-v| \leq h$, $|i-j| \leq h$, and $|i-u| \leq h$, $\cum(r_i,r_j,r_u,r_v) \neq 0$.
Thus
\[
\mbox{cum1} \sim \tilde{H}m \sum_{|i-j| \le h}\sum_{|i-u| \le h}\sum_{|i-v| \le h}\cum(r_i,r_j, r_u, r_v)\;.
\]
Combining these results, $\var({\R^{\ast}}^2 ) \sim \tilde{H}^2 C_1$, where
\begin{align*}
C_1 & = m\sum_{|i-v| \le h}\sum_{|u-j| \le h}\cov(r_i, r_v)\cov(r_u, r_j)\\
  & + m\sum_{|i-j| \le h}\sum_{|i-u| \le h}\sum_{|i-v| \le h}\cum(r_i,r_j,r_u,r_v) \;.
\end{align*}

Similarly,
\begin{align*}
\cov\left(\R, \R^2 \right) & = \sum_{i=\tilde{t}m+1}^{(\tilde{t}+\tilde{H})m}\sum_{j=\tilde{t}m+1}^{(\tilde{t}+\tilde{H})m}\sum_{u=\tilde{t}m+1}^{(\tilde{t}+\tilde{H})m}
\cov(r_i, r_jr_u) \\
& = \sum_{i=\tilde{t}m+1}^{(\tilde{t}+\tilde{H})m}\sum_{j=\tilde{t}m+1}^{(\tilde{t}+\tilde{H})m}\sum_{u=\tilde{t}m+1}^{(\tilde{t}+\tilde{H})m}
\cum(r_i, r_j, r_u) \tag{cum2}\\
& + 2 \sum_{i=\tilde{t}m+1}^{(\tilde{t}+\tilde{H})m}\sum_{j=\tilde{t}m+1}^{(\tilde{t}+\tilde{H})m}\sum_{u=\tilde{t}m+1}^{(\tilde{t}+\tilde{H})m}
\esp[r_v]\cov(r_i, r_j) \tag{cov2}
\end{align*}
Applying the similar argument as for cum1 and cov1, we obtain
\begin{align*}
\mbox{cum2} & \sim \tilde{H}m \sum_{|i-u| \le h}\sum_{|i-v| \le h}\cum(r_i, r_j, r_u)  \\
\mbox{cov2} & = \tilde{H}m\esp[r_v]\sum_{i=\tilde{t}m+1}^{(\tilde{t}+\tilde{H})m}\sum_{j=\tilde{t}m+1}^{(\tilde{t}+\tilde{H})m}\cov(r_i, r_j)\ind{|i-j|\leq h} \\
            & = \tilde{H}m\esp[r_v]\sum_{j=\tilde{t}m+1}^{\tilde{t}m+h}\left((\tilde{t}+\tilde{H})m - |j-h|\right)\cov(r_0, r_{j-\tilde{t}m-h}) \\
            & \sim \tilde{H}^2 C_2  \;,
\end{align*}
where $C_2$ is a positive constant.
Hence $\cov\left(\R, \R^2 \right) \sim \tilde{H}^2 C_2$.
Combining the results for $\var({\R^{\ast}}^2 )$ and $\cov\left(\R, \R^2 \right)$ as well as the results that $\esp[\R] = \tilde{H}m\esp[r_t]$ and $\var(\R)\sim \tilde{H} C_3$ when $d=0$, where $C_3$ is a constant (see (\ref{eq:var_agg_Rt_expression})), it is straightforward to show that $\var \left(R^2_{\tilde{t}, \tilde{t}+\tilde{H}}\right) \sim \tilde{H}^3 C^{\ast}$, where $C^{\ast}$ is a constant.
With similar arguments, it can be shown $\var \left(\tR^2 \right) \sim \tilde{H}^3 C^{\ast\ast}$.

\end{proof}

\begin{lemma}\label{lem:var_RV_square}
If $d=0$,
\[
\var \left({RV}^2_{\tilde{t}-\tilde{H}, \tilde{t}}\right) \sim \tilde{H}^3 D^{\ast}
\]
where $D^{\ast
}$ is a positive constant.
\end{lemma}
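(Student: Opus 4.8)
The plan is to run the same argument as in the proof of Lemma~\ref{lem:var_R_square}, with the backward realized variance $\RV$ in the role played there by $\R$ and the centred squared returns $s_t := r_t^2 - \esp[r_t^2]$ in the role of the centred returns. Set $\mu_{RV} := \esp[\RV] = (\tilde{H}m)\,\esp[r_0^2]$ and $RV^{\ast} := \RV - \mu_{RV} = \sum_{t \in B} s_t$, where $B = \{(\tilde{t}-\tilde{H})m+1,\dots,\tilde{t}m\}$, and expand $\RV^2 = (RV^{\ast})^2 + 2\mu_{RV}RV^{\ast} + \mu_{RV}^2$ to obtain
\[
\var(\RV^2) = \var\!\big((RV^{\ast})^2\big) + 4\mu_{RV}\,\esp\big[(RV^{\ast})^3\big] + 4\mu_{RV}^2\,\var(RV^{\ast}).
\]
Because $d=0$, Corollary~\ref{coro:d0_cov_R_formula} (via the proof of Theorem~\ref{thm:convergence_rho_shortM}) shows that $\{r_t\}$, and hence $\{s_t\}$, is $h$-dependent; this is the single structural input that controls every term. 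Throughout I will assume the efficient shocks have a finite eighth moment (e.g.\ are Gaussian, as in the simulations), which makes all the joint cumulants below finite.

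Then I would estimate the three pieces separately. For the first, by Proposition~3.2.1 of Peccati and Taqqu (2011),
\[
\var\!\big((RV^{\ast})^2\big) = \sum_{i,j,u,v \in B}\Big[\cum(s_i,s_j,s_u,s_v) + \cov(s_i,s_u)\cov(s_j,s_v) + \cov(s_i,s_v)\cov(s_j,s_u)\Big];
\]
by $h$-dependence each covariance--covariance double sum equals $\var(RV^{\ast})^2 = O(\tilde{H}^2)$, while the fourth-cumulant term is supported on index quadruples of bounded diameter and is $O(\tilde{H})$, so $\var((RV^{\ast})^2) = O(\tilde{H}^2)$. For the second, $\esp[(RV^{\ast})^3] = \sum_{i,j,u \in B}\cum(s_i,s_j,s_u) = O(\tilde{H})$, again by $h$-dependence, so $4\mu_{RV}\esp[(RV^{\ast})^3] = O(\tilde{H}^2)$. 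For the third, formula~(\ref{eq:rho_d0_denm_real_var}) gives $\var(RV^{\ast}) = (\tilde{H}m)\var(r_0^2) + 2\sum_{L=1}^{h}(\tilde{H}m - L)\cov(r_0^2,r_L^2) \sim (\tilde{H}m)A_2$, with $A_2 = \var(r_0^2) + 2\sum_{k=1}^{h}\cov(r_0^2,r_k^2)$ as in Theorem~\ref{thm:asym_normal_dist_norm_rho}.

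Assembling, only the last term is of order $\tilde{H}^3$: $4\mu_{RV}^2\var(RV^{\ast}) \sim 4(\tilde{H}m\,\esp[r_0^2])^2(\tilde{H}m A_2) = 4m^3(\esp[r_0^2])^2 A_2\,\tilde{H}^3$, while the other two contributions are $O(\tilde{H}^2)$. Hence $\var(\RV^2) \sim \tilde{H}^3 D^{\ast}$ with $D^{\ast} = 4m^3(\esp[r_0^2])^2 A_2 > 0$; the positivity follows from $\esp[r_0^2] > 0$ and from $A_2 > 0$, the latter being the long-run variance of the nondegenerate $h$-dependent stationary sequence $\{r_t^2\}$ — the very quantity appearing in the limiting variance of Theorem~\ref{thm:asym_normal_dist_norm_rho}.

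The main obstacle is not conceptual but bookkeeping: exactly as in Lemma~\ref{lem:var_R_square}, one must verify carefully which of the quadruple and triple index sums survive under $h$-dependence and check that the surviving ``banded'' sums are genuinely $O(\tilde{H}^2)$ and $O(\tilde{H})$ rather than $O(\tilde{H}^3)$, so that $4\mu_{RV}^2\var(RV^{\ast})$ dominates. The only genuinely new feature relative to Lemma~\ref{lem:var_R_square} is that the joint cumulants of the squared returns reach eighth order in the underlying $r_t$, so one must keep the moment hypothesis on $e_k$ in force; this affects only constants, not the structure of the argument.
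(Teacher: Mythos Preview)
Your proposal is correct and is precisely the approach the paper intends: its entire proof of Lemma~\ref{lem:var_RV_square} is the sentence ``We skip the proof since it is similar to the proof of \ref{lem:var_R_square}.'' Your centred decomposition $\var(\RV^2)=\var((RV^{\ast})^2)+4\mu_{RV}\esp[(RV^{\ast})^3]+4\mu_{RV}^2\var(RV^{\ast})$ is in fact a slightly cleaner variant of the expansion used in the paper's proof of Lemma~\ref{lem:var_R_square}, and it yields the explicit constant $D^{\ast}=4m^3(\esp[r_0^2])^2A_2$, which the paper does not state.
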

\begin{proof}
We skip the proof since it is similar to the proof of \ref{lem:var_R_square}.
\end{proof}

\begin{corollary}\label{coro:cov_R_RV_square}
For every $0 < L \le \tilde{H}$,
\[
\cov \left(R^2_{\tilde{t}, \tilde{t}+\tilde{H}}, \; R^2_{\tilde{t}+L, \tilde{t}+L+\tilde{H}}\right) \sim D_1\tilde{H}^3
\]
\[
\cov \left(RV^2_{\tilde{t}-\tilde{H}, \tilde{t}}, \; RV^2_{\tilde{t}+L-\tilde{H}, \tilde{t}+L}\right) \sim D_2 \tilde{H}^3
\]
where $D_1$ and $D_2$ are positive constants.
\end{corollary}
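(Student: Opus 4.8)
The plan is to trace both covariances back to the single term that produced the $\tilde{H}^{3}$ growth in Lemmas \ref{lem:var_R_square} and \ref{lem:var_RV_square}, namely the one carrying the nonzero mean of the returns (and of the squared returns in the second case), and to dispose of everything else using the $h$-dependence of $\{r_{t}\}$ and $\{r_{t}^{2}\}$ valid when $d=0$. By stationarity of $\{r_{t}\}$ I may fix $\tilde{t}$; write $R_{A}:=\R$ and $R_{B}:=R_{\tilde{t}+L,\tilde{t}+L+\tilde{H}}$, which are sums of daily returns over two windows of $\tilde{H}m$ days overlapping in $(\tilde{H}-L)m$ days, and set $R_{A}^{\ast}:=R_{A}-\esp[R_{A}]$ and $R_{B}^{\ast}:=R_{B}-\esp[R_{B}]$. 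Since $\esp[R_{A}]=\esp[R_{B}]=\tilde{H}m\,\esp[r_{0}]=\tilde{H}m\mu\lambda\rme^{1/2}$ by (\ref{eq: mean_return_v2}), expanding the squares and discarding the constant pieces (which have zero covariance) gives
\begin{align*}
\cov(R_{A}^{2},R_{B}^{2}) &= \cov(R_{A}^{\ast2},R_{B}^{\ast2}) + 2\esp[R_{B}]\cov(R_{A}^{\ast2},R_{B}^{\ast}) + 2\esp[R_{A}]\cov(R_{A}^{\ast},R_{B}^{\ast2})\\
&\quad + 4\,\esp[R_{A}]\esp[R_{B}]\cov(R_{A}^{\ast},R_{B}^{\ast}).
\end{align*}

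For the last term, Lemma \ref{lem:cov_R_H} together with the $h$-dependence (so $\gamma_{r}(k)=0$ for $k>h$) gives, for fixed $L$ and $\tilde{H}$ large, $\cov(R_{A}^{\ast},R_{B}^{\ast})=\gamma_{R}(L)=(\tilde{H}-L)m\,A_{1}$ with $A_{1}=\var(r_{0})+2\sum_{k=1}^{h}\cov(r_{0},r_{k})>0$; hence this term equals $4m^{3}(\mu\lambda\rme^{1/2})^{2}A_{1}\,\tilde{H}^{2}(\tilde{H}-L)\sim 4m^{3}(\mu\lambda\rme^{1/2})^{2}A_{1}\,\tilde{H}^{3}$. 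It remains to check the other three terms are $o(\tilde{H}^{3})$, and for this I would rerun the diagram/cumulant bookkeeping of the proof of Lemma \ref{lem:var_R_square}: by Proposition 3.2.1 of Peccati and Taqqu (2011) (and assuming the efficient shocks have enough finite moments), $\cov(R_{A}^{\ast2},R_{B}^{\ast2})$ is a quadruple sum over $i,j\in A$, $u,v\in B$ of products of two covariances plus fourth cumulants. Under $h$-dependence a surviving product $\cov(r_{i},r_{u})\cov(r_{j},r_{v})$ (or $\cov(r_{i},r_{v})\cov(r_{j},r_{u})$) forces each of the two pairs to lie within $h$ and hence near the overlap window, giving $O(((\tilde{H}-L)m)^{2})=O(\tilde{H}^{2})$ surviving terms, while the fourth-cumulant terms need all four indices within $h$ of each other and of the overlap, giving only $O(\tilde{H})$ terms; so $\cov(R_{A}^{\ast2},R_{B}^{\ast2})=O(\tilde{H}^{2})$. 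Likewise $\cov(R_{A}^{\ast2},R_{B}^{\ast})=\sum_{i,j\in A,\,u\in B}\cum(r_{i},r_{j},r_{u})=O(\tilde{H})$ by the same clustering argument, so the two middle terms are $O(\tilde{H}^{2})$ after multiplying by $\esp[R_{B}]=O(\tilde{H})$. Dividing by $\tilde{H}^{3}$ and letting $\tilde{H}\to\infty$ yields the first claim with $D_{1}=4m^{3}(\mu\lambda\rme^{1/2})^{2}A_{1}>0$.

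The realized-variance statement follows by the identical argument with $r_{t}$ replaced by $r_{t}^{2}$ throughout: $RV$ over a window of $\tilde{H}m$ days has mean $\tilde{H}m\,\esp[r_{0}^{2}]=\tilde{H}m\var(r_{0})>0$, the process $\{r_{t}^{2}\}$ is $h$-dependent when $d=0$ (Lemma \ref{lem:exp_gammaz_d0} and Theorem \ref{thm:long memory squared returns}), the overlap covariance is $(\tilde{H}-L)m\,A_{2}$ with $A_{2}=\var(r_{0}^{2})+2\sum_{k=1}^{h}\cov(r_{0}^{2},r_{k}^{2})>0$, and the remaining fluctuation terms are again $o(\tilde{H}^{3})$ by the cumulant counting, giving $D_{2}=4m^{3}\var(r_{0})^{2}A_{2}>0$.

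The main obstacle is the last step of each case: although the diagram/cumulant counting mirrors Lemma \ref{lem:var_R_square}, here the two index windows $A$ and $B$ are distinct, so one must verify that the covariance- and cumulant-vanishing constraints genuinely confine the surviving indices to the $(\tilde{H}-L)m$-long overlap, leaving only $O(\tilde{H}^{2})$ (not $O(\tilde{H}^{3})$) surviving terms, so that the mean-driven term is the one that sets the $\tilde{H}^{3}$ rate.
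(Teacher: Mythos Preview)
Your approach is essentially the same as the paper's: the paper states this corollary without proof immediately after Lemma~\ref{lem:var_R_square} and Lemma~\ref{lem:var_RV_square}, and the intended argument is precisely the one you give --- center, expand the product of squares, and reuse the cumulant bookkeeping of the preceding lemma to isolate the mean-driven term $4\,\esp[R_A]\esp[R_B]\cov(R_A^{\ast},R_B^{\ast})$ as the unique contribution of order~$\tilde{H}^3$. One shortcut you could take for the subleading pieces: once Lemma~\ref{lem:var_R_square} gives $\var(R^{\ast2})=O(\tilde{H}^2)$ and $\var(R)=O(\tilde{H})$, Cauchy--Schwarz alone yields $\cov(R_A^{\ast2},R_B^{\ast2})=O(\tilde{H}^2)$ and $\cov(R_A^{\ast2},R_B^{\ast})=O(\tilde{H}^{3/2})$, so you need not rerun the index-counting for distinct windows.

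Two small corrections. First, in your $D_2$ you write $\esp[r_0^2]=\var(r_0)$; since $\esp[r_0]=\mu\lambda\rme^{1/2}\neq0$ this should be $\esp[r_0^2]=\var(r_0)+(\esp[r_0])^2$, which only shifts the constant. Second, your step $\tilde{H}^2(\tilde{H}-L)\sim\tilde{H}^3$ holds for fixed $L$ but not uniformly over $0<L\le\tilde{H}$ (at $L=\tilde{H}$ the leading term vanishes); the paper's statement is equally loose here, and in its only application (Lemma~\ref{lem:convg_sample_mean_sqRet}) what is actually needed is the uniform $O(\tilde{H}^3)$ bound, which your argument does deliver.
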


\begin{lemma}
\begin{equation}\label{eq:var_rt_sq}
\var(r^2_t) =  \esp[r^4_t] -(\var(r_t))^2 - 2(\esp[r_t])^2 \var(r_t) - (\esp[r_t])^4
\end{equation}
where
$\var(r_t)$ is given by (\ref{eq:var_r0}),
\begin{equation}
\esp[r^4_t] = (\mu^4+6\mu^2\sigma^2_e+3\sigma_e^4)\esp[\nu] + (7\mu^4 + 18\mu^2\sigma_e^2)\esp[\nu^2] + (6\mu^4+6\mu^2\sigma_e^2)\esp[\nu^3] + \mu^4 \esp[\nu^4] \;,
\end{equation}
$\nu = \esp[\Delta N(t)|\Lambda]$, and $\esp[\nu]$, $\esp[\nu^2]$, $\esp[\nu^3]$, and $\esp[\nu^4]$ are given by (\ref{eq:nu}), (\ref{eq:nu_sq}), (\ref{eq:nu_cube}), and (\ref{eq:nu_quad}).
\end{lemma}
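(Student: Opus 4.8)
The plan is to treat the two displayed claims separately. The formula for $\var(r_t^2)$ is a purely algebraic identity: since $\var(r_t^2)=\esp[r_t^4]-(\esp[r_t^2])^2$ and, by Corollary~\ref{cor:var_rt} together with (\ref{eq: mean_return_v2}), $\esp[r_t^2]=\var(r_t)+(\esp[r_t])^2$, expanding $(\var(r_t)+(\esp[r_t])^2)^2=(\var(r_t))^2+2(\esp[r_t])^2\var(r_t)+(\esp[r_t])^4$ gives (\ref{eq:var_rt_sq}) immediately. Hence the substance is the evaluation of $\esp[r_t^4]$.

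For $\esp[r_t^4]$ I would condition in two stages, writing $r_t=\mu\,\Delta N(t)+S_t$ with $S_t=\sum_{k=N(t-1)+1}^{N(t)}e_k$. First, conditionally on the point process $N$, $S_t$ is a sum of exactly $\Delta N(t)$ i.i.d.\ mean-zero efficient shocks that are independent of $N$; expanding $r_t^4=\sum_{j=0}^{4}\binom{4}{j}(\mu\,\Delta N(t))^{4-j}S_t^{\,j}$ and inserting the conditional moments $\esp[S_t\mid N]=\esp[S_t^3\mid N]=0$, $\esp[S_t^2\mid N]=\sigma_e^2\,\Delta N(t)$ and $\esp[S_t^4\mid N]=3\sigma_e^4(\Delta N(t))^2$ (the last being the fourth-moment identity for a sum of i.i.d.\ Gaussian shocks, i.e.\ using $\esp[e_k^3]=0$ and $\esp[e_k^4]=3\sigma_e^4$) yields $\esp[r_t^4\mid N]$ as a degree-four polynomial in $\Delta N(t)$ whose coefficients depend only on $\mu$ and $\sigma_e$. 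Second, conditionally on $\Lambda$, $\Delta N(t)$ is Poisson with mean $\nu=\esp[\Delta N(t)\mid\Lambda]=\lambda\int_{t-1}^{t}\rme^{Z_H(s)}\,\rmd s$; substituting the Poisson moment identities $\esp[(\Delta N(t))^k\mid\Lambda]$ for $k=1,2,3,4$ (namely $\nu$, $\nu^2+\nu$, $\nu^3+3\nu^2+\nu$, $\nu^4+6\nu^3+7\nu^2+\nu$) and regrouping by powers of $\nu$ gives $\esp[r_t^4\mid\Lambda]$ as a polynomial in $\nu$ with the coefficients shown. Taking the outer expectation and invoking (\ref{eq:nu})--(\ref{eq:nu_quad}) for $\esp[\nu^k]$ then completes the formula.

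No limiting or convergence argument is involved, so I expect no genuine obstacle; the only care needed is bookkeeping: tracking which mixed moments of $S_t$ given $N$ vanish (the odd ones, by symmetry of the shocks), and then correctly collecting the coefficients of $\nu,\nu^2,\nu^3,\nu^4$ after the Poisson substitution so that they match the coefficients of $\esp[\nu],\esp[\nu^2],\esp[\nu^3],\esp[\nu^4]$ in the statement. If one prefers not to assume Gaussian efficient shocks, the same two-stage computation goes through with $\esp[e_k^3]$ and $\esp[e_k^4]$ kept general, at the cost of two additional terms in the intermediate polynomial.
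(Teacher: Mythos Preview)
Your plan is correct and, for the evaluation of $\esp[r_t^4]$, essentially identical to the paper's: both condition first on $N$ to integrate out the efficient shocks, then on $\Lambda$ to replace the Poisson moments $\esp[(\Delta N(t))^k\mid\Lambda]$ by polynomials in $\nu$, and finally take the outer expectation. For the identity~(\ref{eq:var_rt_sq}) the paper takes a longer route, invoking the cumulant expansion of Peccati--Taqqu to write $\var(r_t^2)=\cum_4(r_t)+4\esp[r_t]\cum_3(r_t)+2(\var r_t)^2+4(\esp[r_t])^2\var r_t$ and then converting the cumulants to centered moments; your one-line argument from $\var(r_t^2)=\esp[r_t^4]-(\var r_t+(\esp r_t)^2)^2$ reaches the same formula with far less machinery. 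One bookkeeping caution: if you carry your Gaussian-shock identity $\esp[S_t^4\mid N]=3\sigma_e^4(\Delta N(t))^2$ through the Poisson substitution, the coefficient of $\esp[\nu^2]$ you obtain will be $7\mu^4+18\mu^2\sigma_e^2+3\sigma_e^4$ rather than the $7\mu^4+18\mu^2\sigma_e^2$ displayed in the statement; the paper's intermediate step records the $S_t^4$ contribution as $3\sigma_e^4\,\esp[\Delta N(t)]$ instead of $3\sigma_e^4\,\esp[(\Delta N(t))^2]$, so be prepared for that discrepancy in a single coefficient when you reconcile your computation with the statement.
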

\begin{proof}
Applying Proposition 3.2.1 by Peccati and Taqqu(2011), we can express $\var(r_t^2)$ as
\begin{equation}
 \var(r^2_t) 
   = \cum(r_t, r_t, r_t, r_t) + 4 \esp[r_t]\cum(r_t, r_t, r_t) + 2\cum(r_t, r_t)\cum(r_t, r_t) + 4(\esp[r_t])^2\cum(r_t, r_t)
\end{equation}
\\
Let $r^{\ast}_t = r_t - \esp[r_t]$. Then $\cum(r_t, r_t, r_t, r_t) = \cum(r^{\ast}_t, r^{\ast}_t, r^{\ast}_t, r^{\ast}_t)$, and $\cum(r_t, r_t, r_t) = \cum(r^{\ast}_t, r^{\ast}_t, r^{\ast}_t)$. Therefore,
\begin{equation}\label{eq:var_r_sq}
\var(r^2_t)  = \cum(r^{\ast}_t, r^{\ast}_t, r^{\ast}_t, r^{\ast}_t)
    + 4 \esp[r_t]\cum(r^{\ast}_t, r^{\ast}_t, r^{\ast}_t)
    + 2 (\var(r_t))^2
    + 4(\esp[r_t])^2\var(r_t) \;.
\end{equation}
 where
 \begin{align*}
 \cum(r^{\ast}_t, r^{\ast}_t, r^{\ast}_t, r^{\ast}_t) & = -3 \esp[(r^{\ast}_t)^2]^2 + \esp[(r^{\ast}_t)^4] \\
 \cum(r^{\ast}_t, r^{\ast}_t, r^{\ast}_t) &= \esp[(r^{\ast}_t)^3]
 \end{align*}
\\
Let $\mu_r = \esp[r_t]$.  Then
\begin{align}
\esp[(r^{\ast}_t)^4] &= \esp[r^4_t] - 4\mu_r\esp[r^3_t] + 6\mu^2_r\esp[r^2_t] - 3\mu_r^4 \label{eq:r_cen_4th_moment} \\
\esp[(r^{\ast}_t)^3]  &= \esp[r^3_t] - 3\mu_r\esp[r^2_t] + 2\mu_r^3 \label{eq:r_cen_3rd_moment}\\
\esp[(r^{\ast}_t)^2]  &= \var(r_t)\label{eq:r_cen_2nd_moment}
\end{align}
Plugging (\ref{eq:r_cen_4th_moment}), (\ref{eq:r_cen_3rd_moment}), and (\ref{eq:r_cen_2nd_moment}) into (\ref{eq:var_r_sq}), we obtain
\begin{equation}\label{eq:var_r_sq_expression}
\var(r^2_t) = -(\var(r_t))^2 + \esp[r^4_t] - 2\mu^2_r \var(r_t) - \mu^4_r
\end{equation}
where
\begin{align}
\esp[r^4_t] &= \esp\left[\mu\Delta N(t)+\sum_{k=N(t-1)+1}^{N(t)}e_k \right]^4 \notag \\
            &= \mu^4 \esp[(\Delta N(t))^4] + 6\mu^2\sigma_e^2 \esp[(\Delta N(t))^3] + 3\esp[\Delta N(t)]\sigma_e^4 \label{eq:r_4th_moment}
\end{align}
\\
\\
Conditional on $\Lambda$, $\Delta N(t)$ is Poisson process. Let $\nu = \esp[\Delta N(t)|\Lambda]$. $\nu = \lambda \int_{t-1}^t \rme^{Z_H(s)}\rmd s$.
Hence
\begin{align}
\esp[(\Delta N(t))^4] & = \esp[\esp[(\Delta N(t))^4|\Lambda]] = \esp[\nu] + 7\esp[\nu^2] + 6 \esp[\nu^3] + \esp[\nu^4] \\
\esp[(\Delta N(t))^3] & = \esp[\esp[(\Delta N(t))^3|\Lambda]] = \esp[\nu] + 3\esp[\nu^2] + \esp[\nu^3]
\end{align}
and thus $\esp[r^4_0]$ can be represented as
\begin{equation}
\esp[r^4_t] = (\mu^4+6\mu^2\sigma^2_e+3\sigma_e^4)\esp[\nu] + (7\mu^4 + 18\mu^2\sigma_e^2)\esp[\nu^2] + (6\mu^4+6\mu^2\sigma_e^2)\esp[\nu^3] + \mu^4 \esp[\nu^4]\;,
\end{equation}
where
\begin{align}
\esp[\nu] &= \lambda \rme^{1/2} \label{eq:nu}\\
\esp[\nu^2] &=  \lambda^2 \int_{-1}^0 \int_{-1}^0 \esp[\rme^{Z_H(u)}\rme^{Z_H(v)}]\rmd u\rmd v =  \lambda^2 \int_{-1}^0 \int_{-1}^0 \rme^{1+\gammaz(t-s)}\rmd t \rmd s \label{eq:nu_sq}\\
\esp[\nu^3] &=  \lambda^3 \int_{-1}^0 \int_{-1}^0 \int_{-1}^0 \esp[\rme^{Z_H(t)}\rme^{Z_H(s)}\rme^{Z_H(u)}]\rmd t \rmd s\rmd u  \notag \\
            &= \lambda^3 \int_{-1}^0 \int_{-1}^0 \int_{-1}^0  \rme^{\frac{3}{2}+\gammaz(t-s)+\gammaz(t-u)+\gammaz(s-u)}\rmd t \rmd s \rmd u \label{eq:nu_cube}\\
\esp[\nu^4] &=  \lambda^4 \int_{-1}^0 \int_{-1}^0 \int_{-1}^0 \int_{-1}^0\esp[\rme^{Z_H(t)}\rme^{Z_H(s)}\rme^{Z_H(u)}\rme^{Z_H(u)}]\rmd t \rmd s \rmd u\rmd v \notag \notag\\
            &= \lambda^4 \int_{-1}^0 \int_{-1}^0 \int_{-1}^0 \int_{-1}^0  \rme^{2+\gammaz(t-s)+\gammaz(t-u)+\gammaz(t-v)\gammaz(s-u)+\gammaz(s-v)+\gammaz(u-v)}\rmd t \rmd s \rmd u \rmd v \label{eq:nu_quad}
\end{align}

\end{proof}

\begin{lemma}\label{lem:convg_sample_mean}
Define
\begin{eqnarray*}
\overline{R_{\tilde{H}}} &=& \frac{1}{\tilde{T}-2\tilde{H}}\sum_{\tilde{t}=\tilde{H}+1}^{\tilde{T}-\tilde{H}} R_{\tilde{t}, \tilde{t}+\tilde{H}} \\
\overline{RV_{\tilde{H}}} &=& \frac{1}{\tilde{T}-2\tilde{H}}\sum_{\tilde{t}=\tilde{H}+1}^{\tilde{T}-\tilde{H}} RV_{\tilde{t}-\tilde{H}, \tilde{t}}\;,
\end{eqnarray*}
where $\tilde{H} = \tilde{T}^{\kappa}$, $0 < \kappa < 1$.
If $d=0$, for every $\epsilon^{\prime} > 0$,
\begin{eqnarray*}
\frac{\overline{R_{\tilde{H}}}}{\esp[R_{\tilde{t}, \tilde{t}+\tilde{H}}]} &=& 1 + o_p(\tilde{T}^{-1/2+\epsilon^{\prime}/2})\;, \\
\frac{\overline{RV_{\tilde{H}}}}{\esp[RV_{\tilde{t}-\tilde{H}, \tilde{t}}]} &=& 1 + o_p(\tilde{T}^{-1/2+\epsilon^{\prime}/2})\;.
\end{eqnarray*}

\end{lemma}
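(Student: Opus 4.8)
The plan is to prove both statements by Chebyshev's inequality, showing that $\tilde T^{1-\epsilon'}$ times the variance of each normalized sample mean tends to zero. I would treat $\overline{R_{\tilde{H}}}$ in detail and then indicate how $\overline{RV_{\tilde{H}}}$ follows by the same steps, with $\{r_t^2\}$ in place of $\{r_t\}$. Throughout, $m$ is a fixed constant and $\tilde H=\tilde T^{\kappa}$, so $\tilde H m\asymp\tilde T^{\kappa}$ and $\tilde H/\tilde T=\tilde T^{\kappa-1}\to0$ for every $\kappa\in(0,1)$.

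First I would use stationarity of $\{r_t\}$ to write $\esp[\overline{R_{\tilde{H}}}]=\esp[R_{\tilde t,\tilde t+\tilde H}]=\tilde H m\,\esp[r_t]=\tilde H m\,\mu\lambda\rme^{1/2}$ by (\ref{eq: mean_return_v2}), so that $(\esp[R_{\tilde t,\tilde t+\tilde H}])^2$ is a positive constant multiple of $\tilde T^{2\kappa}$. Then I would expand
\[
\var(\overline{R_{\tilde{H}}})=\frac{1}{(\tilde T-2\tilde H)^2}\sum_{\tilde s=\tilde H+1}^{\tilde T-\tilde H}\sum_{\tilde t=\tilde H+1}^{\tilde T-\tilde H}\gamma_R(\tilde t-\tilde s),\qquad \gamma_R(L):=\cov\bigl(R_{\tilde t,\tilde t+\tilde H},R_{\tilde t+L,\tilde t+L+\tilde H}\bigr),
\]
and invoke Corollary \ref{coro:d0_cov_R_formula}: under $d=0$ the series $\{R_{\tilde t,\tilde t+\tilde H}\}$ is $(\tilde H+h/m)$-dependent, so $\gamma_R(L)=0$ for $|L|>\tilde H+h/m$, while Lemma \ref{lem:cov_R_H} (together with the $d=0$ form of (\ref{eq:var_agg_Rt_expression}) for $L=0$) gives $\gamma_R(L)=O(\tilde H)$ uniformly over the surviving lags. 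Counting at most $(\tilde T-2\tilde H)(2\tilde H+2h/m+1)$ nonzero summands, each $O(\tilde H)$ in absolute value, yields $\var(\overline{R_{\tilde{H}}})=O\bigl(\tilde H^2/(\tilde T-2\tilde H)\bigr)=O(\tilde T^{2\kappa-1})$.

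Combining the two estimates, $\var(\overline{R_{\tilde{H}}})/(\esp[R_{\tilde t,\tilde t+\tilde H}])^2=O(\tilde T^{-1})$, so for any $\delta>0$ Chebyshev's inequality gives
\[
\pr\!\left(\tilde T^{(1-\epsilon')/2}\Bigl|\tfrac{\overline{R_{\tilde{H}}}}{\esp[R_{\tilde t,\tilde t+\tilde H}]}-1\Bigr|>\delta\right)\le\frac{\tilde T^{1-\epsilon'}\var(\overline{R_{\tilde{H}}})}{\delta^2(\esp[R_{\tilde t,\tilde t+\tilde H}])^2}=O(\tilde T^{-\epsilon'})\longrightarrow0,
\]
which is precisely $\overline{R_{\tilde{H}}}/\esp[R_{\tilde t,\tilde t+\tilde H}]=1+o_p(\tilde T^{-1/2+\epsilon'/2})$. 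For $\overline{RV_{\tilde{H}}}$ I would argue identically: $\esp[RV_{\tilde t-\tilde H,\tilde t}]=\tilde H m\,\esp[r_t^2]\asymp\tilde T^{\kappa}$ is positive; $\{RV_{\tilde t-\tilde H,\tilde t}\}$ is $(\tilde H+h/m)$-dependent because $\cov(r_t^2,r_s^2)=0$ for $|t-s|>h$ when $d=0$; and $\cov(RV_{\tilde t-\tilde H,\tilde t},RV_{\tilde t+L-\tilde H,\tilde t+L})$ is a sum of at most $\tilde H m(2h+1)$ terms $\cov(r_t^2,r_s^2)$, each bounded by $\var(r_0^2)$ via Cauchy--Schwarz, hence $O(\tilde H)$ uniformly in $L$, after which the Chebyshev step is verbatim.

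I expect the only genuine obstacle to be bookkeeping: checking that the $O(\tilde H)$ bounds on $\gamma_R(L)$ and on the squared-return covariance hold \emph{uniformly} in the lag rather than merely for fixed $L$, and that the implied constants are finite. The latter requires $\esp[r_0^4]<\infty$, equivalently a finite fourth moment for the efficient shocks $e_k$ (which is in any case needed for Theorem \ref{thm:asym_normal_dist_norm_rho}); finiteness of the relevant $\esp[\nu^j]$ follows from the Gaussianity of $Z_H$ over the compact integration domain. No central limit theorem or slowly-varying asymptotics are needed here, since under $d=0$ the aggregated series are finitely dependent.
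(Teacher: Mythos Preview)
Your proposal is correct and follows essentially the same route as the paper: compute the mean, bound the variance of the sample mean by exploiting the $(\tilde H+h/m)$-dependence from Corollary~\ref{coro:d0_cov_R_formula} together with the uniform $O(\tilde H)$ bound on $\gamma_R(L)$ from Lemma~\ref{lem:cov_R_H}, deduce $\var(\overline{R_{\tilde H}})/(\esp[R_{\tilde t,\tilde t+\tilde H}])^2=O(\tilde T^{-1})$, and finish with Chebyshev. Your treatment of $\overline{RV_{\tilde H}}$ and the uniformity/moment remarks are on target (the paper secures $\esp[r_0^4]<\infty$ via Lemma~\ref{lem:rt_finite_moments}).
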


\begin{proof}
The expected value of $\frac{\overline{R_{\tilde{H}}}}{\esp[R_{\tilde{t}, \tilde{t}+\tilde{H}}]}$ is equal to
\[
\esp \left[\frac{\overline{R_{\tilde{H}}}}{\esp[R_{\tilde{t}, \tilde{t}+\tilde{H}}]} \right] = 1\;,
\]
and its variance is equal to
\[
\var\left(\frac{\overline{R_{\tilde{H}}}}{\esp[R_{\tilde{t}, \tilde{t}+\tilde{H}}]} \right)
=\frac{\var\left(\overline{R_{\tilde{H}}}\right)}{\left(\esp[R_{\tilde{t}, \tilde{t}+\tilde{H}}]\right)^2}
= \frac{1}{m^2 (\esp[r_t])^2}\left[\frac{\var\left(\overline{R_{\tilde{H}}}\right)}{\tilde{H}^2}   \right] \;,
\]
where
\[
\var\left(\overline{R_{\tilde{H}}}\right) = \frac{1}{\left(\tilde{T}-2\tilde{H}\right)^2}\left[\var\left(\sum_{\tilde{t}=\tilde{H}}^{\tilde{T}-\tilde{H}}R_{\tilde{t}, \tilde{t}+\tilde{H}} \right)\right]\;,
\]
and
\begin{eqnarray*}
\var\left(\sum_{\tilde{t}=\tilde{H}}^{\tilde{T}-\tilde{H}}R_{\tilde{t}, \tilde{t}+\tilde{H}} \right)
 &=& (\tilde{T}-2\tilde{H})\var(R_{\tilde{t}, \tilde{t}+\tilde{H}})
 + \mathop{\sum_{\tilde{t}=\tilde{H}+1}^{\tilde{T}-\tilde{H}} \sum_{\tilde{k}=\tilde{H}+1}^{\tilde{T}-\tilde{H}}}_{\tilde{t}\ne \tilde{k}}\cov\left(R_{\tilde{t}, \tilde{t}+\tilde{H}},\; R_{\tilde{k}, \tilde{k}+\tilde{H}}   \right)\\
 &=& (\tilde{T}-2\tilde{H})\var(R_{\tilde{t}, \tilde{t}+\tilde{H}})
 + 2 \sum_{0 < L < \tilde{T}-2\tilde{H}}(\tilde{T}-2\tilde{H}-L)\;\cov \left(R_{\tilde{t},\tilde{t}+\tilde{H}},R_{\tilde{t}+L,\; \tilde{t}+\tilde{H}+L}\right)
\end{eqnarray*}
Let $\gamma_R(L) = \cov(R_{\tilde{t}, \tilde{t}+\tilde{H}}\;, R_{\tilde{t}+L, \tilde{t}+L+\tilde{H}})$.
\\
From Corollary \ref{coro:d0_cov_R_formula}, if $d=0$, $R_{\tilde{t}, \tilde{t}+\tilde{H}}$ is $\tilde{H}+h/m$ dependent.
Hence we can express
\begin{equation}
\sum_{0 < L < \tilde{T}-2\tilde{H}}(\tilde{T}-2\tilde{H}-L)\;\cov \left(R_{\tilde{t},\tilde{t}+\tilde{H}},R_{\tilde{t}+L,\; \tilde{t}+\tilde{H}+L}\right)
=\sum_{L=1}^{(\tilde{H}+h/m)\; \wedge \; (\tilde{T}-2\tilde{H})}(\tilde{T}-2\tilde{H}-L)\gamma_R(L)\;.
\end{equation}
By Lemmas \ref{lem:convergence_agg_ret} and \ref{lem:cov_R_H}, $\var(R_{\tilde{t}, \tilde{t}+\tilde{H}}) \sim C_{\mu, \lambda}  \tilde{H}$,
and $\forall \; 0 < L \le \tilde{H}$, $\gamma_R(L) \sim C_4 \tilde{H})$, where both $C_{\mu, \lambda} $ and $C_4$ are positive constants.
\\
\\
If $(\tilde{H}+h/m) < \tilde{T}-2\tilde{H}$,
\[
\var\left(\sum_{\tilde{t}=\tilde{H}}^{\tilde{T}-\tilde{H}}R_{\tilde{t}, \tilde{t}+\tilde{H}} \right) = O((\tilde{T}-2\tilde{H})\tilde{H}^2)
\]
\\
and thus
\begin{eqnarray*}
\var\left(\overline{R_{\tilde{H}}}\right)
& = &  O\left(\frac{\tilde{H}^2}{\tilde{T}}\right)\;.
\end{eqnarray*}
\\
It follows that $\var\left(\frac{\overline{R_{\tilde{H}}}}{\esp[R_{\tilde{t}, \tilde{t}+\tilde{H}}]} \right) = O(\tilde{T}^{-1})$.
Note that $\tilde{H}+h/m < \tilde{T}-2\tilde{H} $ implies $0 < \kappa < \frac{\log(\tilde{T}-h/m)-\log 3}{ \log\tilde{T}}$, which converges to $1$ as
$\tilde{T} \to \infty$.
\\
By Chebyshev's inequality, $ \forall\; \epsilon^{\prime} > 0$, we obtain
\[
P \left(\frac{\left|\frac{\overline{R_{\tilde{H}}}}{\esp[R_{\tilde{t}, \tilde{t}+\tilde{H}}]} -1 \right|}{\tilde{T}^{-1/2 +\epsilon^{\prime}/2}} \ge \epsilon  \right)
\le \frac{(\tilde{T}^{1-\epsilon^{\prime}})\var\left(\frac{\overline{R_{\tilde{H}}}}{\esp[R_{\tilde{t}, \tilde{t}+\tilde{H}}]} \right)}{\epsilon^2}
 = O(\tilde{T}^{-\epsilon^{\prime}}) \rightarrow 0 \;,
\]
as $\tilde{T} \to \infty$.
\\
\\
The proof of $RV_{\tilde{t}-\tilde{H}, \tilde{t}}$ follows from a similar argument.
\end{proof}

\begin{corollary}\label{coro:convergence_sample_mean}
If $d=0$, for $0 < \kappa < 1/2$,
\begin{eqnarray*}
\overline{R_{\tilde{H}}} -  \esp[\R]  &\stackrel{p}{\longrightarrow}& 0 \\
\overline{RV_{\tilde{H}}} -  \esp[\RV] &\stackrel{p}{\longrightarrow}& 0
\end{eqnarray*}
\end{corollary}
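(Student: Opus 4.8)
The plan is to obtain this as an immediate consequence of Lemma \ref{lem:convg_sample_mean}. Writing
\[
\overline{R_{\tilde{H}}} - \esp[\R] = \esp[\R]\left(\frac{\overline{R_{\tilde{H}}}}{\esp[\R]} - 1\right),
\]
Lemma \ref{lem:convg_sample_mean} tells us that the parenthesized factor is $o_p(\tilde{T}^{-1/2 + \epsilon'/2})$ for every $\epsilon' > 0$, while by (\ref{eq: mean_return_v2}) the leading factor is $\esp[\R] = \tilde{H}m\,\esp[r_t] = m\mu\lambda\rme^{1/2}\,\tilde{T}^{\kappa}$. Multiplying the two, $\overline{R_{\tilde{H}}} - \esp[\R] = o_p(\tilde{T}^{\kappa - 1/2 + \epsilon'/2})$.

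Next I would exploit the hypothesis $\kappa < 1/2$: choose $\epsilon' \in (0, 1 - 2\kappa)$, a nonempty interval precisely because $\kappa < 1/2$. Then the exponent $\kappa - 1/2 + \epsilon'/2$ is strictly negative, so $\tilde{T}^{\kappa - 1/2 + \epsilon'/2} \to 0$ and hence $\overline{R_{\tilde{H}}} - \esp[\R] \convprob 0$. The argument for $\RV$ is identical: $\esp[\RV] = \tilde{H}m\,\esp[r_t^2] = m\,\esp[r_t^2]\,\tilde{T}^{\kappa}$, where $\esp[r_t^2] = \var(r_t) + (\esp[r_t])^2 \in (0,\infty)$ is finite by Corollary \ref{cor:var_rt} evaluated at $d=0$, so the $\RV$ part of Lemma \ref{lem:convg_sample_mean} together with the same choice of $\epsilon'$ yields $\overline{RV_{\tilde{H}}} - \esp[\RV] \convprob 0$.

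Alternatively, one can run a direct Chebyshev argument bypassing the ratio form: the proof of Lemma \ref{lem:convg_sample_mean} shows $\var(\overline{R_{\tilde{H}}}) = O(\tilde{H}^2/\tilde{T}) = O(\tilde{T}^{2\kappa - 1})$, and since $\esp[\overline{R_{\tilde{H}}}] = \esp[\R]$ by stationarity of $\{r_t\}$, Chebyshev's inequality gives $\pr(|\overline{R_{\tilde{H}}} - \esp[\R]| \ge \epsilon) \le \var(\overline{R_{\tilde{H}}})/\epsilon^2 = O(\tilde{T}^{2\kappa-1}) \to 0$ when $\kappa < 1/2$, and the same bound (with $\var(RV_{\tilde{t}-\tilde{H},\tilde{t}}) \sim \tilde{C}_{\mu,\lambda,d,\sigma_e}$-type scaling in $\tilde{H}$ from $d=0$) handles $\RV$.

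There is no genuine obstacle here; the corollary is essentially a bookkeeping consequence of the rate already established. The only point requiring attention is that the growth rate $\tilde{T}^{\kappa}$ of $\esp[\R]$ (resp. $\esp[\RV]$) must be dominated by the $\tilde{T}^{1/2}$ fluctuation scale, which is exactly what the restriction $\kappa < 1/2$ guarantees; outside this range one could no longer center at $\esp[\R]$ without rescaling.
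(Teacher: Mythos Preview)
Your proposal is correct and matches the paper's intended derivation: the corollary is stated without proof immediately after Lemma~\ref{lem:convg_sample_mean}, and your first argument---multiplying the $o_p(\tilde{T}^{-1/2+\epsilon'/2})$ rate from that lemma by $\esp[\R]=O(\tilde{T}^{\kappa})$ and choosing $\epsilon'\in(0,1-2\kappa)$---is precisely the one-line deduction the paper has in mind. Your alternative direct Chebyshev argument is also valid and simply re-derives the same bound without passing through the ratio form.
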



\begin{remark}\label{rmk:convergence_sample_mean_tilde}
Follow the proof of Lemma \ref{lem:convg_sample_mean} and Corollary \ref{coro:convergence_sample_mean}, the same limiting results apply to $\overline{\tilde{R}_{\tilde{H}}}$.
\end{remark}

\begin{lemma}\label{lem:convg_sample_mean_sqRet}
When $d=0$,
for $0 < \kappa < 1$ and for every $\epsilon^{\prime} > 0$,
\[
\frac{\frac{1}{\tilde{T}-2\tilde{H}}\sum_{\tilde{t}=\tilde{H}+1}^{\tilde{T}-\tilde{H}}R^2_{\tilde{t}, \tilde{t}+\tilde{H}}}{\esp[R^2_{\tilde{t},\tilde{t}+\tilde{H}}]}
=1 + o_p(\tilde{T}^{-1/2+\epsilon^{\prime}/2}) \;.
\]
\end{lemma}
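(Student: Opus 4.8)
The plan is to follow the template of the proof of Lemma~\ref{lem:convg_sample_mean}: the ratio in question has expectation exactly $1$, so it suffices to bound its variance by $O(\tilde{T}^{-1})$ and then invoke Chebyshev's inequality. Write $\bar{S}=\frac{1}{\tilde{T}-2\tilde{H}}\sum_{\tilde{t}=\tilde{H}+1}^{\tilde{T}-\tilde{H}}R^2_{\tilde{t},\tilde{t}+\tilde{H}}$. By stationarity $\esp[\bar S]=\esp[R^2_{\tilde{t},\tilde{t}+\tilde{H}}]$, hence $\esp\big[\bar S/\esp[R^2_{\tilde{t},\tilde{t}+\tilde{H}}]\big]=1$ and $\var\big(\bar S/\esp[R^2_{\tilde{t},\tilde{t}+\tilde{H}}]\big)=\var(\bar S)/(\esp[R^2_{\tilde{t},\tilde{t}+\tilde{H}}])^2$. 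I would then expand
\[
\var(\bar S)=\frac{1}{(\tilde{T}-2\tilde{H})^2}\Bigl[(\tilde{T}-2\tilde{H})\var(R^2_{\tilde{t},\tilde{t}+\tilde{H}})+2\!\!\sum_{0<L<\tilde{T}-2\tilde{H}}\!\!(\tilde{T}-2\tilde{H}-L)\,\cov\bigl(R^2_{\tilde{t},\tilde{t}+\tilde{H}},R^2_{\tilde{t}+L,\tilde{t}+L+\tilde{H}}\bigr)\Bigr].
\]

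By Corollary~\ref{coro:d0_cov_R_formula}, when $d=0$ the block $R_{\tilde{t},\tilde{t}+\tilde{H}}$, and hence $R^2_{\tilde{t},\tilde{t}+\tilde{H}}$, is $(\tilde{H}+h/m)$-dependent, so every covariance term with $L>\tilde{H}+h/m$ vanishes; since $\kappa<1$ one has $\tilde{H}+h/m<\tilde{T}-2\tilde{H}$ for all large $\tilde{T}$ (the threshold $\frac{\log(\tilde{T}-h/m)-\log 3}{\log\tilde{T}}$ tends to $1$), so the lag sum contains at most $\tilde{H}+h/m=O(\tilde{H})$ nonzero terms. Lemma~\ref{lem:var_R_square} gives $\var(R^2_{\tilde{t},\tilde{t}+\tilde{H}})\sim C^{\ast}\tilde{H}^3$ and Corollary~\ref{coro:cov_R_RV_square} gives $\cov(R^2_{\tilde{t},\tilde{t}+\tilde{H}},R^2_{\tilde{t}+L,\tilde{t}+L+\tilde{H}})=O(\tilde{H}^3)$ uniformly in $0<L\le\tilde{H}$; bounding $(\tilde{T}-2\tilde{H}-L)\le\tilde{T}$, the $O(\tilde{H})$ lag terms each contribute $O(\tilde{T}\tilde{H}^3)$ and dominate the diagonal contribution $(\tilde{T}-2\tilde{H})\var(R^2_{\tilde{t},\tilde{t}+\tilde{H}})=O(\tilde{T}\tilde{H}^3)$, so $\var(\bar S)=O(\tilde{T}\tilde{H}^4)/\tilde{T}^2=O(\tilde{H}^4/\tilde{T})$. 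For the denominator, $\esp[R_{\tilde{t},\tilde{t}+\tilde{H}}]=\tilde{H}m\,\esp[r_t]$ while, from the $d=0$ form of $(\ref{eq:var_agg_Rt_expression})$ (cf.\ Lemma~\ref{lem:convergence_agg_ret}), $\var(R_{\tilde{t},\tilde{t}+\tilde{H}})=O(\tilde{H})$, so $\esp[R^2_{\tilde{t},\tilde{t}+\tilde{H}}]=(\tilde{H}m\,\esp[r_t])^2+O(\tilde{H})\sim m^2(\esp[r_t])^2\,\tilde{H}^2$ and $(\esp[R^2_{\tilde{t},\tilde{t}+\tilde{H}}])^2\sim m^4(\esp[r_t])^4\,\tilde{H}^4$. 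Combining, $\var\big(\bar S/\esp[R^2_{\tilde{t},\tilde{t}+\tilde{H}}]\big)=O(\tilde{H}^4/\tilde{T})\big/\Theta(\tilde{H}^4)=O(\tilde{T}^{-1})$.

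To conclude, fix $\epsilon'>0$; then for every $\epsilon>0$ Chebyshev's inequality yields
\[
\pr\!\left(\frac{\bigl|\bar S/\esp[R^2_{\tilde{t},\tilde{t}+\tilde{H}}]-1\bigr|}{\tilde{T}^{-1/2+\epsilon'/2}}\ge\epsilon\right)\le\frac{\tilde{T}^{1-\epsilon'}\var\big(\bar S/\esp[R^2_{\tilde{t},\tilde{t}+\tilde{H}}]\big)}{\epsilon^2}=O(\tilde{T}^{-\epsilon'})\longrightarrow 0
\]
as $\tilde{T}\to\infty$, which is exactly the asserted $\bar S/\esp[R^2_{\tilde{t},\tilde{t}+\tilde{H}}]=1+o_p(\tilde{T}^{-1/2+\epsilon'/2})$.

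The main obstacle is upgrading Corollary~\ref{coro:cov_R_RV_square} from an asymptotic equivalence at fixed $L$ to a bound $\cov(R^2_{\tilde{t},\tilde{t}+\tilde{H}},R^2_{\tilde{t}+L,\tilde{t}+L+\tilde{H}})\le C\tilde{H}^3$ with $C$ independent of both $L\in\{1,\dots,\tilde{H}+h/m\}$ and $\tilde{H}$. I would obtain this by repeating the cumulant expansion used in the proof of Lemma~\ref{lem:var_R_square}: write each squared block as a quadruple sum of products of centered, $h$-dependent daily returns, split $\cov(R^2_{\tilde{t},\tilde{t}+\tilde{H}},R^2_{\tilde{t}+L,\tilde{t}+L+\tilde{H}})$ into a pure-$\cum$ part and a product-of-$\cov$ part, and observe that $h$-dependence forces all but $O(\tilde{H}^3)$ index quadruples in each part to vanish, uniformly in $L$ and $\tilde{H}$, the surviving summands being controlled by the finite fourth moments of $r_t$ under $d=0$; the remaining verification that $\tilde{H}+h/m<\tilde{T}-2\tilde{H}$ for large $\tilde{T}$ is immediate from $\kappa<1$, and the rest is the same routine bookkeeping already carried out in Lemmas~\ref{lem:convergence_agg_ret} and~\ref{lem:var_R_square}.
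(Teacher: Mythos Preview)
Your proposal is correct and follows essentially the same approach as the paper: bound the variance of the ratio by $O(\tilde{T}^{-1})$ via the $(\tilde{H}+h/m)$-dependence of the blocks together with Lemma~\ref{lem:var_R_square} and Corollary~\ref{coro:cov_R_RV_square}, then apply Chebyshev. You are in fact more careful than the paper on the one delicate point---the paper invokes Corollary~\ref{coro:cov_R_RV_square} directly without commenting on uniformity in $L$, whereas you flag this and sketch the cumulant-expansion fix; that fix is indeed how one would justify the step rigorously.
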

\begin{proof}
The variance of the ratio is equal to
\[
\var \left(\frac{\frac{1}{\tilde{T}-2\tilde{H}}\sum_{\tilde{t}=\tilde{H}+1}^{\tilde{T}-\tilde{H}}R^2_{\tilde{t}, \tilde{t}+\tilde{H}}}{\esp[R^2_{\tilde{t},\tilde{t}+\tilde{H}}]} \right)
= \frac{1}{(\esp[R^2_{\tilde{t}, \tilde{t}+\tilde{H}}])^2(\tilde{T}-2\tilde{H})^2}\; \var \left(\sum_{\tilde{t}=\tilde{H}+1}^{\tilde{T}-\tilde{H}} R^2_{\tilde{t}, \tilde{t}+\tilde{H}} \right)
\]
Note that
\[
\esp[R^2_{\tilde{t}, \tilde{t}+\tilde{H}}] = \var(R_{\tilde{t}, \tilde{t}+\tilde{H}}) + (\esp[R_{\tilde{t}, \tilde{t}+\tilde{H}}])^2
\]
Since $\var(R_{\tilde{t}, \tilde{t}+\tilde{H}}) \sim C_{\lambda, \mu}\tilde{H}$ and $\esp(R_{\tilde{t}, \tilde{t}+\tilde{H}})$ is proportional to $\tilde{H}$,
we have $\esp[R^2_{\tilde{t}, \tilde{t}+\tilde{H}}] = O(\tilde{H}^2)$.
From Lemma \ref{lem:var_R_square}, we obtain
\begin{equation}\label{eq:var_sum_bigR_square_1}
\sum_{\tilde{t}=\tilde{H}+1}^{\tilde{T}-\tilde{H}}\var \left(R^2_{\tilde{t}, \tilde{t}+\tilde{H}} \right) = O((\tilde{T}-2\tilde{H})\tilde{H}^3)\;.
\end{equation}
\\
\\
By Corollary \ref{coro:cov_R_RV_square}, if $d=0$, for every $0 < L \le \tilde{H}$, $\cov(R^2_{\tilde{t}, \tilde{t}+\tilde{H}},\;R^2_{\tilde{t}+L, \tilde{t}+L+\tilde{H}})\sim D_1\tilde{H}^3$.
Hence, if $\tilde{H}+h/m < \tilde{T}-2\tilde{H}$, which implies that $0 < \kappa < 1$,
\begin{equation}\label{eq:var_sum_bigR_square_2_a}
\sum_{L=1}^{\tilde{H}+h/m\;\wedge \tilde{T}-2\tilde{H}}\left(\tilde{T}-2\tilde{H}-L \right)\cov \left(R^2_{\tilde{t}, \tilde{t}+\tilde{H}}\;, R^2_{\tilde{t}+L, \tilde{t}+L+\tilde{H}}\right) = O((\tilde{T}-2\tilde{H})\tilde{H}^4)\;.
\end{equation}
Combining (\ref{eq:var_sum_bigR_square_1}) and (\ref{eq:var_sum_bigR_square_2_a}), we obtain
\\
\[
\var\left(\sum_{\tilde{t}=\tilde{H}+1}^{\tilde{T}-\tilde{H}} R^2_{\tilde{t}, \tilde{t}+\tilde{H}} \right) = O((\tilde{T}-2\tilde{H})\tilde{H}^4)\;,
\]
and thus
\[
\var \left(\frac{\frac{1}{\tilde{T}-2\tilde{H}}\sum_{\tilde{t}=\tilde{H}+1}^{\tilde{T}-\tilde{H}}R^2_{\tilde{t}, \tilde{t}+\tilde{H}}}{\esp[R^2_{\tilde{t},\tilde{t}+\tilde{H}}]} \right) = O(\tilde{T}^{-1})\;.
\]
\\
By Chebysev's inequality, for every $\epsilon^{\prime} > 0$,
\[
 P \left(\frac{\left| \frac{\frac{1}{\tilde{T}-2\tilde{H}}\sum_{\tilde{t}=\tilde{H}+1}^{\tilde{T}-\tilde{H}}R^2_{\tilde{t}, \tilde{t}+\tilde{H}}}{\esp[R^2_{\tilde{t},\tilde{t}+\tilde{H}}]} -1 \right|}{\tilde{T}^{-1/2+\epsilon^{\prime}/2}} \ge \epsilon^{\prime}  \right)\le \frac{(\tilde{T}^{1/2-\epsilon^{\prime}})\var \left(\frac{\frac{1}{\tilde{T}-2\tilde{H}}\sum_{\tilde{t}=\tilde{H}+1}^{\tilde{T}-\tilde{H}}R^2_{\tilde{t}, \tilde{t}+\tilde{H}}}{\esp[R^2_{\tilde{t},\tilde{t}+\tilde{H}}]} \right)}{{\epsilon^{\prime}}^2} = O(\tilde{T}^{-1/2-\epsilon^{\prime}})
\]
which converges to 0 as $\tilde{T} \to \infty$.

\end{proof}

\begin{remark}\label{rmk:convg_sample_mean_sqRV_tildeR}
  Follow the proof of Lemma \ref{lem:convg_sample_mean_sqRet}, the same limit results apply to $\tR^2$ and $\RV^2$.
\end{remark}

\begin{lemma}\label{lem:convg_ratio_S_R_varR}

Let
\begin{eqnarray}
S^2_R &=& \frac{1}{\tilde{T}-2\tilde{H}}\sum_{\tilde{t}=\tilde{H}+1}^{\tilde{T}-\tilde{H}}\left(R_{\tilde{t},\tilde{t}+\tilde{H}}-\overline{R_{\tilde{H}}}\right)^2 \label{eq:S_R}\\
S^2_{RV} &=& \frac{1}{\tilde{T}-2\tilde{H}}\sum_{\tilde{t}=\tilde{H}+1}^{\tilde{T}-\tilde{H}}\left(RV_{\tilde{t}-\tilde{H}, \tilde{t}}-\overline{RV_{\tilde{H}}}\right)^2 \label{eq:S_RV}
\end{eqnarray}
When $d=0$,
if $0 < \kappa < 1$,
\begin{eqnarray*}
\frac{S^2_R}{\var(R_{\tilde{t}, \tilde{t}+\tilde{H}})} &=& 1 + o_p(\tilde{T}^{-1/2+\epsilon^{\prime}/2})\\
\frac{S^2_{RV}}{\var(RV_{\tilde{t}\tilde{H}, \tilde{t}})} &=& 1 + o_p(\tilde{T}^{-1/2+\epsilon^{\prime}/2})
\end{eqnarray*}
\end{lemma}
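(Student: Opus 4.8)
The plan is to prove the statement for $S^2_R$ (the argument for $S^2_{RV}$ is identical up to relabelling) via the usual decomposition of a sample second moment about the sample mean,
\begin{equation*}
S^2_R = \frac{1}{\tilde{T}-2\tilde{H}}\SUM \bigl(\R-\esp[\R]\bigr)^2 - \bigl(\overline{R_{\tilde{H}}}-\esp[\R]\bigr)^2 ,
\end{equation*}
divide through by $\var(\R)$, and control the two pieces separately. Recall that when $d=0$ the block $\R$ is an aggregate of $h$-dependent daily returns, so (see (\ref{eq:var_agg_Rt_expression}) and Corollary \ref{coro:d0_cov_R_formula}) $\var(\R)\sim m\tilde{H}A_1$; in particular $\var(\R)\asymp\tilde{H}=\tilde{T}^{\kappa}$, which is the normalisation we must match.

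The centering term is already essentially done: in the proof of Lemma \ref{lem:convg_sample_mean} it is shown that $\var\bigl(\overline{R_{\tilde{H}}}\bigr)=O(\tilde{H}^2/\tilde{T})$, hence $\esp\bigl[(\overline{R_{\tilde{H}}}-\esp[\R])^2\bigr]=O(\tilde{H}^2/\tilde{T})$, and after dividing by $\var(\R)\asymp\tilde{H}$ this contributes $O_p(\tilde{H}/\tilde{T})=O_p(\tilde{T}^{\kappa-1})=o_p(\tilde{T}^{-1/2+\epsilon'/2})$. So it remains to handle $\frac{1}{\tilde{T}-2\tilde{H}}\SUM (R^*_{\tilde t})^2$, where $R^*_{\tilde t}=\R-\esp[\R]$. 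Its mean is exactly $\var(\R)$, so it suffices to bound its variance. Writing out
\begin{equation*}
\var\!\Bigl(\tfrac{1}{\tilde{T}-2\tilde{H}}\SUM (R^*_{\tilde t})^2\Bigr)=\frac{1}{(\tilde{T}-2\tilde{H})^2}\Bigl[(\tilde{T}-2\tilde{H})\var\bigl((R^*)^2\bigr)+2\sum_{L\ge1}(\tilde{T}-2\tilde{H}-L)\,\cov\bigl((R^*_{\tilde t})^2,(R^*_{\tilde t+L})^2\bigr)\Bigr],
\end{equation*}
I would use Corollary \ref{coro:d0_cov_R_formula} to note that, under $d=0$, the $\R$ are $(\tilde{H}+h/m)$-dependent, so the covariance sum truncates at $L\le\tilde{H}+h/m$. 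From the proof of Lemma \ref{lem:var_R_square}, $\var((R^*)^2)=O(\tilde{H}^2)$; and the same Peccati--Taqqu (2011, Prop.\ 3.2.1) expansion of $\cov((R^*_{\tilde t})^2,(R^*_{\tilde t+L})^2)$ into fourth-order cumulants and products of two covariances of the centered daily returns, together with $h$-dependence (only index quadruples clustered within $O(h)$ contribute, and those necessarily lie in the overlap of the two windows, of length $O((\tilde{H}-L)m)$), gives $\cov((R^*_{\tilde t})^2,(R^*_{\tilde t+L})^2)=O(\tilde{H}^2)$ uniformly in $L$. Hence the bracketed quantity is $O\bigl((\tilde{T}-2\tilde{H})\tilde{H}^3\bigr)$ and the displayed variance is $O(\tilde{H}^3/\tilde{T})$; dividing by $\var(\R)^2\asymp\tilde{H}^2$ makes the variance of the normalised term $O(\tilde{H}/\tilde{T})=O(\tilde{T}^{\kappa-1})$, and a Chebyshev bound with the $\tilde{T}^{1-\epsilon'}$ weighting (exactly as at the end of the proof of Lemma \ref{lem:convg_sample_mean_sqRet}) yields $1+o_p(\tilde{T}^{-1/2+\epsilon'/2})$.

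Adding the two pieces gives $S^2_R/\var(\R)=1+o_p(\tilde{T}^{-1/2+\epsilon'/2})$. For $S^2_{RV}/\var(\RV)$ I would run the identical argument, substituting Lemma \ref{lem:var_RV_square} for Lemma \ref{lem:var_R_square}, using the $RV$-analogue of the block $h$-dependence and Corollary \ref{coro:cov_R_RV_square}, and the $RV$-version of Lemma \ref{lem:convg_sample_mean} (the last line of its proof, cf.\ Remark \ref{rmk:convergence_sample_mean_tilde}) for the centering term. The main obstacle is the step in the previous paragraph: pinning down the uniform bound $\cov((R^*_{\tilde t})^2,(R^*_{\tilde t+L})^2)=O(\tilde{H}^2)$, which is a cumulant-counting exercise on two overlapping windows — the same bookkeeping as in the proof of Lemma \ref{lem:var_R_square}, but with a shifted second block, so that one must verify every non-vanishing cumulant/covariance configuration is trapped, by $h$-dependence, inside the window overlap rather than spreading over the full length $\tilde{H}m$.
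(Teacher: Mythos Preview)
Your argument is correct, but it takes a different route from the paper's. The paper does \emph{not} center at $\esp[\R]$; instead it works with the raw moments, writing
\[
\frac{S^2_R}{\var(\R)}
= \frac{\frac{1}{\tilde{T}-2\tilde{H}}\SUM R^2_{\tilde t,\tilde t+\tilde H}}{\esp[R^2_{\tilde t,\tilde t+\tilde H}]}\left(1+\frac{(\esp[\R])^2}{\var(\R)}\right)
- \frac{(\overline{R_{\tilde H}})^2}{(\esp[\R])^2}\cdot\frac{(\esp[\R])^2}{\var(\R)},
\]
and then invokes Lemma~\ref{lem:convg_sample_mean_sqRet} and Lemma~\ref{lem:convg_sample_mean} to replace each ratio by $1+o_p(\tilde T^{-1/2+\epsilon'/2})$, after which the diverging factor $(\esp[\R])^2/\var(\R)\sim\tilde H$ cancels algebraically between the two terms. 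The advantage of the paper's route is that it reuses existing lemmas verbatim and needs no new cumulant computation: the required covariance bound is Corollary~\ref{coro:cov_R_RV_square} for the \emph{non-centered} $R^2$, already proved. Your route, by centering first, avoids the diverging multiplicative factor and the attendant algebraic cancellation, which is conceptually cleaner; the price is exactly the step you flag as the main obstacle, namely the uniform bound $\cov\bigl((R^*_{\tilde t})^2,(R^*_{\tilde t+L})^2\bigr)=O(\tilde H^2)$ for $0\le L\le\tilde H$, which is not stated anywhere in the paper (Lemma~\ref{lem:var_R_square} and Corollary~\ref{coro:cov_R_RV_square} treat the non-centered versions and give $O(\tilde H^3)$). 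That bound does follow from the same Peccati--Taqqu cumulant expansion restricted to the overlap window, as you sketch, but you would need to carry it out; the paper sidesteps it entirely. Both approaches deliver the same effective rate once the $\tilde H$ factor is accounted for, so neither dominates---yours trades an algebraic cancellation for a shifted-window cumulant count.
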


\begin{proof}
We can express
\begin{eqnarray*}
\frac{S^2_R}{\var(R_{\tilde{t}, \tilde{t}+\tilde{H}})}
&=& \frac{\frac{1}{\tilde{T}-2\tilde{H}}\sum_{\tilde{t}=\tilde{H}+1}^{\tilde{T}-\tilde{H}} R^2_{\tilde{t}, \tilde{t}+\tilde{H}}}{\esp[R^2_{\tilde{t}, \tilde{t}+\tilde{H}}]}\left( \frac{\esp[R^2_{\tilde{t}, \tilde{t}+\tilde{H}}]}{\var(R_{\tilde{t}, \tilde{t}+\tilde{H}})} \right)
 - \frac{(\overline{R_{\tilde{H}}})^2}{(\esp[R_{\tilde{t}, \tilde{t}+\tilde{H}}])^2}\left(\frac{(\esp[R_{\tilde{t}, \tilde{t}+\tilde{H}}])^2}{\var(R_{\tilde{t}, \tilde{t}+\tilde{H}})}\right) \\
&=&   \frac{\frac{1}{\tilde{T}-2\tilde{H}}\sum_{\tilde{t}=\tilde{H}+1}^{\tilde{T}-\tilde{H}} R^2_{\tilde{t}, \tilde{t}+\tilde{H}}}{\esp[R^2_{\tilde{t}, \tilde{t}+\tilde{H}}]} \left(1 + \frac{(\esp[R_{\tilde{t}, \tilde{t}+\tilde{H}}])^2}{\var(R_{\tilde{t}, \tilde{t}+\tilde{H}})}   \right)
- \frac{(\overline{R_{\tilde{H}}})^2}{(\esp[R_{\tilde{t}, \tilde{t}+\tilde{H}}])^2}\left(\frac{(\esp[R_{\tilde{t}, \tilde{t}+\tilde{H}}])^2}{\var(R_{\tilde{t}, \tilde{t}+\tilde{H}})}\right)
\end{eqnarray*}
From Lemma \ref{lem:convg_sample_mean_sqRet} as well as Lemma \ref{lem:convg_sample_mean},
if $ 0 < \kappa < 1$ with $\epsilon^{\prime}$ chosen so that $-1/2 +\epsilon^{\prime}/2 < 0$,
\begin{eqnarray*}
\frac{S^2_R}{\var(R_{\tilde{t}, \tilde{t}+\tilde{H}})} &=& (1+o_p(\tilde{T}^{-1/2+\epsilon^{\prime}/2})) \left(1 + \frac{(\esp[R_{\tilde{t}, \tilde{t}+\tilde{H}}])^2}{\var(R_{\tilde{t}, \tilde{t}+\tilde{H}})}   \right) -
(1+o_p(\tilde{T}^{-1/2+\epsilon^{\prime}/2})\left(\frac{(\esp[R_{\tilde{t}, \tilde{t}+\tilde{H}}])^2}{\var(R_{\tilde{t}, \tilde{t}+\tilde{H}})}\right)\\
&=& 1 + o_p(\tilde{T}^{-1/2+\epsilon^{\prime}/2})\;.
\end{eqnarray*}
The proof for $\frac{S^2_{RV}}{\var(RV_{\tilde{t}\tilde{H}, \tilde{t}})}$ is similar to $S^2_R$ and thus is skipped.

\end{proof}

\begin{lemma}\label{lem:convergence_norm_S_R}
If $d=0$,
\begin{eqnarray}
\frac{S^2_R}{\tilde{H}} & \convprob m \left(\var(r_0) + 2 \sum_{k=1}^h \gamma_r(k)\right) \label{eq:S_R_convergence}\\
\frac{S^2_{RV}}{\tilde{H}} & \convprob m \left(\var(r^2_0) + 2 \sum_{k=1}^h \tilde{\gamma}_r(k)\right) \label{eq:S_RV_convergence}
\end{eqnarray}
where $\gamma_r(k)= \cov(r_t, r_{t+k)}$ and $\tilde{\gamma}_r(k)=\cov(r^2_{t}, r^2_{t+k})$.
\end{lemma}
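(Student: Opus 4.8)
The plan is to write $S_R^2/\tilde H$ as a product of two factors that are already under control: the ratio $S_R^2/\var(R_{\tilde t,\tilde t+\tilde H})$, which Lemma~\ref{lem:convg_ratio_S_R_varR} shows converges to $1$ in probability (it equals $1+o_p(\tilde T^{-1/2+\epsilon'/2})$ for $\kappa\in(0,1)$), and the deterministic quantity $\var(R_{\tilde t,\tilde t+\tilde H})/\tilde H$. All the probabilistic work has thus been done in Lemmas~\ref{lem:convg_sample_mean}, \ref{lem:convg_sample_mean_sqRet} and \ref{lem:convg_ratio_S_R_varR}; what remains is to identify the limit of the normalized population variance.

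For the second factor I would invoke the exact formula for the variance of the aggregated return under $d=0$, namely (\ref{eq:rho_d0_denm_ret}),
\[
\var\!\left(R_{\tilde t,\tilde t+\tilde H}\right)=(\tilde Hm)\var(r_0)+2\sum_{L=1}^{h}(\tilde Hm-L)\,\gamma_r(L),
\]
which is valid once $\tilde Hm>h$ because, by Lemma~\ref{lem:exp_gammaz_d0}, $\{r_t\}$ is $h$-dependent when $d=0$. Dividing by $\tilde H$ gives
\[
\frac{\var\!\left(R_{\tilde t,\tilde t+\tilde H}\right)}{\tilde H}
= m\var(r_0)+2m\sum_{L=1}^{h}\gamma_r(L)-\frac{2}{\tilde H}\sum_{L=1}^{h}L\,\gamma_r(L),
\]
and since the last sum has only $h$ (fixed) terms, it is $O(1/\tilde H)\to0$ as $\tilde H=\tilde T^{\kappa}\to\infty$. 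Hence $\var(R_{\tilde t,\tilde t+\tilde H})/\tilde H\to m\bigl(\var(r_0)+2\sum_{k=1}^{h}\gamma_r(k)\bigr)$. Combining this with Lemma~\ref{lem:convg_ratio_S_R_varR} via Slutsky's theorem,
\[
\frac{S_R^2}{\tilde H}=\frac{S_R^2}{\var\!\left(R_{\tilde t,\tilde t+\tilde H}\right)}\cdot\frac{\var\!\left(R_{\tilde t,\tilde t+\tilde H}\right)}{\tilde H}
\convprob m\left(\var(r_0)+2\sum_{k=1}^{h}\gamma_r(k)\right),
\]
which is (\ref{eq:S_R_convergence}).

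For (\ref{eq:S_RV_convergence}) the argument is identical, replacing $\{r_t\}$ by $\{r_t^2\}$: the latter is also $h$-dependent when $d=0$ (as used in the proof of Lemma~\ref{lem:sample_mean_rsq_convergence_d0}), so (\ref{eq:rho_d0_denm_real_var}) gives $\var(RV_{\tilde t-\tilde H,\tilde t})/\tilde H\to m\bigl(\var(r_0^2)+2\sum_{k=1}^{h}\tilde\gamma_r(k)\bigr)$, and the second displayed limit of Lemma~\ref{lem:convg_ratio_S_R_varR} finishes the proof. There is no real obstacle here; the only point that needs care is making sure the $h$-dependence under $d=0$ is what forces the normalized variances to have the stated finite limits (so that dividing by $\tilde H$, rather than by a higher power, is the correct scaling), and that $\kappa\in(0,1)$ is in the admissible range of Lemma~\ref{lem:convg_ratio_S_R_varR}. \qed
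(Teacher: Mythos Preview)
Your proposal is correct and follows essentially the same approach as the paper: factor $S_R^2/\tilde H$ as $(S_R^2/\var(R_{\tilde t,\tilde t+\tilde H}))\cdot(\var(R_{\tilde t,\tilde t+\tilde H})/\tilde H)$, use Lemma~\ref{lem:convg_ratio_S_R_varR} for the first factor and the $h$-dependence formula (\ref{eq:rho_d0_denm_ret}) for the deterministic limit of the second, then combine. The paper's write-up differs only cosmetically, rearranging the same identity before passing to the limit.
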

\begin{proof}
When $d=0$, $r_t$ is $h$-dependent, hence (\ref{eq:var_agg_Rt_expression}) is equal to
\[
\var(R_{\tilde{t}, \tilde{t}+\tilde{H}}) = \tilde{H}m \left[\var(r_0) + 2\sum_{k=1}^h \left(1-\frac{k}{\tilde{H}m} \right)\gamma_r(k) \right] \;.
\]
Therefore,
\begin{eqnarray*}
\frac{S^2_R}{\var(R_{\tilde{t}, \tilde{t}+\tilde{H}})} =
\frac{S^2_R}{\tilde{H}}\;\frac{1}{m[\var(r_0) + 2\sum_{k=1}^h(1-\frac{k}{\tilde{H}m})\gamma_r(k)]}
\end{eqnarray*}
As $\tilde{H} \to \infty$, $\sum_{k=1}^h \left(1-\frac{k}{\tilde{H}m} \right)\gamma_r(k) \to \sum_{k=1}^h \gamma_r(k)$.
By Lemma \ref{lem:convg_ratio_S_R_varR}, for every $\tilde{H}+h/m < \tilde{T}-2\tilde{H}$ and every $\epsilon^{\prime} > 0$,
\[
\frac{S^2_R}{\tilde{H}}\;\frac{1}{m[\var(r_0) + 2\sum_{k=1}^h\gamma_r(k)]} = 1 + o_p(\tilde{T}^{-1/2+\epsilon^{\prime}/2})
\]
Therefore,
\[
\frac{S^2_R}{\tilde{H}} \convprob m \left(\var(r_0) + 2 \sum_{k=1}^h \gamma_r(k)\right)\;.
\]
\\
The steps to obtain the limiting result for $S^2_{RV}$ is similar to $S^2_R$ and thus is skipped here.
\end{proof}

\begin{remark}\label{rmk:convg_norm_tilde_S_R}
Follow Remark \ref{rmk:convg_sample_mean_sqRV_tildeR} and the proofs of Lemmas \ref{lem:convg_ratio_S_R_varR} and \ref{lem:convergence_norm_S_R}, the same limiting results apply to $S^2_{\tilde{R}}$, where $S^2_{\tilde{R}}=\frac{1}{\tilde{T}-2\tilde{H}}\sum_{\tilde{t}=\tilde{H}+1}^{\tilde{T}-\tilde{H}}\left(\tR-\overline{\tilde{R}_{\tilde{H}}}\right)^2$.
\end{remark}

\subsection*{Proof of Theorem \ref{thm:asym_normal_dist_norm_rho}}

\begin{proof}

\noindent Let
\begin{equation}
\tilde{C}_{\tilde{T}}=\frac{1}{\tilde{T}-2\tilde{H}}\sum_{\tilde{t}=\tilde{H}+1}^{(\tilde{T}-\tilde{H})}\left(\tR-\overline{\tilde{R}_{\tilde{H}}}\right)
\left(RV_{\tilde{t}-\tilde{H},\tilde{t}}-\overline{RV_{\tilde{H}}}\right)\;,
\end{equation}
and the theoretically-centered sample covariance as
\[
\tilde{C}^{\ast}_{\tilde{T}} = \frac{1}{\tilde{T}-2\tilde{H}}\sum_{\tilde{t}=\tilde{H}+1}^{\tilde{T}-\tilde{H}}\left(\tilde{R}_{\tilde{t}, \tilde{t}+\tilde{H}}-\esp[\tilde{R}_{\tilde{t},\tilde{t}+\tilde{H}}]\right)\left(RV_{\tilde{t}-\tilde{H},\tilde{t}}-\esp[RV_{\tilde{t}-\tilde{H},\tilde{t}}]\right) \;.
\]
Let
\[
S^2_{\tilde{R}} = \frac{1}{\tilde{T}-2\tilde{H}}\sum_{\tilde{t}=\tilde{H}+1}^{\tilde{T}-\tilde{H}}\left(\tR -\overline{\tilde{R}_{\tilde{H}}}\right)^2 \;,
\]
and
\[
S^2_{RV} = \frac{1}{\tilde{T}-2\tilde{H}}\sum_{\tilde{t}=\tilde{H}+1}^{\tilde{T}-\tilde{H}}\left(RV_{\tilde{t}-\tilde{H},\tilde{t}}-\overline{RV_{\tilde{H}}}\right)^2 \;.
\]
We can then express
\begin{equation}\label{eq:tilde_rho_expression}
\tilde{\rho}_{\tilde{H}, \tilde{T}} = \frac{\tilde{C}_{\tilde{T}}}{S_{\tilde{R}} S_{RV}} = \frac{\tilde{C}_{\tilde{T}}- \Tilde{C}^{\ast}_{\tilde{T}}+\tilde{C}^{\ast}_{\tilde{T}}}{S_{\tilde{R}} S_{RV}}
= \frac{\tilde{C}^{\ast}_{\tilde{T}}}{S_{\tilde{R}} S_{RV}} +  \frac{\tilde{C}_{\tilde{T}}-\tilde{C}^{\ast}_{\tilde{T}}}{S_{\tilde{R}} S_{RV}}
\end{equation}

\noindent By Lemma \ref{lem:asm_var_Ct} and Theorem \ref{thm:CLT}, for every $\kappa \in (0, 1/3)$
\begin{equation}
\sqrt{\frac{\tilde{T}}{\tilde{H}^3}}\Tilde{C}^{\ast}_{\tilde{T}} \convdistr N\left(0\;, \frac{2}{3}m^2 \sum_{|u|\leq h}\sum_{|v|\leq h} \cov(r_0,r_u)  \cov(r_0^2,r_v^2) \right)\;.
\end{equation}
From Lemma \ref{lem:convergence_norm_S_R} and Remark \ref{rmk:convg_norm_tilde_S_R}, 
when $d=0$,
$\frac{S^2_{\tilde{R}}}{\tilde{H}}\stackrel{p}{\longrightarrow} m \left(\var(r_0) + 2 \sum_{k=1}^h \gamma_r(k)\right)$ and
\\
$\frac{S^2_{RV}}{\tilde{H}} \stackrel{p}{\longrightarrow} m \left(\var(r^2_0) + 2 \sum_{k=1}^h \tilde{\gamma}_r(k)\right)$.
Therefore, by Slutsky's Theorem, it follows that for every $\kappa \in (0, 1/3)$,
\begin{equation}\label{eq:1st_term_asym_dist_rho_hat}
\ddfrac{\left(\sqrt{\frac{\tilde{T}}{\tilde{H}^3}}\right)\Tilde{C}^{\ast}_{\tilde{T}}}{\sqrt{\frac{S_{\tilde{R}}^2}{\tilde{H}}\frac{S^2_{RV}}{\tilde{H}}}}= \sqrt{\tilde{T}^{1-\kappa}}\;\frac{\Tilde{C}^{\ast}_{\tilde{T}}}{S_{\tilde{R}} S_{RV}} \convdistr N \left(0, \frac{S^2}{A_1A_2} \right)\;,
\end{equation}
where $S^2=\frac{2}{3}  \sum_{|u|\leq h}\sum_{|v|\leq h} \cov(r_0,r_u)  \cov(r_0^2,r_v^2)$, $A_1=\var(r_0) + 2 \sum_{k=1}^h \cov(r_0, r_k)$, and $A_2= \var(r^2_0) + 2 \sum_{k=1}^h \cov(r^2_0, r^2_k)$.
\\
\\
Next, we show for $0 < \kappa < 1/3$, $\sqrt{\tilde{T}^{1-\kappa}}\left(\frac{\tilde{C}_{\tilde{T}}-\Tilde{C}^{\ast}_{\tilde{T}}}{S_{\tilde{R}} S_{RV}}\right) \to 0$.
\\
We can express
\begin{eqnarray*}
\tilde{C}_{\tilde{T}}&-& \Tilde{C}^{\ast}_{\tilde{T}} \\
&=& \frac{1}{\tilde{T}-2\tilde{H}}\sum_{\tilde{t}=\tilde{H}+1}^{\tilde{T}-\tilde{H}}\left[(\tR-\overline{\tR})(\RV-\overline{RV_{\tilde{H}}})  - (\tR-\esp[\tR])(\RV-\esp[\RV])\right] \\
&=& \left(\overline{RV_{\tilde{H}}}-\esp[\RV]\right)\left(\esp[\tR]- \overline{\tR} \right)
\end{eqnarray*}
\\
From Lemma \ref{lem:convg_sample_mean} and Remark  \ref{rmk:convergence_sample_mean_tilde}, when $d=0$, as $\tilde{T} \to \infty$, $\forall\; 0 < \kappa < 1/3$,
\begin{eqnarray*}
\overline{\tilde{R}_{\tilde{H}}} &=& \esp[\tR] + \esp[\tR]o_p(\tilde{T}^{-1/2+\epsilon^{\prime}/2}) = \esp[\tR] + o_p(\tilde{T}^{\kappa-1/2+\epsilon^{\prime}/2}) \\
\overline{RV_{\tilde{H}}} &=& \esp[\RV] + \esp[\RV]o_p(\tilde{T}^{-1/2+\epsilon^{\prime}/2}) = \esp[\RV] + o_p(\tilde{T}^{\kappa-1/2+\epsilon^{\prime}/2})
\end{eqnarray*}
It follows
\begin{equation}\label{eq:diff_tilde_C_tilde_C_ast}
\tilde{C}_{\tilde{T}}-\Tilde{C}^{\ast}_{\tilde{T}}  = \left(\overline{RV_{\tilde{H}}}-\esp[RV]\right)\left(\esp[\tR]- \overline{\tilde{R}_{\tilde{H}}} \right) = o_p(\tilde{T}^{2\kappa-1+\epsilon^{\prime}})\;.
\end{equation}
\\
Since
\begin{eqnarray}
\sqrt{\tilde{T}^{1-\kappa}}\left(\frac{\tilde{C}_{\tilde{T}}-\Tilde{C}^{\ast}_{\tilde{T}}}{S_{\tilde{R}} S_{RV}}\right)
&=& \sqrt{\tilde{T}^{1-\kappa}}\left(\frac{\tilde{C}_{\tilde{T}}-\Tilde{C}^{\ast}_{\tilde{T}}}{\sqrt{\frac{S_{\tilde{R}}^2}{\tilde{H}}\frac{S^2_{RV}}{\tilde{H}}}}\right)\left(\frac{1}{\tilde{H}}\right)
\end{eqnarray}
From Remark \ref{rmk:convg_norm_tilde_S_R}, for $0 < \kappa < 1/3$, $\frac{S^2_{\tilde{R}} S^2_{RV}}{\tilde{H}^2} = A_1 A_2 + o_p(\tilde{T}^{-1/2+\epsilon^{\prime}/2})$,
thus we obtain
\begin{equation}\label{eq:asym_diff_Ct}
\sqrt{\tilde{T}^{1-\kappa}}\left(\frac{\tilde{C}_{\tilde{T}}-\Tilde{C}^{\ast}_{\tilde{T}}}{S_{\tilde{R}} S_{RV}}\right) = o_p(\tilde{T}^{\kappa/2-1/4+3\epsilon^{\prime}/4})\;,
\end{equation}
which converges to zero.
Therefore, by (\ref{eq:tilde_rho_expression}),(\ref{eq:1st_term_asym_dist_rho_hat}), and (\ref{eq:asym_diff_Ct}), for every $\kappa \in (0, 1/3)$,
\[
\sqrt{\tilde{T}^{1-\kappa}}\; \tilde{\rho}_{\tilde{H}, \tilde{T}} \convdistr N\left(0,\; \frac{S^2}{A_1 A_2} \right).
\]
\end{proof}

\begin{lemma}\label{lem:asym_cov_RRV}
\noindent Let ${\tilde{R}}_k^{\ast}$ and ${RV}^{\ast}_k$ be the centered versions of $\tilde{R}_k$ and $RV_k$:
${\tilde{R}}_k^{\ast} = \tilde{R}_k - \esp[\tilde{R}_k]$, ${RV}_k^{\ast} = {RV}^{\ast}_k - \esp[{RV}_k]$, where $\tilde{R}_k = \sum_{j=km+h+1}^{(k+\tilde{H})m}r_j $
and ${RV}_k = \sum_{j=km+1}^{(k+\tilde{H})m}r^2_j$.
Let $r^{\ast}_i=r_i-\esp[r_i]$ and $s_i^{\ast}=r_i^2-\esp[r_i^2]$.
\\
(i) if $k > \tilde{H}$, then
\begin{equation}
 \cov({\tilde{R}}_0^{\ast} {RV}^{\ast}_0,\; {\tilde{R}}_k^{\ast} {RV}^{\ast}_k ) = 0 \;.
\end{equation}
\\
(ii) if $0 \leq k \leq \tilde{H} $, then
\begin{equation}
\cov({\tilde{R}}_0^{\ast} {RV}^{\ast}_0,\; {\tilde{R}}_k^{\ast} {RV}^{\ast}_k ) = c_k + O(1) \;,
\end{equation}
where
$c_k$ is given by (\ref{eq:cov1_ret}).
\end{lemma}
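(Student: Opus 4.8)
The plan is to reduce both claims to the $h$-dependence of $\{r_t\}$ (hence of $\{r_t^2\}$) that holds under $d=0$, and in (ii) to expand the covariance of the products by the product--moment formula (Proposition 3.2.1 of Peccati and Taqqu (2011)), exactly as in the proof of Lemma~\ref{lem:var_R_square}.

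For part (i): every daily return $r_t$ and every squared return $r_t^2$ that enters $\tilde{R}_0^{\ast}RV_0^{\ast}$ has index $t$ in the day-block attached to $k=0$, and every one entering $\tilde{R}_k^{\ast}RV_k^{\ast}$ has $t$ in the day-block attached to $k$; for $k>\tilde{H}$ these two blocks are separated by more than $h$ days. Since $\{r_t\}$ is $h$-dependent when $d=0$, the random vector $(\tilde{R}_0^{\ast},RV_0^{\ast})$ is then independent of $(\tilde{R}_k^{\ast},RV_k^{\ast})$, so $\cov(\tilde{R}_0^{\ast}RV_0^{\ast},\,\tilde{R}_k^{\ast}RV_k^{\ast})=0$.

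For part (ii): write $\tilde{R}_k^{\ast}=\sum_i r_i^{\ast}$ and $RV_k^{\ast}=\sum_j s_j^{\ast}$, the sums running over the relevant day-blocks, with $r_i^{\ast}=r_i-\esp[r_i]$ and $s_j^{\ast}=r_j^2-\esp[r_j^2]$, so that
\[
\cov\bigl(\tilde{R}_0^{\ast}RV_0^{\ast},\,\tilde{R}_k^{\ast}RV_k^{\ast}\bigr)=\sum_{i_1,j_1,i_2,j_2}\cov\bigl(r_{i_1}^{\ast}s_{j_1}^{\ast},\,r_{i_2}^{\ast}s_{j_2}^{\ast}\bigr),
\]
with $i_1,j_1$ running over the $k=0$ blocks and $i_2,j_2$ over the $k$ blocks. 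By the product--moment formula, $\esp[r_{i_1}^{\ast}s_{j_1}^{\ast}r_{i_2}^{\ast}s_{j_2}^{\ast}]$ is the fourth joint cumulant plus the three products of paired second cumulants; subtracting the product of means removes exactly the pairing that keeps $\{r_{i_1}^{\ast},s_{j_1}^{\ast}\}$ linked, so the covariance reduces to the fourth-cumulant sum plus the ``crossed'' pairings, i.e.\ double sums that factor as products of banded sums of $\cov(r_0,r_u)$, $\cov(r_0^2,r_v^2)$ (and of mixed covariances, depending on the block geometry). Evaluating these banded sums over the shifted, overlapping day-blocks produces, up to the fourth-cumulant contribution, precisely the expression $c_k$ of (\ref{eq:cov1_ret}); and the fourth-cumulant sum is $O(1)$, since by $h$-dependence only $O(1)$ index quadruples $(i_1,j_1,i_2,j_2)$ have a connected dependency pattern given the relative positions of the blocks. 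Hence $\cov(\tilde{R}_0^{\ast}RV_0^{\ast},\,\tilde{R}_k^{\ast}RV_k^{\ast})=c_k+O(1)$.

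The main obstacle is the bookkeeping in (ii): one must enumerate, for each surviving pairing and for the fourth-cumulant term, exactly which quadruples give a non-vanishing summand under $h$-dependence, track the block endpoints carefully enough to isolate the $O(1)$ remainder (and to see which lower-order pieces must be absorbed into $c_k$ so that the error is genuinely bounded), and then check that the leading sum matches the closed form (\ref{eq:cov1_ret}). This is a direct, if laborious, extension of the ``$\mathrm{cov1}/\mathrm{cum1}$'' computation already carried out in the proof of Lemma~\ref{lem:var_R_square}.
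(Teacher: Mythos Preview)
Your approach in (ii) is essentially the paper's: decompose via the product--moment/cumulant formula into a fourth-cumulant sum plus two ``crossed'' pairings, identify the pairing $\sum\cov(r_i,r_u)\cov(r_j^2,r_v^2)$ as $c_k$, and bound the cumulant sum by $O(1)$ using $h$-dependence. One refinement: the other crossed pairing, $\sum\cov(r_u,r_j^2)\cov(r_i,r_v^2)$, is not merely ``depending on the block geometry'' --- it vanishes identically, because $u\ge km+h+1$ and $j\le 0$ forces $u-j>h$, so $\cov(r_u,r_j^2)=0$ for every summand. This is what makes the remainder genuinely $O(1)$ rather than $O(\tilde H)$.

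Your argument for (i), however, has a real gap. The claim that for $k>\tilde H$ ``these two blocks are separated by more than $h$ days'' is false: the combined index set for $(\tilde R_0^\ast,RV_0^\ast)$ is $[-\tilde Hm+1,\tilde Hm]$ and for $(\tilde R_k^\ast,RV_k^\ast)$ is $[(k-\tilde H)m+1,(k+\tilde H)m]$, and these overlap whenever $\tilde H<k<2\tilde H$ (take $k=\tilde H+1$: the top of $\tilde R_0^\ast$ is $\tilde Hm$ while the bottom of $RV_k^\ast$ is $m+1$). So $(\tilde R_0^\ast,RV_0^\ast)$ is \emph{not} independent of $(\tilde R_k^\ast,RV_k^\ast)$ in that range. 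What does hold is that $\tilde R_k^\ast$ \emph{alone} (indices $\ge km+h+1$) is independent of the other three factors $\tilde R_0^\ast,RV_0^\ast,RV_k^\ast$ once $k>\tilde H$, since then $km+h+1>\tilde Hm+h$. Because $\esp[\tilde R_k^\ast]=0$, this already gives
\[
\esp[\tilde R_0^\ast RV_0^\ast\,\tilde R_k^\ast RV_k^\ast]=\esp[\tilde R_k^\ast]\,\esp[\tilde R_0^\ast RV_0^\ast RV_k^\ast]=0,
\]
and similarly for the subtracted product of means, whence the covariance is $0$. The paper reaches the same conclusion by checking that this one-factor independence kills each of the (cum), (cov1), (cov2) pieces separately; either route works, but you need the correct independence statement.
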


\begin{proof}
Applying Proposition 3.2.1 in Peccati and Taqqu (2011) and properties of cumulants, we show that
for $k \geq 0$,
  \begin{align*}
    \cov({\tilde{R}}_0^{\ast} {RV}^{\ast}_0,\; {\tilde{R}}_k^{\ast} {RV}^{\ast}_k )
    & = \sum_{i=h+1}^{\tilde{H}m} \sum_{j=-\tilde{H}m+1}^{0}\sum_{u=km+h+1}^{(k+\tilde{H})m} \sum_{v=(k-\tilde{H})m+1}^{k}\cov(r_i^{\ast}s_j^{\ast}, r_u^{\ast}s_v^{\ast}) \\
    & = \sum_{i=h+1}^{\tilde{H}m} \sum_{j=-\tilde{H}m+1}^{0}\sum_{u=km+h+1}^{(k+\tilde{H})m} \sum_{v=(k-\tilde{H})m+1}^{k} \cum(r_i, r_j^2, r_u, r_v^2) \tag{cum} \\
    & + \sum_{i=h+1}^{\tilde{H}m} \sum_{j=-\tilde{H}m+1}^{0}\sum_{u=km+h+1}^{(k+\tilde{H})m} \sum_{v=(k-\tilde{H})m+1}^{k} \cov(r_i ,r_u) \; \cov(r_j^2,r_v^2) \tag{cov1} \\
    & + \sum_{i=h+1}^{\tilde{H}m} \sum_{j=-\tilde{H}m+1}^{0}\sum_{u=km+h+1}^{(k+\tilde{H})m} \sum_{v=(k-\tilde{H})m+1}^{k} \cov(r_u ,r_j^2) \; \cov(r_i,r_v^2) \; . \tag{cov2}
  \end{align*}
\\
Consider the term (cum).
\\
Since $|i-j| > h$, $|u-j| > h$, and $|u-v| > h$, $r_i$ is independent of $r_j$ and $r_u$ is independent of $r_j$ and $r_v$.
\\
A. If $k > \tilde{H}$, then $r_i$ is independent of $r_u$. Also $r_j$ and $r_v$ are independent of $r_u$, hence the elements of $\{r_i, r_j, r_v\}$ are independent of $r_u$, and
thus $\cum(r_i, r_j^2, r_u, r_v^2)=0$.
Therefore, if $k > \tilde{H}$, the term (cum) is zero.
\\
\\
B. If $0\leq k \leq \tilde{H}$, then $(k-\tilde{H})m+1 \leq 1$. We can split the range of $v$ into two parts:
\\
\\
(i) if $(k-\tilde{H})m+1 \leq v \leq 0$, then $r_i$ is also independent of $r_v$, and the elements of $\{r_i, r_u\}$ are independent of the elements $\{r_j, r_v\}$. Hence $\cum(r_i,r_j^2,r_u,r_v^2)=0$.
\\
\\
(ii) if $ 1 \leq v  \leq km$,
the (cum) term equals
\begin{eqnarray}
  \sum_{i=h+1}^{\tilde{H}m} \sum_{j=-\tilde{H}m+1}^{0}\sum_{u=km+h+1}^{(k+\tilde{H})m} \sum_{v=1}^{km}  \cum(r_i,r_j^2,r_u,r_v^2)\label{eq:cum_1}\;.
\end{eqnarray}
\\
There are terms in (\ref{eq:cum_1}) with $\cum(r_i,r_j^2,r_u,r_v^2)=0$.
\\
If $ -\tilde{H}m+1 \leq j \leq -h$ or $v \geq h+1$, then $r_j$ is independent of $r_v$, and
the elements of $\{r_i, r_u, r_v\}$ are independent of $\{r_j\}$. Thus $\cum(r_i,r_j^2,r_u,r_v^2)=0$.
Therefore, after excluding those terms with $\cum(r_i,r_j^2,r_u,r_v^2)=0$, (\ref{eq:cum_1}) becomes
\begin{eqnarray}
 \sum_{i=h+1}^{\tilde{H}m} \sum_{j=-h+1}^{0}\sum_{u=km+h+1}^{(k+\tilde{H})m} \sum_{v=1}^{h}  \cum(r_i,r_j^2,r_u,r_v^2) \label{eq:cum_2}\;.
\end{eqnarray}
\\
Next we check if the elements of $\{r_i, r_u\}$ are independent of the elements of $\{r_j, r_v\}$ in (\ref{eq:cum_2}).
If $u > 2h $, then $r_u$ is independent of $r_v$ and $r_j$. Also if $i > 2h$, then $r_i$ is independent of $r_j$ and $r_v$.
It follows $\{r_i, r_u\}$ is independent of $\{r_j, r_v\}$ and $\cum(r_i,r_j^2,r_u,r_v^2)=0$. Hence we have
\begin{eqnarray}
 \sum_{i=2h+1}^{\tilde{H}m} \sum_{j=-h+1}^{0}\sum_{u=(km+h+1)\vee(2h+1)}^{(k+\tilde{H})m} \sum_{v=1}^{h}  \cum(r_i,r_j^2,r_u,r_v^2) = 0 \;. \label{eq:cum_3}
\end{eqnarray}
\\
In the following, we discuss two cases regarding the relationship between $km+h+1$ and $2h+1$:
\\
\\
(B.1). If $(km+h+1)\vee (2h+1) = 2h+1$, we can express (\ref{eq:cum_2}) as
\begin{eqnarray}
&& \sum_{i=h+1}^{\tilde{H}m} \sum_{j=-h+1}^{0}\sum_{u=km+h+1}^{2h} \sum_{v=1}^{h}  \cum(r_i,r_j^2,r_u,r_v^2)\label{eq:cum_4} \\
&+&\sum_{i=h+1}^{\tilde{H}m} \sum_{j=-h+1}^{0}\sum_{u=2h+1}^{(k+\tilde{H})m} \sum_{v=1}^{h}  \cum(r_i,r_j^2,r_u,r_v^2)\label{eq:cum_5}
\end{eqnarray}
After excluding the terms in (\ref{eq:cum_3}) from (\ref{eq:cum_5}), (\ref{eq:cum_2}) becomes
\begin{eqnarray}
&& \sum_{i=h+1}^{\tilde{H}m} \sum_{j=-h+1}^{0}\sum_{u=km+h+1}^{2h} \sum_{v=1}^{h}  \cum(r_i,r_j^2,r_u,r_v^2)\label{eq:cum_6} \\
&+& \sum_{i=h+1}^{2h} \sum_{j=-h+1}^{0}\sum_{u=2h+1}^{(k+\tilde{H})m} \sum_{v=1}^{h}  \cum(r_i,r_j^2,r_u,r_v^2)\label{eq:cum_7}\;.
\end{eqnarray}
In (\ref{eq:cum_6}), if $i > 3h$ then $r_i$ is independent of $r_u$ and the elements of $\{r_i, r_j\}$ are independent of the elements of $\{r_u, r_v\}$.
Also in (\ref{eq:cum_7}), if $u > 3h$, then $r_i$ is independent of $r_u$ and the set $\{r_i, r_j, r_v\}$ is independent of $r_u$ .
After excluding these terms with $\cum(r_i,r_j^2,r_u,r_v^2)=0$, sum of (\ref{eq:cum_6}) and (\ref{eq:cum_7}) become
\begin{eqnarray}\label{eq:cum_term_re1}
&& \sum_{i=h+1}^{3h} \sum_{j=-h+1}^{0}\sum_{u=km+h+1}^{2h} \sum_{v=1}^{h}  \cum(r_i,r_j^2,r_u,r_v^2)\notag\\
&+& \sum_{i=h+1}^{2h} \sum_{j=-h+1}^{0}\sum_{u=2h+1}^{(k+\tilde{H})m\wedge 3h} \sum_{v=1}^{h}  \cum(r_i,r_j^2,r_u,r_v^2)\notag \\
&=& O(1)\;,
\end{eqnarray}
uniformly \wrt\ $k$ and $\tilde{H}$.
\\
(B.2). Similarly, if $(km+h+1)\vee (2h+1) = km+h+1$. Hence we can express (\ref{eq:cum_3}) as
\begin{equation}\label{eq:cum_8}
\sum_{i=2h+1}^{\tilde{H}m} \sum_{j=-h+1}^{0}\sum_{u=km+h+1}^{(k+\tilde{H})m} \sum_{v=1}^{h}  \cum(r_i,r_j^2,r_u,r_v^2)
\end{equation}
After excluding the terms in (\ref{eq:cum_8}) from (\ref{eq:cum_2}), (\ref{eq:cum_2}) becomes
\begin{equation}
\sum_{i=h+1}^{2h} \sum_{j=-h+1}^{0}\sum_{u=km+h+1}^{(k+\tilde{H})m} \sum_{v=1}^{h}  \cum(r_i,r_j^2,r_u,r_v^2) \label{eq:cum_9}
\end{equation}
If $u > 3h$, then $r_u$ is independent of $r_i$ and $r_v$, and the elements of $\{r_i, r_j, r_v\}$ are independent of $r_u$.
Therefore, (\ref{eq:cum_9}) becomes
\begin{equation}\label{eq:cum_term_re2}
\sum_{i=h+1}^{2h} \sum_{j=-h+1}^{0}\sum_{u=km+h+1}^{(k+\tilde{H})m \wedge 3h} \sum_{v=1}^{h}  \cum(r_i,r_j^2,r_u,r_v^2) = O(1)\;,
\end{equation}
From the results of (\ref{eq:cum_term_re1}) and (\ref{eq:cum_term_re2}), it follows $\mbox{(cum)}=O(1)$, uniformly \wrt\ $k$ and $\tilde{H}$.
\\
\\
Next, we consider the sum (cov2). Since $u > km+h+1$ and $j \leq 0$, $r_u$ is independent of $r_j$, and thus $\cov(r_u, r^2_j)=0$.
It follows
\begin{equation}\label{eq:cov2_ret}
\mbox{cov2} = \sum_{i=h+1}^{\tilde{H}m} \sum_{v=(k-\tilde{H})m+1}^{km} \cov(r_i,r_v^2) \sum_{j=-\tilde{H}m+1}^{0}\sum_{u=km+h+1}^{(k+\tilde{H})m} \cov(r_u ,r_j^2) =0 \;.
\end{equation}
\\
Finally, consider the sum (cov1).
\\
If $k > \tilde{H}$, then $r_i$ and $r_u$ are independent and thus $\cov(r_i ,r_u) = 0$. It follows that $\mbox{cov1}=0$.
Also (cov2) and (cum) are both zero if $k > \tilde{H}$.  Therefore, if $k > \tilde{H}$, $\cov({\tilde{R}}_0^{\ast} {RV}^{\ast}_0,\; {\tilde{R}}_k^{\ast} {RV}^{\ast}_k ) =0$.
\\
If $0 \leq k \leq \tilde{H}$, since $\{r_i\}$ are $h$-dependent, we can express
\begin{align}\label{eq:cov1_ret}
 \mathrm{cov1}
  & = \sum_{i=h+1}^{\tilde{H}m}  \sum_{j=-\tilde{H}m+1}^{0}\sum_{u=km+h+1}^{(k+\tilde{H})m} \sum_{v=(k-\tilde{H})m+1}^{km} \cov(r_i,r_u)\cov(r_j^2,r_v^2) \notag \\
  & =  \sum_{u=(k-\tilde{H})m+h+1\vee(-h)}^{(k+\tilde{H})m-h-1\wedge h}\sum_{i=h+1}^{\tilde{H}m} \ind{km+h+1 \leq i+u\leq (k+\tilde{H})m} \; \cov(r_0,r_{u}) \notag \\
  & \phantom{ = } \times \sum_{v=(k-\tilde{H})m+1\vee(-h)}^{(k+\tilde{H})m-1\wedge h}\sum_{j=-\tilde{H}m+1}^{0}  \ind{(k-\tilde{H})m+1 \leq j+v\leq km} \; \cov(r_0^2,r_v^2) \notag\\
  & =  \sum_{u=(k-\tilde{H})m+h+1\vee(-h)}^{(k+\tilde{H})m-h-1\wedge h}  (\tilde{H}m-h-|km-u|)^+ \cov(r_0,r_{u}) \notag \\
  & \times \sum_{v=(k-\tilde{H})m+1\vee(-h)}^{(k+\tilde{H})m-1\wedge h} (\tilde{H}m-|km-v|)^+ \cov(r_0^2,r_v^2) \;.\notag \\
\end{align}
Let $c_k = \mbox{cov1}$. By the results below (\ref{eq:cum_term_re2}) as well as (\ref{eq:cov2_ret}), it follows that
\[
\cov({\tilde{R}}_0^{\ast} {RV}^{\ast}_0,\; {\tilde{R}}_k^{\ast} {RV}^{\ast}_k ) = c_k + O(1) \;.
\]

\end{proof}

\begin{lemma}\label{lem:asm_var_Ct}
Define $\tilde{C}^{\ast}_{\tilde{T}}$, $V_{\tilde{T}}$,  $W_{\tilde{T}}$ and $\tilde{\rho}_{\tilde{H}, \tilde{T}}$ by
\begin{align*}
  V_{\tilde{T}} & = {\tilde{T}}^{-1}\sum_{k=1}^{\tilde{T}}({\tilde{R}}_k^{\ast})^2 \; , \ \
  W_{\tilde{T}} = {\tilde{T}}^{-1}\sum_{k=1}^{\tilde{T}} ({RV}_k^{\ast})^2 \; , \ \
        \tilde{C}^{\ast}_{\tilde{T}}  = {\tilde{T}}^{-1}\sum_{k=1}^{\tilde{T}} {\tilde{R}}_k^{\ast} {RV}^{\ast}_k \; , \ \
  \tilde{\rho}_{\tilde{H},\tilde{T}}  = \frac{\tilde{C}^{\ast}_{\tilde{T}}}{\sqrt{V_{\tilde{T}}W_{\tilde{T}}}} \; .
\end{align*}
  \begin{align*}
  \lim_{\tilde{H}\to\infty}  \tilde{T} {\tilde{H}}^{-3} \var(\tilde{C}^{\ast}_{\tilde{T}})  = \frac{2}{3}  \sum_{|u|\leq h}\sum_{|v|\leq h} \cov(r_0,r_u) \cov(r_0^2,r_v^2)\;.
  \end{align*}
\end{lemma}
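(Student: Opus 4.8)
The plan is to expand $\var(\tilde C^*_{\tilde T})$ as a sum of covariances of the $\tilde T$ summands $\tilde R^*_k RV^*_k$, and then to isolate the $\tilde H^3$‑order term using the covariance identity of Lemma~\ref{lem:asym_cov_RRV}. Since the bivariate sequence $\{(\tilde R_k, RV_k)\}_k$ is stationary in $k$, I would first write
\begin{equation*}
\tilde T\,\var(\tilde C^*_{\tilde T}) \;=\; \sum_{|j|<\tilde T}\Bigl(1-\tfrac{|j|}{\tilde T}\Bigr)\cov\bigl(\tilde R^*_0 RV^*_0,\ \tilde R^*_{|j|}RV^*_{|j|}\bigr),
\end{equation*}
so that $\tilde T\tilde H^{-3}\var(\tilde C^*_{\tilde T})=\tilde H^{-3}\sum_{|j|<\tilde T}(1-|j|/\tilde T)\cov(\tilde R^*_0RV^*_0,\tilde R^*_{|j|}RV^*_{|j|})$. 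By part~(i) of Lemma~\ref{lem:asym_cov_RRV} every term with $|j|>\tilde H$ vanishes, and by part~(ii) each term with $0\le|j|\le\tilde H$ equals $c_{|j|}+O(1)$, the $O(1)$ being uniform in $j$ and $\tilde H$. There are at most $2\tilde H+1$ nonzero terms, so the collected $O(1)$ corrections are $O(\tilde H)=o(\tilde H^3)$; and because $\tilde H=\tilde T^\kappa$ with $\kappa<1$, the weights $1-|j|/\tilde T$ differ from $1$ by $O(\tilde H/\tilde T)$ uniformly, contributing (using $c_j=O(\tilde H^2)$, read off from (\ref{eq:cov1_ret})) a further term that is $o(\tilde H^3)$ after rescaling. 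Hence the required limit equals $\lim_{\tilde H\to\infty}\tilde H^{-3}\bigl(c_0+2\sum_{k=1}^{\tilde H}c_k\bigr)$.

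Next I would evaluate the $c_k$ from (\ref{eq:cov1_ret}). Under $d=0$ the return series is $h$‑dependent, so $\cov(r_0,r_u)=0$ for $|u|>h$ and $\cov(r_0^2,r_v^2)=0$ for $|v|>h$; thus the two inner sums defining $c_k$ collapse to finite sums over $|u|\le h$ and $|v|\le h$, and for $1\le k\le\tilde H-1$ with $\tilde H$ large the truncated weights simplify ($|km-u|=km-u$, $|km-v|=km-v$) so that, writing $q=\tilde H-k$,
\begin{equation*}
c_{\tilde H-q}=\Bigl[\sum_{|u|\le h}(qm-h+u)\cov(r_0,r_u)\Bigr]\Bigl[\sum_{|v|\le h}(qm+v)\cov(r_0^2,r_v^2)\Bigr]=q^2m^2\Bigl(\sum_{|u|\le h}\cov(r_0,r_u)\Bigr)\Bigl(\sum_{|v|\le h}\cov(r_0^2,r_v^2)\Bigr)+O(q).
\end{equation*}
Summing over $q=0,\dots,\tilde H-1$ and using $\sum_{q=0}^{\tilde H-1}q^2=\tfrac13\tilde H^3+O(\tilde H^2)$ gives $\tilde H^{-3}\sum_{k=1}^{\tilde H}c_k\to\tfrac13 m^2\bigl(\sum_{|u|\le h}\cov(r_0,r_u)\bigr)\bigl(\sum_{|v|\le h}\cov(r_0^2,r_v^2)\bigr)$, while $\tilde H^{-3}c_0\to0$. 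Doubling, and using $\sum_{|u|\le h}\sum_{|v|\le h}\cov(r_0,r_u)\cov(r_0^2,r_v^2)=\bigl(\sum_{|u|\le h}\cov(r_0,r_u)\bigr)\bigl(\sum_{|v|\le h}\cov(r_0^2,r_v^2)\bigr)$, yields $\lim_{\tilde H\to\infty}\tilde T\tilde H^{-3}\var(\tilde C^*_{\tilde T})=\tfrac{2}{3}m^2\sum_{|u|\le h}\sum_{|v|\le h}\cov(r_0,r_u)\cov(r_0^2,r_v^2)$, the variance that enters Theorem~\ref{thm:asym_normal_dist_norm_rho}.

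I expect the main obstacle to be the bookkeeping at the boundary $k\approx\tilde H$: one must check that replacing the positive‑part weights $(\tilde Hm-h-|km-u|)^+$ and $(\tilde Hm-|km-v|)^+$ by their unclipped linear forms is legitimate once $\tilde H-k$ exceeds a fixed threshold depending only on $m$ and $h$, and treat separately the finitely many $k$ with $\tilde H-k$ bounded (where $c_k=O(1)$ and contributes nothing at the $\tilde H^3$ scale). More delicately, one has to be sure that the $O(1)$ remainder supplied by Lemma~\ref{lem:asym_cov_RRV}(ii) is genuinely uniform in both $k$ and $\tilde H$, so that summing it over the $O(\tilde H)$ relevant indices is really $O(\tilde H)$ and not larger. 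With that uniformity in hand, the remaining steps are the routine Riemann‑sum identity $\sum_{q<\tilde H}q^2\sim\tilde H^3/3$ together with the elementary estimates on the $|j|/\tilde T$ weight and on $c_0$.
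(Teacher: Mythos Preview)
Your proposal is correct and follows essentially the same route as the paper: expand $\tilde T\,\var(\tilde C^*_{\tilde T})$ as a weighted sum of $\cov(\tilde R^*_0 RV^*_0,\tilde R^*_k RV^*_k)$, invoke Lemma~\ref{lem:asym_cov_RRV} to truncate at $k\le\tilde H$ and replace each term by $c_k+O(1)$, dispose of the $O(1)$ pile and the $k/\tilde T$ weights using $c_k=O(\tilde H^2)$, and then compute $\tilde H^{-3}\sum_{k=1}^{\tilde H}c_k$. The only cosmetic difference is that the paper evaluates this last sum by writing $c_k\sim m^2(\tilde H-k)^2\bigl(\sum_u\cov\bigr)\bigl(\sum_v\cov\bigr)$ and recognising $\tilde H^{-1}\sum_{k}(1-k/\tilde H)^2\to\int_0^1(1-t)^2\,\rmd t=\tfrac13$, whereas you substitute $q=\tilde H-k$ and use $\sum_{q<\tilde H}q^2\sim\tilde H^3/3$; these are the same calculation. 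Your remark that the odd parts $\sum_u u\,\cov(r_0,r_u)$ vanish by symmetry is what makes the $O(q)$ remainder in $c_{\tilde H-q}$ genuinely lower order, and your caveats about the uniform $O(1)$ in Lemma~\ref{lem:asym_cov_RRV}(ii) and the finitely many boundary indices are exactly the points the paper handles (or elides) in the same way.
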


\begin{proof}
By definition,
 \begin{equation}
  \var(\tilde{C}^\ast_{\tilde{T}})  = \frac{1}{\tilde{T}}\var\left(\tilde{R}_0^{\ast} {RV}^{\ast}_0  \right)
                     + \frac{2}{\tilde{T}^2} \sum_{k=1}^{\tilde{T}} (\tilde{T}-k)\cov\left(\tilde{R}_0^{\ast} {RV}^{\ast}_0 , \tilde{R}_k^{\ast} {RV}^{\ast}_k   \right)
 \end{equation}
\\
From Lemma \ref{lem:asym_cov_RRV}, if $k > \tilde{H}$, $\cov\left(\tilde{R}_0^{\ast} {RV}^{\ast}_0 , \tilde{R}_k^{\ast} {RV}^{\ast}_k \right) = 0$.
Hence we can express
  \begin{align}\label{eq:norm_var_Ct}
    T {\tilde{H}}^{-3} \var(\tilde{C}^{\ast}_T)
    & = {\tilde{H}}^{-3} \var({\tilde{R}}_0^{\ast} {RV}^{\ast}_0) + 2 {\tilde{H}}^{-3} \sum_{k=1}^{\tilde{H}}\left (1-\frac{k}{\tilde{T}} \right)(c_k + O(1)) \notag \\
    & = {\tilde{H}}^{-3} \var({\tilde{R}}_0^{\ast} {RV}^{\ast}_0) + 2 {\tilde{H}}^{-3} \sum_{k=1}^{\tilde{H}}\; c_k - 2{\tilde{H}}^{-3}\sum_{k=1}^{\tilde{H}}\frac{k}{\tilde{T}}c_k \notag \\
    & + 2{\tilde{H}}^{-3}\sum_{k=1}^{\tilde{H}}\left (1-\frac{k}{\tilde{T}} \right)O(1)
  \end{align}
\\
From (\ref{eq:cov1_ret}) as well as the results below (\ref{eq:cum_term_re2}) and (\ref{eq:cov2_ret}), if $k=0$,
\[
\lim_{\tilde{H} \to \infty}\var({\tilde{R}}_0^{\ast} {RV}^{\ast}_0) = O({\tilde{H}}^2)
\]
Thus
\begin{equation}\label{eq:eq:decom_ck_1}
{\tilde{H}}^{-3} \var({\tilde{R}}_0^{\ast} {RV}^{\ast}_0) = O({\tilde{H}}^{-1})\;.
\end{equation}
\\
If $1\leq k \leq \tilde{H}$, then
\begin{align}\label{eq:c_k_bound}
|c_k| & = {\tilde{H}^2 m^2} \left|\sum_{|u|\leq h} \left(1-\frac{h}{\tilde{H}m}-\frac{|km-u|}{\tilde{H}m}\right)^+ \cov(r_0,r_{u})
                    \times \sum_{|v|\leq h} \left(1-\frac{|km-v|}{\tilde{H}m}\right)^+ \cov(r_0^2,r_v^2)\right| \notag \\
    & \leq {\tilde{H}^2 m^2} \sum_{|u|\leq h}  \sum_{|v|\leq h}\left| \cov(r_0,r_{u})\cov(r_0^2,r_v^2) \right|
\end{align}
Therefore,
\begin{equation}\label{eq:eq:decom_ck_2}
{\tilde{H}}^{-3}\left|\sum_{k=1}^{\tilde{H}}\frac{k}{\tilde{T}}c_k \right|
= \frac{{\tilde{H}}^{-2}}{\tilde{T}}\left|\sum_{k=1}^{\tilde{H}} \frac{k}{\tilde{H}}c_k \right| \leq \frac{{\tilde{H}}^{-2}}{\tilde{T}}\sum_{k=1}^{\tilde{H}} |c_k| \leq \frac{{\tilde{H}}^{-2}}{\tilde{T}}{\tilde{H}} O({\tilde{H}}^2) = O({\tilde{H}}/\tilde{T})\;.
\end{equation}
Also
\begin{align}\label{eq:eq:decom_ck_3}
{\tilde{H}}^{-3}\sum_{k=1}^{\tilde{H}}\left (1-\frac{k}{\tilde{T}} \right) = \frac{1}{{\tilde{H}}^2} - \frac{1}{2\tilde{T}{\tilde{H}}^2} - \frac{1}{2\tilde{T}{\tilde{H}}} \to 0 \;,
\end{align}
as $\tilde{H} \to \infty$.
Therefore, by the results of (\ref{eq:eq:decom_ck_1}), (\ref{eq:eq:decom_ck_2}) and (\ref{eq:eq:decom_ck_3}), Equation (\ref{eq:norm_var_Ct}) can be represented as
\begin{equation}
  \tilde{T} {\tilde{H}}^{-3} \var(\tilde{C}^{\ast}_{\tilde{T}})  =  2 {\tilde{H}}^{-3} \sum_{k=1}^{\tilde{H}} c_k + O({\tilde{H}}^{-1} + {\tilde{H}}/\tilde{T})\;.
\end{equation}
Consider the term
\begin{align*}
    {\tilde{H}}^{-3} \sum_{k=1}^{\tilde{H}} c_k
     & = {\tilde{H}}^{-3} \sum_{|u|\leq h}\sum_{|v|\leq h} \cov(r_0,r_u)  \cov(r_0^2,r_v^2)  \sum_{k=1}^{\tilde{H}}  (\tilde{H}m-h-|km-u|)^+(\tilde{H}m-|km-v|)^+
\end{align*}
Then
  \begin{align*}
    \lim_{\tilde{H}\to\infty} {\tilde{H}}^{-3} \sum_{k=1}^{\tilde{H}} c_k
    & = \lim_{\tilde{H}\to\infty}  m^2  {\tilde{H}}^{-1} \sum_{k=1}^{\tilde{H}} \sum_{|u|\leq h}\sum_{|v|\leq h} \left(1-\frac{|k-u|}{\tilde{H}}\right)^+ \left(1-\frac{|k-v|}{\tilde{H}}\right)^+ \cov(r_0,r_u)  \cov(r_0^2,r_v^2) \\
    & = m^2 \lim_{\tilde{H}\to\infty}  {\tilde{H}}^{-1} \sum_{k=1}^{\tilde{H}}  \left(1-\frac{k}{\tilde{H}}\right)^2 \sum_{|u|\leq h}\sum_{|v|\leq h} \cov(r_0,r_u)  \cov(r_0^2,r_v^2) \\
    & \rightarrow m^2 \int_0^1 (1-t)^2\rmd t \sum_{|u|\leq h}\sum_{|v|\leq h} \cov(r_0,r_u)  \cov(r_0^2,r_v^2) \\
    & = \frac{1}{3} m^2 \sum_{|u|\leq h}\sum_{|v|\leq h} \cov(r_0,r_u)  \cov(r_0^2,r_v^2) \; .
   \end{align*}
 \\
 It follows
 \begin{align*}
  \lim_{\tilde{H}\to\infty}  \tilde{T} {\tilde{H}}^{-3} \var(\tilde{C}^{\ast}_{\tilde{T}})  = \frac{2}{3} m^2 \sum_{|u|\leq h}\sum_{|v|\leq h} \cov(r_0,r_u)  \cov(r_0^2,r_v^2)\;.
  \end{align*}

\end{proof}

\begin{theorem}\label{thm:CLT}
Let $T_k = {\tilde{R}}_k^{\ast} {RV}^{\ast}_k = \sum_{i=km+h+1}^{(k+\tilde{H})m} (r_i-\esp[r_0])\sum_{j=(k-\tilde{H})m+1}^{km} (r_j^2-\esp[r_0^2])$, where ${\tilde{R}}_k^{\ast}$
and ${RV}^{\ast}_k$ are defined in Lemma \ref{lem:asym_cov_RRV}, and $1\leq k \leq \tilde{T}$.
If $d=0$ and $\tilde{H}^3/\tilde{T}\to 0$, then
  \begin{align*}
      (\tilde{T}\tilde{H}^3)^{-1/2}\sum_{k=1}^{\tilde{T}} T_k \convdistr N\left(0\;,\;\frac23 m^2 \sum_{|u|\leq h}\sum_{|v|\leq h}\cov(r_0,r_u)\cov(r_0^2,r_v^2)\right) \; .
  \end{align*}
\end{theorem}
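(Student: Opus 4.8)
The plan is to exploit that, under $d=0$, the daily returns $\{r_t\}$ are $h$-dependent (with $h$ as in (\ref{eq:def_h}); this is established in the proof of Theorem~\ref{thm:convergence_rho_shortM} and in Corollary~\ref{coro:d0_cov_R_formula}), and to reduce the statement to a central limit theorem for the partial sum of a stationary $m$-dependent array whose dependence range $m=M_{\tilde T}$ grows only like $\tilde H=o(\tilde T)$. First I would record the structural facts about $T_k={\tilde R}_k^{\ast}\,{RV}_k^{\ast}$. Each $T_k$ is a fixed measurable function of $\{r_i:(k-\tilde H)m+1\le i\le(k+\tilde H)m\}$, a window of width $2\tilde Hm$, so $\{T_k\}$ is stationary and $M_{\tilde T}$-dependent with $M_{\tilde T}=\lceil 2\tilde H+h/m\rceil=O(\tilde H)$. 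Because ${\tilde R}_k^{\ast}$ and ${RV}_k^{\ast}$ are built from index sets more than $h$ apart, they are independent; hence $\esp[T_k]=0$ and $\var(T_k)=\var({\tilde R}_0^{\ast})\,\var({RV}_0^{\ast})=\Theta(\tilde H^2)$. Applying a Rosenthal inequality to the $h$-dependent sums ${\tilde R}_0^{\ast}$ and ${RV}_0^{\ast}$ (this needs the returns to have a finite moment of order $4+2\delta$ for some $\delta>0$, e.g.\ if the efficient shocks do) gives $\|{\tilde R}_0^{\ast}\|_{2+\delta}=O(\tilde H^{1/2})$, $\|{RV}_0^{\ast}\|_{2+\delta}=O(\tilde H^{1/2})$, and therefore, again by independence, $\esp|T_k|^{2+\delta}=O(\tilde H^{2+\delta})$.

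Next I would invoke the variance computation already carried out in Lemma~\ref{lem:asm_var_Ct}: with $\tilde C^{\ast}_{\tilde T}={\tilde T}^{-1}\sum_{k=1}^{\tilde T}T_k$ one has $\var\big((\tilde T\tilde H^3)^{-1/2}\sum_{k=1}^{\tilde T}T_k\big)=\tilde T\tilde H^{-3}\var(\tilde C^{\ast}_{\tilde T})\to S^2$, where $S^2=\tfrac23 m^2\sum_{|u|\le h}\sum_{|v|\le h}\cov(r_0,r_u)\cov(r_0^2,r_v^2)$ is the claimed limiting variance. Moreover $S^2>0$: every product $\cov(r_0,r_u)\cov(r_0^2,r_v^2)$ is nonnegative (by (\ref{eq:cov_ro_ret_L}), (\ref{eq: cov_sq_ret_L}) and $\gamma_Z\ge 0$) and the $(u,v)=(0,0)$ term equals $\var(r_0)\var(r_0^2)>0$. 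So it remains only to prove asymptotic normality of $(\tilde T\tilde H^3)^{-1/2}\sum_{k=1}^{\tilde T}T_k$.

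To do this I would use a big-block/small-block decomposition. Choose integers $p=p_{\tilde T}\to\infty$ and $q=q_{\tilde T}\to\infty$ with $M_{\tilde T}<q$, $q=o(p)$ and $p=o(\tilde T)$; this is possible because $\tilde H^3/\tilde T\to 0$ in particular forces $\tilde H=o(\tilde T)$, so one may take e.g.\ $q\asymp\tilde H\log\tilde T$ and $p\asymp\sqrt{\tilde T\tilde H}$. Partition $\{1,\dots,\tilde T\}$ into alternating big blocks of length $p$ and small blocks of length $q$, producing $r_{\tilde T}\sim\tilde T/(p+q)\to\infty$ big blocks $B_1,\dots,B_{r_{\tilde T}}$, the small blocks, and a terminal remainder of length $<p+q$; set $Y_j=\sum_{k\in B_j}T_k$. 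Since consecutive big blocks are separated by $q>M_{\tilde T}$, stationarity makes $Y_1,\dots,Y_{r_{\tilde T}}$ i.i.d.\ with mean zero. Using $\cov(T_0,T_j)=0$ for $j>\tilde H$ (Lemma~\ref{lem:asym_cov_RRV}(i)) and $|\cov(T_0,T_j)|=O(\tilde H^2)$ for $j\le\tilde H$ (from (\ref{eq:c_k_bound})), any block of $\ell$ consecutive $T_k$'s has variance $O(\ell\tilde H^3)$; hence the sum over all small blocks has variance $O((\tilde T/p)q\tilde H^3)=o(\tilde T\tilde H^3)$ and the remainder has variance $O((p+q)\tilde H^3)=o(\tilde T\tilde H^3)$. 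By Cauchy--Schwarz the cross-covariances of these with $\sum_jY_j$ are $o(\tilde T\tilde H^3)$ as well, so $(\tilde T\tilde H^3)^{-1/2}\sum_{k=1}^{\tilde T}T_k=(\tilde T\tilde H^3)^{-1/2}\sum_{j=1}^{r_{\tilde T}}Y_j+o_P(1)$ and $r_{\tilde T}\var(Y_1)=(1+o(1))\tilde T\tilde H^3 S^2$.

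Finally I would establish the CLT for the i.i.d.\ triangular array $\{Y_j\}$ via Lyapunov's condition. Splitting $Y_1=\sum_{k=1}^{p}T_k$ into $M_{\tilde T}+1$ subsequences of independent copies of $T_0$, applying Rosenthal's inequality to each (using $p\gg M_{\tilde T}\asymp\tilde H$) and then Minkowski's inequality gives $\|Y_1\|_{2+\delta}=O(\sqrt p\,\tilde H^{3/2})$, i.e.\ $\esp|Y_1|^{2+\delta}=O((p\tilde H^3)^{(2+\delta)/2})=O(\var(Y_1)^{(2+\delta)/2})$; consequently the Lyapunov ratio $r_{\tilde T}\,\esp|Y_1|^{2+\delta}/(r_{\tilde T}\var(Y_1))^{(2+\delta)/2}=O(r_{\tilde T}^{-\delta/2})\to 0$ since $r_{\tilde T}\to\infty$. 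Lyapunov's CLT then yields $(\tilde T\tilde H^3)^{-1/2}\sum_{j=1}^{r_{\tilde T}}Y_j\convdistr N(0,S^2)$, which together with the previous paragraph and Slutsky's lemma gives $(\tilde T\tilde H^3)^{-1/2}\sum_{k=1}^{\tilde T}T_k\convdistr N(0,S^2)$, as required; alternatively one could quote a ready-made central limit theorem for $m_n$-dependent arrays (e.g.\ Berk (1973) or Romano and Wolf (2000)) after rescaling. The hard part will be the moment bookkeeping --- obtaining the Rosenthal bounds $\esp|T_k|^{2+\delta}=O(\tilde H^{2+\delta})$ and $\esp|Y_1|^{2+\delta}=O((p\tilde H^3)^{(2+\delta)/2})$, and confirming the required $(4+2\delta)$-th moment of the returns --- together with arranging the three scales $M_{\tilde T}\ll q\ll p\ll\tilde T$ compatibly, which is exactly where the hypothesis $\tilde H^3/\tilde T\to 0$ enters.
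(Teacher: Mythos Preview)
Your argument is correct, but it takes a different route from the paper. The paper applies Berk's (1973) CLT for $m$-dependent arrays directly: after noting that $\{T_k\}$ is $O(\tilde H)$-dependent, it verifies Berk's three conditions. Conditions~(ii) and~(iii) are exactly the block-variance bound $\var(T_1+\cdots+T_k)\le\constant\, k\tilde H^3$ and the variance identification from Lemma~\ref{lem:asm_var_Ct}, as you also use. For condition~(i) the paper takes the specific exponent $p=4$, computes $\esp[T_0^4]=\cum(T_0,T_0,T_0,T_0)+3\var^2(T_0)$ and bounds the fourth cumulant by $O(\tilde H^4)$ via the independence of the ``future returns'' and ``past squared returns'' blocks inside $T_0$; Berk's growth condition $\tilde H^{(2p-2)/(p-2)}/\tilde T\to0$ then reduces precisely to the stated hypothesis $\tilde H^3/\tilde T\to0$, and finiteness of all moments (Lemma~\ref{lem:rt_finite_moments}) handles the moment requirement without needing any assumption on the shocks beyond the model.

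Your Bernstein big-block/small-block scheme with a Lyapunov check is essentially a hands-on reproof of Berk's theorem in this setting; indeed you note Berk as an alternative at the end. What you gain is transparency and a slightly weaker moment input ($2+\delta$ on $T_k$, hence $4+2\delta$ on returns) at the price of more bookkeeping on the three scales $M_{\tilde T}\ll q\ll p\ll\tilde T$. What the paper gains by quoting Berk is brevity and an exact accounting of why $\tilde H^3/\tilde T\to0$ is the right threshold (it is forced by $p=4$ in Berk's condition). One minor remark: your blocking and Lyapunov estimates actually go through under the weaker $\tilde H/\tilde T\to0$, so your closing sentence slightly overstates where $\tilde H^3/\tilde T\to0$ is used in your version; in the paper's version that hypothesis is genuinely needed for Berk's moment--growth trade-off.
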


\begin{proof}
In order to prove the central limit theorem, we apply the theorem of \cite{berk:1973}.
Let $cst$ be a generic positive constant. We must check the following conditions
  \begin{enumerate}[(i)]
  \item \label{item:berk-i} There exists an integer $p>2$ such that for $1 \leq k \leq \tilde{T}$, $\esp[|T_k|^p] \leq \constant\; \tilde{H}^p$ and $\tilde{H}^{(2p-2)/(p-2)}/\tilde{T} \to 0$.
  \item \label{item:berk-ii}$\var(T_1+\cdots+T_k) \leq \constant \;  k {\tilde{H}}^3$, for $1 \leq k \leq \tilde{T}$.
  \item \label{item:berk-iii}
    $\lim_{\tilde{T}\to\infty} (\tilde{T}{\tilde{H}}^3)^{-1}\var(T_1+\cdots+T_{\tilde{T}})=\constant \sum_{|u|\leq h}\sum_{|v|\leq h}\cov(r_0,r_u)\cov(r_0^2,r_v^2)$.
  \end{enumerate}

\noindent To check conditions (ii) and (iii), first we derive the expression for
 \begin{align*}
 \var\left(\sum_{j=1}^k T_j\right) &= k\var(T_0) + 2\sum_{j=1}^{k}(k-j)\cov(T_0, T_j)
\end{align*}
 By Lemma \ref{lem:asym_cov_RRV}, if $k > \tilde{H} $, $\cov(T_0, T_k)=0$.
 If $k=0$, $\var(T_0) = O(\tilde{H}^2)$.
 For $1 \leq k \leq \tilde{H}$, $\cov(T_0, T_k)=c_k + O(1)$. Thus by (\ref{eq:c_k_bound}),
 \begin{align*}
 \sum_{j=1}^k(k-j)\cov(T_0, T_j) & = \sum_{j=1}^{k\wedge \tilde{H}}(k-j)\cov(T_0, T_j) \\
                                 & = \sum_{j=1}^{k\wedge \tilde{H}}(k-j)\left(c_j + O(1) \right) \\
                                 & =  k \sum_{j=1}^{k \wedge \tilde{H}} \left(1-\frac{j}{k}\right)c_k + O(k \wedge \tilde{H}) \\
                                 & \leq  k \sum_{j=1}^{k \wedge \tilde{H}} |c_j| + O(k \wedge \tilde{H})\\
                                 & \leq  k\;O(\tilde{H}^3)\;.
 \end{align*}

\noindent It follows that
 \begin{equation}
  \var\left(\sum_{j=1}^k T_j\right) \leq k\;O(\tilde{H}^3)\;,
 \end{equation}
which proves condition (\ref{item:berk-ii}).
\\
Let $\tilde{C}^{\ast}_{\tilde{T}} = \frac{1}{\tilde{T}}\sum_{k=1}^{\tilde{T}} T_k$. Since $\var(T_1+T_2+\cdots + T_{\tilde{T}}) = \tilde{T}^2 \var(\tilde{C}^{\ast}_{\tilde{T}})$.
It follows from Lemma \ref{lem:asm_var_Ct},
\begin{align*}
      (\tilde{T}K^3)^{-1}\var\left(\sum_{k=1}^{\tilde{T}} T_k \right) \rightarrow \frac23\sum_{|u|\leq h}\sum_{|v|\leq h}\cov(r_0,r_u)\cov(r_0^2,r_v^2) \; .
  \end{align*}
This proves condition (\ref{item:berk-iii}).
\\
Next, we check condition (i). Since $T_0$ has zero mean, we can express
  \begin{align*}
    \esp[T_0^4] = \cum(T_0,T_0,T_0,T_0)+3\var^2(T_0) \; .
  \end{align*}
  Thus it suffices to prove that $\cum(T_0,T_0,T_0,T_0)=O(K^4)$. This will show that
  \eqref{item:berk-i} holds with $p=4$ and the condition on $K$ is $K^{3}/n\to 0$. Applying the properties of cumulants, we have
  \begin{align*}
    \cum(T_0,T_0,T_0,T_0) & = \sum_{i=h+1}^{K} \sum_{j=-K+1}^{0}\sum_{k=h+1}^{K} \sum_{\ell=-K+1}^{0}
                            \sum_{u=h+1}^{K} \sum_{v=-K+1}^{0}\sum_{w=h+1}^{K} \sum_{z=-K+1}^{0}
                            \cum(\bar{r}_i\bar{s}^2_j,\bar{r}_k\bar{s}^2_\ell,\bar{r}_u\bar{s}^2_v,\bar{r}_w\bar{s}^2_z) \\
  \end{align*}
 By Proposition 3.2.1 by Pecatti and Taqqu (2011)
  and that $r_i$, $r_k$, $r_u$, and $r_w$ are independent of $r_j$, $r_{\ell}$, $r_v$, and $r_z$, we obtain that for $i > h$, $k > h$, $u >h$, and $w > h$,
  \begin{align}
    \cum(T_0,T_0,T_0,T_0) & = (K-h)^2K^2\var^2(r_0)\var^2(r_0^2) + (K-h)K \cum(r_0,r_0,r_0,r_0)   \cum(r_0^2,r_0^2,r_0^2,r_0^2) \notag  \\
                          & \phantom{ = } +  (K-h)^2 K\var^2(r_0)\cum(r_0^2,r_0^2,r_0^2,r_0^2) +  (K-h)K^2\var^2(r_0^2)\cum(r_0,r_0,r_0,r_0) \;.
  \end{align}
By Lemma \ref{lem:rt_finite_moments}, $\cum(T_0,T_0,T_0,T_0)= O(K^4)$.
\end{proof}

\begin{lemma}\label{lem:rt_finite_moments}
For every positive integer $p$, $\esp[r_t^p] < \infty $.
\end{lemma}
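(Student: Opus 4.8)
The plan is to split $r_t = \mu\,\Delta N(t) + S_t$ with $S_t = \sum_{k=N(t-1)+1}^{N(t)} e_k$ and to bound $\esp[|r_t|^p]$ via the elementary inequality $|a+b|^p \le 2^{p-1}(|a|^p+|b|^p)$, so that it suffices to prove $\esp[(\Delta N(t))^p] < \infty$ and $\esp[|S_t|^p] < \infty$ for every $p \in \Nset$; since $\esp[r_t^p]$ is finite as soon as $\esp[|r_t|^p]<\infty$, this is enough. I would handle the two pieces by conditioning, first on $\Lambda$ for the count part and then on $N$ for the efficient-shock part.

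For the count part, conditionally on $\Lambda$ the random variable $\Delta N(t)$ is Poisson with parameter $\nu := \lambda \int_{t-1}^t \rme^{Z_H(s)}\,\rmd s$, and the $p$-th moment of a Poisson$(\nu)$ variable is the Touchard polynomial $\sum_{j=1}^p S(p,j)\,\nu^j$ in $\nu$ (with $S(p,j)$ the Stirling numbers of the second kind), hence $\esp[(\Delta N(t))^p] = \sum_{j=1}^p S(p,j)\,\esp[\nu^j]$. It then remains to check $\esp[\nu^j] < \infty$ for $j=1,\dots,p$. Writing $\nu^j = \lambda^j \int_{[t-1,t]^j} \rme^{Z_H(s_1)+\cdots+Z_H(s_j)}\,\rmd s_1\cdots\rmd s_j$ and applying Tonelli's theorem to the nonnegative integrand, $\esp[\nu^j] = \lambda^j \int_{[t-1,t]^j} \esp[\rme^{Z_H(s_1)+\cdots+Z_H(s_j)}]\,\rmd s_1\cdots\rmd s_j$. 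The sum $Z_H(s_1)+\cdots+Z_H(s_j)$ is centered Gaussian with variance $\sum_{a,b} \gammaz(s_a-s_b) \le j^2$, using $|\gammaz(r)| \le \gammaz(0) = 1$, so $\esp[\rme^{Z_H(s_1)+\cdots+Z_H(s_j)}] \le \rme^{j^2/2}$, and integrating over a set of Lebesgue measure one gives $\esp[\nu^j] \le \lambda^j \rme^{j^2/2} < \infty$.

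For the efficient-shock part, conditionally on $N$ (equivalently, given $\Delta N(t)=n$) the quantity $S_t$ is a sum of $n$ i.i.d.\ centered shocks, so by convexity $\esp[|S_t|^p \mid \Delta N(t)=n] \le n^{p-1}\sum_{k=1}^n \esp[|e_k|^p] = n^p\,\esp[|e_1|^p]$, whence $\esp[|S_t|^p] \le \esp[(\Delta N(t))^p]\,\esp[|e_1|^p] < \infty$ by the previous step, provided $\esp[|e_1|^p] < \infty$ (which holds for the Gaussian efficient shocks in force throughout the analysis). Combining, $\esp[|r_t|^p] \le 2^{p-1}\bigl(\mu^p\,\esp[(\Delta N(t))^p] + \esp[|S_t|^p]\bigr) < \infty$, which proves the lemma. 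The only mildly delicate point is the exponential-moment estimate for $\nu$, which relies on the uniform bound $|\gammaz| \le 1$ and an application of Fubini/Tonelli to exchange expectation with the $j$-fold integral; the remaining steps are routine.
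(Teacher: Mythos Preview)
Your proof is correct and follows essentially the same route as the paper: both condition on $\Lambda$ so that $\Delta N(t)$ is Poisson with mean $\nu=\lambda\int_{t-1}^{t}\rme^{Z_H(s)}\,\rmd s$, reduce $\esp[(\Delta N(t))^p]$ to a polynomial in $\nu$, and then use Gaussianity of $Z_H$ to bound $\esp[\nu^j]$. Your version is in fact more explicit than the paper's: you give the quantitative estimate $\esp[\nu^j]\le\lambda^j\rme^{j^2/2}$ via Tonelli and the Gaussian moment generating function, and you treat the efficient-shock piece $S_t$ separately through the convexity bound $\esp[|S_t|^p\mid\Delta N(t)=n]\le n^p\esp[|e_1|^p]$, a step the paper's argument glosses over; your acknowledgement that this requires $\esp[|e_1|^p]<\infty$ (satisfied under the Gaussian shocks used throughout) is appropriate.
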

\begin{proof}
From the return model, we can express the $p^{th}$ moment of the return as
\[
\esp[r^p_t] = \esp\left[\mu\Delta N(t)+\sum_{k=N(t-1)+1}^{N(t)}e_k \right]^p
\]
The term with the highest moment on the RHS of the above equation is $\esp[(\Delta N(t))^p]$, which can be represented as
$\esp[\esp[(\Delta N(t))^p|\Lambda]]$.
Conditional on $\Lambda$, $\Delta N(t)$ is a Poisson random variable with mean $\esp[\Delta N(t)|\Lambda]=\lambda \int_{t-1}^t \rme^{Z_H(s)}\rmd s$.
Thus $\esp[(\Delta N(t))^p|\Lambda]$ is the $p^{th}$ moment of the Poisson variable $\Delta N(t)|\Lambda$.
\\
Let $\nu = \esp[\Delta N(t)|\Lambda]$ be the mean of $\Delta N(t)|\Lambda$.
Since the non-centered $p^{th}$ moment of a Poisson distribution is a polynomial in its mean with the highest order $p$,
we can express $\esp[(\Delta N(t))^p|\Lambda]$ as the sum of the expected values of $\nu$ with different exponentials.
Hence $\esp[(\Delta N(t))^p]$ involves evaluating the term $\esp[\nu^p]$.
Since $Z_H$ is Gaussian stationary process, it has finite moments.
It follows that $\esp[\nu^p]$ is finite and thus $\esp[(\Delta N(t))^p]$ is finite.

\end{proof}

\end{document}